\newtheorem{theorem}{Theorem}
\newtheorem{proposition}[theorem]{Proposition}
\newtheorem{lemma}[theorem]{Lemma}
\theoremstyle{definition}
\newtheorem{definition}[theorem]{Definition}
\numberwithin{equation}{section}
\def\eg{{e.g.\ }}
\def\ie{{i.e.\ }}
\def\det{\,{\rm det}\, }
\DeclareMathOperator\Exp{Exp}
\def\({\left(}
\def\){\right)}
\def\<{\left\langle}
\def\>{\right\rangle}
\newcommand{\Omstar}{\Omega_\star}
\newcommand{\bOmstar}{\bOm_\star}
\newcommand{\de}{\mathrm{d}}
\newcommand{\I}{\mathrm{i}}
\newcommand{\cA}{\mathcal{A}}
\newcommand{\cC}{\mathcal{C}}
\newcommand{\cD}{\mathcal{D}}
\newcommand{\cE}{\mathcal{E}}
\newcommand{\cF}{\mathcal{F}}
\newcommand{\cV}{\mathcal{V}}
\newcommand{\cG}{\mathcal{G}}
\newcommand{\cH}{\mathcal{H}}
\newcommand{\cI}{\mathcal{I}}
\newcommand{\cM}{\mathcal{M}}
\newcommand{\cW}{\mathcal{W}}
\newcommand{\cN}{\mathcal{N}}
\newcommand{\cX}{\mathcal{X}}
\newcommand{\cO}{\mathcal{O}}
\newcommand{\cP}{\mathcal{P}}
\newcommand{\cR}{\mathcal{R}}
\newcommand{\cS}{\mathcal{S}}
\newcommand{\cT}{\mathcal{T}}
\newcommand{\cU}{\mathcal{U}}
\DeclareSymbolFont{AMSa}{U}{msa}{m}{n}
\DeclareSymbolFont{AMSb}{U}{msb}{m}{n}
\DeclareMathSymbol{\fieldR}{\mathalpha}{AMSb}{"52}
\newcommand{\IR}{\mathds{R}}
\newcommand{\IC}{\mathds{C}}
\newcommand{\IZ}{\mathds{Z}}
\newcommand{\IQ}{\mathds{Q}}
\newcommand{\IN}{\mathds{N}}
\newcommand{\IH}{\mathds{H}}
\newcommand{\IP}{\mathds{P}}
\newcommand{\IF}{\mathds{F}}
\newcommand{\IL}{\mathds{L}}
\newcommand{\Tr}{\mbox{Tr}}
\def\bea{\begin{eqnarray}}
\def\eea{\end{eqnarray}}
\def\be{\begin{equation}}
\def\ee{\end{equation}}
\def\ba{\begin{align}}
\def\ea{\end{align}}
\def\bse{\begin{subequations}}
\def\ese{\end{subequations}}
\def\CY{\mathfrak{Y}}
\DeclareMathOperator{\sign}{sign}
\DeclareMathOperator{\rk}{rk}
\DeclareMathOperator{\Li}{Li}
\DeclareMathOperator{\Hom}{Hom}
\DeclareMathOperator{\Rep}{Rep}
\DeclareMathOperator{\Ext}{Ext}
\DeclareMathOperator{\Stab}{Stab}
\DeclareMathOperator{\Coh}{Coh}
\DeclareMathOperator{\Cone}{Cone}
\DeclareMathOperator{\virdim}{virdim}
\DeclareMathOperator{\ST}{ST}
\DeclareMathOperator{\Aut}{Aut}
\DeclareMathOperator{\JH}{JH}
\newcommand{\GLt}{\widetilde{GL^+}(2,\IR)}
\newcommand{\DCohK}{$D^b\Coh(K_{\IP^2})$\,}
\DeclareRobustCommand{\ch}{\@ifnextchar _{\textnormal{ch}}{\operatorname{ch}}}
\def\bOm{\bar\Omega}
\newcommand{\deltaLP}{\delta_{\rm LP}}
\newcommand{\deltaLPh}{\hat\delta_{\rm LP}}
\def\psicr#1{\psi^{\rm cr}_{#1}}
\newcommand{\mathematica}[4]{\vspace{0.35cm}\noindent\boxed{\begin{minipage}{#2\textwidth}\begin{tabular}{lp{15cm}}{\color{paper_blue}{\scriptsize{\tt In[#1]:}}\raisebox{-0.65pt}{{\scriptsize{\tt=}}}}&{\tt #3}\\{\color{paper_blue}{\scriptsize {\tt Out[#1]:}}\raisebox{-0.65pt}{{\scriptsize{\tt=}}}}&{\tt #4}\end{tabular}\end{minipage}}\vspace{0.35cm}}
\definecolor{varcolor}{rgb}{0.1,0.55,0.25}
\definecolor{functioncolor}{rgb}{0.1,0.35,0.75}
\definecolor{paper_blue}{rgb}{0.3,0.2,0.75}
\definecolor{paper_red}{rgb}{0.65,0.1,0.15}
\definecolor{paper_green}{rgb}{0.05,0.35,0.125}
\definecolor{paper_grey}{gray}{0.375}
\definecolor{perm}{rgb}{0.1,0.45,0.85}
\definecolor{deemph}{rgb}{0.7,0.7,0.7}
\newcommand{\fun}[1]{{\color{functioncolor}{\tt #1}}}
\newcommand{\mathtikz}[2][]{\begin{tikzpicture}[baseline=\the\dimexpr-\fontdimen22\textfont2\relax,#1]#2\end{tikzpicture}}
\newcommand{\Rgeo}{\cR^{\rm geo}}
\newcommand{\Reff}{\cR^{\rm eff}}
\newcommand{\Ract}{\cR}
\newcommand{\Uact}{\cU}
\newcommand{\Ueff}{\cU^{\rm eff}}
\newcommand{\AF}{{\rm AF}}
\newcommand{\abs}[1]{\lvert #1\rvert}
\title{BPS Dendroscopy on Local \texorpdfstring{$\IP^2$}{P2}}
\author{Pierrick Bousseau}
\address{Department of Mathematics, University of Georgia, Athens GA, 30602, USA}
\email{Pierrick.Bousseau@uga.edu}
\author{Pierre Descombes}
\address{Sorbonne Universit\'e, CNRS, 
Laboratoire de Physique Th\'eorique et Hautes Energies (LPTHE, 
UMR 7589)
Campus Pierre et Marie Curie, 4 place Jussieu, F-75005, Paris, France}
\email{descombes,blefloch,pioline@lpthe.jussieu.fr}
\author{Bruno Le Floch}
\author{Boris Pioline}
\date{January 17, 2024}
\begin{document}

\begin{abstract}
The spectrum of BPS states in type IIA string theory compactified on a Calabi-Yau threefold
famously jumps across codimension-one walls in complexified K\"ahler moduli space, leading to an intricate chamber structure. The  Split Attractor Flow Conjecture posits that the BPS index $\Omega_z(\gamma)$ for given charge $\gamma$ and moduli $z$ can be reconstructed from the attractor indices $\Omstar(\gamma_i)$ counting BPS states of charge $\gamma_i$ in their respective attractor chamber, by summing over  a finite set of decorated rooted flow trees known as attractor flow trees. If correct, this provides a classification (or dendroscopy) of the BPS spectrum into different topologies of nested BPS bound states, each having a simple chamber structure. 
Here we investigate this conjecture for the simplest, albeit non-compact, Calabi-Yau threefold, namely the canonical bundle over $\IP^2$. Since the K\"ahler moduli space has complex dimension one and the attractor flow preserves the argument of the central charge, attractor flow trees coincide with scattering sequences of rays in a two-dimensional slice of the scattering diagram $\cD_\psi$ in the space of stability conditions on the derived category of compactly supported coherent sheaves on $K_{\IP^2}$. We combine previous results on the scattering diagram of $K_{\IP^2}$ in the large volume slice with an analysis of the scattering diagram for the 
three-node quiver valid in the vicinity of the orbifold point $\IC^3/\IZ_3$, and prove that the Split Attractor Flow Conjecture holds true on the physical slice of $\Pi$-stability conditions. 
In particular, while there is an infinite set of initial rays related by the group $\Gamma_1(3)$ of auto-equivalences, only a finite number of possible decompositions $\gamma=\sum_i \gamma_i$ contribute to the index $\Omega_z(\gamma)$ for any $\gamma$ and $z$, with constituents $\gamma_i$ related by spectral flow to the fractional branes at the orbifold point. 
We further explain the absence of jumps in the index between the orbifold and large volume points for normalized torsion free sheaves, and uncover new `fake walls' across which the dendroscopic structure changes but the total index remains constant. 
\end{abstract}

\maketitle
\tableofcontents

\section{Introduction and summary}

Determining the spectrum of BPS states at generic points in the moduli space in string theory models with $\cN=2$ supersymmetry in four dimensions is an important problem, with far reaching implications both for physics and mathematics. On the physics side, it challenges our understanding of black holes at the microscopic level; on the mathematics side, it connects to deep questions in algebraic and symplectic geometry.

 In the context of type IIA string theory compactified on a Calabi-Yau (CY) threefold $\CY$, BPS states correspond to objects in the bounded derived category of coherent sheaves $\cC=D^b\Coh(\CY)$ 
which are stable for a particular Bridgeland stability condition determined by the complexified K\"ahler 
moduli, known as $\Pi$-stability  \cite{MR1403918,Douglas:2000gi,Douglas:2000ah,MR2373143}. For a general compact CY threefold, the construction of the space of Bridgeland stability conditions  
$\Stab(\cC)$  is a difficult mathematical problem, and the identification of the submanifold $\Pi \subset \Stab(\CY)$ corresponding to $\Pi$-stability depends on the  symplectic geometry of $\CY$ (namely, its genus-zero Gromov-Witten invariants). 
For fixed charge $\gamma\in K(\CY)$ and central charge\footnote{As we recall in \S\ref{sec_Brid}, 
a stability condition
$\sigma=(Z,\cA)$ on $\cC$ also involves a choice of Abelian subcategory $\cA\subset\cC$ (the heart).
We omit it here for brevity since it is locally determined by the central charge $Z$.} 
$Z$
(determined by the K\"ahler moduli), stable objects are counted by the generalized 
Donaldson-Thomas (DT) invariant $\Omega_Z(\gamma)$ \cite{thomas1998holomorphic,ks,Joyce:2008pc}. The latter, being integer valued, is locally constant but discontinuous across real-codimension one walls in $\Stab(\CY)$  (hence also on $\Pi$), due to the (dis)appearance of destabilizing sub-objects, leading to an intricate chamber structure. While the jump is determined in terms of the invariants on one side of the wall by a universal wall-crossing formula~\cite{ks,Joyce:2008pc}, it is desirable to develop a global understanding of the BPS spectrum which allows to identify stable objects  at any point $z$ in $\Pi\subset  \Stab(\CY)$.

\subsection{The Split Attractor Flow Conjecture}

The physical picture of BPS states as multi-centered black holes suggests one way to achieve this goal, namely to decompose stable BPS states of charge $\gamma$ into bound states of 
elementary constituents of charge $\gamma_i$, with a hierarchical structure determined by 
attractor flow trees~\cite{Denef:2000nb,Denef:2001xn}. As we review in more detail
in \S\ref{sec_SAF}, the latter are rooted trees  decorated
with charges $\gamma_e$ along the edges, 
embedded in $\Pi\subset  \Stab(\CY)$  such that the root vertex is mapped to the desired point $z\in\Pi$,
edges follow the gradient flow (also known as attractor flow \cite{Ferrara:1995ih}) for the modulus of the central charge $|Z(\gamma_e)|^2$ along the slice $\Pi$, and split at vertices on
walls of marginal stability where the central charges of the incoming and descending charges
become aligned. The aforementioned constituents $\gamma_i$ arise as the end points (or leaves)
of the tree, where the central charge is attracted to a local minimum of  $|Z(\gamma_i)|$ along $\Pi$,
or to a conifold point where $Z(\gamma_i)=0$. We denote by $Z_{\gamma_i}(\gamma)$ the central charge at this local minimum, and by 
$\Omstar(\gamma_i)\coloneqq\Omega_{Z_{\gamma_i}}(\gamma_i)$ the corresponding value of
the DT invariant, known as attractor index. 

The Split Attractor Flow Conjecture (SAFC), 
originally proposed in \cite{Denef:2000nb,Denef:2001xn} and sharpened in \cite{Denef:2007vg}, 
posits that for any $\gamma\in \Gamma$ and $z\in\Pi$,  the BPS index $\Omega_Z(\gamma)$ can be computed by summing over 
a finite number of such attractor flow trees, weighted by the
 product of attractor indices  $\Omstar(\gamma_i)$ and by some combinatorial factor
 obtained by applying the wall-crossing formula at each vertex.\footnote{The original
 formulation of the conjecture 
 relied on the primitive wall-crossing formula and 
  overlooked issues arising  when some of the constituents
 carry non-primitive or identical charges. In \S\ref{sec_SAF},
 using insights from \cite{Denef:2007vg,Manschot:2010xp,Manschot:2010qz,Alexandrov:2018iao} 
 we give a more precise
 version of the conjecture in terms of the rational DT invariants 
$\bOm(\gamma)$ defined in \eqref{defOmb}.}
If correct, this picture
provides a categorization (which we like to call dendroscopy) 
of the BPS spectrum at $z\in \Pi$ into different types, each having
a simple region of stability delimited by the first splitting
at the root of the tree,
and reduces the determination of the BPS spectrum to the computation of the attractor invariants 
$\Omstar(\gamma_i)$. Unfortunately, for a compact CY threefold $\CY$,  the computation of these invariants seems very difficult and  the Split Attractor Flow Conjecture is still wide open, 
despite some encouraging results \cite{Denef:2007vg,Collinucci:2008ht,VanHerck:2009ww,Manschot:2010xp,Kontsevich:2013rda,Gaddam:2016xum,Alexandrov:2022pgd}.

The problem however becomes more tractable for certain non-compact CY threefolds, such that the category  $D^b\Coh(\CY)$ is isomorphic to the derived category $D^b\Rep(Q,W)$ of representations  of a certain quiver with potential $(Q,W)$. In particular, in the vicinity of an orbifold point where the central charges associated to the nodes of the quiver all lie in a common half-plane, the heart of the stability condition reduces to the category of quiver representations (or some tilt of it) and 
the notion of  attractor index has a simple definition using
King stability for the (suitably perturbed) `self-stability' parameter 
$\theta_\star(\gamma)=\langle -, \gamma \rangle$ \cite{Mozgovoy:2020has}.  In that context, the enumeration of attractor flow trees becomes straightforward and precise versions of the Split Attractor Flow Conjecture
have been 
proposed \cite{Alexandrov:2018iao,Mozgovoy:2020has} and then established rigorously \cite{Mozgovoy:2021iwz,Arguz:2021zpx} using the mathematical framework of operads and scattering diagrams, respectively. As already anticipated in \cite{Kontsevich:2013rda} and as
will become apparent shortly, scattering diagrams turn out to 
be the mathematical incarnation of split attractor flows (at least for non-compact CY threefolds), while the physical interpretation of the trees in the operadic approach of \cite{Mozgovoy:2021iwz} remains obscure at present. 

 \subsection{The Attractor Conjecture}
 
As for  the attractor indices 
which enter these formulae, 
 it was conjectured in  \cite{Beaujard:2020sgs},
 that for quivers $(Q,W)$ associated to a non-compact CY threefolds of the form $\CY=K_{S}$ (namely, the total space of the canonical bundle over a Fano surface $S$), 
 the attractor invariants $\Omstar(\gamma)$ take a very simple form: 
$\Omstar(\gamma)=0$ except when $\gamma$  corresponds to a dimension vector supported on one node of the quiver (in which case  $\Omstar(\gamma)=1$), or $\gamma=k\delta$ with $k\geq 1$ and $\delta$ the charge vector for the skyscraper sheaf (in which case $\Omstar(k\delta)=-\chi_{\CY}$, the Euler number of $\CY$). This Attractor Conjecture (AC) was arrived at by comparing the quiver indices with 
the counting of Gieseker-semi-stable sheaves on $S$, and supported by an analysis of the expected
dimension of the moduli space of quiver representations in the self-stability chamber. 
Further evidence and an extension of AC to all toric CY three-folds was presented in \cite{Mozgovoy:2020has,Descombes:2021snc}. In this work, we focus on the simplest case $\CY=K_{\IP^2}$ (also known as local $\IP^2$), which is a crepant resolution of the orbifold singularity $\IC^3/\IZ_3$
and whose derived category of (compactly supported) coherent sheaves is isomorphic to the
derived category of a three-node quiver $(Q,W)$ shown in Figure \ref{figQuiver}. 
Combining ideas  from  \cite{Beaujard:2020sgs,Descombes:2021egc}, we prove
Theorem~\ref{thm:attractor-conj}, which states
that the Attractor Conjecture holds for this quiver, thereby 
providing the attractor invariants relevant in the vicinity of the orbifold point.

Our main 
goal in this work is to extend this picture away from the orbifold point, and connect it to 
the scattering diagram for the derived category of sheaves on $\IP^2$ constructed
by one of the authors in \cite{Bousseau:2019ift}. 
Before presenting our results in more detail however, we need to pause and explain the 
relation between scattering diagrams and flow trees. 

\subsection{Scattering diagrams and attractor flow trees}

Scattering diagrams were first introduced in the context of the Strominger-Yau-Zaslow approach to mirror symmetry \cite{kontsevich2006affine,gross2011real}, and applied to DT invariants of  quivers with potential in \cite{bridgeland2016scattering}. More generally, as we explain in \S\ref{sec_Scattdiag}, for a given triangulated category $\cC$ and phase\footnote{The scattering diagram $\cD_\psi$
is invariant under $(\psi,\gamma,Z)\mapsto(\psi+\pi,-\gamma,Z)$ and $(-\psi,\gamma^\vee,Z^\vee)$
where $\gamma^\vee$ is the image of $\gamma$ under derived duality, and $Z^\vee(\gamma):=-\overline{Z(\gamma^\vee)}$. For most of this work
we restrict to the  interval $(-\frac{\pi}{2},\frac{\pi}{2}]$.} 
$\psi\in\IR/2\pi\IZ$,
the scattering diagram $\cD_\psi$ is supported on a set of real-codimension one loci (or active rays) 
$\Ract_\psi(\gamma)$ in the space of stability conditions $\Stab\cC$ where
the central charge has fixed argument $\arg Z(\gamma)=\psi+\frac{\pi}{2}$ and
supports semi-stable objects, in the sense that the rational DT index 
\be
\label{defOmb}
\bOm_Z(\gamma) \coloneqq \sum_{m|\gamma} \frac{y-y^{-1}}{m(y^m-y^{-m})} \Omega_Z(\gamma/m)\vert_{y\to y^m}
\ee
is non-zero.
Each point along $\Ract_\psi(\gamma)$ is 
equipped with an automorphism 
\be
\Uact_Z(\gamma)=\exp\bigl(\bOm_Z(\gamma)  \cX_\gamma/(y^{-1}-y)\bigr)
\ee
of the quantum torus  algebra spanned by formal variables $\cX_\gamma$ satisfying 
$\cX_\gamma\, \cX_{\gamma'} =(-y)^{\langle \gamma,\gamma'\rangle} \cX_{\gamma+\gamma'}$.
The set $\cD_\psi$ of all active rays $\Ract_\psi(\gamma)$ equipped with 
$\Uact_Z(\gamma)$ then forms a consistent scattering diagram, which informally means that the product of the 
automorphisms $\Uact_Z(\gamma)$ around each codimension-two intersection must equal one. This property uniquely
specifies the  invariants $\bOm_Z(\gamma)$ on  outgoing\footnote{We postpone the definition of incoming and outgoing rays to \S\ref{sec_Scattdiag}.
 For the present discussion, it suffices to orient the restriction of the rays along a transverse
 plane in the vicinity of a codimension-two intersection, according to the gradient of the central
 charge $|Z(\gamma)|$.
 }
 rays in terms of those on incoming rays. 
In the context of quivers with potential, one can further restrict the scattering diagram from
the space $(\IH_B)^{n}$ of Bridgeland stability conditions (where $\IH_B$ is the upper
half-plane $\{ z\in \IC: \Im z> 0 \ \mbox{or}\  (\Im z=0 \ \mbox{and}\ \Re z<0) \}$ and
$n$ denotes the number of nodes in the quiver)
to the 
space $\IR^n$ of  King stability conditions, 
such that $\cD_\psi$ becomes a complex of convex rational polyhedral cones \cite{bridgeland2016scattering}. 

In order to understand the relation between scattering diagrams and attractor flow trees, the key observation (elaborated upon in \S\ref{sec_flow}) is that for a local CY threefold, the central charge
$Z(\gamma)$ is a holomorphic function of complexified K\"ahler moduli $z\in \Pi$, which implies that 
\begin{enumerate} 
\item The argument of $Z(\gamma)$ is constant along the gradient flow of $|Z(\gamma)|$
\item Local minima of $|Z(\gamma)|$ can only occur on the boundary of $\Pi$ or at points $z\in\Pi$ 
where  $|Z(\gamma)|=0$
\end{enumerate}
When $\Pi$ has complex dimension 1 (as is the case for $K_{\IP^2}$), 
the first observation implies that lines of gradient flow of 
$|Z(\gamma)|$ must lie along active rays $\Ract_\psi(\gamma)$, for a suitable value of $\psi$ determined by the initial value of the flow. Since vertices in the attractor flow tree have to 
lie on walls of marginal stability where the central charges of the parent edge $Z(\gamma_v)$ 
and descendant edges $Z(\gamma_e), e\in {\rm ch}(v)$ become aligned, they must also 
 must lie at the intersection of the corresponding rays $\Ract_\psi(\gamma_v)$ and
 $\Ract_\psi(\gamma_e), e\in {\rm ch}(v)$. 
Since stable BPS states of charge $\gamma$ are ruled out at stability conditions where their central charge vanishes  (a consequence of the support property for stability conditions), 
the second observation shows that the attractor points can only occur at the boundary of $\Pi$, corresponding to the initial rays of the scattering diagram. 
Starting from the leaves and going up towards the root, 
one can therefore view a split attractor flow as a sequence of scatterings 
of a set of initial 
rays $\Ract_\psi(\gamma_i)$, 
such that the final ray carries the desired charge $\gamma=\sum \gamma_i$
and passes through the desired point $z\in \Pi$ in the space of $\Pi$-stability conditions. 
When $\dim_\IC \Pi>1$, the connection between attractor flow trees and scattering diagrams is less
direct, since the edges are real-dimension one trajectories embedded in real-codimension one rays.
Nonetheless, in the vicinity of real-codimension two loci where active rays intersect, one can always take a two-dimensional transverse section such that the previous picture applies.

\subsection{The physical slice of \texorpdfstring{$\Pi$}{Pi}-stability conditions}

\begin{figure}[ht]
\begin{center}
\includegraphics[width=8cm]{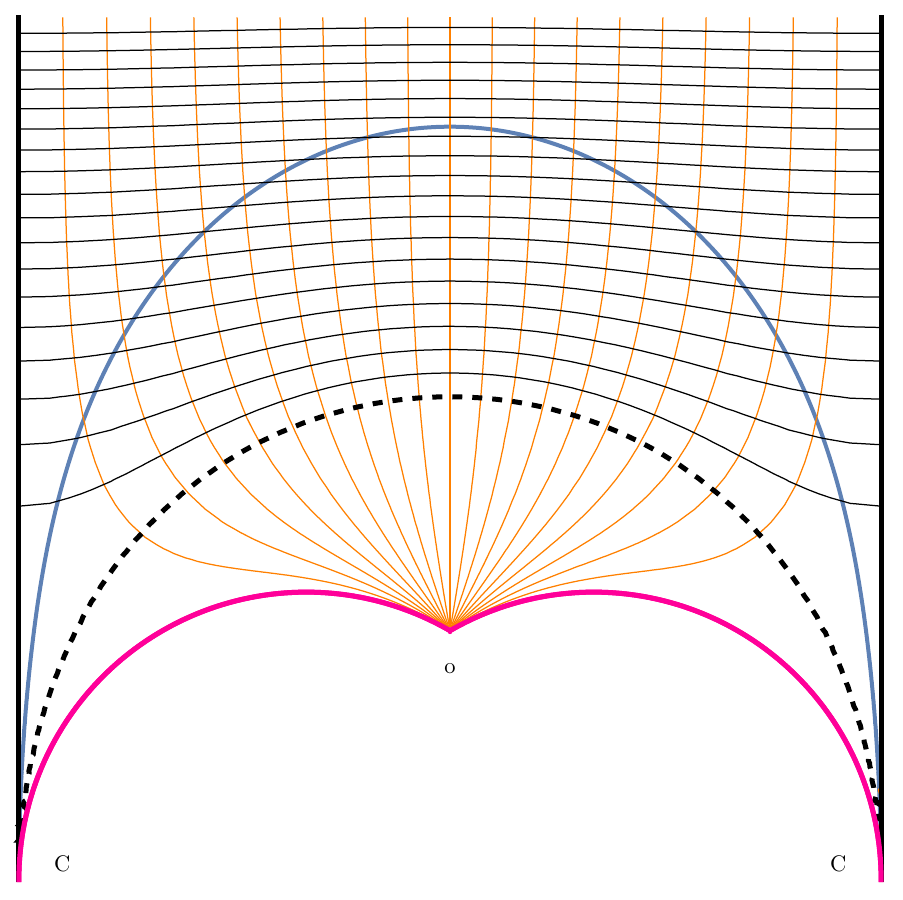}
\end{center}
\caption{The fundamental domain $\cF_o$ \eqref{defFo} 
is the region between the two vertical lines and above the two red arcs, centered around 
the orbifold point $\tau_o=\frac{1}{\sqrt3} e^{5\pi\I/6}=-\frac12+\frac{\I}{2\sqrt3}$. 
The domain $\cF_C$ \eqref{defFC} consists of
the right half of $\cF_o$, and the $\tau\mapsto\tau+1$ translate of the
left half, hence is centered around the conifold point $\tau=0$. 
The orange and black thin lines
correspond to the contours of constant $s$ and $t=\sqrt{2w-s^2}$, respectively, where 
$(s,w)$ are defined in \eqref{defsw}.
 In the region $\IH^{\rm LV}$ above the dashed line (corresponding to $t>0$), the exact central charge \eqref{defZ} along the  $\Pi$-stability slice  is related to the large volume central charge \eqref{defZLV} by a $\GLt$ action. In the region  $\IH^o$ below the blue line (corresponding to $w+\frac12 s<0$), but still in the fundamental domain $\cF_o$, it is instead related to the 
quiver stability condition.
\label{figfundO}}
\end{figure}

Returning to the special case of the local projective plane, the space
$\Stab(K_{\IP^2})$ of stability conditions on  $D^b \Coh(K_{\IP^2})$ was analyzed in detail in \cite{bridgeland2006stability,Bayer:2009brq}. Using the $\IC^\times$ subgroup of the $\GLt$ action 
on the real and imaginary part of the 
central charge function, there is no loss of generality in assuming that $Z(\delta)=1$, where $\delta=[0,0,-1]$ is the Chern vector of the (anti)D0-brane, corresponding to the skyscraper sheaf $\cO_x[1]$ (the homological shift $[1]$ is for later convenience). The central charge
for a general Chern vector $\gamma=[r,d,\ch_2]$ therefore takes the form 
\be
\label{defZ}
Z(\gamma) = - r T_D + d T -\ch_2
\ee
where $r$ is the rank (or D4-brane charge), $d$ the first Chern class (or D2-brane charge) and
$\ch_2$ the second Chern class (or D0-brane charge), and  $(T,T_D)\in \IC^2$ parametrize the quotient
$\Stab(K_{\IP^2})/\IC^\times$. Mirror symmetry
selects a particular complex one-dimensional slice 
\be
\label{defZPi}
Z_\tau(\gamma) = - r T_D(\tau) + d T(\tau) -\ch_2
\ee
parametrized by $\tau$ in the Poincar\'e upper half-plane $\IH$,
such that 
$T$ and $T_D$ are given by periods on a family of elliptic curves with $\Gamma_1(3)$ level structure. The modulus
$\tau$ of the elliptic curve parametrizes the universal cover of the 
modular curve $X_1(3)=\IH/\Gamma_1(3)$, 
and $\Gamma_1(3)$ is the index 4 congruence subgroup of $SL(2,\IZ)$ generated by $T\colon\tau\mapsto \tau+1$ and $V\colon\tau\mapsto \tau/(1-3\tau)$ such that $(VT)^3=1$.
The modular curve has two cusps at images of $\tau=\I\infty$ and $\tau=0$, corresponding
to the large volume and conifold points, respectively, and one elliptic point of order 3 at images of
$\tau_o=\frac{1}{\sqrt3} e^{5\pi\I/6}=-\frac12+\frac{\I}{2\sqrt3}$, corresponding to the orbifold point 
$\IC^3/\IZ_3$. A fundamental domain $\cF_o$ \eqref{defFo} centered around the orbifold point $\tau_o$ is shown in Figure \ref{figfundO}. Due to monodromies around these singular points, the periods $T(\tau),T_D(\tau)$
are not modular functions of $\Gamma_1(3)$. Rather, we show in Appendix \ref{sec_Eichler} 
that they are given by Eichler-type integrals 
\be
\label{Eichler0}
 \begin{pmatrix} T \\ T_D  \end{pmatrix}
= \begin{pmatrix} -\frac12 \\ \frac13  \end{pmatrix} 
+  \int_{\tau_o}^{\tau} \begin{pmatrix} 1 \\u \end{pmatrix} \, 
\, C(u) \de u
\ee
where  $C (\tau) = \frac{\eta(\tau)^9}{\eta(3\tau)^3}$ is a weight 3 Eisenstein series for $\Gamma_1(3)$, which has neither poles nor zeros in the Poincar\'e upper half plane. This representation will play a central role in this work, as it gives a global and  numerically efficient\footnote{This formula is implemented in the Mathematica package {\tt P2Scattering.m} along with many other routines for plotting scattering diagrams, scanning possible flow trees, etc, see Appendix \ref{sec_mathematica} for details.}  formula 
for the analytic continuation of $Z_\tau(\gamma)$ to 
the universal cover $\IH$ of  the complexified K\"ahler moduli space $X_1(3)$. Near the large
volume limit $\Im\tau\gg 1$, one finds that the central charge function reduces to a quadratic polynomial,
\be
\label{defZLV}
Z_{(s,t)}^{\rm LV}(\gamma) \coloneqq - \frac{r}2  (s+\I t)^2+ d (s+\I t) -\ch_2
\ee  
with $ \tau\simeq s+\I t$. 
In fact, observing that the variables $(s,w)\in\IR^2$ defined by
\be
\label{defsw}
s\coloneqq\frac{\Im T_D}{\Im T}\ ,\quad 
w\coloneqq-\Re T_D + \frac{\Im T_D}{\Im T} \Re T=-\frac{\Im (T \bar T_D)}{\Im T}\ , \quad 
\ee
are invariant under  the action of $\GLt$ on $\Stab(K_{\IP^2})$ (after fixing $Z(\delta)=1$),
one easily checks that in the domain $\IH^{\rm LV}$ defined 
by the condition $w > \frac12 s^2$ (keeping only the connected component containing
the cusp at $\tau=\I\infty$), the central charge charge function \eqref{defZ} can be brought to the
large volume form \eqref{defZLV} with $t=\sqrt{2w-s^2}$.  As shown in  Figure \ref{figfundO},
the domain $\IH^{\rm LV}$ only covers a proper subset of the fundamental 
domain $\cF_o$, in particular it does not include a neighborhood of the orbifold point.
 
As explained in \cite{Bayer:2009brq} and reviewed in \S\ref{sec_Pistab} below, for any point $\tau\in \IH$ there exists a stability condition on \DCohK with central charge function given by the mirror symmetry prescription \eqref{defZ}. In the fundamental domain $\cF_o$ 
and in its translates,  the
heart $\cA(\tau)$ is constructed using the usual tilting pair construction 
(built from the subcategories of sheaves with slope $\mu=\frac{d}{r}$  less or greater than $s=\frac{\Im T_D}{\Im T}$). This construction is then extended to the full Poincar\'e upper half-plane using
the group  $\Gamma_1(3)$ of auto-equivalences of the derived category \DCohK generated by tensor product with $\cO_{\CY}(1)$ (corresponding to $T\colon\tau\mapsto \tau+1$) and by the spherical twist
$\ST_{\cO}$ 
with respect to the structure sheaf $\cO$ of the zero section (corresponding to 
$V\colon\tau\mapsto \tau/(1-3\tau)$). The interior of the 
fundamental domain $\cF_C$ 
and its translates are singled out by the condition that the stability condition is geometric, 
\ie the skyscraper sheaves $\cO_x$ are stable with fixed phase.
\looseness=-1

\subsection{The scattering diagram at large volume}

\begin{figure}[ht]
\begin{center}
\includegraphics[width=12cm]{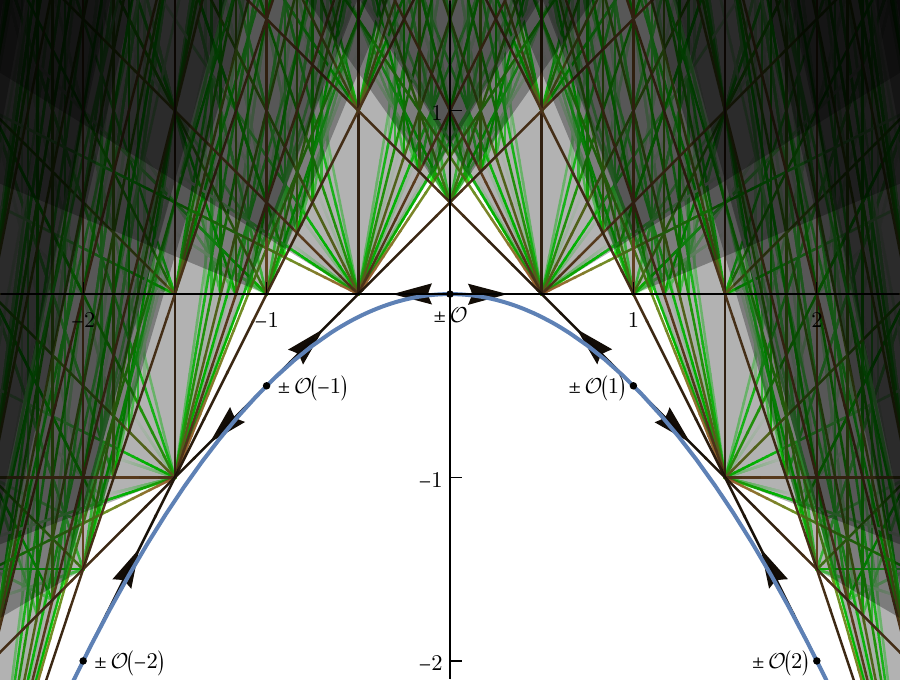}
\end{center}
\caption{Large volume scattering diagram in $(x,y)$ plane. The initial rays $\Ract_0(\cO(m))$
and $\Ract_0(\cO(m)[1])$ are tangent  to the parabola $y=-\frac12 x^2$ at $(x,y)=(m,-\frac12 m^2)$ and move away from this point leftward and rightward, respectively. 
The gray areas indicate regions containing a dense set of rays.
Ray colors encode the electric potential $2(d-rx)$, with lighter colors corresponding to larger values.
\label{figDLVxy}}
\end{figure}

 In \cite{Bousseau:2019ift},  the stability scattering diagram for $\cC=D^b\Coh_c(\IP^2)$ 
 was constructed for the one-parameter family of stability conditions of the form \eqref{defZLV}, 
 assuming the special value  $\psi=0$ for the phase. The construction was performed 
using a different set of coordinates $(x,y)=(s,\frac12(t^2-s^2))$ with $y>-\frac12 x^2$, such that the rays $\Ract_0(\gamma)$
become segments of 
 straight lines $ry+dx-\ch_2=0$, similar to standard
 affine  scattering diagrams
in the context of mirror symmetry \cite{kontsevich2006affine,gross2011real}.
The main result of this analysis is that the initial rays of the resulting diagram,
which we denote by 
$\cD^{\rm LV}_0$,   
consist of a  pair of rays $\Ract_0(\cO(m))$ and $\Ract_0(\cO(m)[1])$
emitted from every integer points $(x,y)=(m,-\frac12 m^2)\in \IZ$ tangent to the parabola $y=-\frac12 x^2$, 
where the central charge of the coherent sheaf $\cO(m)$ vanishes\footnote{Indeed,
 $Z^{\rm LV}_{(s,t)}(\gamma)=-\frac12(s+\I t-m)^2$ for $\gamma=[1,m,\frac12 m^2]$.},
 see Figure~\ref{figDLVxy}. 
 
 In \S\ref{sec_LV}, we recast this construction in $(s,t)$ coordinates, explain it in more physical terms and demonstrate its usefulness for computing the BPS indices. 
  In particular, we observe that in these coordinates, the ray $\Ract_0(\gamma)$ 
for  $\gamma=[r,d,\ch_2]$ is  either included in a vertical straight line (when $r=0$), in a branch of hyperbola asymptoting to  the `light-cone' $|t-s|={\rm cst}$ (when $r\neq 0$ and $\Delta:=\frac12 d^2-\frac{\ch_2}{r}\neq 0$)
 or in a branch of the said light-cone  (when $r\neq 0$ and $\Delta=0$).  Thus, a useful 
 analogy is to view the ray $\Ract(\gamma)$ as the worldline of a particle of global charge 
 $\gamma$  and electric charge $r$, propagating in the two-dimensional (half) Minkowski space spanned by the space and time coordinates $(s,t)$, immersed in a constant electric field. The objects 
 $\cO(m)$ and $\cO(m)[1]$ correspond physically to a D4-brane with $m$ units of flux, and its anti-particle,
 carrying electric  charge $\pm1$. 
 These objects  are pair-produced at $t=0$ and $s\in \IZ$, scatter against each other 
 and produce  an infinite set of outgoing rays, that in turn collide {\it ad infinitum}, producing the complete
 set of BPS states at large volume (i.e. late time). The resulting diagram is shown in Figure~\ref{figDLV}.
 The attractor flow trees can be thought of as sequences of scatterings producing a  particle of desired charge $\gamma$ and going through a desired point $(s,t)$ in Minkowski spacetime. 
 
 \begin{figure}[ht]
\begin{center}
\includegraphics[width=14cm]{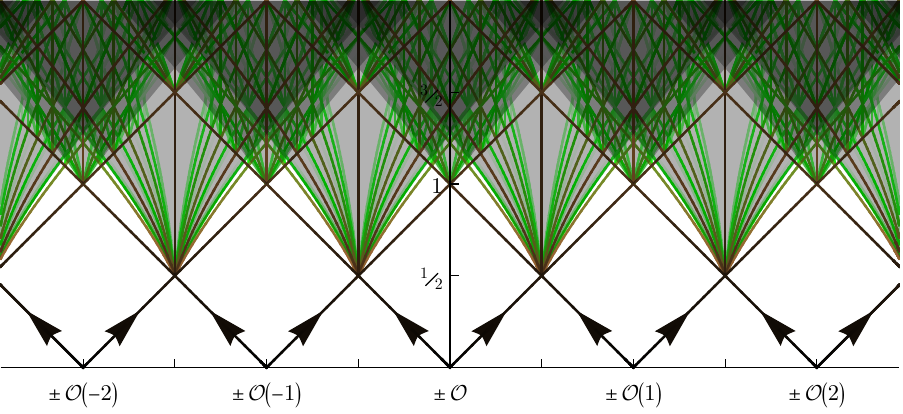}
\end{center}
\caption{Large volume scattering diagram in $(s,t)$ plane. The initial rays $\Ract_0(\cO(m))$
and $\Ract_0(\cO(m)[1])$ are emitted at $(s,t)=(m,0)$ and moving leftward and rightward, respectively. 
\label{figDLV}}
\end{figure}

 While this electromagnetic analogy has some peculiarities, e.g. the fact that pair production only takes place at $t=0$ and integer spatial positions (unlike standard  Schwinger pair production), it does provide valuable insight.  In particular, it makes it obvious that rays can only propagate inside the forward light-cone  (a property which we refer to as {\it causality})\footnote{In $(x,y)$ coordinates, rays are contained in a cone tangent to the parabola $y=-\frac12x^2$.}, and that the (conveniently normalized) 
 electric potential  $\varphi_s(\gamma)=2(d -r s)$ can only increase along a trajectory (a property
 which played an important role in the construction of \cite{Bousseau:2019ift})
 In \S\ref{sec_inidata}, we combine these two properties to derive a bound on the number and charges of possible constituents $\cO(m_i)$ and $\cO(m'_j)[1]$ that contribute to the index  $\Omega_{(s,t)}(\gamma)$
 at any point $(s,t)$ such that $\Re[Z^{\rm LV}_{(s,t)}(\gamma)]=0$. This 
 bound  shows that the SAFC holds along the large volume slice for trees rooted on such loci,
 and gives an effective (if not particularly efficient) algorithm for determining the finite list of 
 attractor flow trees (or scattering sequences) contributing to the index 
$\Omega_{(s,t)}(\gamma)$.

Using this algorithm, we reproduce the 
well known chamber structure for DT invariants along the large volume slice
~\cite{arcara2013minimal,bertram2014birational,maciocia2014computing},
consisting of a finite nested sequence of walls of marginal stability,
 such that 
the index
vanishes inside the innermost wall and is equal to the  index $\Omega_\infty(\infty)$ counting Gieseker-semistable sheaves outside the outermost wall.  
For illustration, in \S\ref{sec_Hilb} and \S\ref{sec_D2D0} we determine the trees contributing to $\Omega_{\infty}(\gamma)$
 for $\gamma=[1,0,1-n]$ and $\gamma=[0,d,\ch_2]$ for low values of $n$ and $d$.
 In the first case,  the moduli space of Gieseker semi-stable sheaves coincides with the 
 Hilbert scheme of $n$ points on $\IP^2$, the index of which is well-known
 \cite{Gottsche:1990} (higher rank examples are considered in Appendix \S\ref{sec_higherk}).
  In the second case
 $r=0$,  we recover  the genus zero Gopakumar-Vafa invariants $N_d^{(0)}$  in the unrefined limit\footnote{More generally, the refined Gieseker index  computes the character $\sum_{j_L,j_R} N^{(j_L, j_R)}_d \chi_{j_L}(y_L)  \chi_{j_R}(y_R)$ on the diagonal $y_L=y_R=y$, where $N^{(j_L, j_R)}_d$ are the refined BPS invariants \cite{Katz:1999xq,Choi:2018xgr}.  It is an interesting open question
 to generalize the scattering diagram away from the Nekrasov-Shatashvilii limit $y_L=y_R$.} $y\to 1$.
 We further match the contributing trees with the known stratification of the moduli space of Gieseker stable sheaves, extending the observations in \cite{Bousseau:2019ift}.

Although the choice $\psi=0$ has the advantage (exploited in \cite{Bousseau:2019ift}) 
that the geometric rays $\Rgeo_\psi(\gamma)$ become
 straight lines in $(x,y)$ coordinates, it does not give access
to the index $\Omega_{(s,t)}(\gamma)$  away from loci 
where $Z^{\rm LV}_{(s,t)}(\gamma)$ is purely imaginary. In \S\ref{sec_LVpsi}, we generalize the 
scattering diagram $\cD^{\rm LV}_0$  
to a diagram $\cD^{\rm LV}_\psi$  valid for any
 $\psi\in(-\frac{\pi}{2},\frac{\pi}2)$. 
While the  walls of marginal stability are by construction independent of $\psi$,  it turns out that the $\psi$-dependence of the rays $\Ract_\psi(\gamma)$  can be absorbed by a linear coordinate transformation $(s,t)\mapsto (s+t\tan\psi,t/\cos\psi)$ which preserves the walls and the boundary
at $t=0$. Thus, the topology of the trees contributing to the index at large volume for any charge $\gamma$ is independent of $\psi$, the only change being in the location of the 
vertices along the walls of marginal stability.  As we shall see momentarily, this is no longer true
for the exact scattering diagram involving the exact central charge function \eqref{defZ}.

\subsection{The orbifold scattering diagram}

\begin{figure}[ht]
\begin{center}
\begin{tikzpicture}[inner sep=2mm,scale=2]
  \node (a) at ( -1,0) [circle,draw] {$n_1$};
  \node (b) at ( 0,1.7) [circle,draw] {$n_2$};
  \node (c)  at ( 1,0) [circle,draw] {$n_3$};
 \draw [->>>] (b) to node[auto] {$a_i$} (a);
 \draw [->>>] (c) to node[auto] {$b_j$} (b);
 \draw [->>>] (a) to node[auto] {$c_k$} (c);
\end{tikzpicture}
\end{center}
\caption{Quiver describing the BPS spectrum around the orbifold point $\tau_o$. The potential is 
$W=\sum_{i,j,k} \epsilon_{ijk} \Tr(a_i b_j c_k)$, and the dimension vector 
$(n_1,n_2,n_3)$ is related to the Chern vector $[r,d,\ch_2]$ via 
$(n_1,n_2,n_3) = (-\frac32 d-\ch_2-r,  -\frac12 d-\ch_2, \frac12 d-\ch_2)$. 
\label{figQuiver}}
\end{figure}
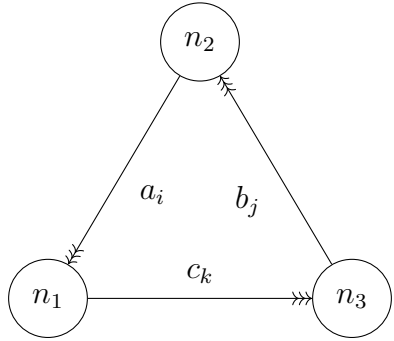

In the vicinity of the orbifold point, the geometry of $K_{\IP^2}$ degenerates into the 
orbifold singularity $\IC^3/\IZ_3$, and the BPS spectrum is instead described by stable objects
in the derived category of representations $D^b\Rep(Q,W)$ of the quiver with potential shown in Figure \ref{figQuiver}, which we refer to as the orbifold quiver. This quiver arises from the tilting sequence in $D^b\Coh_c(K_{\IP^2})$
obtained from  the Ext-exceptional collection
\be
\label{excepcoll}
E_1=i_*(\cO)[-1], \quad E_2=i_*(\Omega(1)), \quad E_3 = i_*(\cO(-1))[1]
\ee
where $\Omega$ is the cotangent bundle of $\IP^2$, $i_*$ denotes the lift from 
$\IP^2$ to $K_{\IP^2}$ and $[k]$ denotes the cohomological shift by $k$ units. 
Importantly, the central charges $Z_\tau(E_i)$  of
the three objects are aligned at the orbifold point $\tau_o$, and they remain in a common half-plane
in an open region $\IH^o$ around  $\tau_o$ 
defined by the inequality $w<-\frac{1}{2}s$ in the
fundamental domain $\cF_o$ \eqref{defFo}, along with the images of that region under the $\IZ_3$ 
symmetry around  $\tau_o$ (see Figure \ref{figfundO}). This ensures that the heart of the 
stability condition coincides with the Abelian category of quiver representations in the
region $\IH^o$, up to a $\GLt$ transformation.  
 
Following Bridgeland \cite{bridgeland2016scattering}, the DT invariants for the quiver $(Q,W)$ 
are determined by a
scattering diagram $\cD_Q$ defined in the affine space $\IR^3$ spanned by King stability (also
known as Fayet-Iliopoulos) parameters $(\theta_1,\theta_2,\theta_3)$. For any dimension vector
$\gamma=(n_1,n_2,n_3)\in \IZ^3$, the active
 ray $\Ract_o(\gamma)$ is defined as the locus where 
$\bOm_\theta(\gamma)\neq 0$ inside the hyperplane  
$\{\theta\in\IR^3: n_1\theta_1+n_2\theta_2+n_3\theta_3=0\}$,
where $\bOm_\theta(\gamma)$ is the rational DT invariant associated to the moduli space of $\theta$-semistable representations of $(Q,W)$ with dimension vector $\gamma$ (in particular, 
$\Ract_o(\gamma)$ is empty unless the $n_i$'s are all positive). In~\S\ref{sec_P2proof}, building on earlier arguments \cite{Beaujard:2020sgs,Mozgovoy:2020has,Descombes:2021egc},
 we prove that 
the only initial rays are those for $\gamma\in\{\gamma_1,\gamma_2,\gamma_3,k\delta\}$ with 
$\gamma_1=(1,0,0),\gamma_2=(0,1,0),\gamma_3=(0,0,1),
\delta=(1,1,1), k\geq 1$.

\begin{theorem}[Attractor Conjecture for the $\IC^3/\IZ_3$ orbifold quiver]\label{thm:attractor-conj}
For the quiver with potential $(Q,W)$ 
shown in Figure \ref{figQuiver}, 
the attractor invariant $\Omstar(\gamma)$ vanishes for all dimension vectors $\gamma=(n_1,n_2,n_3)$ except for
\be
\label{attindexP2}
\Omega_\star(k\gamma_i)=\delta_{k,1} \ ,\quad \Omega_\star(k\delta)=-y^3-y-1/y
\ee
\end{theorem}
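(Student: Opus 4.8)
The plan is to read off $\Omega_\star$ from the \emph{initial data} of the orbifold scattering diagram $\cD_Q$. By Bridgeland \cite{bridgeland2016scattering} and its refinement, $\cD_Q$ is the unique consistent scattering diagram extending a prescribed initial family of rays, and an active ray $\Ract_o(\gamma)$ is initial precisely when the attractor invariant $\Omega_\star(\gamma)=\Omega_{\theta_\star(\gamma)}(\gamma)$, computed for the perturbed self-stability $\theta_\star(\gamma)=\langle-,\gamma\rangle$, is nonzero. So the theorem splits into computing the values of $\Omega_\star$ on the claimed rays and showing it vanishes elsewhere. The values are the easy part. Each $\gamma_i$ is the class of a simple object $E_i$ of the tilted heart \eqref{excepcoll}, hence $\Omega_\star(\gamma_i)=1$; for $k\geq2$ a representation of dimension $k\gamma_i$ has no arrows acting and is therefore $E_i^{\oplus k}$, never stable for any stability parameter, so $\Omega_\star(k\gamma_i)=0$. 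For the D0-tower one uses that $\delta=(1,1,1)$ lies in the radical of the skew form: the explicit expression $\langle\gamma',\gamma\rangle=3\sum_{i}(\gamma'_i\gamma_{i+1}-\gamma'_{i+1}\gamma_i)$ (indices mod $3$) gives $\langle-,\delta\rangle\equiv 0$, so any effective splitting $k\delta=\gamma'+\gamma''$ satisfies $\langle\gamma',\gamma''\rangle=\langle\gamma',k\delta\rangle-\langle\gamma',\gamma'\rangle=0$; the automorphisms $\Uact(\gamma')$ and $\Uact(\gamma'')$ commute and no ray in the direction $\delta$ can ever be generated by scattering. Hence every active ray $\Ract_o(k\delta)$ is initial, and $\Omega_\star(k\delta)$ is the refined DT invariant of the moduli space of $\theta_\star$-semistable representations of dimension $k\delta$, perturbed into a generic chamber. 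For $k=1$ this moduli space is the crepant resolution $K_{\IP^2}=G\text{-Hilb}(\IC^3)$, whose refined invariant is $-y$ times the symmetrised Poincar\'e polynomial of $\IP^2$, i.e. $-(y^{-1}+y+y^3)$, reducing to $-\chi(K_{\IP^2})=-3$ at $y\to1$; for $k\geq2$ the same value follows from the multiplicity independence of D0-brane BPS indices on a Calabi--Yau threefold, which can alternatively be recovered from consistency of $\cD_Q$ against the known large-volume invariants.

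The substance is the vanishing of $\Omega_\star(\gamma)$ for $\gamma=(n_1,n_2,n_3)$ not of the above form. First reduce to the case $n_i\geq1$ for all $i$: if some $n_i=0$ then every representation lives on the acyclic full sub-quiver carried by the remaining nodes, on which $W$ vanishes identically, and the attractor invariants of an acyclic quiver are supported on its simples, so $\Omega_\star(\gamma)=0$ unless $\gamma$ is one of $\gamma_1,\gamma_2,\gamma_3$. With all $n_i\geq1$, the first tool is destabilisation: every representation $R$ of dimension $\gamma$ carries canonical sub- and quotient-representations built from the kernels and images of the three bundles of maps $a,b,c$ --- for instance $(0,\bigcap_i\ker a_i,0)$ and its cyclic rotates are always sub-representations --- and the slope formulae $\langle(1,0,0),\gamma\rangle=3(n_2-n_3)$, $\langle(0,1,0),\gamma\rangle=3(n_3-n_1)$, $\langle(0,0,1),\gamma\rangle=3(n_1-n_2)$, together with their quotient counterparts, show such an object to be $\theta_\star(\gamma)$-destabilising --- whence the $\theta_\star$-semistable locus is empty --- whenever the $n_i$ are unbalanced enough to force the relevant kernel or cokernel to be nonzero. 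A complementary geometric mechanism: since the central charge is holomorphic on $\IH$, whenever $Z_\tau(\gamma)=0$ has a solution in $\IH$ (or at a conifold cusp at which $\gamma$ does not become massless) the attractor flow of $|Z_\tau(\gamma)|$ terminates at a vanishing locus supporting no stable object, again forcing $\Omega_\star(\gamma)=0$. For the remaining $\gamma$ --- a family near the diagonal $n_1=n_2=n_3$ on which these elementary arguments are inconclusive --- one invokes the cohomological DT framework of \cite{Descombes:2021egc}, building on \cite{Beaujard:2020sgs,Mozgovoy:2020has}: $\Omega_\star(\gamma)$ is the Serre-twisted Euler characteristic of the BPS sheaf on the stack of $\theta_\star$-semistable representations, and a support/dimension estimate for this stack in the self-stability chamber --- where, off the special locus, the relevant moduli are not rigid and the BPS sheaf has no cohomology in the required perversity --- yields the vanishing. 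An essentially equivalent route transports the large-volume scattering diagram $\cD^{\rm LV}_\psi$, whose initial rays are exactly the $\Ract_\psi(\cO(m))$ and $\Ract_\psi(\cO(m)[1])$ \cite{Bousseau:2019ift}, through the sequence of tilts realising \eqref{excepcoll}, and verifies that in the orbifold chamber all those rays, and everything they generate, already arise by scattering the finite data $\{\gamma_1,\gamma_2,\gamma_3,k\delta\}$, so consistency of $\cD_Q$ leaves no room for further initial rays.

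The main obstacle is precisely this last, near-diagonal family of dimension vectors: there the destabilisation bound and the vanishing-central-charge argument both run out, and one genuinely needs the cohomological/perverse-sheaf input of \cite{Beaujard:2020sgs,Mozgovoy:2020has,Descombes:2021egc} (or, in the alternative, a careful bookkeeping of the change of heart relating $\cD_Q$ to $\cD^{\rm LV}_\psi$). Everything else --- the simple-object rays, the radical argument for the D0-tower, the reduction to $n_i\geq1$, and the destabilisation bound for unbalanced $\gamma$ --- is comparatively routine.
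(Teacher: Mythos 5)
Your easy steps are mostly fine: the values on the simple rays, the reduction to dimension vectors with all $n_i\geq 1$ (this is the acyclic-quiver/strongly-connected-support statement already quoted in the paper), and the observation that $\delta$ lies in the radical of $\langle-,-\rangle$ so that $\Omstar(k\delta)$ is a stability-independent invariant equal to $(-y)^{-3}P(K_{\IP^2},y)$ — although for $k\geq 2$ you merely appeal to ``multiplicity independence of D0-brane indices'' without proof, which is exactly the external input the paper attributes to \cite{Mozgovoy:2020has}. The genuine gap is the core of the theorem: the vanishing of $\Omstar(\gamma)$ for every $\gamma$ with all $n_i\geq 1$ that is not a multiple of $\delta$. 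Your ``destabilisation'' tool is asserted only for ``unbalanced enough'' $n_i$ with no precise criterion or proof of which vectors it eliminates; your ``vanishing central charge'' mechanism concerns the attractor flow on the $\Pi$-stability slice, whereas $\Omstar$ is defined at the (perturbed) King self-stability point $\theta_\star(\gamma)=\langle-,\gamma\rangle$, so it cannot be invoked as stated; and for the remaining family — which is in fact the generic case, not just a thin neighbourhood of the diagonal — you simply ``invoke the cohomological DT framework'' of \cite{Beaujard:2020sgs,Mozgovoy:2020has,Descombes:2021egc} with an unspecified ``support/dimension estimate''. You acknowledge yourself that this is the main obstacle; naming references where relevant techniques live is not a proof, so the vanishing claim remains unestablished.

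For comparison, the paper's argument is concrete and structurally different. Because $(Q,W)$ is a noncommutative crepant resolution of $\IC^3/\IZ_3$, every cycle is central in the Jacobian algebra, hence acts as a scalar on any stable representation; if $\gamma\neq(n,n,n)$ that scalar must vanish. Then, in the attractor chamber with $\theta_1<0<\theta_3$, stability forces all arrows $c_j\colon 1\to 3$ to vanish (if some $\phi_{c_j}(x)\neq 0$, stability pushes $x$ into the image of the $a$-arrows and forces some $b_i\phi_{c_j}(x)\neq 0$, producing a nonvanishing cycle — contradiction). Thus $\{c_j\}$ is a cut of $W$, and a transversality argument (a linear dependence among the infinitesimal relations $\partial_{c_j}W$ would yield another stable representation with nonzero $c$-arrows, contradicting the previous step) shows the stable moduli space is smooth of dimension $3n_2n_3+3n_1n_2-3n_1n_3-n_1^2-n_2^2-n_3^2+1$. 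Positivity of this dimension, combined with the attractor-chamber inequalities $n_1\geq n_2$ and $n_3\geq n_2$, forces $q(\gamma)=\frac12\bigl((n_1-n_2)^2+(n_2-n_3)^2+(n_3-n_1)^2\bigr)\leq 1$, with equality possible only for $(1,0,0)$ and its rotations; the diagonal case is then settled by the cited stability-independence result. None of these steps — centrality of cycles, vanishing of the $c$-arrows under the sign conditions, the cut-and-transversality dimension bound, and the quadratic-form inequality — appears in your proposal, and without them (or a worked-out substitute) the theorem is not proved.
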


The complete scattering diagram $\cD_Q$ is then determined from this initial data by consistency using the flow tree formula of \cite{Alexandrov:2018iao,Arguz:2021zpx}. By scaling invariance, the scattering diagram can be
restricted to the hyperplane $\theta_1+\theta_2+\theta_3=1$
with no loss of information, except for the rays $\Ract_o(\delta)$ associated to D0-branes
which are no longer visible. The resulting 
two-dimensional scattering diagram $\cD_o$ is shown in Figure \ref{figMcKayScat},
including only the initial rays and a few secondary rays. 

\begin{figure}[ht]
\begin{center}
\includegraphics[width=10cm]{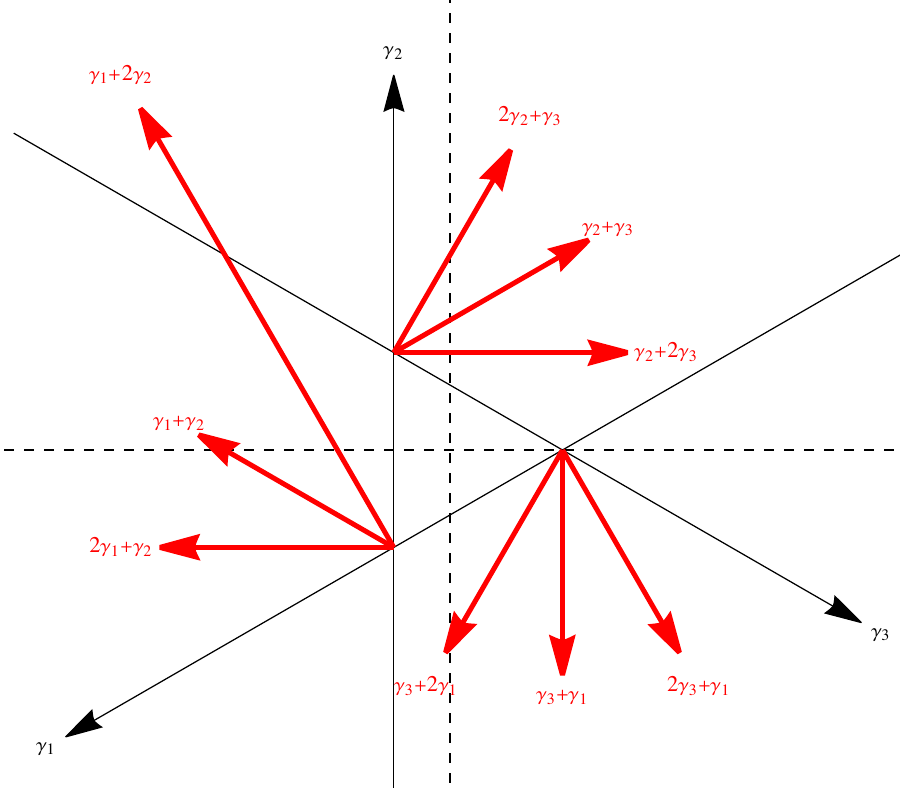}
\end{center}
\caption{Two-dimensional section $\cD_o$ of the orbifold scattering diagram $\cD_Q$ along the hyperplane 
$\theta_1+\theta_2+\theta_3=1$. 
The initial rays associated to $\gamma_1,\gamma_2,\gamma_3$ are drawn in black, 
a few secondary rays are plotted in red. 
This diagram
is embedded in the exact scattering diagram $\cD^\Pi_\psi$ around $\tau=\tau_o$  by identifying the horizontal and vertical axis with the coordinates $(u,v)$ defined in \eqref{defuvxy}. 
\label{figMcKayScat}}
\end{figure}

Since the dimension vectors 
of the initial rays lie in the positive octant of $\IZ^3$, the enumeration of all possible scattering 
trees for fixed total dimension vector $\gamma=(n_1,n_2,n_3)$ is straightforward, unlike for the large volume scattering
diagram discussed previously. This gives an efficient algorithm to determine the quiver indices
$\Omega_{\theta}(\gamma)$ for arbitrary dimension vector $\gamma$
and stability parameters $\theta$. The latter are in turn equal to the DT invariants 
$\Omega_\tau(\gamma)$ in the region $\IH^o$ around the orbifold point $\tau_o$,
upon relating the Chern vectors and dimension vectors, as in the caption of Figure \ref{figQuiver}, and  equating the King stability parameters $\theta_i$ with $\Re[e^{-\I\psi} Z_\tau(\gamma_i)]$ (up to overall rescaling). In~\S\ref{sec_orbifold}, we show that the restriction of the orbifold scattering diagram $\cD_Q$ to the hyperplane $\theta_1+\theta_2+\theta_3=1$ agrees with the exact scattering diagram
$\cD^\Pi_\psi$ (to be defined below)  in a region around the
orbifold point~$\tau_o$.

\subsection{The exact scattering diagram}

For the exact central charge function \eqref{defZ} and associated Bridgeland stability conditions, one can likewise define  the scattering diagram $\cD^{\Pi}_\psi$ as the set of active rays 
$\Ract_\psi(\gamma)$ in the Poincar\'e upper half-plane such that  $Z_\tau(\gamma)$ has fixed argument $\psi+\frac{\pi}{2}$ and $\bOm_\tau(\gamma)\neq 0$. Since the conifold points $\tau=m\in \IZ$ lie on the boundary of the domain $\IH^{\rm LV}$ (defined below \eqref{defsw})
covered by the large volume scattering diagram $\cD^{\rm LV}_\psi$, the initial data must include
 the rays associated to $\cO(m)$ and $\cO(m)[1]$,  along with their images under $\Gamma_1(3)$. 
 In particular, since the spherical twist $\ST_\cO$ maps 
$\cO(0)[n]\mapsto \cO(0)[n+2]$, there are now an infinite set of rays emitted from each
point $\tau=m$, as shown in Figure \ref{fighomshifted}. Similarly, there is  an infinite set of rays emitted from every rational $\tau=\frac{p}{q}$ with $q\neq 0\mod 3$  which are in the same orbit under $\Gamma_1(3)$, where objects of charge $\pm\gamma_C$ become massless (the relevant
objects are computed in \S\ref{sec_massless} and shown in Table~\ref{Conifoldtab} for 
$0\leq p<q\leq 5$).
In particular, this includes the initial rays associated to the exceptional objects $E_i$
in \eqref{excepcoll}, emanating from $\tau=0,-\frac12,-1$, as well as translates of 
those. \looseness=-1

\begin{figure}[ht]
\begin{center}
\includegraphics[height=8cm]{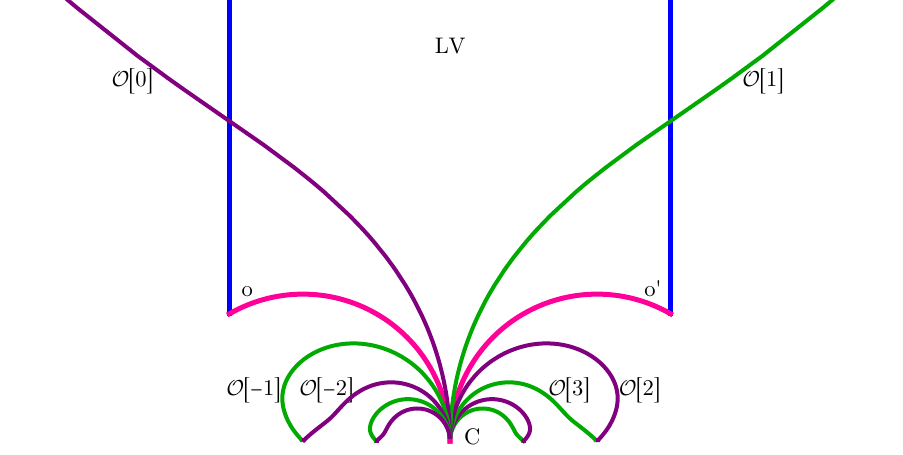} 
\end{center}
\caption{Rays $\Ract_\psi(\cO[n])$ emitted from $\tau=0$ for $\psi=0$. The same picture holds 
near all conifold points at $\tau=\frac{p}{q}$ with $q\neq 0 \mod 3$. For each $k$, the rays corresponding to the D4-brane
$\cO[2k]$ (in purple) and anti-brane $\cO[2k+1]$ (in green) end at the
same large-volume point: $\tau=1/(3k)$ for $k\neq 0$ and $\tau=\I\infty$ for $k=0$. As $\psi$ increases, the rays $\Ract_\psi(\cO[n])$ move counterclockwise.}
\label{fighomshifted}
\end{figure}

\begin{table}[ht]
\caption{The object $E$ of charge $\gamma_C$ which becomes massless at $\tau=\frac{p}{q}$ with $q\neq 0\mod 3$ is obtained by acting on $\cO$ by an auto-equivalence $g$ mapping $\tau=0$ to 
$\tau=\frac{p}{q}$. 
Here $T,U,V$ denote the generators $\tau\mapsto \tau+1, \tau\mapsto \frac{\tau}{1+3\tau}$ and $\tau\mapsto\frac{\tau}{1-3\tau}$. 
\label{Conifoldtab}}\vspace{-.5\baselineskip}
$
\begin{array}{|c|r|r|r|r|} \hline
\tau & g &\gamma_C & \Delta(\gamma_C)   & E \\ \hline 
0 &  \mathds{1} &  [1,0,1) & 0  & \cO \\
1/5 & U^2 T^{-1} &- [5,1,6) & 3/{25}  & E\rightarrow\Omega(2)[-1] \rightarrow \cO^{\oplus 3}[2] \xrightarrow{+1} \\
1/4 & U T  & [4,1,6) & -3/32 &  E\rightarrow\cO(1) \rightarrow \cO^{\oplus 3}[3] \xrightarrow{+1} \\
2/5 & U T^{-2}  &- [5,2,6) & 12/25 &  E\rightarrow\cO(-2) \rightarrow \cO^{\oplus 6} \xrightarrow{+1} \\
1/2 & T V T  & -[2,1,3) & 3/8 & \Omega(2)[1]\\
3/5  & T V T^2 & -[5,3,8) & 12/25 &   \cO(1)^{\oplus 6}\rightarrow \cO(3) \rightarrow E \xrightarrow{+1} \\
3/4 & T V T^{-1} & [4,3,10) &-3/32  & \cO(1)^{\oplus 3}[-3]\rightarrow \cO \rightarrow E  \xrightarrow{+1} \\
4/5 & T V^2 T  & -[5,4,12) & 3/25 & \cO(1)^{\oplus 3}[-2]\rightarrow \Omega(2)[1] \rightarrow E \xrightarrow{+1} \\
1 & T  & [1,1,3) & 0  & \cO(1) \\ \hline
\end{array}
$
\end{table}

In order to analyze the structure of the resulting scattering diagram, it is convenient to introduce 
affine coordinates\footnote{Note that the map $\tau\mapsto(x,y)$ is not injective on $\IH$, but its restriction to
the fundamental domain $\cF_C$ and its translates is, see Figure \ref{figGrTTD}.}
\be
\label{defxysgen0}
x\coloneqq \frac{\Re\left( e^{-\I \psi} T\right)}{\cos\psi}\ ,\quad 
y\coloneqq  -\frac{\Re\left( e^{-\I \psi} T_D\right)}{\cos\psi}
\ee
such that the geometric rays in the $(x,y)$-plane are contained in straight lines $\{ry+dx=\ch_2\}$, oriented along  the vector $(-r,d)$. In these coordinates, the conifold point $\tau=m$ is mapped to 
 \be
(x_{\cO(m)},y_{\cO(m)}) = \left(m + \cV \tan\psi,  -\frac12{m^2} - m \cV \tan \psi\right) 
\ee
where $\cV$  is the {\it quantum volume}\footnote{This quantum volume was first computed in
\cite[(4.1)]{Klemm:1999gm} in terms of Barnes' G-function, and turns out to be a special
value of the L-function associated to the Eisenstein series $C(\tau)$, as noted independently
in \cite{Bonisch:2022mgw}, see \eqref{defV} and
\eqref{f12L0} for the explicit relations.}
\be
\label{defV0}
\cV\coloneqq\Im T(0) =  \frac{27}{4\pi^2} \Im\left[ \Li_2\left(e^{2\pi\I/3}\right) \right]
\simeq 0.462758
\ee
In particular, just as in the large volume scattering diagram
of \cite{Bousseau:2019ift}, the initial rays associated to $\cO(m)$ and $\cO(m)[1]$ are straight 
lines tangent to the parabola $y=-\frac12 x^2$ at $x=m$, but their starting point is 
displaced by a horizontal distance $\cV_\psi\coloneqq\cV\tan\psi$ along that tangent. 

\begin{figure}[htb]
\begin{center}
\includegraphics[height=8cm]{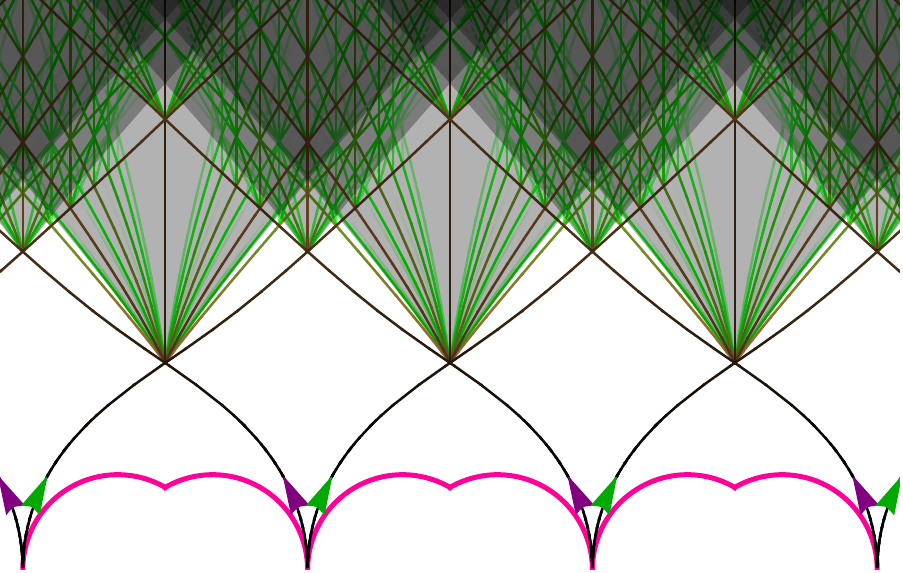}
\end{center}
\caption{Scattering diagram $\cD_\psi^{\Pi}$ for $\psi=0$, restricted to the fundamental domain and its translates.
  The initial rays are $\Ract_\psi(\cO(m))$ and $\Ract_\psi(\cO(m)[1])$. 
  The gray areas indicate regions where the scattering diagram is dense.\label{figscattPi0}}
\end{figure}

\begin{figure}[ht]
\begin{center}
\includegraphics[height=8cm]{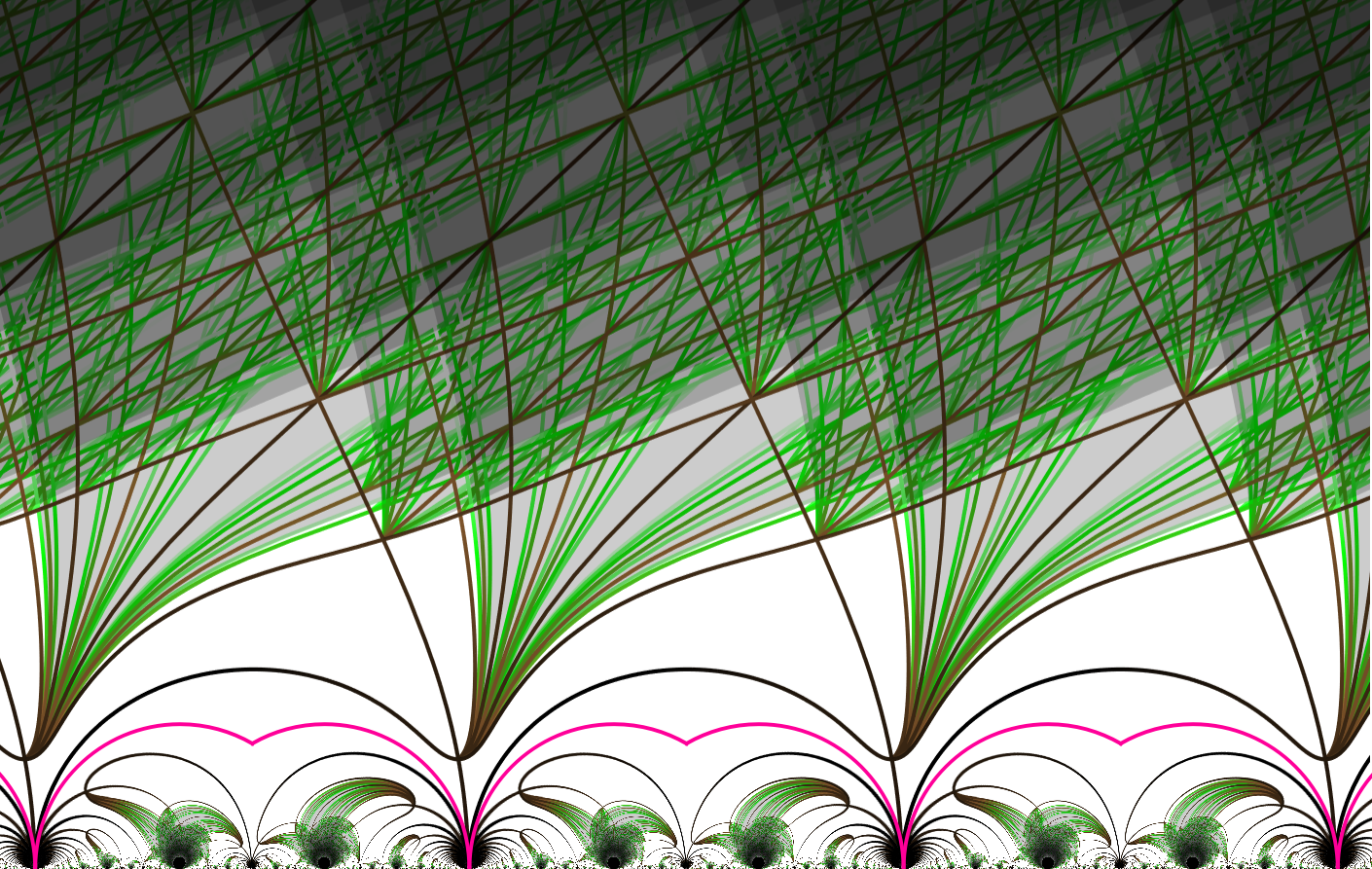} \\[3mm]
\includegraphics[height=8cm]{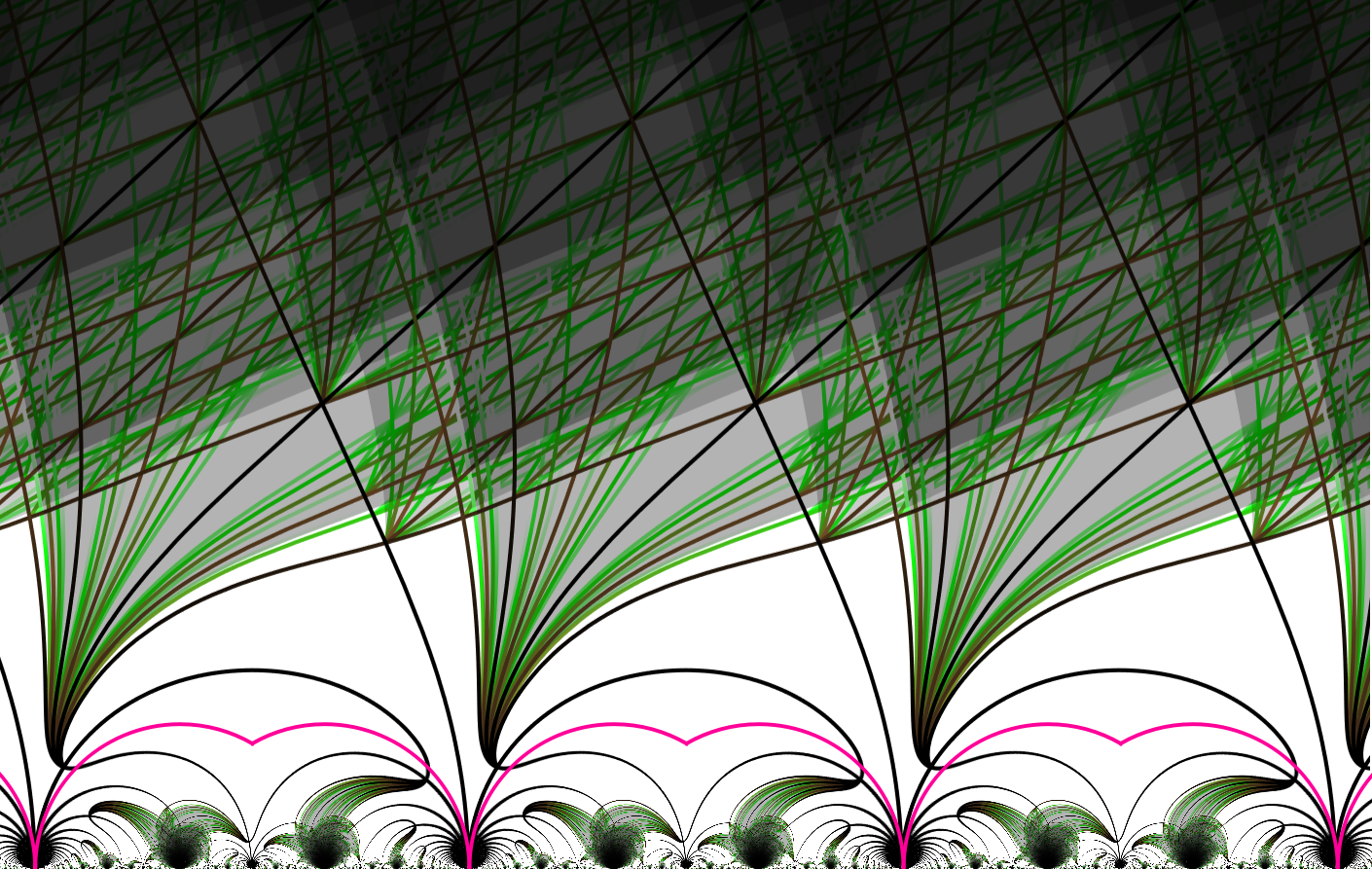} 
\end{center}
\caption{Scattering diagrams $\cD_\psi^{\Pi}$ for $\psi=-0.820$ (top)
and $\psi=-0.828$ (bottom),
on either side of the critical phase  $\psicr{-1/2}\simeq -0.82406$.
As $\psi\to\psicr{-1/2}$ from above, the rays $\Ract(\cO(m-1)[1])$
emitted at $\tau=m-1$
approach arbitrarily close to the conifold point $\tau=m$, before
escaping to $\tau=\I\infty$.
Beyond this critical value, these rays approach $\tau=m$ before escaping
to a different large
volume point $\tau=m-1/3$.  The rays emerging from the first
intersection above $\tau=m$
describe bound states of $\cO(m-1)[1]$ and $\cO(m)$ for
$\psi>\psicr{-1/2}$ and bound states
of $\cO(m)[1]$ and $\Omega(m+2)$ for phases slightly below $\psicr{-1/2}$.
The first diagram splits into $\Gamma_1(3)$ images of a connected
component that lies in the
union of translates of $\cF_C$, while the second diagram is connected. 
\label{figscattPicr}}
\end{figure}

For  small enough $\psi$, namely $|\cV_\psi|<\frac12$, this displacement does not affect the
structure of the scattering diagram, so that the exact scattering diagram $\cD_\psi^{\Pi}$ 
coincides with the large volume scattering diagram $\cD^{\rm LV}_0$ in the region above 
the parabola in the $(x,y)$ plane, up to shifting the starting points of the initial rays. 
In the original coordinate $\tau$, 
the only initial rays which escape towards the large volume region $\tau=\I\infty$  are those associated to $\cO(m)$ and $\cO(m)[1]$ (see Figure \ref{figscattPi0}), and their intersections patterns are identical to those of the large volume scattering diagram $\cD^{\rm LV}_0$ in $(s,t)$ plane, up to a change of variable $\tau\mapsto (s,t)$ obtained by equating the coordinates $(x,y)$ on both sides. 
In particular, the topology of the trees contributing to the 
index $\Omega_\infty(\gamma)$ along the rays $\Ract_0(\gamma)$ is unchanged, and the SAFC 
for $\cD_\psi^{\Pi}$ follows from the SAFC for $\cD^{\rm LV}_0$.

In contrast, for  $|\cV_\psi|>\frac{1}{2}$, the displacement of the starting points of the initial 
rays associated to $\cO(m)$ and $\cO(m)[1]$ is large enough  that the first collision no longer
involves two consecutive rays $\Ract(\cO(m-1)[1])$ and $\Ract(\cO(m))$.
Taking $\cV_\psi<-\frac{1}{2}$ for definiteness, the ray $\Ract(\cO(m-1)[1])$ interacts with two
``new''\footnote{For small phases these rays escape towards other large volume limits.} rays
$\Ract(\cO(m)[-1])$ and $\Ract(\Omega(m+1))$ in a region near the orbifold point $\tau_o+m$, in such
a way that these three initial rays generate a portion (which grows with~$|\cV_\psi|$) of the
orbifold scattering diagram $\cD_o$ corresponding to the exceptional collection~\eqref{excepcoll}
tensored with~$\cO(m)$. The resulting outgoing rays escape towards the large volume points
$\tau=\I\infty,-1/3,-2/3$, and those that escape towards $\I\infty$ collide further with the
initial ray $\Ract(\cO(m))$ and with rays for different values of $m\in\IZ$.
In fact, as $\psi$ approaches the critical value $\cV_\psi=-\frac12$,
the ray $\Ract(\cO(m)[1])$ emitted at $\tau=m$ approaches arbitrary close to the conifold point 
$\tau=m+1$, and it escapes to the large volume point $\tau=m+\frac23$
(respectively, $\tau=\I\infty$) as $\psi$ approaches the critical value $\cV_\psi=-\frac12$
from below (respectively, from above as in Figure~\ref{figscattPicr}).
More generally, we find that 
the topology of the scattering diagram jumps at
a countable set of critical phases values 
where some ray $\cR_\psi(\gamma)$ ends up at a conifold point.

\begin{definition}\label{def:critical}
  The phase $\psi\in(-\pi/2,\pi/2)$ is \textbf{critical} if any of four equivalent conditions holds:
  \begin{enumerate}
  \item\label{def-critical-end} an active ray $\cR_\psi(\gamma)$ ends at $\tau=0$ (or any other conifold point);
  \item\label{def-critical-start} an (initial) active ray $\cR_\psi(\gamma)$ with 
  $\gamma\notin [1,0,0) \IZ$ starts at $\tau=0$;
  \item\label{def-critical-LV} the point $(x,y)=(\cV_\psi,0)$ is the intersection of $\Ract^{\rm LV}_0(\cO)$ and another active ray of~$\cD^{\rm LV}_0$\!;
  \item\label{def-critical-orbi} 
    $\theta=(0,\frac{1}{2}+|\cV_\psi|,\frac{1}{2}-|\cV_\psi|)$ is the intersection of $\Ract^o(\gamma_1)$ and another active ray of~$\cD_o$.
  \end{enumerate}
\end{definition}

The equivalence of the four characterizations is proven by mapping DT invariants along rays of the exact diagram near $\tau=0$ to rays of the large volume diagram~$\cD^{\rm LV}_0$ in~\S\ref{sec_init} and orbifold diagram~$\cD_o$ in \S\ref{sec_crit} and~\S\ref{sec_start}.  As ray intersections in~$\cD^{\rm LV}_0$ (or~$\cD_o$) have rational coordinates~$(x,y)$ (or rational $\theta$ up to scaling, respectively), critical phases have rational values of~$\cV_\psi$.  In detail,
critical phases take the form 
\be
\label{defpsicr}
\psicr{\alpha}=\arctan(\alpha/ \cV)
\ee
or equivalently $\cV_\psi=\alpha$,  where $|\alpha|$ belongs to a dense set of rational values 
in the range $(\frac12 \sqrt{5},\infty)$, or to the discrete series
\be
\label{discseries}
\Bigl\{ \frac{F_{2k}+F_{2k+2}}{2F_{2k+1}} , k\geq 0 \Bigr\} = \{\tfrac{1}{2}, 1, \tfrac{11}{10} , \tfrac{29}{26} , 
\tfrac{19}{17} , \ldots \}  
\ee
with $F_p$  the $p$-th Fibonacci number (with $F_0=0$, $F_1=1$),  converging to 
$\frac12 \sqrt{5} \simeq 1.11803$.
This discrete series and dense set can be read off along the $y=0$ line in the large volume diagram of Figure~\ref{figDLVxy}.
  In Figures \ref{fig011psi}
and~\ref{fig101psi}, we show examples of trees contributing to $\gamma=\ch(\cO)$ and $\gamma=\ch(\cO_C)$, with discontinuities occuring only on a subset of critical phases, specifically at half-integer values of~$\cV_\psi$.

\begin{figure}
\begin{center}
\includegraphics[width=8cm]{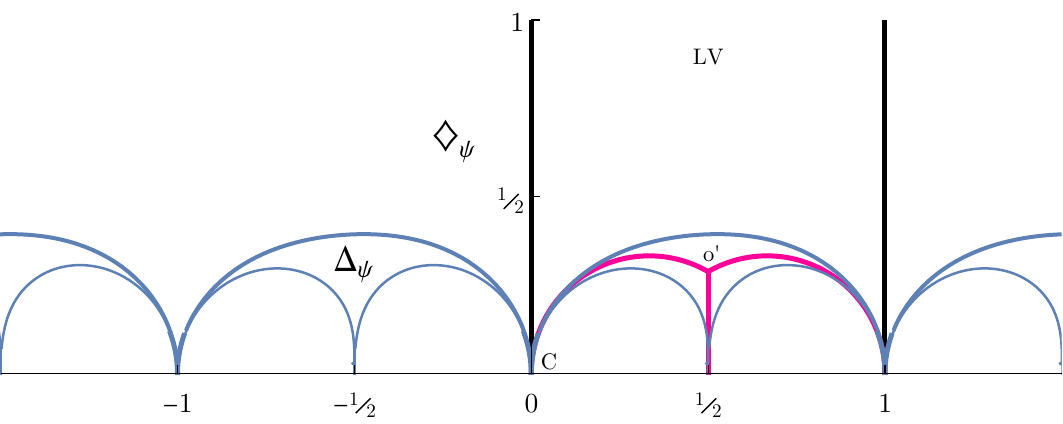}\hspace*{5mm}
\includegraphics[width=7cm]{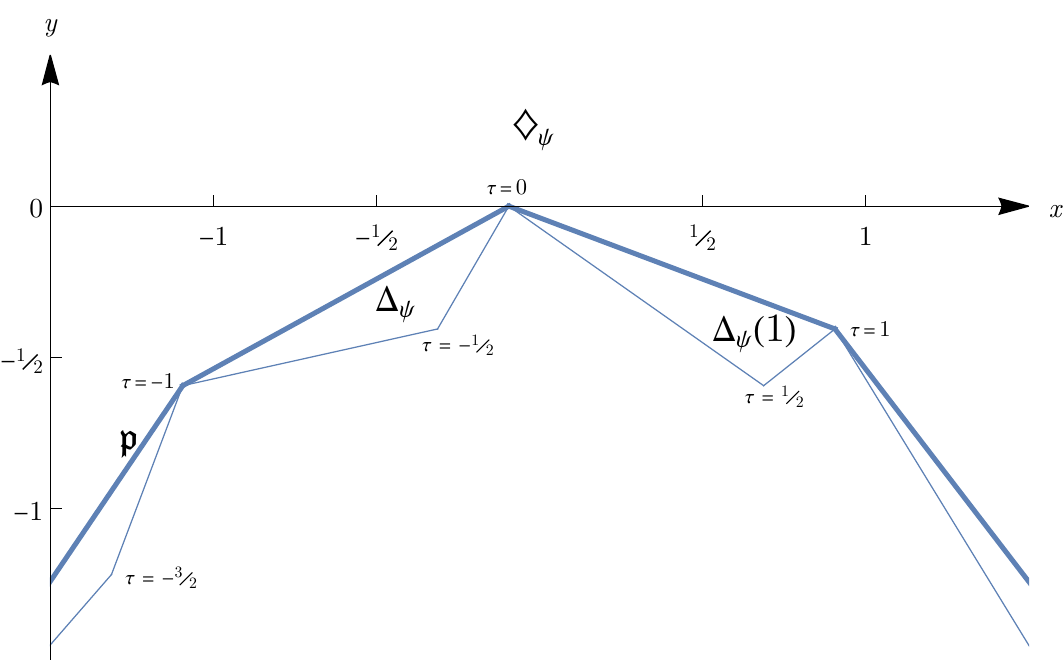}
\end{center}
\caption{The upper half plane $\IH$ is partitioned into regions~$g\cdot \diamondsuit_\psi$
around images of the large volume point $\I\infty$ and $g\cdot \Delta_\psi$ around images
of the orbifold point $\tau_o$. The region $\Delta_\psi$ and its translates $\Delta_\psi(k)$
map to triangles in the $(x,y)$ plane, while the region $\diamondsuit_\psi$ above all blue curves 
maps to the region
above the ``jagged parabola''~$\mathfrak{p}$ defined in \eqref{eqjagged}.  The figures are drawn for $\psi=-0.2$.  While the figure in  the $\tau$~plane is rather insensitive to the value of 
the phase $\psi$, the images of these regions in the $(x,y)$ plane depend significantly on it: for instance the images of $\diamondsuit_\psi$ and~$\Delta_\psi$ overlap for $\psi<-\psicr{1/2}$.\label{fig:Delta-diamondsuit}}
\end{figure}

\begin{figure}
\begin{center}
\includegraphics[width=8cm]{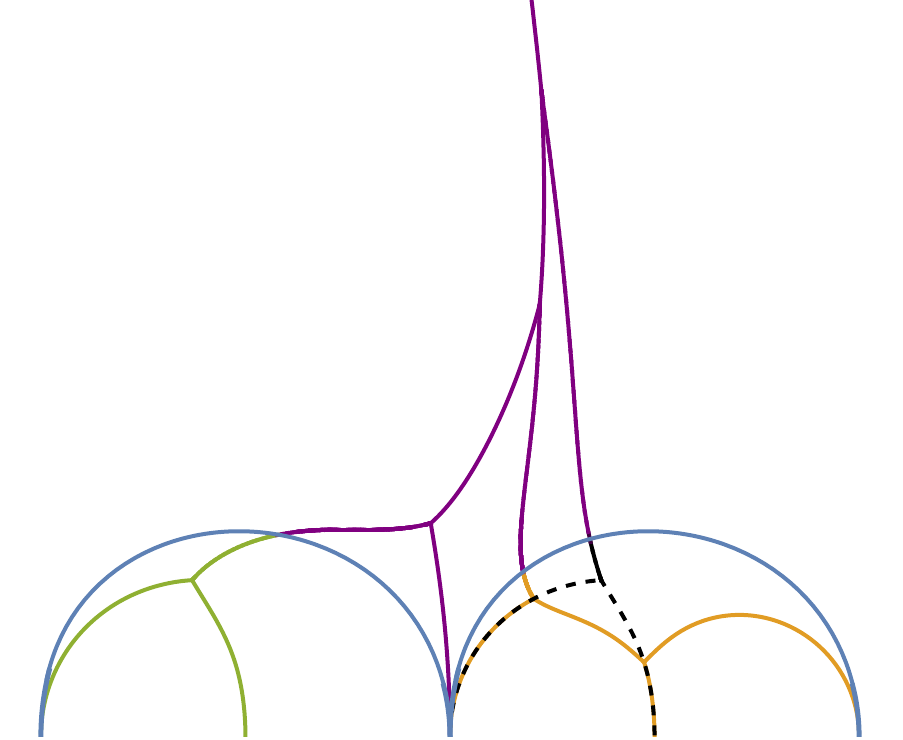}\qquad
\includegraphics[width=8cm]{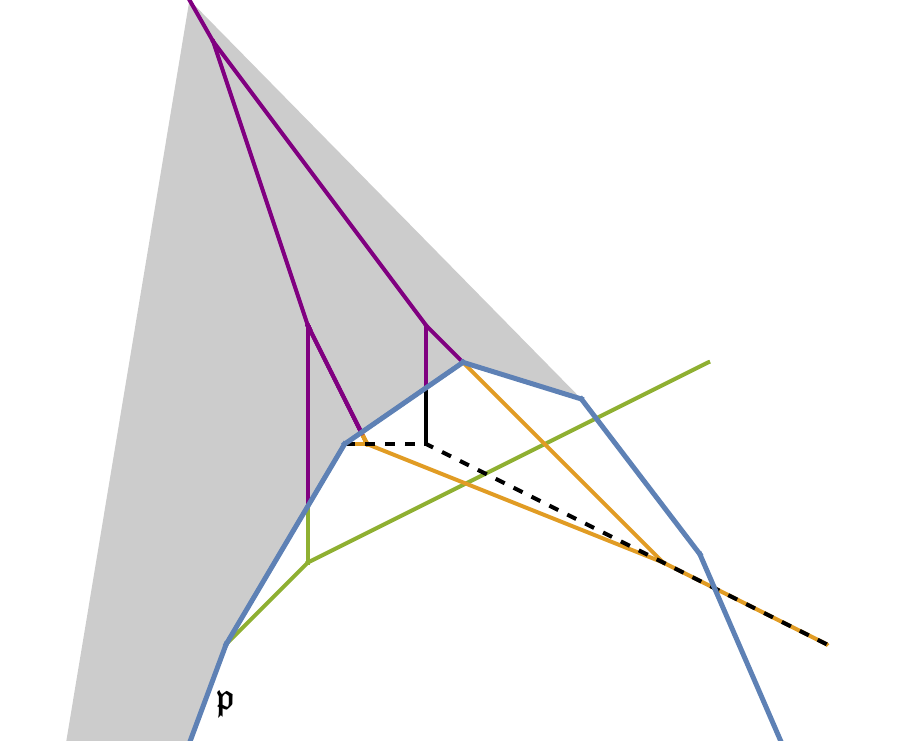}
\end{center}
\caption{Left: Example of attractor flow tree $\cT$ 
 in the $\tau$~plane, for $|\cV_\psi|>1/2$ (specifically, 
  $\cT=\{\{\{2 \cO(-1)[1], \Omega(1)\},
         \{4 \cO[1], \{3 \Omega(2), \cO(1)[-1]\}\}\},\allowbreak
       \{3 \cO(1), \{\Omega(2), 2 \cO[1]\}\}\}$ and $\psi=-1.2$),~in which the various subtrees (or shrubs)~$\cT_I$ corresponding to exceptional collections are given different colors (green, orange, black, partly dashed to make overlaps visible).
Together with further initial rays, the rays entering the region~$\diamondsuit_\psi$ (above the blue curves) participate in an attractor flow tree in that region, consisting only of outbound rays (solid purple curves).  
Right: same tree in the $(x,y)$ plane; some initial rays in
     the orbifold regions appear to start in~$\diamondsuit_\psi$,
     because $\tau\mapsto(x,y)$ is not injective.  The intersection
     of~$\diamondsuit_\psi$ with the convex hull of~$\mathfrak{p}$
     and of the tree's root, shaded in gray, is a bounded region.
     \label{fig:composite-tree}}
\end{figure}

Away from  critical values of $\cV_\psi$, we show that for any total charge $\gamma$,
flow trees rooted in the large volume region admit a two-stage structure, with a `trunk' inside 
the region $\diamondsuit_\psi$ lying above a certain piecewise linear region $y \geq \mathsf{p}(x)$
in the $(x,y)$ plane (defined in \eqref{eqjagged} and shown in Figure \ref{fig:Delta-diamondsuit}), and subtrees (or  `shrubs') inside triangular regions 
$\Delta_\psi(m)$ in the $(x,y)$ plane containing the image of the orbifold point $\tau_o+m$ (see the example in Figure~\ref{fig:composite-tree}). Within each triangular region, the shrubs
reduce to attractor flow trees for the orbifold quiver,  with leaves given by initial rays of the exceptional collection \eqref{excepcoll} tensored with $\cO(m)$. In addition to these shrubs, the trunk can also have leaves of type $\Ract(\cO(m))$
for $\cV_\psi<0$ (or $\Ract(\cO(m)[1])$ for $\cV_\psi>0$). 
We give effective bounds on
the possible constituents which show that the SAFC holds for $\cD^\Pi_\psi$ for any non-critical value of~$\cV_\psi$.

Finally, for $\psi=\pm\frac{\pi}{2}$, we find that the scattering diagram drastically simplifies. Indeed, 
the geometric rays $\Im[Z_\tau(\gamma)]=0$ for $r\neq 0$ reduce to the contour lines 
$s=\frac{d}{r}$ of the function $s=\frac{\Im T_D}{\Im T}$ defined in \eqref{defsw} (for $r=0$, 
the geometric rays $\Rgeo_{\pi/2}(\gamma)$ are empty). Hence, 
scattering can only take place at the orbifold point $\tau_o$ and its images under $\Gamma_1(3)$, where the function $s$ is ill-defined. At each orbifold point, there are three incoming rays associated to the objects of corresponding exceptional collection, which scatter all at once according to the orbifold scattering diagram $\cD_o$. In particular, a ray emitted from
the orbifold point $\tau_o+m$ into the fundamental domain $\cF_o(m)$
will escape to $\tau=\I\infty$ in the range $m-1<\tau_1<m$ without encountering any wall of marginal stability.
For $m=0$, this
explains why the  index for the orbifold quiver in the anti-attractor chamber $\Omega_c(\gamma)$ 
agrees with the Gieseker index $\Omega_\infty(\gamma)$ for normalized torsion free 
sheaves, as observed in \cite{Douglas:2000qw,Beaujard:2020sgs}.
We also prove the SAFC for that phase, which leads altogether to the following theorem:

\begin{theorem}[Split Attractor Flow Conjecture for local $\IP^2$]\label{thm:SAFC}
  For any $\Pi$-stability condition $z\in\Pi\simeq\IH$ of $K_{\IP^2}$ and any charge vector~$\gamma$ such that $\psi=\arg(-\I Z_z(\gamma))$ is a non-critical phase in $(-\pi/2,\pi/2]$ in the sense of Definition~\ref{def:critical}, there are finitely many (maximally extended) split attractor flows starting from~$z$ whose leaves are active rays.
  All leaves are $\Gamma_1(3)$ images of the rays $\Ract(\cO)$ and $\Ract(\cO[1])$ that emanate from the conifold point $\tau=0$.
  Moreover, if $|\psi|<\psicr{1/2}$, then such flows do not exist for $z\in\Delta_\psi$, while for $z\in\diamondsuit_\psi$ the leaves are $\Ract(\cO(m))$ and~$\Ract(\cO(m)[1])$, $m\in\IZ$.
  If $\psicr{1/2}<|\psi|<\pi/2$, then for $z\in\Delta_\psi$ the leaves are rays $\Ract(E_{j=1,2,3})$ corresponding to the exceptional collection~\eqref{excepcoll}, while for $z\in\diamondsuit_\psi$ the leaves are rays $\Ract(\cO(m))$ and $\Ract(E_j(m))$ (for $j=1,2,3$ and $m\in\IZ$), emanating from conifold points at integer and half-integer values of~$\tau$.
  Finally, if $\psi=\pi/2$, then for $z$ in the interior of~$\cF_o$ the leaves are $\Ract(E_{j=1,2,3})$.
\end{theorem}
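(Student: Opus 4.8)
The plan is to prove the theorem by a case analysis on the displacement $\cV_\psi=\cV\tan\psi$, which by Definition~\ref{def:critical} and \eqref{defpsicr} controls the combinatorial type of the exact scattering diagram $\cD^\Pi_\psi$, and to reduce in each regime to the already established structure of the large volume diagram $\cD^{\rm LV}_0$ (with the causality and electric-potential bounds recalled in \S\ref{sec_inidata}) and of the orbifold diagram $\cD_o$ (with the attractor data supplied by Theorem~\ref{thm:attractor-conj}). Throughout one exploits that $\cD^\Pi_\psi$ is equivariant under the $\Gamma_1(3)$ action of auto-equivalences on \DCohK: since $\IH$ is partitioned into the $\Gamma_1(3)$-translates of $\diamondsuit_\psi$ (around the cusp $\tau=\I\infty$) and of $\Delta_\psi$ (around the orbifold point $\tau_o$) as in Figure~\ref{fig:Delta-diamondsuit}, it suffices to treat a root $z\in\diamondsuit_\psi$ or $z\in\Delta_\psi$; and, by the computation of the conifold-massless objects in \S\ref{sec_massless} and Table~\ref{Conifoldtab}, every initial ray of $\cD^\Pi_\psi$ — in particular each exceptional ray $\Ract(E_j)$ of \eqref{excepcoll} — is a $\Gamma_1(3)$-image of $\Ract(\cO)$ or $\Ract(\cO[1])$ emanating from $\tau=0$, which gives the uniform description of the leaves. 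Finiteness of the set of flows rooted at $z$ follows in each case from an effective a priori bound on the number and charges of the leaves together with consistency of $\cD^\Pi_\psi$ (guaranteed by the flow tree formula of \cite{Alexandrov:2018iao,Arguz:2021zpx}): a consistent scattering diagram admits only finitely many scattering sequences producing a given root once the admissible initial rays are restricted to a finite list.

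For subcritical phases, $|\psi|<\psicr{1/2}$ i.e. $|\cV_\psi|<\tfrac12$, the displacement of the starting points of the initial rays $\Ract_\psi(\cO(m))$ and $\Ract_\psi(\cO(m)[1])$ along their common tangent to the parabola $y=-\tfrac12 x^2$ is too small to change the pattern of first collisions, so in the affine coordinates \eqref{defxysgen0} the portion of $\cD^\Pi_\psi$ above the parabola is identified ray by ray with $\cD^{\rm LV}_0$ (up to that shift), via an explicit change of variable $\tau\mapsto(x,y)$. Hence a maximally extended flow rooted at $z\in\diamondsuit_\psi$ is a large volume scattering sequence, and the bound of \S\ref{sec_inidata} produces the finite list of trees with leaves $\Ract(\cO(m))$ and $\Ract(\cO(m)[1])$; by consistency these reconstruct $\Omega_z(\gamma)$. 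For $z\in\Delta_\psi$ one shows, from the analysis of $\cD^\Pi_\psi$ near $\tau_o$ in \S\ref{sec_orbifold} and \S\ref{sec_init}, that the orbifold scattering only switches on at $\cV_\psi=\pm\tfrac12$, so that $\Delta_\psi$ carries no active ray, whence $\Omega_z(\gamma)=0$ and (trivially, finitely) no flows.

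The core of the argument is the supercritical regime $\psicr{1/2}<|\psi|<\tfrac\pi2$, i.e. $|\cV_\psi|>\tfrac12$. Taking $\cV_\psi<0$ for definiteness, the three initial rays of the exceptional collection \eqref{excepcoll} tensored by $\cO(m)$, namely $\Ract_\psi(\cO(m)[-1])$, $\Ract_\psi(\Omega(m+1))$, $\Ract_\psi(\cO(m-1)[1])$, now collide inside the triangular region $\Delta_\psi(m)$ around $\tau_o+m$; matching $\cD^\Pi_\psi|_{\Delta_\psi(m)}$ with $\cD_o$ tensored by $\cO(m)$ (the content of \S\ref{sec_orbifold}) identifies every sub-flow (shrub) contained in $\Delta_\psi(m)$ with an orbifold-quiver flow, which by Theorem~\ref{thm:attractor-conj} has leaves $\Ract(E_j(m))$ and — the orbifold initial rays lying in the positive octant of $\IZ^3$ — is one of finitely many for fixed total dimension vector. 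One then proves that, reading a flow from its root toward the leaves, a trajectory entering some $\Delta_\psi(m)$ cannot leave it: this uses causality and the geometry of the ``jagged parabola'' $\mathfrak p$ of \eqref{eqjagged}, below which the image of $\diamondsuit_\psi$ does not extend in the $(x,y)$-plane, so that every flow rooted in $\diamondsuit_\psi$ decomposes into shrubs inside the $\Delta_\psi(m)$ and a trunk inside $\diamondsuit_\psi$ whose leaves are the shrub roots together with rays $\Ract(\cO(m))$ (see the example in Figure~\ref{fig:composite-tree}); a bound on the trunk analogous to the large volume one — causality plus monotonicity of the electric potential, now along $\mathfrak p$ — then makes the trunk data finite as well. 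This yields leaves $\Ract(E_{1,2,3})$ for $z\in\Delta_\psi$ and $\Ract(\cO(m))$, $\Ract(E_j(m))$, $m\in\IZ$, for $z\in\diamondsuit_\psi$. Finally, for $\psi=\tfrac\pi2$ the geometric rays with $r\neq0$ degenerate to the level sets $s=d/r$ of the function $s$ of \eqref{defsw} and those with $r=0$ disappear, so scattering occurs only at the $\Gamma_1(3)$-images of $\tau_o$; a flow rooted at $z$ in the interior of $\cF_o$ then descends the level set through $z$ to $\tau_o$ without splitting, where it resolves into an orbifold scattering sequence of the three incoming rays $\Ract(E_{1,2,3})$, finite by Theorem~\ref{thm:attractor-conj}. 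Assembling the four regimes gives the theorem.

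I expect the main obstacle to lie in the supercritical regime. Three points require real work: (i) the clean matching of $\cD^\Pi_\psi|_{\Delta_\psi(m)}$ with $\cD_o$ away from critical phases, which demands control of the exact central charge \eqref{defZLV} via the Eichler-integral representation over the whole triangle $\Delta_\psi(m)$ and not merely at its centre $\tau_o+m$; (ii) the one-directional crossing of the $\diamondsuit_\psi/\Delta_\psi$ boundary along the flow, which needs a sharp causality statement for the jagged parabola $\mathfrak p$ rather than for the smooth parabola used in \S\ref{sec_inidata}; and (iii) the uniformity of the trunk bound, so that the finite list of leaf data does not blow up as $\psi$ varies within a connected component of the complement of the critical set \eqref{discseries} and its dense companion. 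Showing that the combinatorial type of the entire picture is locally constant in $\psi$ off the critical set is what ultimately pins down the statement; the non-criticality hypothesis is exactly what ensures that no active ray terminates at a conifold point, keeping the set of scattering sequences finite and its combinatorial type stable.
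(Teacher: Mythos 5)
Your overall strategy mirrors the paper's (case analysis in $\cV_\psi$, matching with $\cD^{\rm LV}_0$ and $\cD_o$, trunk/shrub decomposition, degeneration at $\psi=\pi/2$), but two steps that you leave at the level of assertions are precisely where the proof lives, and as you state them they would fail. First, in the subcritical case your treatment of $z\in\Delta_\psi$ --- ``$\Delta_\psi$ carries no active ray, whence $\Omega_z(\gamma)=0$ and (trivially, finitely) no flows'' --- is a non sequitur: the theorem counts split attractor flows whose \emph{leaves} are active, while interior edges need not be, so the absence of active rays in $\Delta_\psi$ does not by itself preclude a flow rooted there whose edges exit into a large volume region before splitting onto active leaves. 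The proof instead runs an induction from the leaves using the notion of \emph{outbound} edges (edges lying in a region $g\cdot\diamondsuit_\psi$ and pointing toward $g\cdot\mathfrak{p}$ in the attractor flow direction): initial rays are outbound, bound states of outbound rays are outbound and lie in the same region, hence every tree with active leaves is confined to a single copy of $\diamondsuit_\psi$ and in particular cannot be rooted in $\Delta_\psi$. The same confinement statement is what legitimizes your reduction to $\cD^{\rm LV}_0$ for $z\in\diamondsuit_\psi$, which you also take for granted, and (read in the opposite direction) it is what prevents edges from exiting orbifold regions in the supercritical case.

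Second, in the supercritical regime your finiteness argument --- a causality bound on the trunk plus ``monotonicity of the electric potential, now along $\mathfrak p$'' --- does not close, as you partly acknowledge in your list of obstacles. Bounding the trunk and its leaves (shrub roots and rays $\Ract(\cO(m))$) does not bound the dimension vectors fed into the shrubs: possible constituents include charges that are opposite to one another, so infinitely many decompositions of $\gamma$ remain compatible with a bounded trunk. The missing ingredient is a single global potential $\tilde\varphi_\tau(\gamma)$, defined piecewise as $2(d-r\lfloor x-\cV_\psi\rfloor)$ in $\diamondsuit_\psi$ and $c_\psi(n_1+n_2+n_3)$ in $\Delta_\psi$ as in \eqref{tildevarphixgamma}, which must be shown to be additive, non-increasing along edges, decreasing across the $\diamondsuit_\psi\to\Delta_\psi$ boundary crossing (the computation that fixes $c_\psi$), and strictly positive on \emph{all} initial rays, including the orbifold ones; only then is the total number of constituents of the whole tree bounded, with finiteness of trees following from finitely many constituent lists, finitely many topologies, and at most one embedded flow per topology. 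Relatedly, your identification of the leaves as $\Gamma_1(3)$ images of $\Ract(\cO)$ and $\Ract(\cO[1])$ does not follow from Table~\ref{Conifoldtab} alone: it requires the analysis of Appendix~\ref{app:bounds} (Propositions~\ref{prop_bound_2}--\ref{prop:attr-LV} and \ref{prop:critical}), which rules out flows ending at large volume points or running indefinitely and shows that, at a non-critical phase, the only rays ending at a conifold point carry multiples of the massless charge with $\Omega(k\gamma)=\delta_{k,1}$.
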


\begin{figure}
\begin{center}
\includegraphics[height=9cm]{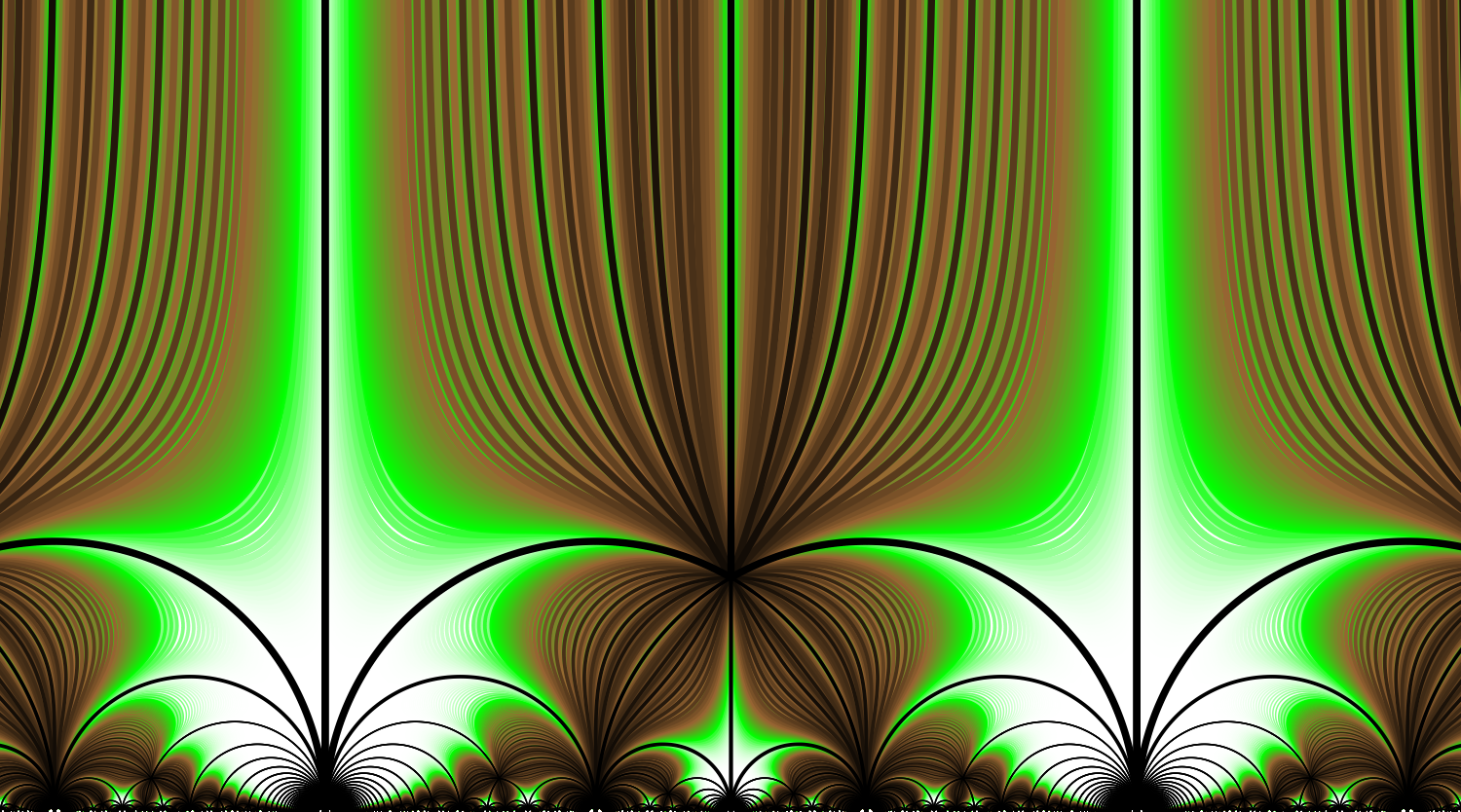}
\end{center}
\caption{Scattering diagram for $\psi=\pm\frac{\pi}{2}$. Geometric rays coincides with contours of the function
$s=\frac{\Im T_D}{\Im T}$. The orbifold scattering diagram is embedded in a infinitesimal neighborhood
of all $\Gamma_1(3)$ images of $\tau_o$, which shrinks as $\psi\to\pm\frac{\pi}{2}$. 
\label{figslopecontour}}
\end{figure}

\subsection{Outline}

This work is organized as follows. In \S\ref{sec_gen} we recall some general facts about
moduli spaces of sheaves on $\IP^2$, the structure of the derived category $D^b\Coh(K_{\IP^2})$
and the space of Bridgeland stability conditions associated to it. 
In \S\ref{sec_diag} we recall the definition and main properties 
of the scattering diagram for quivers with potentials, extend this notion to general triangulated categories, and explain the relation between scattering rays and attractor flow trees.
In \S\ref{sec_LV}, we revisit the scattering diagram $\cD_\psi^{\rm LV}$
for the large volume central charge constructed in \cite{Bousseau:2019ift}, generalize it to arbitrary 
$\psi$, give an effective algorithm for determining the possible initial rays contributing to the index,  
and illustrate this procedure by computing the Gieseker index for various Chern vectors of rank 0 or~1. In~\S\ref{sec_orbifold}, we prove the Attractor Conjecture for the orbifold quiver, construct the corresponding  scattering diagram $\cD_o$ associated to the orbifold quiver, and determine  the scattering
 sequences contributing to the quiver index for various dimension vectors. \
 In \S\ref{sec_global}, we 
 determine the scattering diagram $\cD^{\Pi}_\psi$ on the slice of  $\Pi$-stability conditions, and show the SAFC in this case (Theorem~\ref{thm:SAFC}).
 In  Appendix \ref{sec_Eichler} we derive the Eichler integral representation \eqref{Eichler0}
 of  the periods $T,T_D$ provided by local mirror symmetry, and use it to obtain expansions
 around the large volume, conifold and orbifold points. In \S\ref{sec_massless} we determine
 the object in $D^b \Coh_c K_{\IP^2}$ which becomes massless at the orbifold point $\tau=p/q$
 for low values of $p,q$. In \S\ref{app:bounds} we determine the possible end points of the flow to deduce the initial data of scattering diagrams for all phases~$\psi$.
 In \S\ref{sec_defDT}, we provide some details 
 on the mathematical definition of DT invariants.
  In \S\ref{sec_higherk} we give further examples of scattering sequences 
 for higher rank sheaves at large volume, complementing the examples in~\S\ref{sec_rank01}. 
 Finally, in \S\ref{sec_mathematica} we describe the main features of a
 Mathematica package which we have developed in the course of this investigation, which is 
freely available for further explorations.

 \subsection{Acknowledgments}
 The authors are grateful to  Tom Bridgeland, Miranda Cheng, David Jaramillo Duque, Amir Kashani-Poor, Albrecht Klemm, Emmanuel Macr\`i, Jan Manschot, Sergey Mozgovoy and Thorsten Schimannek for useful discussions. We especially thank Thorsten Schimannek for his
help with the material in \S\ref{sec_Eichler}.
 The research of PD and BP is supported 
by Agence Nationale de la Recherche under contract number ANR-21-CE31-0021.
The research of PB is partially supported by the NSF grant DMS-2302117.


\section{Generalities\label{sec_gen}}
In this section, we first collect some basic facts about coherent sheaves on $\IP^2$ and 
$K_{\IP^2}$, Bridgeland stability conditions $\Stab\cC$
on the derived category $\cC=D^b(\Coh_c K_{\IP^2})$ of  compactly supported 
sheaves, and identify the slice $\Pi\subset \Stab\cC$ of 
physical stability conditions.

\subsection{Gieseker-stable sheaves on \texorpdfstring{$\IP^2$}{P2}}

Given a coherent sheaf $E$ on $\IP^2$, we denote its rank  by $r(E)$, 
its degree by $d(E)=\int_{\IP^2} c_1(E) \cdot H$ (where $H$ is the hyperplane section generating $H^2(\IP^2,\IZ)$),  its second Chern character by $\ch_2(E)$, and by $\gamma(E)$ the 
Chern vector $[r,d,\ch_2]$ valued in $\IZ\oplus \IZ\oplus \frac12 \IZ$. We denote by $\cO=\cO_{\IP^2}(0)$ the structure sheaf, with Chern vector $[1,0,0]$, and by 
$\cO_C$ the structure sheaf of the rational curve $C$ in the  hyperplane class,  
with Chern vector $[0,1,-\frac12]$.

For any pair of coherent sheaves, 
the Euler form $\chi(E,E')$ is given by the Riemann--Roch formula
\be\begin{aligned}
\chi(E,E') & \coloneqq \dim \Hom(E,E')-\dim\Ext^1(E,E') + \dim\Ext^2(E,E') \\
& \phantom{:} = r r'+\frac32(rd'-r'd) + r \ch'_2 + r' \ch_2 - d d' 
\end{aligned}
\ee
In particular for $E=\cO$, the Euler characteristic
\be
\chi(E)\coloneqq\chi(\cO,E) =  \dim H^0(E)-\dim H^1(E) + \dim H^2(E) = r(E)+\frac32 d(E) + \ch_2(E) 
\ee
is an integer. With some abuse of notation, we also denote by $\gamma(E)$  the vector 
$[r,d,\chi)$, valued in $\IZ^3$ (note the round closing bracket, to distinguish it from the vector
$[r,d,\ch_2]$). We denote the antisymmetrized Euler form
also known as Dirac-Schwinger-Zwanzinger pairing)  by\footnote{Our convention for the 
antisymmetrized Euler form  is  consistent with 
 \cite{bridgeland2016scattering,Mozgovoy:2020has} and opposite to that in
\cite{Manschot:2010qz,Alexandrov:2018iao,Beaujard:2020sgs,Arguz:2021zpx}}
\be
\label{DSZ}
\langle \gamma, \gamma' \rangle \coloneqq \chi(\gamma,\gamma') - \chi(\gamma',\gamma) =
3 \left( r d'-r' d \right)
\ee

For a coherent sheaf $E$ with $\rk(E)\neq 0$, we define the slope $\mu(E)$
and discriminant $\Delta(E)$ by 
\be
\label{defmuDelta}
\mu(E) \coloneqq \frac{d(E)}{r(E)}, \quad \Delta(E)\coloneqq\frac12 \mu(E)^2 - \frac{\ch_2(E)}{r(E)}
\ee
and denote them by $\mu_\gamma$ and $\Delta_\gamma$ 
(when $r(E)=0$ and $d(E)\neq 0$, we set $\mu(E)=+\infty$). 
Under tensoring with the $m$-th power of the line bundle $\cO_{C}$, the Chern vector transforms
as 
\be
\label{specflow}
\gamma \mapsto \gamma(m)\coloneqq[r, d+mr, \ch_2+m d+\frac{r}{2} m^2]
\ee
such that $\mu\mapsto \mu+m$ while $\Delta$ is invariant. In particular, the `fluxed D4-brane'
$\cO(m)$ has  Chern vector $[1,m,\frac12 m^2 ]$, slope $m$ and vanishing discriminant.
A sheaf with $r\neq 0$ is said to be normalized if its slope $\mu$ lies in the interval $(-1,0]$.

A coherent sheaf $E$ on $\IP^2$ is said to be of pure dimension $n$ if the dimension of the 
support of any non-zero subsheaf (including $E$ itself) is of complex dimension $n$. 
A torsion-free sheaf $E$ is said to be slope-semistable if it is of pure dimension 2 and if for any 
subsheaf $F\subset E$ one has $\mu(F)\leq \mu(E)$. It is Gieseker-semistable 
if it is of pure dimension 2 and if for any 
subsheaf $F\subset E$ one has $\mu(F)\leq \mu(E)$, with $\Delta(F)\geq \Delta(E)$ in 
case of equality. Gieseker stability is defined by requiring $\Delta(F)>\Delta(E)$ in 
case $\mu(F)=\mu(E)$, and slope stability by requiring $\mu(F)<\mu(E)$ for any proper subsheaf. In particular, slope-stability implies Gieseker stability, which implies Gieseker semistability, which implies slope-semistability.

Let $\cM_\infty(\gamma)$ be the moduli space of Gieseker-semistable sheaves 
with Chern vector $\gamma=[r,d,\ch_2]=[r,d,\chi)$ (the rationale for the notation $\infty$ will become apparent in \S\ref{sec_raywall}). If $\cM_\infty(\gamma)$ is not empty, then it
is a normal, irreducible, factorial projective variety of dimension 
\be
\label{dimMGieseker}
\dim_\IC \cM_\infty(\gamma) = \dim\Ext^1(E,E) = r^2 (2\Delta -1) + 1
\ee
Moreover, it is smooth whenever $[r,d,\chi)$ is a primitive vector in $\IZ^3$, such that semi-stable sheaves are automatically stable.

In order to state the condition for $\cM_\infty(\gamma)$ to be non-empty, we consider
exceptional stable sheaves, defined as those for which $\Hom(E,E)=\IC,
 \Ext^1(E,E)=\Ext^2(E,E)=0$.
Such sheaves are then necessarily homogenous stable vector bundles, and have
a trivial moduli space. 
They are entirely specified by their slope $\mu$, which can take value in an infinite set $\cE\subset \IQ$, called the set of exceptional slopes. For $\mu=\frac{p}{r}\in \cE$ with $r>0$ and $(p,r)$ coprime, the exceptional bundle of slope $\mu$ has rank $r$ and discriminant $\Delta_\mu=\frac12(1-\frac{1}{r^2})$. The set 
$\cE$ is the union of an increasing family $\cE_n\subset \cE_{n+1}$  obtained by the following recursive construction: $\cE_0=\IZ$ and $\cE_{n+1}$ is obtained from $\cE_{n}$ by adjoining the slopes 
\be
\mu = \frac12 (\mu_1+\mu_2) + \frac{\Delta_{\mu_2}-\Delta_{\mu_1}}{3+\mu_1-\mu_2}
\ee
between any consecutive slopes $(\mu_1,\mu_2)$ in $\cE_n$. 
For example, after four steps one gets 
\be
\cE \supset \cE_4\cap [0,1] = \left\{0,\frac{13}{34},\frac{5}{13},\frac{75}{194},\frac{2}{5},\frac{179}{433},\frac{12}{29},\frac{70}{169},\frac{1}{2},\frac{99}{169}
   ,\frac{17}{29},\frac{254}{433},\frac{3}{5},\frac{119}{194},\frac{8}{13},\frac{21}{34},1\right\}
\ee
The exceptional bundles of integer slope are the structure sheaves $\cO(m)$, while the 
exceptional bundle of  half-integer slope $m-\frac32$ is 
the twisted cotangent bundle $\Omega(m)$, defined by the exact sequence 
\be
\label{defOm1}
0\rightarrow \Omega(m) \rightarrow \cO(m-1)^{\oplus 3}\rightarrow \cO(m) \rightarrow 0
\ee
with Chern vector $\gamma=[2,2m-3,m^2-3m+\frac32]$. 

For any Chern vector $\gamma=[r,d,\chi)$ with $r>0$, $d\in \IZ$, $\chi\in \IZ$, 
the condition for $\cM_\infty(\gamma)\neq \emptyset$ is then \cite{drezet1985fibres}
\be
\label{MGiesekerNotEmpty}
\Delta(\gamma)\geq \deltaLP(\mu(\gamma)) \qquad \mbox{or} \quad 
( \mu(\gamma) \in \cE \ \mbox{and}\   \Delta(\gamma) = \Delta_{\mu(\gamma)})
\ee
where $\deltaLP(\mu)$ is the `Dr\'ezet--Le Potier curve'  
\be
\label{LPcurve}
\deltaLP(\mu) \coloneqq 
\sup_{\mu' \in \cE ,\, |\mu'-\mu|<3}
\left[ P( - |\mu'-\mu| ) - \Delta_{\mu'} \right]
\ee
where $P(x)=\frac12(x^2+3x+2)$ (see Figure \ref{figLP}). In particular, $\cM_\infty(\gamma)$ 
is empty unless
the Bogomolov bound $\Delta(\gamma)\geq 0$ 
is satisfied. 

\begin{figure}[ht]
\begin{center}
\includegraphics[height=6cm]{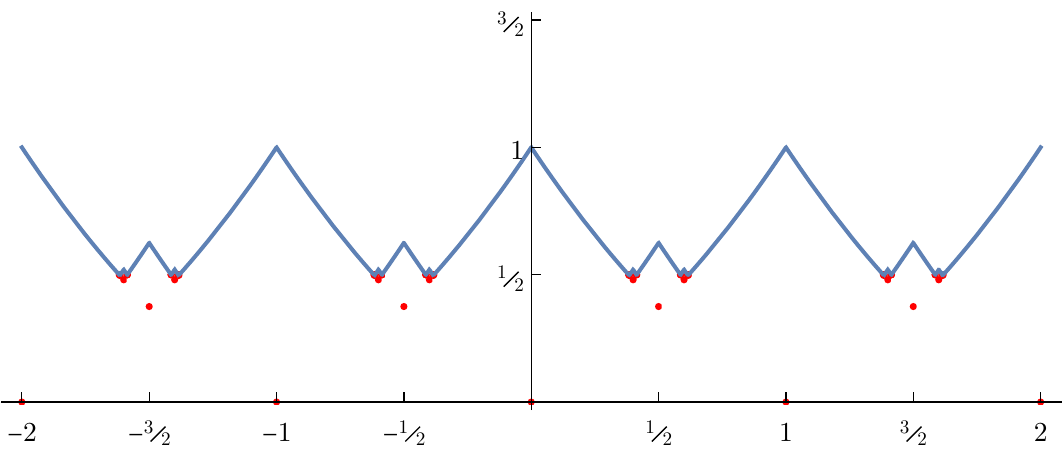} 
\end{center}
\caption{Dr\'ezet--Le Potier curve $\deltaLP(\mu)$ (in blue) and discriminants $\Delta_\mu$ of exceptional sheaves (in red).
\label{figLP}}
\end{figure}

We define the refined Gieseker index  $\Omega_\infty(\gamma)$ as the Poincar\'e-Laurent polynomial for the moduli space $\cM_\infty(\gamma)$,
\be
\label{OmGieseker}
\Omega_\infty(\gamma) = \sum_{p\geq 0} b_{p}(\cM_\infty(\gamma)) (-y)^{p-\dim_{\IC}\cM_\infty(\gamma)}
\ee
In the limit $y\to 1$, $\Omega_\infty(\gamma)$ reduces to the signed Euler characteristic 
$(-1)^{\dim_{\IC}\cM_\infty(\gamma)} e(\cM_\infty(\gamma))$.
When the inequality \eqref{MGiesekerNotEmpty} is saturated, $\cM_\infty(\gamma)$ 
has Picard rank $b_2(\cM_\infty(\gamma))=1$, or $2$ when the inequality is strict. 
The Gieseker index for sheaves on $\IP^2$ with arbitrary rank was determined in \cite{Manschot:2014cca}, by relating it to the Gieseker index for sheaves on the Hirzebruch surface
$\IF_1$ (which coincides with the blow-up of $\IP^2$ at one point) and using wall-crossing arguments.
Quite remarkably, the generating series of Gieseker indices with fixed rank $r$ and degree $d$ 
are conjectured to transform as mock Jacobi forms of the parameters $(\tau,w)$ conjugate to the 
second Chern class $\ch_2$ and Betti degree $p$ in \eqref{OmGieseker} \cite{Alexandrov:2019rth}.  The simplest case is for
rank~$1$, where the moduli space $\cM_\infty([1,0,1-n))$ coincides with the  Hilbert scheme of $n$ points on $\IP^{2}$, and the   
 generating series is an actual Jacobi form \cite{Gottsche:1990},
\be
\label{genHilb}
\begin{aligned}
& \sum_{n\geq 0}\Omega_\infty([1,0,1-n)) \, q^{n} =  \frac{\I q^{1/8} (y-1/y)}{\theta_1(q,y^2)} \\
& \quad = 1 + (y^2+1+1/y^2) q + 
(y^2+1+1/y^2)^2 q^2 + (y^6+2y^4+5y^2+6+\dots) q^3 + \dots
\\
& \quad \stackrel{y\to 1}{\to} 1 + 3 q + 9 q^2 + 22 q^3 + 51 q^4 + 108 q^5 + 221 q^6 + 429 q^7 + 
 810 q^8 + 1479 q^9 + \dots
\end{aligned}
\ee
where $\theta_1(q,y)=\I \sum_{r\in\IZ+\frac12}(-1)^{r-\frac12} q^{r^2/2} y^r$ is the Jacobi theta series (the dots in the middle line indicate the obvious additional terms required for invariance under $y\mapsto 1/y$). 
Further examples of generating series of Gieseker indices of rank $r>1$ can be found in \cite[\S A]{Beaujard:2020sgs}.

The notion of Gieseker semi-stable sheaves extends to vanishing rank as follows. 
A torsion sheaf $E$ is Gieseker-semistable if it is of pure dimension 1 (hence $d(E)\neq 0$) and if for any proper subsheaf $F\subset E$ one has $\nu(F)\leq \nu(E)$, with  $\nu(E)
=\frac{\chi(E)}{d(E)} = \frac{\ch_2(E)}{d(E)} + \frac32$.
The moduli space $\cM_\infty(\gamma)$ for $\gamma=[0,d,\chi)$ is non-empty for any $d>0$, and is invariant under $(d,\chi)\mapsto (d,d+\chi)$ and  $(d,\chi)\mapsto (d, -\chi)$. As in the torsion-free case, it is a normal, irreducible, factorial projective variety of dimension 
\be
\label{dimMGieseker0}
\dim_\IC \cM_\infty(\gamma) = \dim\Ext^1(E,E) = d^2+1
\ee
and it is smooth whenever $(d,\chi)$ are coprime. 
For $\gamma=[0,1,1)$, corresponding to a D2-brane wrapped on a curve $C$ in the  linear system 
$|H|$, one has  $\cM_\infty(\gamma)=\IP^2$. The  Gieseker index, defined in the same way as 
in \eqref{OmGieseker}, turns out to be completely independent of $\chi$  \cite{maulik2020cohomological}, and 
related to the refined Gopakumar-Vafa invariants 
$N_d^{(j_L,j_R)}$ \cite{Katz:1999xq,Choi:2018xgr} via
\be
\label{OmD2GV}
\Omega_\infty( [0,d,\chi) )= \sum_{j_L, j_R} (-1)^{2j_L+2j_R}  \chi_{j_L}(y) \chi_{j_R}(y) 
N_d^{(j_L,j_R)}
\ee
where $\chi_j(y)=\sum_{m=-j}^j y^{-2m} = \frac{y^{2j+1}-y^{-2j-1}}{y-1/y}$ is the  character of the spin-$j$ representation of $SU(2)$. The identification of the two fugacities $y_L$ and $y_R$ conjugate to 
$j_L$ and $j_R$ is natural in the Nekrasov-Shatashvilii limit of the refined topological string amplitude.
Using the refined BPS invariants in \cite[Table 2 p.61]{Huang:2011qx}, we find for degree up to 6,
\be\label{GVtab}
\begin{aligned}
  \Omega_\infty( [0,1,\chi) ) & = y^2+1+1/y^2 \\
  \Omega_\infty( [0,2,\chi) ) & = - y^5-y^3-y-1/y-1/y^3-1/y^5 \\
  \Omega_\infty( [0,3,\chi) ) & = y^{10}+2 y^8+3 y^6+3 y^4+3 y^2+3+ \dots \\
  \Omega_\infty( [0,4,\chi) ) & = - y^{17}-2 y^{15}-6 y^{13}-10 y^{11}-14 y^9-15 y^7-16 y^5-16 y^3-16 y-\dots \\
  \Omega_\infty( [0,5,\chi) ) & = y^{26}+2 y^{24}+6 y^{22}+13 y^{20}+26 y^{18}+45 y^{16}+68 y^{14}+87 y^{12} \\
  & \quad +100 y^{10}+107 y^8 +111 y^6+112 y^4+113 y^2+113+\dots \\
  \Omega_\infty( [0,6,\chi) ) & = -y^{37}-2 y^{35}-6 y^{33}-13 y^{31}-29 y^{29}-54 y^{27}-101 y^{25} \\
  & \quad -169 y^{23}-273 y^{21} -401  y^{19} -547 y^{17}-675 y^{15}-779 y^{13} \\
  & \quad -847 y^{11}-894 y^9-919 y^7-935 y^5 -942 y^3-945 y+\dots
\end{aligned}
\ee
In \S\ref{sec_Hilb}, \S\ref{sec_D2D0} and \S\ref{sec_rank01} we reproduce some of the invariants \eqref{genHilb} and \eqref{GVtab} from attractor flow trees in the large volume and orbifold scattering diagrams.

\subsection{Derived category of coherent sheaves on  \texorpdfstring{$K_{\IP^2}$}{KP2}\label{sec_dkp2}}

The total space $X=K_{\IP^2}$ of the canonical bundle over $\IP^2$ is a non-compact, smooth 
Calabi-Yau threefold, obtained as a crepant resolution of the quotient $\IC^3/\IZ_3$ with diagonal action $z_i \mapsto e^{2\pi\I/3} z_i$, $i=1,2,3$. We denote by $\Coh_c X$ the Abelian category
of compactly supported coherent sheaves on $X$. 
A  compactly supported coherent sheaf $E$ 
is equivalent to a pair $(F,\phi)$ where $F=\pi_*(E) \in\Coh \IP^2$
is the push forward of $E$, and $\phi\colon F\to F\otimes K_{\IP^2}$ is a morphism
(see \eg \cite[\S 8.1]{Mozgovoy:2022oli}). We abuse notation
and denote by $\ch(E)=\ch(F)=[r,d,\chi)$ the Chern character of $F$.
Physically, $F$ describes the gauge field on a stack of $r$ D4-branes wrapped on~$\IP^2$, while $\phi$ is the Higgs field describing  fluctuations in the fiber direction $K_X$. Given a sheaf $F$
on~$\IP^2$, we denote by $i_\star(F)=(F,0)$ its embedding along the zero section.

As explained for example in \cite{Aspinwall:2004jr},
the category of BPS states in type IIA string theory 
on $\IR^{3,1}\times X$ coincides with the bounded derived category $\cC=D^b(\Coh_c X)$ of coherent sheaves on $X$ with compact support. 
An object
$E\in \cC$  is a complex of coherent sheaves $\dots \to E^{-1} \to E^0 \to E^1 \to \dots$ of arbitrary (but finite) length, where the component $E^k$ in cohomological degree $k$ is a coherent sheaf 
with compact support on $X$. 
We denote by $\cH^k(E)$ the
cohomology of the complex at the $k$-th place, 
and by $\ch(E)\coloneqq\sum_k (-1)^k \ch(E^k)$ the Chern character of the complex. 
The  homological shift $E\mapsto E[1]$ taking the complex $(E^k)_{k\in\IZ}$ to $(E^{k-1})_{k\in\IZ}$ maps D-branes with charge $\gamma=\ch(E)=[r,d,\chi)$ 
to anti-D-branes with opposite charge $-\gamma$. 

By Serre duality, the Euler form on 
$X$ coincides with the antisymmetrized Euler form \eqref{DSZ} on $\IP^2$ 
\cite[Corollary 8.2]{Mozgovoy:2022oli},
\be
\chi_X( E, E') \coloneqq \sum_{k=0}^3 (-1)^k \dim\Ext^k_X( E, E') = 3(r d' - r' d)
\ee
where $\ch(E)=[r,d,\chi)$ and $\ch(E')=[r',d',\chi')]$. 
For $E=i_*(F)$ and $E'=i_*(F')$, the extension groups on $X$ can be computed 
from those on $\IP^2$ via \eqref{lemma46},
\be
\label{lemma46}
\Ext_X^k( i_*F, i_*F') = \Ext^k_{\IP^2}(F,F') \oplus  \Ext^{3-k}_{\IP^2}(F',F) 
\ee
In particular, an exceptional sheaf $F$ on $\IP^2$ lifts to a spherical object $S=i_*(F)$ in $\cC$, \ie
an object such that  $\Ext_X^k(S,S)\simeq \IC$ for $k=0,3$ and zero otherwise. 

\medskip

Starting from the 
Ext-exceptional collection on $\IP^2$ given by 
\be
F_1= \cO[-1], \quad F_2 = \Omega(1), \quad F_3=\cO(-1)[1]
\ee
and embedding it along the zero section, 
one obtains a tilting sequence $\cS=\bigoplus_{i=1}^3 E_i$ of objects $E_i=i_*(F_i)$ on $X$ 
which generate the category $\cC$ and such that $\Ext^k(\cS,\cS)=0$ for $k\neq 0$. 
Physically, these objects correspond to the fractional branes of the superconformal field theory
on the orbifold $\IC^3/\IZ_3$. 
The category $\cC$ is then equivalent to the derived category of representations of the Jacobian algebra\footnote{Recall that the Jacobian algebra  $J(Q,W)$  is the quotient of the path algebra
by the ideal generated by relations $\{ \partial_a W : a\in Q_1\}$.}
 $J(Q,W)$ for a quiver with potential $(Q, W)$  associated to $\cS$. 
The corresponding quiver  (shown in Figure \ref{figQuiver}) in has 3 nodes corresponding to 
each fractional brane, with $3$ arrows $
a_i\colon E_2\to E_1$, $b_j\colon E_3\to E_2$, $c_k\colon E_1\to E_3$ in agreement with 
$\Ext^1(E_i,E_{i-1})=\IZ^3$  (with index~$i$ identified modulo~$3$).
The  potential $W=\sum_{i,j,k} \epsilon_{ijk} \Tr(a_i b_j c_k)$ can be determined by studying 
the $A_\infty$ structure on the derived category of coherent sheaves \cite{Aspinwall:2004bs}, or
read off from the associated brane tiling \cite{Franco:2005rj}. 

The derived category $\cC$ admits a large group of auto-equivalences $\Aut(\cC)$, generated by the 
homological shift $E\mapsto E[1]$, by the translation $E\mapsto E(1)=E \otimes \cO_X(1)$,
by spherical twists $\ST_\cO$ with respect to the spherical object $\cO$, 
and by automorphisms of $X$ 
itself (or rather, its formal completion at $\IP^2$). For any spherical object $S$,  the spherical twist
$\ST_S$ acts on $E\in \cC$ via \cite{seidel2001braid}
\be
\label{STS}
\ST_S\colon E \mapsto \Cone\left(\Hom^{\bullet}_{K_{\IP^2}}(S,E) \otimes S \stackrel{\rm ev}{\to} E \right)
\ee 
where $\Cone(f)$ 
is defined by the exact triangle $A \stackrel{f}{\to} B \to \Cone(f) \to A[1]$. As a result
$\ST_S$ maps the Chern vector  
\be
\ch E\mapsto \ch E - \langle \ch S, \ch E\rangle \ch S
\ee
As shown in \cite{Bayer:2009brq}, the translation $E\mapsto E(1)$ and the spherical twist $\ST_\cO$ 
generate $\Gamma_1(3)$, the subgroup of $SL(2,\IZ)$ matrices defined below \eqref{defZPi}.

\subsection{Stability conditions and Donaldson-Thomas invariants\label{sec_Brid}}

A stability condition on a triangulated category 
$\cC$ with Grothendiek group $\Gamma$ consists of a pair $\sigma=(Z,\cA)$ such that  \cite{MR2373143}
\begin{itemize}
\item[i)] $\cA$ is the heart of a bounded $t$-structure, in particular it is an Abelian subcategory of~$\cC$;
\item[ii)] $Z\colon \Gamma \to \IC$ is a linear map, called the central charge;
\item[iii)] For any $0\neq E\in \cA$, $Z(E)=\rho(E) e^{\I\pi \phi(E)}$ where $\rho(E)>0$
and $0<\phi(E)\leq 1$; in other words, $Z(E)$ is contained in $\IH_B=\IH \cup (-\infty,0)$;
\item[iv)] {\it (Harder-Narasimhan property)} Every $0\neq E\in \cA$ admits a finite filtration $0 \subset E_0\subset E_1 \cdots \subset E_n =E$ by objects $E_i$ in $\cA$, such that each factor $F_i\coloneqq E_i/E_{i-1}$ is $\sigma$-semistable
and $\phi(F_1)>\phi(F_2)>\ldots > \phi(F_n)$;
\item[v)] {\it (Support property)} There exists a quadratic form $Q$ on $\Gamma\otimes \IR$ such that the kernel of $Z$ in 
$\Gamma\otimes \IR$ is negative definite with respect to $Q$, and moreover, for any 
$\sigma$-semistable object $E\in \cA$, $Q(\gamma(E))\geq 0$.  Equivalently \cite{ks}, 
there exists a non-negative constant $C$ such that, for all  $\sigma$-semistable object $E\in \cA$,
\be\label{supp-prop}
\| \gamma(E) \| \leq C \, |Z(E)|
\ee
where $\| \cdot \|$ is a fixed Euclidean norm on $\Gamma\otimes \IR$.
\end{itemize}
In the last two items above, we define $\sigma$-semistability of an object $F\in \cA$
 by requiring that $\phi(F')\leq \phi(F)$
for every non-zero subobject of $F$. Unlike common practice, we do not declare
that homological shifts $F[k]$ of a $\sigma$-semistable object $F$ are also stable, 
but we compensate for this in the definition of the DT invariants below.

According to  \cite{MR2373143}, the space  of stability conditions  $\Stab(\cC)$
is a complex manifold of dimension $\rk\Gamma$, such that the  map $\Stab(\cC)\to 
\Hom(\Gamma,\IC)$ which sends $\sigma=(Z,\cA)\mapsto Z$ is a local homeomorphism of
complex manifolds. Moreover, it admits an action of $\GLt \times \Aut(\cC)$  
\cite[Lemma 8.2]{MR2373143},
where $\GLt$ is the universal cover of  the group of $2\times 2$ real matrices  with 
positive determinant. The group $GL(2,\IR)^+$ acts on the central charge $Z$ via 
\be
\begin{pmatrix} \Re Z \\ \Im Z \end{pmatrix} \mapsto 
\begin{pmatrix} a & b \\ c & d \end{pmatrix} \begin{pmatrix} \Re Z \\ \Im Z \end{pmatrix}\ ,\quad ad-bc>0
\ee
preserving the orientation on $\IR^2$. Its universal cover acts on the stability condition $(Z,\cA)$
by suitably tilting the heart $\cA$. The subgroup $\IC^\times$ of matrices of the form 
$\lambda(\begin{smallmatrix} \cos\theta & \sin\theta \\  -\sin\theta  & \cos\theta  \end{smallmatrix})$
with $\lambda>0$ preserves the complex structure and has trivial stabilizers, so the quotient 
 $\Stab(\cC)/\IC^*$ is still a complex manifold. When $\cC$ is a category of coherent sheaves, there is a  notion of derived duality\footnote{Physically, $\gamma\mapsto\gamma^\vee$ corresponds
 to time reversal $T$, which reverses the sign of the Dirac pairing, while $\gamma\mapsto-\gamma$
 corresponds to CPT symmetry.}
 $E\mapsto E^\vee$, 
 and therefore an 
involution sending $(Z,\cA)\mapsto (Z^\vee, \cA^\vee)$ where
$\cA^\vee$ is the derived dual of $\cA$ (up to tilt) and $Z^\vee(E):=-\overline{Z(E^\vee)}$.
This involution allows to extend  $\GLt$ to the full group $\widetilde{GL}(2,\IR)$.

For any class $\gamma\in\Gamma$  and  stability condition $\sigma\in \Stab(\cC)$, 
we denote by $\cM_\sigma(\gamma)$ the moduli stack of  $\sigma$-semistable objects of charge 
$\pm\gamma$ in $\cA$. We further denote by 
$\Omega_\sigma(\gamma)$ the motivic Donaldson-Thomas invariant of
 $\cM_\sigma(\gamma)$.
Informally, one can think of $\Omega_\sigma(\gamma)$ as the Poincar\'e-Laurent polynomial of the cohomology with compact support 
of $\cM_\sigma(\gamma)$, 
with a cohomological shift ensuring that 
\be
\label{defOm}
\Omega_\sigma(\gamma) = \sum_{p\geq 0} b_{p}(\cM_\sigma(\gamma)) 
(-y)^{p-\virdim_{\IC}\cM_\sigma(\gamma)}
\ee
where $\virdim_\IC$ is the virtual dimension and $y^2=\IL$ is the motive of the affine line  (see Appendix \ref{sec_defDT} for a more precise mathematical definition). 
We further define the rational DT invariant $\bOm_\sigma(\gamma)$ via \eqref{defOmb}.
By construction, both $\Omega_\sigma(\gamma)$ and $\bOm_\sigma(\gamma)$ are 
invariant under the action of $\GLt \times \Aut(\cC)$, in particular 
under $\gamma\mapsto-\gamma$, as well as derived duality acting on $(\sigma,\gamma)$. 

These invariants are locally constant on $\Stab\cC$ but may  jump when some object $E\in \cA$
of charge $\gamma$ goes from being stable to unstable. This may happen when the central charge $Z(\gamma')$ of a subobject $E'\subset E$ of charge $\gamma'$ becomes aligned with  
$Z(\gamma)$, therefore along the
real-codimension one {\it wall of marginal stability}
\be
\cW(\gamma,\gamma') \coloneqq \{ \sigma=(Z,\cA) \in\Stab\cC : \Im(Z(\gamma') \overline{Z(\gamma)}) =0\}
\ee
The discontinuity across $\cW(\gamma,\gamma')$ is  determined from the invariants on either side of the wall by the wall-crossing formulae of \cite{ks,Joyce:2008pc}.

\subsection{Stability conditions on \texorpdfstring{$K_{\IP^2}$}{KP2} and \texorpdfstring{$\Pi$}{Pi}-stability\label{sec_Pistab}}

The space of Bridgeland stability conditions  $\Stab(\cC)$ on $\cC=D^b(\Coh_c K_{\IP^2})$
was studied in \cite{bridgeland2006stability,Bayer:2009brq}. After fixing the $\IC^\times$ action
such that skyscraper sheaves have central charge $Z_{[0,0,1]}=-1$, the central charge
can be parametrized by the two complex coefficients $(T,T_D)$ in \eqref{defZ},
\be
\label{defZ2}
Z(\gamma) = - r T_D + d T -\ch_2
\ee
We define $s=\frac{\Im T_D}{\Im T}$, in such a way that $\Im Z(\gamma)=\Im T(d - r s)$. 
We then denote by 
\begin{itemize}
\item $\Coh_c^{\leq s}$ the subcategory of $\Coh_c(X)$ generated (under extensions) by slope-semistable torsion-free
sheaves of slope $\frac{d}{r}\leq s$
\item $\Coh_c^{> s}$ the subcategory of $\Coh_c(X)$ generated by slope-semistable torsion-free
sheaves of slope $\frac{d}{r}> s$ and by torsion sheaves
\item $\Coh_c^{\sharp s}$ the subcategory of $\cC$ of objects $E$ such that $\cH^i(E)=0$
for $i\neq -1,0$, $\cH^{-1}(E)\in \Coh_c^{\leq s}$, $\cH^0(E)\in \Coh_c^{>s}$
\end{itemize}
The Abelian  category $\cA(s)=\Coh_c^{\sharp s}$ is obtained by the standard tilt 
procedure from the torsion pair 
$(\Coh_c^{\leq s},\Coh_c^{> s})$ and is the heart of a bounded $t$-structure on 
$\cC$. 
For $\Im T>0$, the construction ensures that $\Im Z(\gamma)\geq 0$
for any $E\in\cA(s)$.  In addition, if $\Im Z(\gamma)=0$, namely
$s=\mu$, we get
\be
\label{ReZ0}
\Re  Z(\gamma)  = r \left( w - \frac{\ch_2}{r} \right) = r \left( w  -\frac12 s^2 + \Delta \right)
\ee
where we defined  $w= - \Re T_D + s \Re T$ as in \eqref{defsw}. 
The coordinates $(s,w)$ in fact parametrize the orbits of the action
of $\GLt$ on $\Stab(\cC)$ in the region $\Im T>0$, 
such that the central charge \eqref{defZ2} 
is in the same orbit as the function used in \cite{li2019birational}
\be
\label{ZLZ}
Z^{LJ}_{(s,w)}(\gamma) = (w - \I s) r + \I d  -\ch_2  = (r w- \ch_2) + \I (d-s r)
\ee
The virtue of these coordinates is that walls of marginal stability become straight lines, given by the
vanishing of 
\be
\label{WallExact}
\Im\left[ Z^{LJ}_{(s,w)}(\gamma')\overline{Z^{LJ}_{(s,w)}(\gamma)} \right]
=   (rd'-r'd) w + (r'\ch_2 - r\ch_2') s +  (d\ch_2'-d'\ch_2)
\ee

Returning to \eqref{ReZ0}, the Dr\'ezet--Le Potier bound $\Delta(E)\geq \deltaLP(\mu(E))$ in \eqref{MGiesekerNotEmpty} (which also holds for slope-semi-stable objects)
 implies that $\Re  Z(\gamma)>0$ (and therefore $\Re Z(-\gamma)<0$ for the 
homologically shifted object $E[-1]$) in the region 
\be
\label{geostabsq}
\Im T>0, \qquad w > \frac12 s^2 -\deltaLPh(s) 
 \ee 
 where we define $\deltaLPh(\mu)=\deltaLP(\mu)$ when $\mu\notin\cE$, 
$\deltaLPh(\mu)=\Delta_\mu$ otherwise.
The region $w > \frac12 s^2 -\deltaLPh(s)$ is the region above the blue jagged curve and red points
in  Figure \ref{figsq} (the black and purple lines will be discussed momentarily). One can show that the remaining axioms for Bridgeland stability conditions (HN filtration and support property) are indeed satisfied by the pair $\sigma=(Z,\cA(s))$ in this region. Moreover, the resulting stability condition turns out to exhaust the subset $U\subset \Stab(\cC)$ of geometric stability conditions, defined as those stability conditions for which all skyscraper sheaves $\cO_x$ with $x\in \IP^2$ are $\sigma$-stable with the same phase. The connected component of $\Stab(\cC)$ containing $U$,
denoted by  $\Stab^\dagger(\cC)$,  is the  union of the images of (the closure of)~$U$ under the group $\Gamma_1(3)$ of auto-equivalences 
of $\cC$ generated by spherical twists, and is simply connected \cite{Bayer:2009brq}. It is unknown whether $\Stab(\cC)$ might have other connected components.

\begin{figure}[ht]
\begin{center}
\includegraphics[height=8cm]{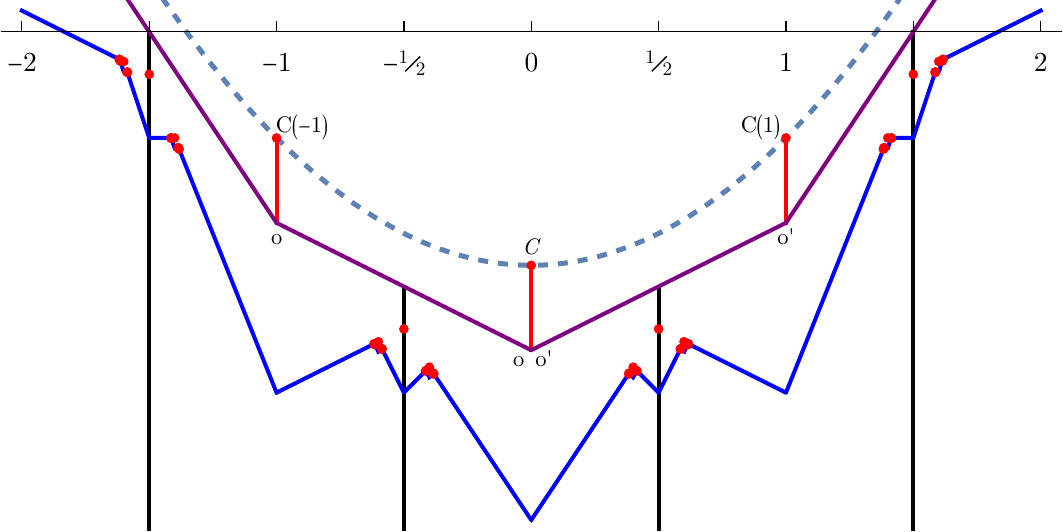} 
\end{center}
\caption{Image in the $(s,w)$ plane of the fundamental domain $\cF_C$ and some of its translates.
The dotted line correspond to the parabola $w\geq \frac12 s^2$. The jagged blue curve corresponds
to $w=\frac12s^2-\deltaLP(s)$, and 
the red dots indicate the points $(\mu_\alpha, \frac12 \mu_\alpha^2 - \Delta_\alpha)$ corresponding to exceptional slopes. The vertical segments in red interpolate between the conifold point at $\tau=n$ and the orbifold point at $\tau=\tau_o+n$. The segments in purple are the images of infinitesimal arcs around the orbifold point $\tau_o$ (from $(-1,\frac16)$ to $(0,-\frac13)$), around $\tau_{o'}$ (from $ (0,-\frac13)$ to $(1,\frac16)$), etc. These segments are connected at points with $s$ integer along the parabola $w=\frac12s^2-\frac13$. 
The vertical lines in black correspond to straight $\tau_1=n+\frac12$ lines connecting the orbifold point $\tau_o+n$ and conifold point $n-\frac12$ in the $\tau$ plane.\label{figsq}}
\end{figure}
 
Having identified the space of stability conditions $\Stab^\dagger\cC$ inside the two-dimensional space of central charges parametrized by $(T,T_D)$, it remains to determine which part of it is covered by the physical slice of $\Pi$-stability conditions. As mentioned in the introduction and further detailed in Appendix \ref{sec_Eichler}, the slice $\Pi$ is isomorphic to the universal cover of the modular curve $X_1(3)$, and conveniently parametrized by $\tau$ in the Poincar\'e upper half-plane $\IH$. 
As shown in 
 \cite{Bayer:2009brq}, there is an embedding $\IH \hookrightarrow \Stab^\dagger\cC$ sending
 $\tau\mapsto (Z_\tau,\cA_\tau)$ which is equivariant with respect to the action of $\Gamma_1(3)$ on both sides.  The central charge function $Z_\tau$ \eqref{defZPi} is determined by the 
representation \eqref{Eichler0} for the periods, 
\be
\label{Eichler1}
 T(\tau)= -\frac12 + \int_{\tau_o}^{\tau} C(u) \de u\ ,\quad
 T_D(\tau) = \frac13  + \int_{\tau_o}^{\tau} C(u) \,u \de u
\ee
where $\tau_o=\frac{1}{\sqrt3} e^{5\pi\I/6}$ is a preimage of the orbifold point, and $C (\tau) = \frac{\eta(\tau)^9}{\eta(3\tau)^3}$ is a weight 3 modular form for $\Gamma_1(3)$.
Indeed, one checks that the period vector  $\Pi=(1,T,T_D)$ transforms under $\tau\mapsto\tau+1$
and $\tau\mapsto -\frac{\tau}{3\tau-1}$ with the same monodromy matrices $M_{\rm LV}, M_{C}$ (see
\eqref{monDG}) as predicted by the translation $E\mapsto E(1)$ and spherical twist
$E\mapsto \ST_\cO E$ on the derived category side. 
It follows from the reality properties
\eqref{ReTTD} of the periods $T$ and $T_D$ that 
the action of derived duality preserves the slice of $\Pi$-stability conditions, 
\be
\label{eqreZ}
Z_{-\bar \tau}(\gamma^\vee) = -\overline{Z_\tau(\gamma)}
\ee
where $[r,d,\ch_2]^\vee=[-r,d,-\ch_2]=[-r,d,3d-\chi)$. Moreover, one can show 
that the periods
$T(\tau),T_D(\tau)$ satisfy the geometric stability conditions \eqref{geostabsq}
in the fundamental domain $\cF_C$ and its translates  \cite[\S C]{Bayer:2009brq}.

In the limit $\tau\to\I\infty$, $C(\tau)= 1+\cO(q)$ with $q=e^{2\pi\I\tau}$, the central charge becomes a quadratic polynomial in $\tau$, 
\be
\label{ZLVq}
Z_\tau(\gamma) = - \frac{r}{2} \left(\tau^2+\tfrac18 \right) + d \tau -\ch_2 + \cO\bigl(|\tau|q\bigr)
\ee
In fact, in the domain  $\IH^{\rm LV}$ defined below \eqref{defsw} as the
preimage of the region above the the curve $w =\frac12 s^2$ inside
the fundamental domain $\cF_C$ and its translates
(see Figure \ref{figfundO}),
the $\cO(|\tau|q)$ corrections in \eqref{ZLVq} as well as the constant $\frac18$
can be absorbed by 
an action of $(\begin{smallmatrix}1 & (s-\Re T)/\Im T \\
0 & t/\Im T \end{smallmatrix}) \in GL(2,\IR)^+$ with $t=\sqrt{2w-s^2}$ 
on $Z_\tau(\gamma)$ (note that $\Im T>0$ throughout  $\IH^{\rm LV}$). 
The new central charge is then $Z_{(s,t)}^{\rm LV}(\gamma)$ in
\eqref{defZLV}.

The heart $\cA_\tau$ is more subtle \cite{Bayer:2009brq}. 
In the  fundamendal domain $\cF_C$, we define the heart as $\cA_\tau=\cA(s(\tau))$ where $s(\tau)$ is defined as usual by $s=\frac{\Im T_D}{\Im T}$ and  $\cA(s)=\Coh_c^{\sharp s}$, and extend this definition to the full upper half-plane by
$\Gamma_1(3)$ equivariance. In particular, in the translates $\cF_C(m)$ under $\tau\mapsto \tau+m$
(which includes the region $\IH^{\rm LV}$ and  the  wedge of angle $2\pi/3$ above the orbifold point), the definition $\cA_\tau=\cA(s(\tau))$
continues to hold.  
However it is no longer valid below the arcs which connect the conifold points $\tau\in \IZ$ 
and orbifold points $\tau\in \tau_o+\IZ$. In particular, near the real axis
the objects in $\cA_\tau$ may include complexes of arbitrary length. As a result, the slice
of $\Pi$-stability conditions only covers the part of the space $U$ of geometric stability conditions
which lies above the purple line in Figure \ref{figsq}. 
While the image of the region $\Im T>0$ in the upper half plane 
 does extend below this line, the relevant heart no longer coincides with the one appropriate for 
geometric stability conditions.

\section{Scattering diagrams and attractor flows\label{sec_diag}}

In this section, we recall the construction of stability scattering diagram
for quivers with potential, generalize this notion to arbitrary 
triangulated categories, and explain its relation to the Split Attractor Flow
Conjecture in the 
case of non-compact CY threefolds with one-dimensional K\"ahler moduli space.

\subsection{Scattering diagram for quivers with potentials\label{sec_scatquiv}}

Let $(Q,W)$ be a quiver with potential. A representation $R$ of $(Q,W)$ is a set of finite dimensional
vector spaces $M_i$ for every node $i\in Q_0$, and linear maps $M_a\colon M_i\to M_j$ for every
arrow $(a\colon i\rightarrow j)\in Q_1$ such that for any arrow $a\in Q_1$, the element $\partial W/\partial a$ in the path algebra of $Q$ evaluates to zero on $R$.  The set of representations $R$ forms an Abelian category $\cA$  graded by the lattice $\IZ^{Q_0}$ of dimension vectors $\gamma=\dim R=(\dim M_i)_{i\in Q_0}$. For any set $(z_i)_{i\in Q_0} $ of points 
$z_i=-\theta_i+\I\rho_i$ in the upper half-plane~$\IH_B$, 
the pair $\sigma=(Z,\cA)$ with central charge $Z(\gamma)=\sum_{i\in Q_0} n_i z_i$
with $n_i=\dim M_i$ 
defines a stability condition in the sense of \S\ref{sec_Brid}. 
We denote by $\cM_Z(\gamma)$
the moduli space of $\sigma$-semi-stable representations of dimension vector $\gamma$, by 
$\Omega_Z(\gamma)$ its motivic Donaldson-Thomas invariant \eqref{defOm} and by
$\bOm_Z(\gamma)$ its rational counterpart \eqref{defOmb}. 
These invariants vanish unless $\gamma$ belongs to the positive quadrant 
$\Gamma_+=\IN^{Q_0}$. 

As shown in  \cite{bridgeland2016scattering}, the set of DT invariants is conveniently encoded
in a stability scattering diagram $\cD_Q$ in the space $\IR^{Q_0}$ spanned by King stability parameters $\theta=(\theta_i)_{i\in Q_0}$. To define $\cD$, we first introduce the hyperplane orthogonal to $\gamma$
\be
\Rgeo_Q(\gamma)=\{\theta : (\theta,\gamma) = 0 \}
\ee
which we call {\it geometric ray}\footnote{The rays are sometimes called walls of second kind or BPS walls. 
The word 'ray' avoids possible confusion with walls of marginal stability (or first kind), but admittedly is most adequate for two-dimensional scattering diagrams, where rays are in fact one-dimensional.}. 
At any point along $\Rgeo_Q(\gamma)$, the
notion of $Z$-semi-stability coincides with the notion of $\theta$-semi-stability (\ie $R$ is
$\theta$-semi-stable if and only if  
$(\theta, \gamma')\leq 0$ for any subrepresentation $R'\subset R$ of dimension vector $\gamma'$). 
In particular, the index $\Omega_Z(\gamma)\eqqcolon\Omega_\theta(\gamma)$  is independent of $\rho$. 
We then define the active ray $\Ract_Q(\gamma)$ as the subset  of $\Rgeo_Q(\gamma)$
\be
\Ract_Q(\gamma) = \{ \theta :  (\theta,\gamma) = 0,\quad \bOm_\theta(\gamma)\neq 0\}
\ee
Since $\bOm_\theta(\gamma)$  is invariant under 
rescaling $\theta\to \lambda\theta$ with $\lambda\in\IR_{>0}$, and since it can only jump on a finite set of hyperplanes $\Rgeo_Q(\gamma')$ corresponding to the destabilization by 
a subobject of dimension vector $\gamma'<\gamma$, the set of rays 
$\{ \Ract_Q(\gamma) : \gamma\in\Gamma_+\}$ decomposes into 
a complex of convex rational polyhedral cones in the space $\IR^{Q_0}$ of King stability conditions. 

Furthermore, to each point  $\theta\in \Ract_Q(\gamma)$ 
we associate an automorphism $\Uact_\theta(\gamma)$ of the quantum torus algebra 
$\hat\cT$ defined as follows. 
Let $\cT$ be the algebra $\IC(y)[[\cX_{\Gamma_+}]]$ generated by 
formal variables $\cX_\gamma$ for any $\gamma\in \Gamma_+$, with coefficients
in $\IC(y)$, subject to the relations 
\be
\label{qalg0}
\cX_\gamma\, \cX_{\gamma'} =(-y)^{\langle \gamma,\gamma'\rangle} \cX_{\gamma+\gamma'}
\ee
where 
\be
\langle \gamma,\gamma'\rangle=\sum_{a:(i\to j)\in Q_1} (n'_i n_j - n_i n'_j)
\ee
is the antisymmetrization of the 
Euler form $\chi_Q(\gamma,\gamma')=\sum_{i\in Q_0} n_i n'_i - \sum_{a:(i\to j)\in Q_1} n_i n_j'$. 
For any positive integer $M$, we denote by $\cT_{M}$ the ideal spanned by generators 
$\cX_\gamma$ with total dimension $\sum_{i\in Q_0} n_i >M$, and define the  
pro-nilpotent algebra $\hat\cT$ as the inverse limit of $\cT/\cT_M$ as $M\to\infty$.
We denote by $\hat\cG=\exp(\hat\cT)$ the corresponding pro-unipotent group. 
The automorphism $\Uact_\theta(\gamma)$ is the element of $\hat\cG$ defined by
\be
\label{defbUQ}
\Uact_\theta(\gamma) =\exp\biggl( \frac{\bOm_\theta(\gamma) \cX_\gamma}
{y^{-1}-y}\biggr)
\ee
The scattering diagram $\cD_Q$ can be defined as the set of decorated rays 
$\{\Ract_Q(\gamma) : \gamma\in\Gamma_+\}$ 
equipped with the automorphism $\Uact_\theta(\gamma)$
at each point. The wall-crossing formula for DT invariants ensures
that $\cD_Q$ is consistent in the following sense  \cite{bridgeland2016scattering}: for any 
generic closed path $\cP\colon t\in[1,0]\to\IR^{Q_0}$ (where generic means that
the intersection of $\cP$ with a ray $\Ract(\gamma_i)$ at $t=t_i$ is transverse
and does not meet any cone of codimension larger than one), 
 the  ordered product of automorphisms associated to each intersection is trivial, 
  \be
\label{constD0}
\prod_i  \cU_{\theta(t_i)} (\gamma_i)^{\epsilon_i} = 1\ ,\quad 
\epsilon_i=\sign \left( \frac{\de \theta}{\de t}, \gamma_i\right)
\ee
This consistency property  ensures that 
all rays can be deduced from the knowledge of the initial 
rays, defined as those rays $\Ract(\gamma)$ which contain the
self-stability condition $\theta_\star(\gamma):=\langle -,\gamma\rangle$ 
(such that $(\theta_\star(\gamma),\gamma')=\langle \gamma',\gamma\rangle$ for
all $\gamma'$). The attractor
tree formula of  \cite{Mozgovoy:2021iwz} and the flow tree formula of \cite{Arguz:2021zpx}
provide an algorithm to compute $\Omega_\theta(\gamma)$ on any ray in terms
of the attractor invariants  $\Omstar(\gamma)$ for the initial rays.

For a general quiver with potential $(Q,W)$, the determination of the initial rays is a difficult problem.
For an acyclic quiver however, it is easy to prove that the only initial rays are those associated to the
simple representations at each node \cite[Theorem 1.5]{bridgeland2016scattering}, with 
\be
\Omstar(\gamma_i)=1, \quad \Omstar(k\gamma_i)=0 \ \mbox{for}\ k>1
\ee 
More generally, one can show that $\Omstar(\gamma)=0$ unless the support of $\gamma$ (defined as the set of vertices $i\in Q_0$ such that $n_i\neq 0$) is strongly connected 
\cite[Theorem 3.8]{Mozgovoy:2020has}. For quivers associated to non-compact Calabi-Yau threefolds, it is conjectured in \cite{Beaujard:2020sgs,Mozgovoy:2020has,Descombes:2021snc} that 
$\Omstar(\gamma)=0$ unless $\gamma$ belongs to the kernel of the antisymmetrized Euler form.

\subsection{Scattering diagrams for Kronecker quivers\label{sec_kron}}

As an example which will play a central role later in this paper, let us consider the Kronecker quiver with 2 nodes and $\kappa$ arrows $a_i\colon 1\to 2$. Since the quiver is acyclic, 
the only initial rays are those associated to $\gamma_1=(1,0)$ and $\gamma_2=(0,1)$, 
with $\Omega_\theta(\gamma_1)$ and $\Omega_\theta(\gamma_2)=1$ along the axes
$\theta_1=0$ and $\theta_2=0$, respectively. For 
other dimension vector $(n_1,n_2)$, the moduli space $\cM_\theta(\gamma)$ 
has virtual dimension
\be
\label{expectdimQ}
\virdim_\IC \cM_\theta(\gamma)=\kappa n_1 n_2 - n_1^2 - n_2^2 + 1 
\ee
It is empty unless $\theta_1>0,\theta_2<0$ and \eqref{expectdimQ} is non-negative, in which case it
coincides with the actual dimension. 
Note that the formula \eqref{expectdimQ}  is  symmetric  under the exchange
$(n_1,n_2)\mapsto (n_2,n_1)$, and under 
$(n_1,n_2)\mapsto (n_1,\kappa n_2-n_1)$, which corresponds to a mutation of the quiver.

For the $A_2$ quiver ($\kappa=1$, left panel on Figure \ref{figKronScatt}), the only active ray in the quadrant $\theta_1>0,\theta_2<0$ 
is associated to $\gamma_1+\gamma_2$, with $\Omega_\theta(\gamma_1+\gamma_2)=1$
along the diagonal $\theta_1+\theta_2=0$, and the consistency condition \eqref{constD0}
reproduces the usual five-term relation
\be
\cU_{\gamma_1} \cU_{\gamma_2} = \cU_{\gamma_2} \cU_{\gamma_1+\gamma_2} \cU_{\gamma_1}   
\ee
For the affine $A_1$ quiver ($\kappa=2$, middel panel on Figure \ref{figKronScatt}), there is an infinity of active rays of the form $(k,k-1),(k-1,k)$ and $(k,k)$ with
$k\geq 1$, with DT invariant $1,1$ and $-y-1/y$, respectively. 
For  the Kronecker quiver with $\kappa\geq 3$, arrows there is a discrete set  of active
rays $(n_1,n_2)$ with unit DT invariant, obtained
by successive mutations of $(1,0)$ and $(0,1)$, 
and a dense set of rays in the region $\kappa-\sqrt{\kappa^2-4}< \frac{2n_1}{n_2}, \frac{2n_2}{n_1} < \kappa+\sqrt{\kappa^2-4}$. For $\kappa=3$ (right panel on Figure \ref{figKronScatt}), the discrete rays correspond to $(n_1,n_2)=(F_{2k},F_{2k\pm 2})$ where $F_{2k}$ are the even Fibonacci numbers $1,3,8,21,\dots$.
In Table \ref{Kroneckertab}, we tabulate some of the DT indices $K_\kappa(n_1,n_2)\coloneqq\Omega_\theta(n_1,n_2)$
for low values of $(\kappa, n_1,n_2)$,
restricting to the case $\kappa=0 \mod 3$ which is most relevant for the present work.
It is worth noting that for $(n_1,n_2)=(1,1)$, the moduli space of stable representations
 is $\IP^{\kappa-1}$, hence 
\be
\label{K11}
K_{\kappa}(1,1)=(-1)^{\kappa+1} \frac{(y^\kappa-y^{-\kappa})}{y-1/y} 
\stackrel{y\to 1}{\rightarrow} (-1)^{\kappa+1} \kappa
\ee
More generally, for $(n_1,n_2)=(1,n)$ or $(n,1)$, the moduli space  $\cM_\theta(\gamma)$ 
is the Grassmannian
of $n$-dimensional planes in $\IC^\kappa$, hence $K_{\kappa}(1,n)=0$ if $n>\kappa$. 
From the point of view of the  flow tree or attractor tree formulae, this vanishing occurs
as a result of cancellations between numerous different trees.

\begin{figure}[ht]
\begin{center}
\includegraphics[height=5.5cm]{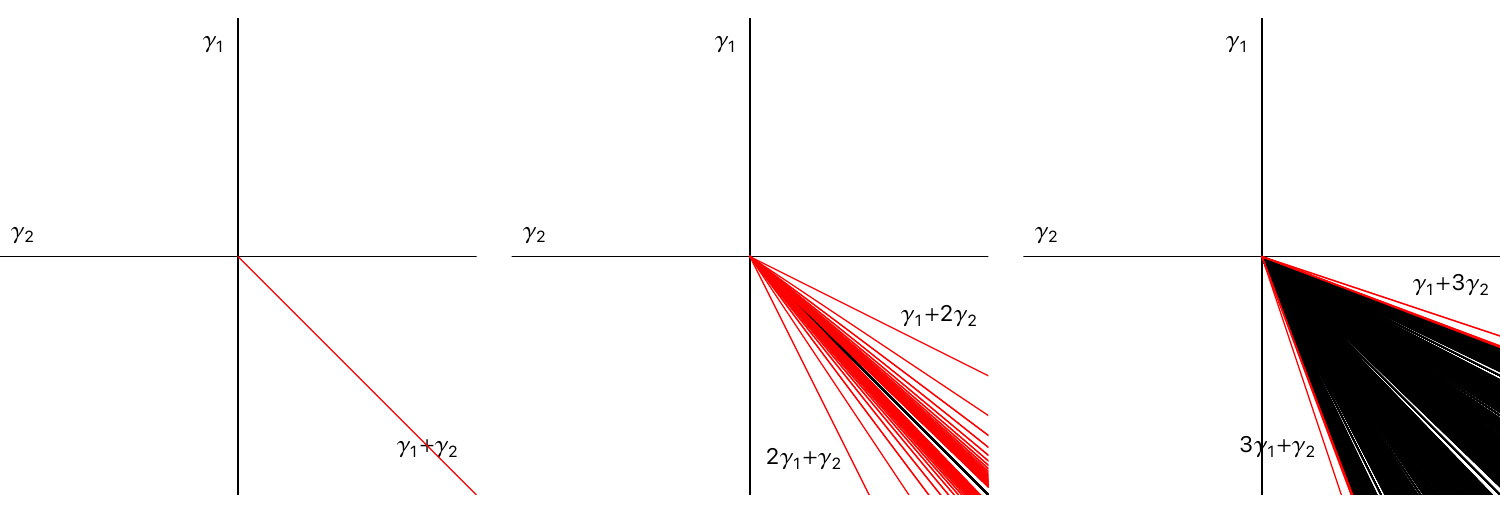}
\end{center}
\caption{Scattering diagram for the Kronecker quiver with $\kappa=1,2,3$ arrows \label{figKronScatt}}
\end{figure}

\begin{table}[ht]
\caption{
Table of indices $K_{\kappa}(n_1,n_2)$ for the Kronecker quiver with $\kappa$ arrows, dimension vector $(n_1,n_2)$ in the chamber $\theta_1>0,\theta_2<0$ with $n_1\theta_1+n_2\theta_2=0$. 
Negative powers of $y$ omitted in the dots are determined by invariance under Poincar\'e duality $y\mapsto 1/y$.
\label{Kroneckertab}}
\vspace{-.5\baselineskip}
$\renewcommand{\arraystretch}{1.05}
\begin{array}{|l|l|} \hline
 K_3(1,1) & y^2+1+\dots\\
 K_3(1,2) & y^2+1+\dots\\
 K_3(1,3) & 1 \\
 K_3(2,2) & -y^5-y^3-y-\dots\\
 K_3(2,3) & y^6+y^4+3 y^2+3+\dots\\
 K_3(2,4) & -y^5-y^3-y-\dots\\
 K_3(2,5) & y^2+1+\dots\\
 K_3(3,3) & y^{10}+y^8+2 y^6+2 y^4+2 y^2+2+\dots\\
 K_3(3,4) & y^{12}+y^{10}+3 y^8+5 y^6+8 y^4+10
   y^2+12+\dots\\
 K_3(3,5) & y^{12}+y^{10}+3 y^8+5 y^6+8 y^4+10
   y^2+12+\dots\\
 K_3(4,4) & -y^{17}-y^{15}-3 y^{13}-4 y^{11}-6 y^9-6 y^7-7 y^5-7 y^3-7
   y-\dots\\
 K_3(4,5) & y^{20}+y^{18}+3 y^{16}+5 y^{14}+10 y^{12}+14 y^{10}+23 y^8+30 y^6+41 y^4+46
   y^2+51+\dots\\
 K_3(5,5) & y^{26}+y^{24}+3 y^{22}+5 y^{20}+9 y^{18}+13 y^{16}+18 y^{14}+22 y^{12}+26
   y^{10}+28 y^8+30 y^6 \\
   &+30 y^4+31 y^2+31+\dots\\
 K_6(1,1) & -y^5-y^3-y-\dots\\
 K_6(1,2) & y^8+y^6+2 y^4+2 y^2+3+\dots\\
 K_9(1,1) & y^8+y^6+y^4+y^2+1+\dots\\
 K_9(1,2) & y^{14}+y^{12}+2 y^{10}+2 y^8+3 y^6+3 y^4+4
   y^2+4+\dots\\ \hline
\end{array}
$
\end{table}

\subsection{Stability scattering diagrams\label{sec_Scattdiag}}

The construction of the stability scattering diagrams in the space of King stability conditions
for quivers with potential can be generalized to the space of Bridgeland stability conditions on any triangulated category $\cC$ of CY3 type, at the cost of several complications.

The main complication is that charges $\gamma$ such that $\Omega_Z(\gamma)\neq 0$ 
are no longer restricted to a fixed cone $\Gamma_+$, although they are still restricted by 
the condition $\Im Z(\gamma)\geq 0$. Secondly, there is no analog of King stability conditions, so no reason to restrict to rays  where $\Re Z(\gamma)=0$. In this subsection, we define a family of 
scattering diagrams $\cD_\psi \subset \Stab\cC$ labeled by a phase $\psi\in\IR/2\pi\IZ$, supported on active rays where semi-stable objects with $\arg Z(\gamma)=\psi+\frac{\pi}{2}$ exist. 

We first need to properly define an analogue of the quantum torus algebra $\hat\cT$.
Let $\cT_\Gamma$ be the algebra $\IC(y)[\cX_{\Gamma}]$ generated by 
formal variables $\cX_\gamma$ for any $\gamma\in \Gamma=K(\cC)$, with coefficients
in $\IC(y)$, subject to the relations 
\be
\label{qalg}
\cX_\gamma\, \cX_{\gamma'} =(-y)^{\langle \gamma,\gamma'\rangle} \cX_{\gamma+\gamma'}
\ee
For any stability condition $\sigma\in\Stab\cC$,  phase $\psi\in \IR/(2\pi \IZ)$  
and mass cut-off $M>0$ we denote by  
$\cT_{\sigma,\psi,M}$ the ideal spanned by generators $\cX_\gamma$ with $\Im(e^{-\I\psi} Z_\sigma(\gamma))>M$. We define the pro-nilpotent algebra $\hat\cT_{\sigma,\psi}$ 
as the inverse limit
of $\cT/\cT_{\sigma,\psi,M}$ as $M\to\infty$, and the pro-unipotent group $\hat\cG_{\sigma,\psi}=\exp(\hat\cT_{\sigma,\psi})$.

For any phase $\psi\in \IR/(2\pi \IZ)$ and charge vector $\gamma\in\Gamma\backslash\{0\}$, we define the  geometric ray as the codimension-one locus inside $\Stab(\cC)$
\be
\label{defgeoray}
\Rgeo_\psi(\gamma) \coloneqq \{ \sigma=(Z,\sigma) \in \Stab(\cC) : 
\Re(e^{-\I\psi}Z(\gamma))=0, \Im(e^{-\I\psi}Z(\gamma))>0 \}
\ee
along which the argument of the central charge $Z(\gamma)$ is equal to $\psi+\frac{\pi}{2}$ 
modulo $2\pi$. We further define the active 
 ray $\Ract_\psi(\gamma)$ as a
subset of $\Rgeo_\psi(\gamma)$ along which 
the rational DT invariant $\bOm_Z(\gamma)$ defined in \eqref{defOmb} is not zero,
\be
\label{defelemray}
\Ract_\psi(\gamma) \coloneqq \{ \sigma :\  \Re(e^{-\I\psi}Z(\gamma))=0, \Im(e^{-\I\psi}Z(\gamma))>0,  \ 
\bOm_\sigma(\gamma)\neq 0\} 
\ee
At any point $\sigma$ along an active ray, we associate an automorphism
\be
\label{defbU}
\Uact_\sigma(\gamma) =\exp\biggl( \frac{\bOm_\sigma(\gamma) \cX_\gamma}
{y^{-1}-y}\biggr)
\ee
valued in $\cG_{\sigma,\psi}$. 
Since the geometric ray \eqref{defgeoray} is invariant under rescaling $\gamma$ by any positive
real number, and since the automorphisms $\Uact_\sigma(\gamma)$ associated to collinear charges
commute among each other, it is convenient to combine all active rays $\Ract_\psi(k\gamma)$
for a given primitive charge $\gamma\in \Gamma$
into a single `effective' ray 
\be
\label{defeffectiveray}
\Reff_\psi(\gamma) \coloneqq \{ \sigma :\  \Re(e^{-\I\psi}Z(\gamma))=0, \Im(e^{-\I\psi}Z(\gamma))>0,  \exists k\geq 1 \ ,
\Omega_\sigma(k\gamma)\neq 0\} 
\ee
equipped with the automorphism 
\be
\label{defUeff}
\Ueff_\sigma(\gamma) 
= \exp\biggl( \sum_{k=1}^{\infty} 
\frac{\bOm_\sigma(k\gamma) \cX_{k\gamma}}{y^{-1}-y} \biggr)
= \Exp\biggl( \sum_{k=1}^{\infty} 
\frac{\Omega_\sigma(k\gamma) \cX_{k\gamma}}{y^{-1}-y} \biggr)
\ee
where  
$\Exp$ is the plethystic exponential. When  $\gamma$ is non-primitive, we set 
$\Ueff_\sigma(\gamma)=\Ueff_\sigma(\gamma/\ell)$ where $\ell$ is the largest integer which divides $\gamma$, so that both $\Reff_\psi(\gamma)$ and $\Ueff_\sigma(\gamma)$ are invariant under rescaling $\gamma$
by a non-negative rational number. 

The stability scattering diagram $\cD_\psi(\cC)$ is defined as the union of
all active rays $\Ract_\psi(\gamma)$ with $\gamma\in\Gamma$, equipped with
their respective automorphism $\Uact_\sigma(\gamma)$ (equivalently, the union 
of all effective rays $\Reff_\psi(\gamma)$ with $\gamma$ primitive, equipped with  $\Ueff_\sigma(\gamma)$). We note that $\cD_\psi(\cC)$ is invariant under $\psi\mapsto\psi+\pi$
and $\psi\mapsto -\psi$,  upon relabeling the charges $\gamma$ into $-\gamma$ or 
$\gamma^\vee$ (defined below \eqref{eqreZ}),
respectively. In the following, we shall restrict to $\psi\in(-\frac{\pi}{2},\frac{\pi}{2}]$, such that 
the rays are supported on loci where semi-stable objects $\cA$ exist in the heart.

As in the case of quivers, the wall-crossing formula for DT invariants again ensures
that the scattering diagram $\cD_\psi(\cC)$ is consistent, but in a more restricted sense than in the quiver context: 
rather than considering an arbitrary closed path $\cP$, we pick any 
 point $\sigma$ on a codimension two intersection, 
 and consider an infinitesimal path $t\in[1,0]\to\Stab\cC$ circling counterclockwise\footnote{A change of orientation maps the product \eqref{constD} to its
inverse, so does not affect the consistency property, but it does exchange the notions
of incoming and outgoing rays. In \S\ref{sec_flow} we restrict to two-dimensional scattering diagrams
where the rays inherit a global orientation from the complex structure.}
 around $\sigma$ in a  two-dimensional plane containing $\sigma$ and intersecting the
rays $\cR_\psi(\gamma_i)$ transversally (and away from cones of codimension greater than one).
The consistency condition is then that the ordered product of automorphisms associated to each 
active ray $\Ract_\psi(\gamma_i)$
intersecting at $\sigma$ is trivial: 
 \be
\label{constD}
\prod_i  \cU_{\sigma(t_i)} (\gamma_i)^{\epsilon_i} = 1
\ee 
Here, $t_i$ are the intersection points of the path with all rays passing through~$\sigma$  and $\epsilon_i=\pm$ is given by 
\be
\label{defepsi}
\epsilon_i=\sign\Re\left( e^{-\I\psi} \frac{\de}{\de t} Z_{\sigma(t_i)}(\gamma_i)\right)
\ee
Rays such that  $\epsilon_i=1$ (respectively  $\epsilon_i=-1$) are called outgoing (respectively incoming) at the point $\sigma$.  
Expanding out the exponential in each factor and using the algebra \eqref{qalg}, 
the consistency property \eqref{constD} allows to determine
DT invariants on outgoing rays from the knowledge of DT invariants on incoming rays.
The result takes the form
\be
\label{AFTlocal}
\bOm^+_\sigma(\gamma)=\sum_{n\geq 1} \sum_{\gamma=\sum_{i=1}^n \gamma_i} 
\frac{g(\{\gamma_i\}) }{{\rm Aut}(\{\gamma_i\})} 
\prod_i \bOm^-_\sigma(\gamma_i)
\ee
where only a finite number of decompositions contribute (this follows from  
$\Im(e^{-\I\psi}Z_\sigma(\gamma_i))=|Z(\gamma_i)|$ and the support condition).
The coefficients $g(\{\gamma_i\})$ can be computed either using 
the attractor tree formula of~\cite{Mozgovoy:2021iwz} or the flow tree formula of~\cite{Arguz:2021zpx}\footnote{The Coulomb branch formula of \cite{Manschot:2010qz,Manschot:2014fua} gives yet another prescription, which gives
the same coefficient $g(\{\gamma_i\})$ whenever 
all charges $\gamma_i$ whose rays intersect $\sigma$ lie in a common two-dimensional sublattice, such that scaling solutions do not occur. }. 
Both involve a sum over rooted trees decorated with 
charges $\gamma_e$ along the edges and stability parameters $\theta_v$ at the vertices,
valued in an  auxiliary space $\IR^n$. The former involves trees of arbitrary valency
and stability parameters at the root vertex 
given by  $\theta_i=\Re(e^{-\I\psi} Z_\sigma(\gamma_i))$, while the latter involves binary
trees only, at the cost of perturbing the stability parameters at the root vertex. 

Since the consistency condition \eqref{constD} determines the outgoing rays from the incoming rays at each codimension-two intersection point of the scattering diagram, the full diagram is determined 
if one can identify a set of  initial rays, from which all other rays originate by repeated scattering. In the
context of quivers with potential, the initial rays  in the space
of King stability parameters are those which contain the self-stability condition 
$\theta_\star(\gamma):=\langle -, \gamma \rangle$. We do not know a simple characterization of
initial rays for a general scattering diagram, but in the context of scattering diagrams restricted
to the slice of $\Pi$-stability condition, it is natural to conjecture that they either 
 emanate from a boundary point of $\Pi$ where the central charge $Z_\sigma(E)$ tends to zero,
or from a regular attractor point $\sigma_\star(E)$ such that central charge 
$|Z_{\sigma_\star(E)}(E)|$ attains a local minimum with $\arg Z_\sigma(E)=\psi+\frac{\pi}{2}$.
Once such initial data has been fixed, by similar arguments as for standard scattering 
diagrams \cite{gross2011tropical}, 
the scattering diagram $\cD_\psi$ is
expected to be uniquely determined, and the DT invariant $\Omega_\sigma(\gamma)$ at any point $\sigma\in\Stab\cC$
can be read off from the automorphism $\cU_\sigma(\gamma)$ along the ray $\Ract_\psi(\gamma)$ 
with  $\psi=\arg Z_\sigma(\gamma)- \frac{\pi}{2}$
passing through the desired point $\sigma$.

\subsection{Attractor flows and Split Attractor Flow Conjecture\label{sec_SAF}}

We define the attractor flow $\AF(\gamma)$ as the flow on 
the physical slice of $\Pi$-stability conditions\footnote{More generally, one could 
consider the attractor flow along any complex subspace in the space of Bridgeland
stability conditions, such as the large volume slice, equipped with a hermitean metric.} induced by 
 the gradient of the modulus square of the central charge $Z(\gamma)$ for a fixed charge 
 $\gamma\in \Gamma\backslash\{0\}$. In local complex coordinates $z^a$ on $\Pi$, 
\be
\label{attflow}
\frac{\de z^a}{\de \mu} = - g^{a\bar b} \partial_{\bar b} | Z(\gamma) |^2
\ee
where $\mu$ is a coordinate parametrizing
the flow, and $g^{a\bar b}$ is the inverse of 
the K\"ahler metric $g_{a\bar b} \de z^a \de \bar z^{\bar b}$ on $\Pi$ 
specified by local mirror symmetry. Note that the flow depends only on the conformal
class of the metric, up to reparametrization of $\mu$.

A key property of \eqref{attflow} is that the modulus of the central charge necessarily decreases along the flow,
\be\label{attflow-mass}
\frac{\de}{\de \mu}  |Z(\gamma)|^2 = -2 \partial_a | Z(\gamma)|^2 g^{a\bar b} \partial_{\bar b} 
| Z(\gamma)|^2 \leq 0
\ee
and $z^a(\mu)$ must therefore reach a local minimum of $|Z(\gamma)|$ as $\mu\to\infty$, unless it encounters a singularity (or reaches the boundary) along the way. We denote by $z_\star(\gamma)$
the endpoint of the maximally extended flow; it is independent of the initial value of $z^a$ at $\mu=0$  within a given basin of attraction. If 
$Z_{z_\star(\gamma)}(\gamma)\neq 0$, the endpoint of the flow is said to be a
regular attractor point, otherwise it is a singular  attractor point. 
Following common lore, we define the attractor index 
$\Omstar(\gamma)$ as the limit of 
$\Omega_z(\gamma)$ along the flow $z\to z_\star(\gamma)$, 
and  similarly for the rational attractor
index $\bOmstar(\gamma)$. This definition overlooks the possibility that a given charge may admit different attractor points (depending on the basin of attraction $b$), in which case one should attach an
collection of attractor indices $\bOmstar^{(b)}(\gamma)$ to the same charge $\gamma$,
but we shall gloss over this complication in this work,  since it does not arise for local CY threefolds.

An attractor flow tree is an oriented rooted tree $T$, decorated with charges $\gamma_e\in \Gamma\backslash\{0\}$ along each edge $e\in E_T$ and equipped with a continuous map $\pi\colon T\to \Pi$ such that
\begin{enumerate}
\item  {\it (charge conservation)}  
\ At each vertex $v\in V_T$,
$\gamma_{v}=\sum_{e\in{\rm ch}(v)} \gamma_e$
where $\gamma_v$ is the charge along the edge ending at $v$ and ${\rm ch}(v)$ the set of
children edges leaving from $v$;
\item {\it (attractor flow along edges)}
\ For each edge $e\in E_T$, the map $\pi|_e\colon e\to\Pi$ is an embedding and its image $\pi(e)$ follows the  flow lines of $\AF(\gamma_e)$;
\item {\it  (marginal stability at vertices)} 
\ At each vertex $v\in V_T$ and  for each 
child edge  $e\in {\rm ch}(v)$, the phase of the central charge
$Z_{\pi(v)}(\gamma_{e})$ is equal to that of 
$Z_{\pi(v)}(\gamma_{v})$, and 
$\Im\bigl( Z_{\pi(v^+)}(\gamma_e) \overline{   Z_{\pi(v^+)}(\gamma_{v}) }\bigr) 
\langle \gamma_e, \gamma_v \rangle>0$, with $v^+$ 
a point 
infinitesimally close to 
$v$  along the edge ending at $v$;
\item The leaves of the tree $v_i$ are mapped to the attractor points $z_\star(\gamma_i)$,
where $\gamma_i$ is the charge along the edge ending at $v_i$.
\end{enumerate}
The tree $T$ is called active if $\bOm_Z(\gamma_e)\neq 0$ along each  edge.
We denote by $\cT_z(\{\gamma_i\})$ the set of attractor flow trees whose
root vertex $v_0$ is mapped to $\pi(v_0)=z\in \Pi$, and whose leaves carry charge $\gamma_i$.
For fixed charges $\{\gamma_i\}$, this set is obviously finite (most of the time empty).

\medskip

The Split Attractor Flow Conjecture, originally proposed in \cite{Denef:2000nb,Denef:2001xn}
and sharpened in \cite{Denef:2007vg}, amounts to the statement that for any $z\in\Pi$ and $\gamma\in\Gamma$, there exists only a finite number 
of decompositions $\gamma=\sum_i \gamma_i$ such that $\cT_z(\{\gamma_i\})$ is non-empty; the rational index $\bOm_z(\gamma)$ is then obtained by summing over all
attractor flow trees,   
\be
\label{SAF}
\bOm_z(\gamma) = \sum_{\gamma=\sum_i \gamma_i} 
\frac{1}{\abs{\Aut(\{\gamma_i\})}} 
\sum_{T\in \cT_z(\{\gamma_i\})} g_z(T)\, 
\prod_{i} \bOm_\star(\gamma_i)
\ee
Here, $\abs{\Aut(\{\gamma_i\})}$ is a symmetry factor, given by the order of the subgroup of permutations
of $n$ elements which preserves the ordered list of charges $\{\gamma_i\}$, and the prefactor 
$g_z(T)$ is obtained recursively by applying the wall-crossing formula at each of the vertices of the tree. In the case of a binary tree, such that each vertex $v$
has two descendants edges ${\rm ch}(v)=\{L(v), R(v)\}$ (defined up to exchange), it is simply given
by a product over all vertices,
\be
\label{gT}
g_z(T) = \prod_{v\in V_T} (-1)^{\langle\gamma_{L(v)},\gamma_{R(v)}\rangle+1}
| \langle\gamma_{L(v)},\gamma_{R(v)}\rangle |
\ee
or in the case of refined invariants,
\be
\label{gTref}
g_z(T,y) = \prod_{v\in V_T} (-1)^{\langle\gamma_{L(v)},\gamma_{R(v)}\rangle+1}
\sign(\langle\gamma_{L(v)},\gamma_{R(v)}\rangle )\,
\frac{y^{\langle\gamma_{L(v)},\gamma_{R(v)}\rangle}-y^{-\langle\gamma_{L(v)},\gamma_{R(v)}\rangle}}{y-y^{-1}}
\ee
More generally, if the tree has vertices with higher valency, the prefactor $g_z(T)$ is given by 
the product of the local factors appearing in the formula \eqref{AFTlocal}, evaluated on the charges which descend from the vertex $v$:
\be
g_z(T) =  \prod_{v\in V_T} g(\ch(v))
\ee

Alternatively, in cases where the Dirac pairing $\langle -,-\rangle$ is degenerate (as is the case for local CY threefolds) such that there exists a non-zero $\delta\in\Gamma$ with $\langle \delta,-\rangle=0$, one may perturb the charges $\gamma_i \to \gamma_i + \epsilon_i \delta$ where
$\epsilon_i$ are small parameters, such that the attractor flow tree is resolved into a union of binary trees, for which \eqref{gT} or \eqref{gTref} can be applied. Such a perturbation is in fact a
necessary step when applying the flow tree formula \cite{Alexandrov:2018iao,Arguz:2021zpx} at each vertex, although it need not extend to a global perturbation of the full set of trees in 
$\cT_z(\{\gamma_i\})$. 

It is worth noting that there can be cancellations between different trees with  the same embedding. This occurs for example for the Kronecker quiver with $\kappa$ arrows and dimension
vector $(1,n)$ with $n>\kappa$, as noted below \eqref{K11}, and can be viewed as a consequence of the Pauli exclusion principle for multi-centered black holes \cite{Denef:2002ru}. A {\it weak} version of the 
Split Attractor Flow Conjecture is then that the number of {\it active} attractor flow trees
is finite. 
 
While the formula \eqref{SAF} is largely a consequence of the wall-crossing formula, 
the main problem is to ensure that only a finite number of 
decompositions $\gamma=\sum \gamma_i$ can occur as leaves of attractor flow trees, and to
provide an algorithm for finding them in practice.
The fact that the modulus of the central charge $|Z(\gamma_e)|$ decreases along each edge, 
and is additive at each vertex, implies that the central charges of the constituents are bounded 
by 
\be
\sum_i |Z_{z_\star(\gamma_i)}(\gamma_i) | \leq |Z_z(\gamma)|
\ee
This constraint was used in \cite[\S C]{Denef:2007vg} to show that, in the context
of compact CY threefolds at large radius, the number of attractor
flow trees terminating in a fixed compact region of K\"ahler moduli space is finite. 
However, this constraint becomes less and less stringent in regions where $|Z_z(\gamma)|$
becomes large (\eg in the large volume limit), and  moot for constituents which are massless 
at their respective attractor points. 
As we demonstrate in \S\ref{sec_inidata},  the Split Attractor Flow Conjecture (in its strong form)
holds for $K_{\IP^2}$ along the large volume slice defined by the quadratic central charge 
\eqref{defZLV}. In \S\ref{sec_exactD} we describe how the SAFC along the slice 
of $\Pi$-stability conditions is proven for~$K_{\IP^2}$ (Theorem~\ref{thm:SAFC}).

\subsection{From scattering sequences to attractor flow trees\label{sec_flow}}

The connection between the scattering diagram $\cD_\psi(\cC)$ and the attractor flow is based
on the observation that when $Z(\gamma)$ is a holomorphic function on $\Pi$, its argument is constant along $\AF(\gamma)$,
\be
\frac{\de }{\de \mu}\arg Z(\gamma)
= \Im \Bigl( \frac{\de}{\de \mu} \log Z(\gamma) \Bigr)
= \Im \Bigl( -\partial_a Z(\gamma) g^{a\bar b} \partial_{\bar b}\bar Z(\gamma)  \Bigr)
= 0
\ee
This property holds for any non-compact CY threefold \cite{Denef:1998sv}, and is tied to the fact that the central charge of the D0-brane is independent of K\"ahler moduli\footnote{On a compact CY threefold, the physical central charge involves an extra factor $Z_{D0}=e^{-K} X_0$ equal to the 
D0-brane central charge, 
which is not holomorphic and cannot be trivialized by a K\"ahler transformation. }.
A second important consequence of the holomorphy of $Z$ is that minima of 
$|Z(\gamma)|$ can only occur when $Z(\gamma)=0$ or at the boundary, hence
regular attractor points never occur. Since $\bOm_z(\gamma)=0$
 at a point $z$ where $Z(\gamma)=0$ by the support property~\eqref{supp-prop},
 the endpoint of the flow must be a singular point in K\"ahler moduli space, hence belongs to the boundary of the space of stability conditions.

Since the argument of $Z(\gamma)$ is constant along the flow, the flow lines of $\AF(\gamma)$ must lie inside the geometric ray $\Rgeo_\psi(\gamma)$, where $\psi$ is
fixed in terms of the argument of the central charge at $\mu=0$. Since the flow is only meaningful when $\bOm(\gamma)\neq 0$, and can only split when the
central charge of the descendants $Z(\gamma_e), e\in {\rm ch}(v)$ become aligned
with that of the incoming charge $\gamma_v$ at every vertex $v$, 
the whole attractor flow tree is in fact embedded inside the scattering diagram $\cD_\psi(\cC)$, with each vertex $v$ lying along the intersection of active rays along the wall of marginal stability 
$\cW(\gamma_v,\gamma_e)$.

Specializing further to the case where the slice of $\Pi$-stability conditions has complex dimension~$1$, which is of main interest in this paper,  both the flow lines and the rays have real dimension one, so the intersection $\cD_\psi(\cC) \cap \Pi$
must in fact coincide with the set of all possible attractor flows carrying central charges of fixed argument
$\psi+\frac\pi{2}$. Moreover, the complex structure on $\Pi$ induces an orientation 
of the plane around each intersection point, such that the orientation of the rays defined locally 
by  \eqref{defepsi} extends to a global orientation, in which incoming and outgoing rays have decreasing and increasing values of
$\Im(e^{-\I\psi} Z(\gamma_i))=| Z(\gamma_i)| $, respectively.
For a given charge $\gamma\in\Gamma$ and point $z\in \Pi$, the set of split 
attractor flows determines all possible scattering sequences, starting from initial rays 
$\Ract_\psi(\gamma_i)$ specified by the leaves of the tree and producing an active ray 
$\Ract_\psi(\gamma)$ going through the desired point $z\in \Pi$. In the (admittedly restrictive) context of local CY threefolds with a single K\"ahler modulus, the Split Attractor Flow Conjecture then amounts to the finiteness of the number of such scattering sequences.


\section{The large volume scattering diagram\label{sec_LV}}

In this section, we determine the scattering diagram $\cD^{\rm LV}_\psi$ for the category $\cC=D^b(\Coh K_{\IP^2})$  along the large volume slice $(Z^{\rm LV}_{(s,t)},\cA(s))$ with
$(s,t)\in\IR \times\IR^+$, defined by the 
quadratic central charge \eqref{defZLV}
\be
\label{defZLVst}
Z^{\rm LV}_{(s,t)} (\gamma)= - \frac{r}2  (s+\I t)^2+ d (s+\I t) -\ch_2
\ee  
and heart $\cA(s)= \Coh^{\sharp s}$ defined in \S\ref{sec_Pistab}.
For brevity we write \eqref{defZLVst} as $Z(\gamma)$ in this section.
The scattering diagram  $\cD^{\rm LV}_\psi$ for $\psi=0$ was analyzed in
\cite{Bousseau:2019ift},
using adapted coordinates $(x=s,y=-\frac12(s^2-t^2))$ such that the
rays $\Re Z(\gamma)=0$ are straight lines $ry+dx-\ch_2=0$ lying above the parabola $y=-\frac12 x^2$. We shall recast the construction in coordinates $(s,t)$, which have a more transparent relation to the coordinate $\tau$ on the physical slice, and generalize it to any phase $\psi\in(-\frac{\pi}{2},\frac{\pi}{2})$. We shall also describe an algorithm for determining which scattering sequences can contribute to the index $\Omega_{(s,t)}(\gamma)$ for given charge $\gamma$
and moduli $(s,t)$, and apply it to compute the Gieseker index $\Omega_\infty(\gamma)$
for small Chern vectors. As a consequence of this algorithm, the Split Attractor Flow Conjecture
holds along the large volume slice.

\subsection{Scattering rays and walls of marginal stability\label{sec_raywall}}
For simplicity we first consider the case $\psi=0$.  Evaluating the real and imaginary parts of \eqref{defZLVst},
\be
\Re Z(\gamma) = -\frac12 r (s^2 - t^2 ) + d s - \ch_2\ ,\quad 
\Im Z(\gamma) = t (d-s r)
\ee
we readily see that the geometric rays $\Rgeo_0(\gamma)=\{ (s,t) : \Re(Z(\gamma))=0, \Im(Z(\gamma))>0\}$ are given as follows:\footnote{Note that there are no rays associated to skyscraper sheaves with $r=d=0$.}
\begin{itemize}
\item $r\neq 0$ and $\Delta>0$:  a branch of  hyperbola intersecting the real axis at 
$s=\mu -\sign(r) \sqrt{2\Delta}$ (where $Z(\gamma)$ vanishes) and asymptoting  to $s=\frac{d}{r}-t \sign r$ from below;
\item $r\neq 0$ and $\Delta=0$:  the half-line $s=\frac{d}{r}-t \sign r$;
\item  $r\neq 0$ and $\Delta<0$:  a branch of hyperbola starting at $(s,t)=(\mu,\sqrt{-2\Delta})$ (where $Z(\gamma)$ vanishes) and  asymptoting to 
$s=\frac{d}{r}-t \sign r$   from above;
\item $r=0$ and $d>0$: a vertical line at $s=\ch_2/d$;
\item $r=0$ and $d\leq 0$: the geometric ray is empty.
\end{itemize}
Recall that the slope $\mu$ and discriminant $\Delta$ were defined in \eqref{defmuDelta}. 
Moreover, the rays are oriented in the direction of increasing $t$ (or equivalently decreasing $r s$ when $r\neq 0$), such that the modulus of the central charge $|Z(\gamma)|=\Im Z(\gamma)$ increases along the ray.

On the other hand, the walls of marginal stability $\cW(\gamma,\gamma')$ where the phases of $Z(\gamma)$ and $Z(\gamma')$ align (or anti-align) are given by the vanishing of 
\be
\label{WallLV}
W(\gamma,\gamma')\coloneqq \frac{\Im( Z(\gamma') \overline{Z}(\gamma))}{t}
\coloneqq \frac12(s^2+t^2) (rd'-r'd) -s (r\ch_2'-r'\ch_2) +  (d\ch_2'-d'\ch_2)
\ee
When $rd'-r'd=0$, $\cW(\gamma,\gamma')$ is a vertical line at $s_{\gamma,\gamma'}=\mu=\mu'$, otherwise, it is 
a semi-circle centered at $(s_{\gamma,\gamma'},0)$ of radius  
$R_{\gamma,\gamma'}$,
with
\begin{align}
\label{sRform}
s_{\gamma,\gamma'}&\coloneqq\frac{r\ch_2'-r'\ch_2}{rd'-r'd} =\frac12 (\mu+\mu') - \frac{\Delta-\Delta'}{\mu-\mu'} \\
R_{\gamma,\gamma'} &\coloneqq \sqrt{s_{\gamma,\gamma'}^2 - 2 \frac{d\ch_2'-d'\ch_2}{rd'-r'd}} 
= \sqrt{ (s_{\gamma,\gamma'} - \mu)^2-2\Delta} =  \sqrt{ (s_{\gamma,\gamma'} - \mu')^2-2\Delta'}
\end{align}
We shall denote this half-circle by $\cC(s_{\gamma,\gamma'}, R_{\gamma,\gamma'})$. 
Whenever distinct geometric rays $\Rgeo_0(\gamma)$ and $\Rgeo_0(\gamma')$ intersect, they do so  
 at the highest point $(s_{\gamma,\gamma'},R_{\gamma,\gamma'})$ 
along the half-circle $\cW(\gamma,\gamma')$, and bound states exist on the side of the wall where $t$ is large, \ie
$(r d'-r' d)W(\gamma,\gamma')>0$. Assuming that $\Delta\geq 0$ and $\Delta'\geq 0$, the intersection is not empty provided 
$\mu\neq\mu'$ and, depending on the signs of $r,r'$,   
\be
\label{condint}
\left\{
\begin{array}{@{}rll}
i) & r>0, r'<0 :& \mu'+\sqrt{2\Delta'}<  \mu-\sqrt{2\Delta} \\
ii) & r>0, r'>0 \text{ and } \mu<\mu' :& \mu'-\sqrt{2\Delta'}< \mu-\sqrt{2\Delta} \\
iii) & r<0, r'<0 \text{ and } \mu<\mu' :& \mu'+\sqrt{2\Delta'}< \mu+\sqrt{2\Delta} \\
iv) & r>0, r'=0 :& \frac{\ch'_2}{d'} <   \mu-\sqrt{2\Delta} \\
v) & r=0, r'<0 :& \mu'+\sqrt{2\Delta'} < \frac{\ch_2}{d}  
\end{array}
\right.
\ee
The remaining cases ($r<0,r'>0$, or $r<0,r'=0$, or $r=0,r'>0$, or $r,r'\gtrless 0$ with $\mu>\mu'$) are given by exchanging 
$\gamma$ and $\gamma'$, while the intersection is evidently empty if $r=r'=0$. 
The case $i)$ is depicted in Figure \ref{figlocalscatt}, other cases can be understood similarly. It is interesting to note that whenever the intersection is not empty, one has 
\be
\label{condpos}
d d' - r \ch'_2 - r' \ch_2 \geq 0 
\ee
In cases $i)-iii)$, this follows by writing $\ch_2=\frac{r}{2}s(2\mu-s)$, where
$s=\mu-\sign r\, \sqrt{2\Delta}$ is the point where the ray $\Rgeo_0(\gamma)$ crosses the real axis, and similarly for  $\Rgeo_0(\gamma')$, such that 
\be
d d' - r \ch'_2 - r' \ch_2  = 
\frac{rr'}{2} \left[ (\mu-s) ^2 + (\mu-s' )^2 - (\mu-\mu)^2 \right]
\ee
which is manifestly positive. Cases $iv)$ and $v)$ are obvious. 
If $\Delta$ or $\Delta'$ is negative, the conditions \eqref{condint} for non-empty intersection
continue to hold upon setting
$\sqrt{2\Delta}\to 0$ or $\sqrt{2\Delta'}\to 0$, but the condition 
\eqref{condpos} no longer needs to hold.

\begin{figure}[ht]
\begin{center}
\includegraphics[height=7cm]{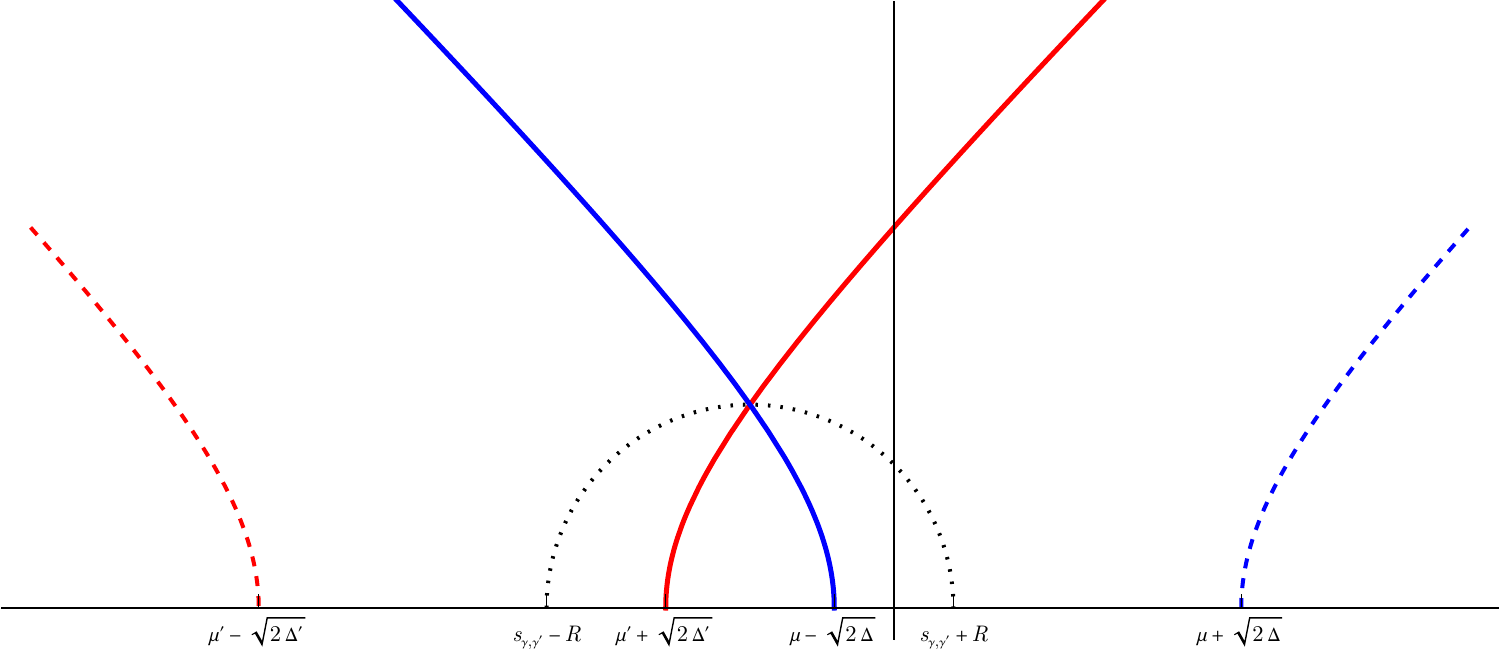}
\end{center}
\caption{Assuming $r>0, r'<0, \mu'+\sqrt{2\Delta'}<\mu-\sqrt{2\Delta}$ with $\Delta,\Delta'>0$, 
the rays $\Rgeo_0(\gamma)$ (in blue) and $\Rgeo_0(\gamma')$ (in red) 
intersect at the apex
of the wall of marginal stability  $\cW(\gamma,\gamma')$ (in dotted black). The dashed rays corresponding to $\Rgeo_0(-\gamma)$ 
and $\Rgeo_0(-\gamma')$ do not intersect. Outgoing rays, not shown here, are emitted in the forward region between the blue and red lines.  
For $\psi\neq 0$ with $|\psi|<\frac{\pi}{2}$, the intersection point is displaced by an angle $\psi$ along the dotted circle.
\label{figlocalscatt}}
\end{figure}

The structure of the walls of marginal stability for fixed charge $\gamma$ was analyzed in \cite{arcara2013minimal,bertram2014birational,
maciocia2014computing}. The main result is that
for $r>0$, there is a finite number
of walls, forming two sequences\footnote{This duplication is due to our definition
of DT invariants, which count semi-stable objects of charge $\epsilon\gamma$
where $\epsilon Z(\gamma)\in\IH_B$.}
 of nested half-circles
on either side of a vertical wall  at $s=\frac{d}{r}$.
For $(s,t)$ outside the largest walls (sometimes called Gieseker walls) the index  
$\Omega_{(s,t)}(\gamma)$ agrees with the Gieseker index  $\Omega_{\infty}(\gamma)$, justifying the notation.
For $r=0$ and $d\neq 0$, there is a single sequence of nested walls, and 
 $\Omega_{(s,t)}(\gamma)$ similarly agrees with the Gieseker index  $\Omega_{\infty}(\gamma)$ for  outside the  largest wall. 
When  $\gamma$ is a multiple of the Chern vector
associated to $\cO(m)$ for some $m\in \IZ$, the chamber structure is actually trivial, 
with $\Omega_{(s,t)}(\gamma)=\delta_{r,1}$ for any $s<m$ and $t>0$.
As each wall in the nested sequence is crossed, 
the moduli space $\cM_\sigma(\gamma)$ of $\sigma$-semistable objects undergoes a birational transformation contracting
a particular nef divisor associated to the wall (but keeping the dimension $\dim_\IC\cM_\sigma(\gamma)$
unchanged), until the innermost wall (sometimes known as collapsing wall) is crossed, after which $\cM_\sigma(\gamma)$ becomes empty. In particular, the index vanishes at the point $(\mu\pm\sqrt{2\Delta},0)$ or $(\ch_2/d,0)$, respectively, where $Z(\gamma)=0$,
which always lies inside the innermost wall. 
In the rest of this section, we shall confirm this structure using the scattering diagram. Before
doing so, we make a couple of remarks.

\begin{itemize}
\item Viewing 
$(s,t)$ as `space' and `time' coordinates in a two-dimensional Minkovski space, one may think of  
the active  ray $\Ract_0(\gamma)\subset\Rgeo_0(\gamma)$ as the worldline of a fictitious relativistic particle
of mass-squared  $m(\gamma)^2=r^2\Delta=\frac12 d ^2 - r \ch_2$, global charge $\gamma$ and electric charge $q=r$ and immersed in a constant electric field.  Indeed, when $r\neq 0$ the trajectory for such a particle is a branch of 
hyperbola  asymptoting to the light cone $s\simeq t\sign r$, while it is of course a straight line inside the light-cone when $r=0$. When two such particles collide, their charges add up and their mass increases by 
\be
m^2(\gamma+\gamma')-m^2(\gamma)-m^2(\gamma')=dd'-r\ch'_2-r'\ch_2
\ee
which is positive when $\Delta,\Delta'\geq 0$ by virtue of \eqref{condpos}. 
This analogy is not perfect, since the initial position and velocity are
fixed by the charge (in particular, the only allowed trajectory for $r=0$ is a vertical line).

\medskip

\item Two key properties following from this analogy are that the rays remain within the forward lightcone 
$|\delta s|\leq \delta t$ (the {\it causality} property), and that the `electric potential'
\be
\label{defvarphi}
\varphi_s(\gamma) \coloneqq 2(d - s r)
\ee
can only increase as $t$ increases (for $r=0$, it remains constant since the ray is vertical). The fact that  $\Im Z(\gamma)=t \varphi_s(\gamma) $ increases with $t$ is an obvious consequence of the monotonicity of $|Z(\gamma)|$ along the attractor flow, but the point is that $\Im Z(\gamma)$ increases faster than $t$. These properties will be instrumental in the next subsection for obtaining bounds on allowed constituents for a given total charge $\gamma$.

\medskip

\item In the simplest case of a collision of two incoming rays 
$\Reff_0(\gamma_1)$ and $\Reff_0(\gamma_2)$ with primitive vectors $\gamma_i$, each of them having $\Omega(k\gamma_i)=\delta_{k,1}$,  there is an infinite fan of outgoing effective rays of the form 
$\Reff_0(n_1\gamma_1+n_2\gamma_2)$,
carrying an index $\Omega(n_1\gamma_1+n_2\gamma_2)=K_{\kappa}(n_1,n_2)$
given  by the DT invariant  of a Kronecker quiver with $\kappa=|\langle \gamma_1,\gamma_2\rangle|$ arrows (a multiple of 3 due to \eqref{DSZ}) and dimension vector $(n_1,n_2)$
 (see Table \ref{Kroneckertab} for a table of some relevant values). 
More generally,  the outgoing rays at each collision are determined 
from the incoming rays by requiring that the product \eqref{constD} of automorphisms 
$\cU({\gamma_i})^{\epsilon_i}$ around the 
collision point is equal to one, or equivalently by using 
 the flow tree formula of \cite{Alexandrov:2018iao,Arguz:2021zpx},
 with the indices associated to incoming rays playing the role of attractor indices. 

\end{itemize}

\subsection{Initial rays and scattering sequences\label{sec_inidata}}
The main result of \cite{Bousseau:2019ift} was to identify the initial rays of the scattering diagram 
$\cD^{\rm LV}_0$. Namely, it was shown that the initial rays form two infinite families
associated to $\cO(m)$ and $\cO(m)[1]$, 
with charge $\gamma_m=[1,m,\frac12m^2]$ and $-\gamma_m$ respectively, 
emitted from the points $(m,0)$ along the boundary, for any $m\in \IZ$.  
Moreover, the automorphism $\Ueff(\gamma)$ on each such ray is specified by 
\be
 \Omstar(\pm k\gamma_m)=\delta_{k,1}\ ,\quad \bOmstar(\pm k\gamma_m)=\frac{y-y^{-1}}{k(y^k-y^{-k})}
\ee
for every $k\geq 1$. 
The fact 
that initial rays can only emanate from integer points along the boundary follows by noting that 
in any triangle bounded by $t<\max(s-m,m+1-s)$ in the $(s,t)$ plane, $\sigma$-semistable objects must be in the heart 
of the quiver associated to the tilting sequence $\langle E_1(m-1),E_2(m-1),E_3(m-1) \rangle$ (see \eqref{excepcoll}), but such objects necessarily have $\Re Z(\gamma)\neq 0$. The invariance
of the initial rays under translations $(s,t)\mapsto(s+1,t)$ is a straightforward consequence of the action of auto-equivalences $E\mapsto E(1)$ on $\cC$, while the invariance under $(s,t)\mapsto (-s,t)$ 
is consistent with the derived duality \eqref{eqreZ} in the 
large volume limit.

Given this simple structure for the initial rays, the consistency requirement  \eqref{constD} at any intersection determines in principle the automorphism $\Ueff_\sigma(\gamma)$ at any point 
$(s_0,t_0)$ along the geometric ray $\Rgeo_0(\gamma)$ for any primitive charge vector~$\gamma$, and therefore the indices $\Omega_{(s,t)}(k\gamma)$ along the same ray. In particular, the invariance of initial rays under  translations 
$(s,t)\mapsto(s+k,t)$ makes it obvious that the resulting indices will be invariant under the action of 
auto-equivalences $E\mapsto E(k)$,
\be
\label{Omflow}
\Omega_{(s,t)}\left([r,d,\chi)\right) = 
\Omega_{(s+k,t)} \left( [r, d+ k r , \chi+ k( d+ \tfrac32 r) +  \tfrac{k^2}{2} r) \right) \ ,\quad k\in \IZ
\ee
while the invariance of initial rays under  $(s,t)\mapsto (-s,t)$ 
implies the symmetry under derived duality 
\be
\label{Omdual}
\Omega_{(s,t)}( [r,d,\chi) ) = \Omega_{(-s,t)}( [-r, d, 3d-\chi) )
\ee
For $r\neq 0$ and $2d$ divisible by $r$, this can be combined with \eqref{Omflow}
to conclude that $\Omega_{(s,t)}( [r,d,\chi) )$ is invariant under $s\mapsto \frac{2d}{r}-s$.
Similarly, for $r=0$ and $2\chi$ divisible by $d$ one concludes
 that $\Omega_{(s,t)}( [0,d,\chi) )$  is invariant under
$s\mapsto \frac{2\chi}{d}-3-s$.

In  order to compute the indices in practice, the difficulty is to determine which rays, among the infinite set of initial rays associated to $\cO(m)$ and $\cO(m)[1]$, can produce an outgoing ray with the desired charge 
$\gamma$ passing through  the desired point $(s_0,t_0)$. The electromagnetic analogy
mentioned in the previous subsection gives a way to tackle this apparently formidable problem.
Indeed, the problem amounts to determining the set of all possible  particles of charge
$\gamma_i=k_i \ch  \cO(m_i)$ ($i=1,\dots n$) and anti-particles of charge $\gamma'_j= - k'_j \ch \cO(m'_j)$ ($j=1,\dots n'$),
emitted from the boundary $t=0$ at spatial positions $s=m_i$ and $s=m'_j$, respectively, such that 
their scattering products contains a particle of charge $\gamma$ going through  $(s_0,t_0)$.

A necessary condition is of course that  all initial particles lie in the past light-cone of $(s_0,t_0)$
and their global charges add up to the desired charge,
\be
\label{pastLC}
s_0-t_0\leq m_i,m'_j\leq s_0+t_0
\ee
\be
\label{totcharge}
[r,d,\chi) = 
\sum_{i=0}^{n-1} k_i \left[ 1, m_i, 1+\frac{m_i(m_i+3)}{2} \right) 
- \sum_{j=0}^{n'-1} k'_j \left[ 1, m'_j, 1+\frac{m'_j(m'_j+3)}{2} \right) 
\ee
but there can be cancellations between the charges of $k_i\cO(m_i)$ and $k'_j\cO(m'_j)[1]$ so these requirements alone do not yet give a finite set of 
decompositions. One can further reduce  the set of possible decompositions to a finite list  by exploiting the monotonicity of the `electric potential'  defined in \eqref{defvarphi}.
 Indeed, since each of the initial particles and anti-particles is emitted along the `light-cone' 
 $t=|s-m_i|$ or $t=|s-m'_j|$ at the boundary, the first scattering can only take place after a time $t\geq \frac12$, by which time the electric potential $\varphi_s(\gamma_i)$ has increased from 0 to $k_i$, and similarly for $\gamma'_j$. This immediately gives a bound
 \be
\label{bound1}
 \sum_{i=0}^{n-1} k_i + \sum_{j=0}^{n'-1} k'_j \leq \varphi_{s_0}(\gamma)
 \ee
 In particular, since the multiplicities $k_i$ and $k'_j$ are $\geq 1$, the  number $n+n'$ of possible constituents is bounded by  $\varphi_{s_0}(\gamma)$.
 Moreover, ordering the slopes $m_i, m'_j$ such that  
\be
m'_0 \leq m'_1 \leq \dots \leq m'_{n'-1} \ , \quad 
m_{n-1} \leq \cdots \leq m_1 \leq m_0 
\ee
with $k_i\leq k_{i-1}$ whenever $m_i=m_{i-1}$, and similarly $k'_j\leq k'_{j-1}$ whenever $m'_j=m'_{j-1}$, it is clear that the scattering between the left-moving particles $k_i\cO(m_i)$ 
and right-moving anti-particles $k'_j\cO(m'_j)[1]$ can only take place if $m'_{n'-1} < m_0$ and 
$m_0'<m_{n-1}$. Denoting by $(s_1,t_1)$ the point where the last scattering takes place,
such that $m_0'<s_1<m_0$,  the left-most anti-particle must accumulate at least $2s_1-2m'_0-1$ electric potential in going
from $m'_0+\frac12$ to $s_1$, while the right-most particle must accumulate at least $2m_0-2s_1-1$ electric potential in going
from $m_0-\frac12$ to $s_1$. Thus, the bound  \eqref{bound1}
can be strengthened to 
\be
\label{bounddec}
\sum_{i=0}^{n-1} k_i + \sum_{j=0}^{n'-1} k'_j + 2 (m_0 - m'_0 -1) \leq \varphi_{s_0}(\gamma)
\ee
This restricts the initial positions $m_i, m'_j$  to a finite interval around~$s_1$
that is typically tighter than the causality bound~\eqref{pastLC} for large $t$.  At any rate, the set of allowed initial integer
 positions $m_i, m'_j$  and multiplicities $k_i,k'_j$ is finite. 
 Thus, the list of possible decompositions having  a particle of charge 
 $\gamma$ at position $(s_0,t_0)$ among all their scattering products is finite, and the 
 SAFC holds along the large volume slice, for trees rooted at a point where $\Re Z(\gamma)=0$ (we relax this restriction in \S\ref{sec_LVpsi}).
 
 Unfortunately, many of decompositions  $\gamma=\sum_{i=1}^n \gamma_i + \sum_{=1}^{n'} \gamma'_j$ which satisfy \eqref{totcharge} and \eqref{bounddec} turn out to not 
include $\gamma$ among their scattering products, as the worldlines of the constituents may fail to intersect. To determine which of them do,  one way is to construct all possible binary trees with constituents of charge $\gamma_i$ and $\gamma'_j$, and retain those that satisfy the conditions  \eqref{condint}  at each vertex. Identifying vertices connected by edges of vanishing length (\ie mapped to the same point in $(s,t)$ plane), one obtains a set of attractor flow trees associated to the decomposition. The contribution of each attractor flow tree to the total index 
$\Omega_{(s_0,t_0)}(\gamma)$ can then be computed by applying the 
(attractor or flow) tree formulae  locally at each vertex, as explained in \S\ref{sec_SAF}, or more globally 
 by perturbing the charges of the initial constituents $\gamma_i\to \gamma_i+\epsilon_i \delta, \gamma'_j\to \gamma'_j+\epsilon'_j \delta$ where $\delta$ is the D0-brane charge vector
 and $\epsilon_i,\epsilon'_j$ are small enough and generic.
  
 While the resulting (unperturbed) attractor flow trees typically involve vertices of higher valency
 and tend to proliferate, it is
 sufficient to keep track of {\it scattering sequences},
  where descendent edges whose charges are multiples of the same primitive charge 
 $\gamma_e$ are aggregated together as a single edge of charge $k\gamma_e$.
 Conversely, the original family 
 of attractor flow trees can be recovered from the aggregated flow trees by replacing 
 each edge of charge $k\gamma_e$ by edges of charge $\{ k_1 \gamma_e, k_2 \gamma_e, \dots\}$
 where $\{k_i\}$ runs over all integer partitions of $k$. We shall make use of this bookkeeping
 device in the next subsection, in order to keep the list of attractor flow trees (or rather, scattering
 sequences) within reasonable length.

\subsection{Examples: Hilbert scheme of \protect\texorpdfstring{$n$}{n} points on \protect\texorpdfstring{$\IP^2$}{P2}\label{sec_Hilb}}
 We now apply the procedure outlined above to the case  of rank 1 sheaves with Chern vector
 $\gamma=[1,0,1-n)$. At large volume $t\to\infty$, the 
 moduli space $\cM_{(s,t)}(\gamma)$ reduces to the Hilbert scheme of $n$ points on $\IP^2$, 
for which the Poincar\'e polynomials are well-known (see \eqref{genHilb}). From the analysis in 
\cite{arcara2013minimal,coskun2014birational}, it follows that the chamber structure is 
given by a set of nested walls, the largest of which being  the Gieseker wall corresponding
to the destabilizing object $\cO(-1)$. From the discussion below \eqref{WallLV}, this is the
half-circle $\cC(-n-\frac12,n-\frac12)$. Thus it suffices to consider all scattering sequences 
which go through
the points $(s,t)=(-n-\frac12,n-\frac12)$, with constituents with slopes in the range
$-2n\leq m\leq -1$, and electric potential bounded by $\varphi_{-n-\frac12}(\gamma)=2n+1$.

We find that for any $n$, there is indeed a scattering sequence
whose first splitting lies on the Gieseker wall $\cC(-n-\frac12,n-\frac12)$,
namely
\be
T_{\rm Gieseker} = \begin{cases} \{ -\cO(-2), 2\cO(-1) \} & n=1 \\ 
\{\{ -\cO(-n-1), \cO(-n)\}, \cO(-1) \} & n\geq 2
\end{cases} 
\ee
For $n=1$ the scattering sequence is a shorthand for two distinct attractor flow trees
with a single splitting into either two or three descendants, namely $ \{ \gamma_1, 2\gamma_2 \}$ and $\{  \gamma_1, \gamma_2, \gamma_2 \}$ with $\gamma_1=-\cO(-2), \gamma_2=\cO(-1)$. The contributions of the two trees sum up to 
\be
\frac12 \kappa(\langle\gamma_1,\gamma_2\rangle)^2 \bOm(\gamma_1) \bOm(\gamma_2)^2 + 
 \kappa(\langle\gamma_1,2\gamma_2\rangle) \bOm(\gamma_1) \bOm(2\gamma_2)
 =
\frac12 (y^2+1+y^{-2})^2 - \frac{y^6-y^{-6}}{y-y^{-1}} \frac{y-y^{-1}}{2(y^2-y^{-2})} 
\ee
reproducing the index $K_3(1,2)= 
y^2+1+1/y^2$ of a Kronecker quiver with 3 arrows and dimension vector $(1,2)$
(see Table \ref{Kroneckertab}). For $n=2$, there is a single attractor flow tree with two vertices
$\{\{\gamma_1,\gamma_2\},\gamma_3\}$ 
 contributing 
 \be
  \kappa(\langle\gamma_1,\gamma_2\rangle)  \kappa(\langle\gamma_1+\gamma_2,\gamma_3\rangle)
  = K_3(1,1)^2=(y^2+1+1/y^2)^2
 \ee
For $n=1$ or $n=2$, there is a single wall of marginal stability associated to either of the two 
sequences, and the total index agrees with the prediction \eqref{genHilb} for
the Gieseker index outside the wall, and  vanishes inside. This is consistent with the fact that the moduli space $\cM_\infty(\gamma)$ for $n\leq 2$ has a single stratum.

\begin{figure}[ht]\centering
\includegraphics[width=\textwidth]{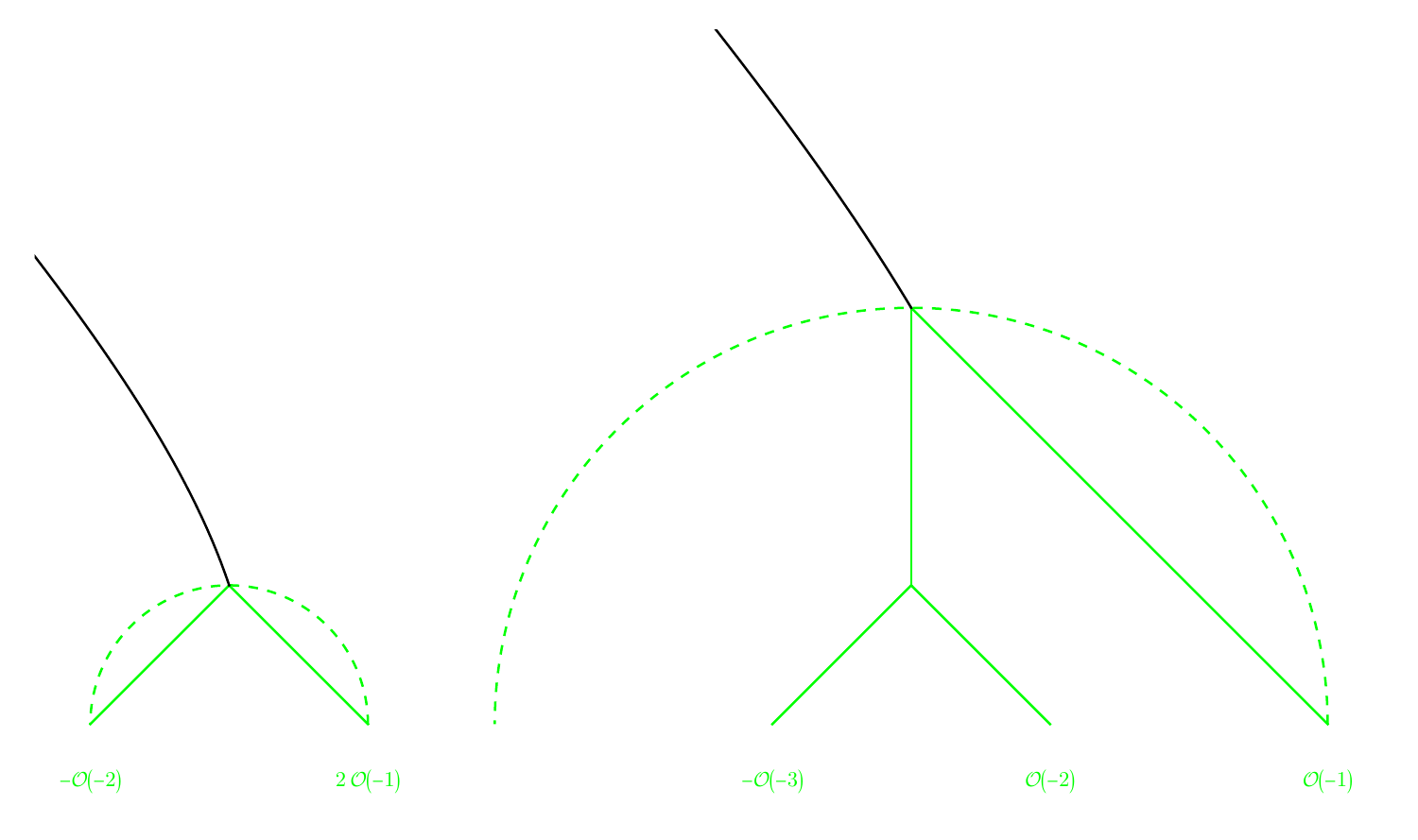}
\caption{Scattering sequences for Hilbert scheme of 1 and 2 points on $\IP^2$.\label{figH12}}
\end{figure}

For $n\geq 3$, there are additional walls of the form $\cC(s_{\gamma,\gamma'},R_{\gamma,\gamma'})$, associated to scattering sequences with final vertex of the form $\gamma'+(\gamma-\gamma')\to \gamma$. Below we list the corresponding scattering sequences 
for $n\leq 7$, along with their respective contribution to the index in the region 
above the respective wall (for brevity, we only indicate the unrefined limit $y\to 1$, but the full
refined index can be obtained from the Kronecker indices in Table \ref{Kroneckertab}). 
In all cases, we find agreement with \eqref{genHilb}:

\begin{itemize}

\item For $n=3$, there are two nested walls, each of which is associated to one sequence,
\begin{equation*}
\begin{array}{l@{\hskip 5mm}l@{\hskip 5mm}l}
\cC(-\frac72,\frac52) &  \{\{-\cO(-4),\cO(-3)\},\cO(-1)\} & K_3(1,1)^2
\\
\cC(-\frac52,\frac12) & \{ -2\cO(-3),3\cO(-2)\} & K_3(2,3)
\end{array}
\end{equation*}
Again we stress that the scattering sequence $\{ -2\cO(-3),3\cO(-2)\}$ stands for 6 different
attractor flow trees, corresponding to the $2\times 3$ integer partitions of $2$ and $3$, which all split
at the point and add up to the index of a Kronecker quiver with 3 arrows and dimension vector $(2,3)$. 
In the Gieseker chamber, the contributions of the two sequences add up to $9+13=22$ as $y\to 1$.

\item For $n=4$ there are 3 nested walls, and correspondingly 3 sequences,
\begin{equation*}
\begin{array}{l@{\hskip 5mm}l@{\hskip 5mm}l}
\cC(-\frac92,\frac72) & \{\{-\cO(-5), \cO(-4)\}, \cO(-1)\} & K_3(1,1)^2 \\
\cC(-\frac72,\frac{\sqrt{17}}{2}) &  \{\{-\cO(-4), \cO(-3)\}, \{-\cO(-3), 2 \cO(-2)\}\} & K_3(1,1)^2 K_3(1,2) \\
\cC(-3,1) &  \{-\cO(-4), 2 \cO(-2)\} & K_6(1,2)
\end{array}
\end{equation*}
In the Gieseker chamber, their contributions add up to  $9+27+15=51$  as $y\to 1$.

\begin{figure}[ht]
\begin{center}
\includegraphics[height=7cm]{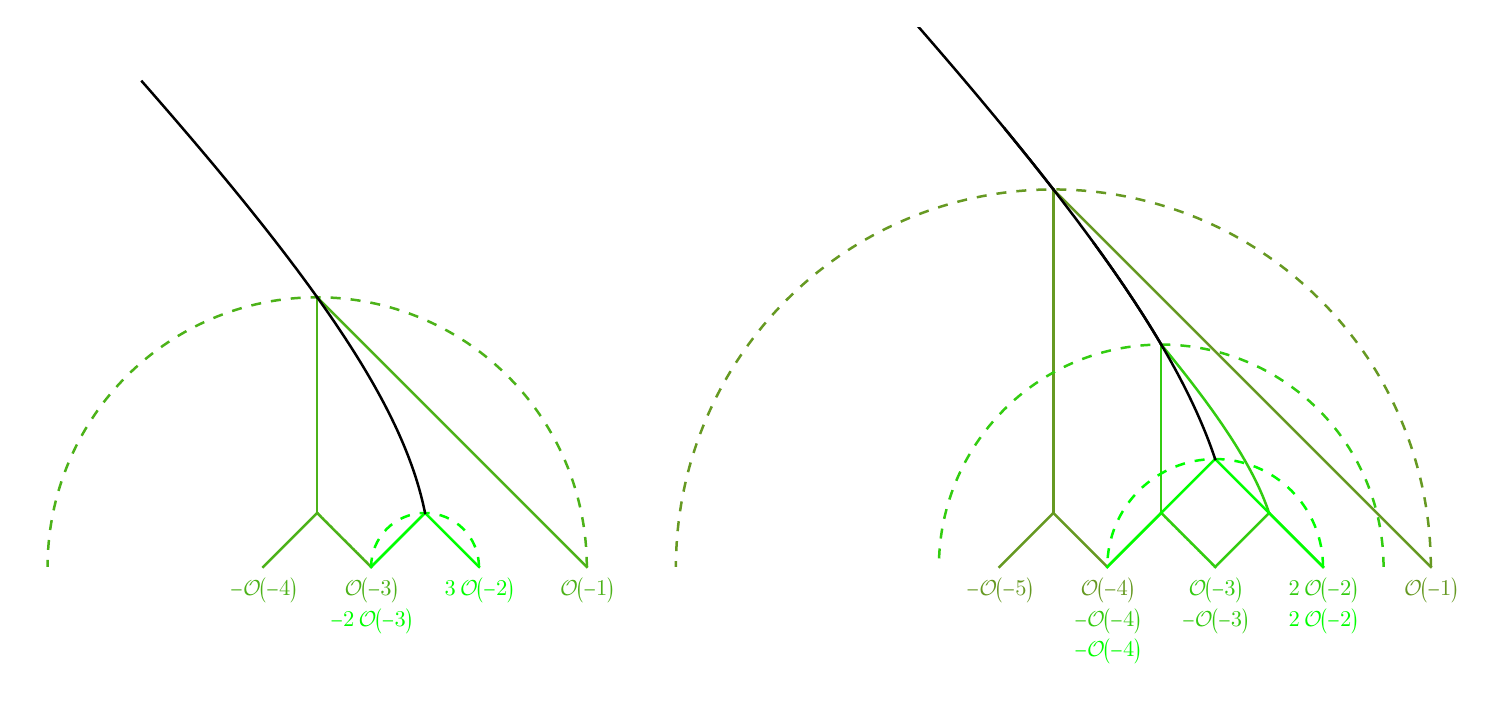}
\end{center}
\caption{Scattering sequences for Hilbert scheme of 3 and 4 points on $\IP^2$.\label{figH34}}
\end{figure}

\item For $n=5$ there are 3 nested walls associated to 3 sequences,
\begin{equation*}
\begin{array}{l@{\hskip 5mm}l@{\hskip 5mm}l}
\cC(-\frac{11}{2},\frac92) & \{\{-\cO(-6), \cO(-5)\}, \cO(-1)\} & K_3(1,1)^2  \\
\cC(-\frac92.\frac{\sqrt{41}}{2}) & \{\{-\cO(-5), \cO(-4)\}, \{-\cO(-3), 2 \cO(-2)\}\} &  K_3(1,1)^2 K_3(1,2) \\
\cC(-\frac72,\frac32) & \{\{-2 \cO(-4), 2 \cO(-3)\}, \cO(-2)\} &  K_{-3,3,6}(2,2,1)
\end{array}
\end{equation*}
It is worth explaining in some detail the computation of the index associated to the last sequence
$\{\{2\gamma_1,2\gamma_2\},\gamma_3\}$ with $\gamma_1=-\cO(-4),\gamma_2=\cO(-3),
\gamma_3=\cO(-2)$ (which we have denoted by $K_{a,b,c}(2,2,1)$
with $a=\langle\gamma_1,\gamma_2\rangle$, $b=\langle\gamma_2,\gamma_3\rangle$,
$c=\langle\gamma_3,\gamma_1\rangle$). This sequence actually stands for 
5 different attractor flow trees, namely
\be
\begin{aligned}
& \{\{2\gamma_1,2\gamma_2\},\gamma_3\} , \quad
&& \{\{\gamma_1,\gamma_1,2\gamma_2\},\gamma_3\}, \\
& \{\{2\gamma_1,\gamma_2,\gamma_2\},\gamma_3\}, \quad
&& \{\{\gamma_1,\gamma_1,\gamma_2,\gamma_2\},\gamma_3\}, \\
&&& \{\{\gamma_1,\gamma_2\},\{ \{\gamma_1,\gamma_2\}, \gamma_3\}\}
\end{aligned}
\ee
The first four of those combine to produce the rational index ${\overline K}_3(2,2)\to -\frac{21}{4}$ of a Kronecker quiver with $3$ arrows and dimension vector $(2,2)$. The fifth involves two copies of 
the Kronecker quiver with $3$ arrows and dimension vector $(1,1)$. The total index is therefore
\be
\label{Om221}
\begin{aligned}
K_{-3,3,6}(2,2,1) & = \kappa( \langle 2\gamma_1+2\gamma_2,\gamma_3\rangle) {\overline K}_3(2,2)
+\frac12 \kappa( \langle \gamma_1+\gamma_2,\gamma_3\rangle)^2 K_3(1,1)^2 \\
& \to 6 \frac{21}{4}  + \frac12 3^2 3^2 = 72
\end{aligned}
\ee
Equivalently, this factor arises by applying
the flow tree formula
to a local scattering diagram with two incoming rays of charge $\alpha=\gamma_1+\gamma_2$
and $\beta=\gamma_3$ with $\Omega^-(\alpha)=K_3(1,2)=y^2+1+1/y^2, \Omega^-(2\alpha)
=K_3(2,2)=-y^5-y^3-y-1/y-1/y^3-1/y^5$ and $\Omega^-(\beta)=1$,
selecting the outgoing ray of charge $2\alpha+\beta$. 
In the Gieseker chamber, the contributions of the three sequences
 add up to $9+27+72=108$  as $y\to 1$.

\item For $n=6$ there are 5 nested walls associated to 5 sequences,
\begin{equation*}
\begin{array}{c@{\hskip 5mm}l@{\hskip 5mm}l}
\cC(-\frac{12}{2},\frac{11}{2}) & \{\{-\cO(-7), \cO(-6)\}, \cO(-1)\} & K_3(1,1)^2 \\
\cC(-\frac{11}{2},\frac{\sqrt{73}}{2}) &  \{\{-\cO(-6), \cO(-5)\}, \{-\cO(-3), 2 \cO(-2)\}\} &  K_3(1,1)^2 K_3(1,2) \\
\cC(-\frac92,\frac{\sqrt{33}}{2}) &\{\{-\cO(-5), \cO(-4)\}, \{\{-\cO(-4), \cO(-3)\}, \cO(-2)\}\}&  K_3(1,1)^4 
\\
\cC(-4,2) & \{\{-\cO(-5), \cO(-3)\}, \cO(-2)\}&  K_6(1,1)^2\\
\cC(-\frac72,\frac12) & \{-3 \cO(-4), 4 \cO(-3)\} & K_3(3,4) 
\end{array}
\end{equation*}
In the Gieseker chamber, their contributions add up to $9+27+81+36+68=221$  as $y\to 1$.

\begin{figure}[ht]
\begin{center}
\includegraphics[height=7cm]{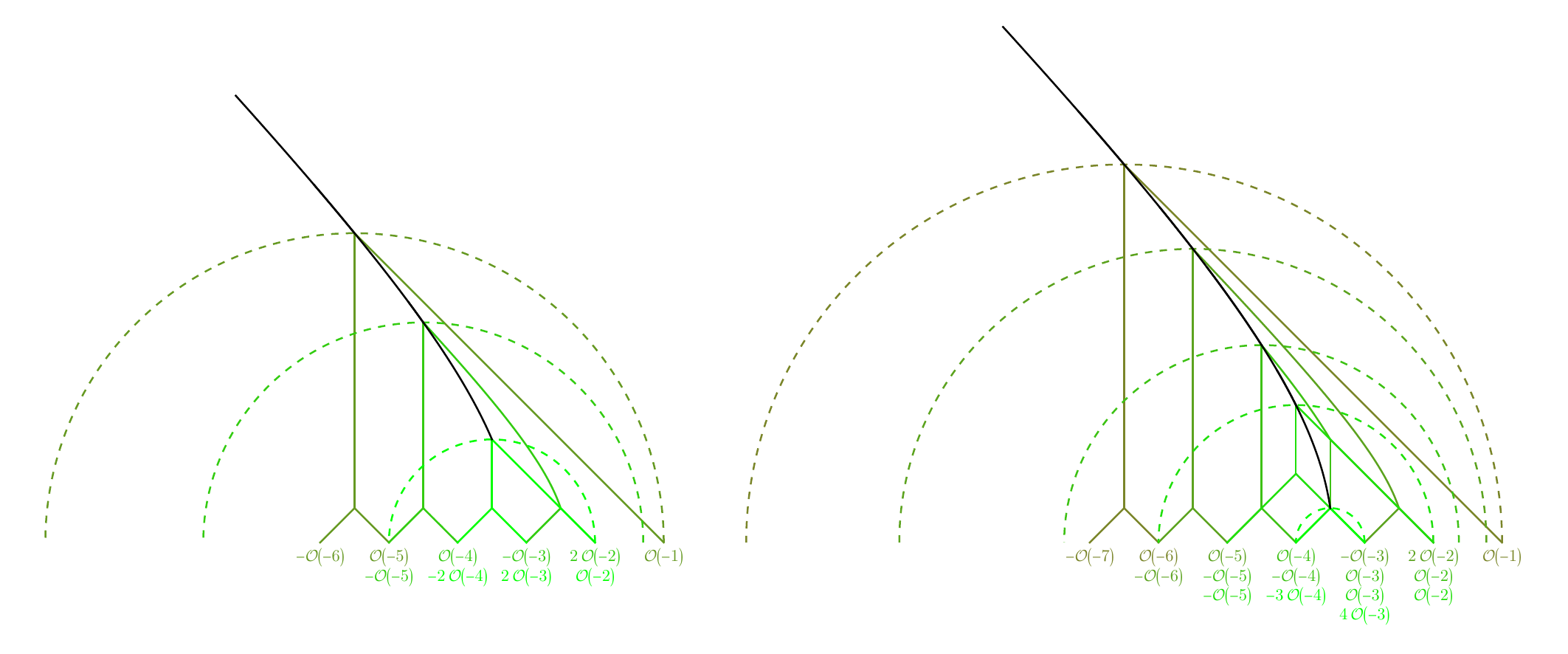}
\end{center}
\caption{Scattering sequences for Hilbert scheme of 5 and 6 points on $\IP^2$.\label{figH56}}
\end{figure}

\item For $n=7$ there are 6 nested walls, but one of them is associated to 2 sequences:
\begin{equation*}
\begin{array}{cll}
\cC(-\frac{15}{2},\frac{13}{2}) & \{\{-\cO(-8), \cO(-7)\}, \cO(-1)\}& K_3(1,1)^2  \\
\cC(-\frac{13}{2},\frac{\sqrt{113}}{2})  & \{\{-\cO(-7), \cO(-6)\}, \{-\cO(-3), 2 \cO(-2)\}\}  & K_3(1,1)^2 K_3(1,2) \\
\cC(-\frac{11}{2},\frac{\sqrt{65}}{2}) & \{\{-\cO(-6), \cO(-5)\}, \{\{-\cO(-4), \cO(-3)\}, \cO(-2)\}\}& K_3(1,1)^4 \\
\cC(-\frac92,\frac52) & \{\{-2 \cO(-5), 2 \cO(-4)\}, \cO(-2)\}& K_{-3,6,9}(2,2,1) \\
 " & \{\{-\cO(-5), \cO(-4)\}, \{-2 \cO(-4), 3 \cO(-3)\}\}& K_3(1,1)^2 K_3(2,3)
 \\
\cC(-4,\sqrt{2}) & \{\{-\cO(-5), \cO(-3)\}, \{-\cO(-4), 2 \cO(-3)\}\}& K_3(1,2) K_6(1,1)^2 \\
\cC(-\frac{39}{10},\frac{11}{10}) &\{-\cO(-5), \{-\cO(-4), 3 \cO(-3)\}\}&  K_3(1,3) K_{15}(1,1) 
\end{array}
\end{equation*}
The index for the first tree associated to the wall $\cC(-\frac92,\frac52)$
is computed in the same way as explained around \eqref{Om221}.  Note also that 
the scattering sequence associated to $\cC(-4,\sqrt{2})$ is non-planar, as the ray 
$\cO(-3)$ has to cross through $-\cO(-4)$ before colliding with $-\cO(-5)$. 
The contributions of the 7 sequences to the Gieseker index 
add up to $9+27+81+72+117+108+15=429$  as $y\to 1$, in agreement with  \eqref{genHilb}.

\begin{figure}[ht]
\begin{center}
\includegraphics[height=7cm]{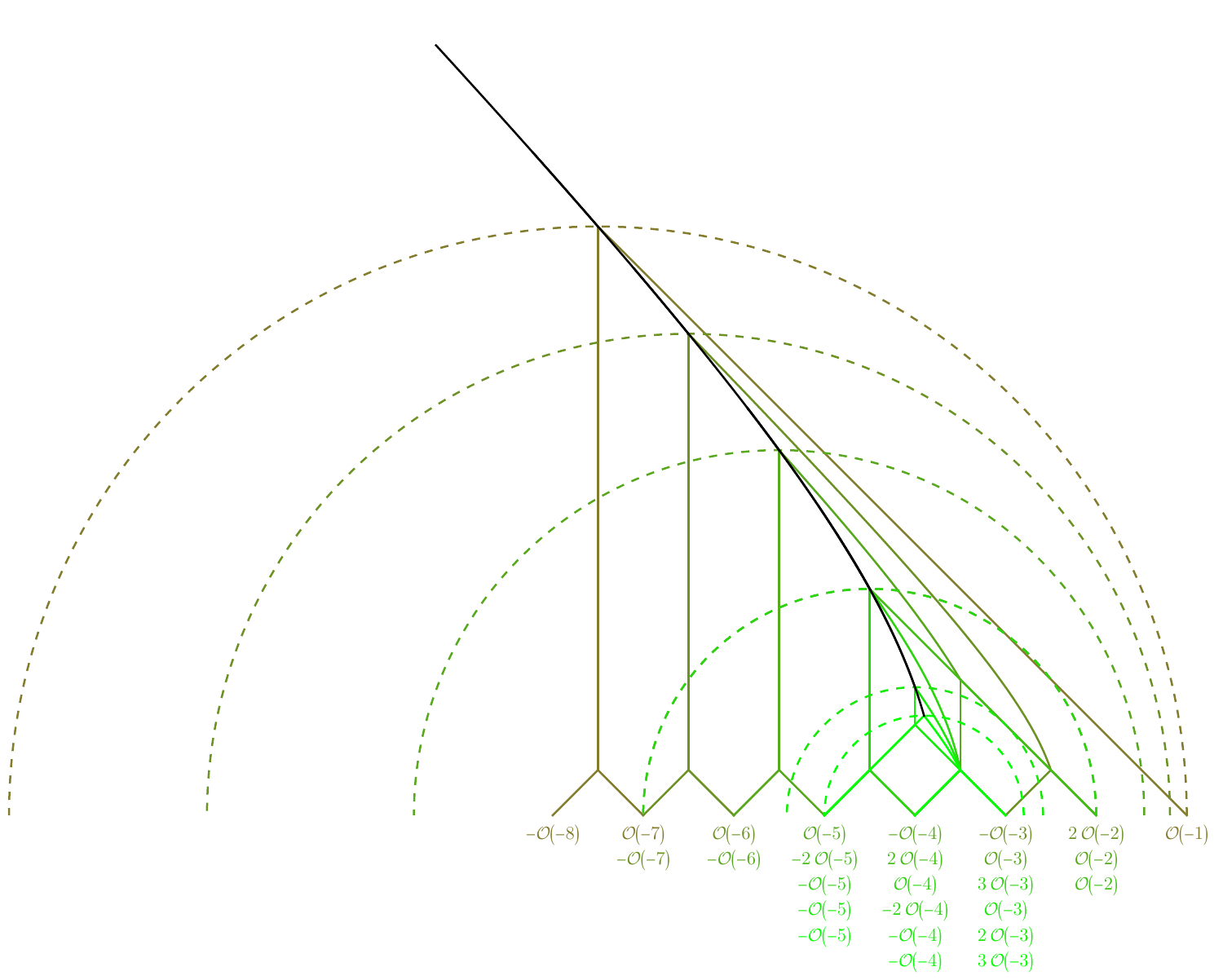}
\end{center}
\caption{Scattering sequences for Hilbert scheme of 7 points on $\IP^2$.\label{figH7}}
\end{figure}

\end{itemize}

\subsection{Examples: D2-D0 indices\label{sec_D2D0}}
We now turn to the case of rank 0 sheaves with  Chern vector $\gamma=[0,d,\chi)$ with $d>0$.
At large volume $t\to\infty$, the  index $\Omega_{(s,t)}(\gamma)$ is determined in terms of 
refined Gopakumar-Vafa invariants via \eqref{OmD2GV}, in particular it is  independent of 
$\chi$. This is in general no longer true for finite $t$, due to the existence of a sequence of nested walls of marginal stability. The outermost wall (or Gieseker wall) is determined as follows
\cite[Proposition 7.5]{woolf2013nef}. Let $\eta$ be an integer in the range 
$-\frac{d}{2} \leq \eta\leq \frac{d}{2}$ such that $\chi+\frac12d (d-3) = k d+ \eta$ with $k\in \IZ$. 
Then the Gieseker wall is the half-circle $\cC(s_0,R_0)$ of radius 
$R_0=\frac{d}{2} - \frac{|\eta|}{d}$ centered
at $s_0=\frac{\chi}{d}-\frac32= k+\frac{\eta}{d}-\frac{d}{2}$. 
When $\eta\geq 0$, the destabilizing sub-object along the wall is $\cO(k)$,
whereas for $\eta\leq 0$ it is the ideal sheaf $\cI_{Z(-\eta)}(k)$ where $Z(w)$ is a scheme of dimension 0 and length $w$. For $\eta=0$, these two notions agree, while for $|\eta|=\frac{d}{2}$ they do not, but destabilizing subobjects of both types may occur.  It follows that all constituents of the trees contributing to $\Omega_\infty(\gamma)$ must be emitted  in an interval of width $\frac{d}{2} - \frac{|\eta|}{d}$ centered around $s_0$, and electric potential bounded by $\varphi_{s_0}(\gamma)=2d$.

We shall now discuss examples with low degree $d$, making use of  the symmetry properties \eqref{Omflow} and 
\eqref{Omdual} to restrict the values of $\chi$.  In all cases, the wall structure matches 
with \cite{bertram2014birational} and the index outside the Gieseker wall agrees with  the GV invariants in \eqref{GVtab}. The trees arising at each wall also agree with the stratification
in \cite{drezet2011geometry,maican2011moduli,maican2013classification}, 
with one exception for $(d,\chi)=(5,3)$ mentioned below.

\begin{itemize}
\item For $(d,\chi)=(1,1)$, such that $(k,\eta)=(0,0)$, there is only one wall,
\begin{equation*}
\begin{array}{c@{\hskip 10mm}l@{\hskip 5mm}l@{\hskip 5mm}l}
\chi=1: & \cC(-\tfrac12,\tfrac12) \qquad \{ -\cO(-1), \cO \} \qquad K_3(1,1) 
\end{array}
\end{equation*}
The tree corresponds to the fact that the structure sheaf of a curve 
$\cO_C$ is given by the short exact sequence $0\to \cO(-1)\to \cO_C\to \cO\to 0$, and its index 
$K_3(1,1)=y^2+1+1/y^2$ reproduces the Poincar\'e polynomial of the linear system $[C]=\IP^2$.

\item For $(d,\chi)=(2,0),(2,1)$, such that $(k,\eta)=(0,-1),(0,0)$, respectively, 
there is a single wall 
\begin{equation*}
\begin{array}{c@{\hskip 10mm}l@{\hskip 5mm}l@{\hskip 5mm}l}
\chi=0: & \cC(-\frac32,\frac12) & \{ -2\cO(-2), 2\cO(-1) \} & K_3(2,2)  \\[2mm]
\chi=1: & \cC(-1,1) & \{ -\cO(-2), \cO\} & K_3(1,2)
\end{array}
\end{equation*}
giving the same index $-y^5-y^3-y-\dots$ in the Gieseker chamber in either case. 

\item For $(d,\chi)=(3,0),(3,1)$, such that $(k,\eta)=(0,0),(0,1)$, respectively,
there are two walls in the first case, and a single wall in the second case,
\begin{equation*}
\begin{array}{c@{\hskip 10mm}l@{\hskip 5mm}l@{\hskip 5mm}l}
\chi=0: & \cC(-\frac32,\frac32)&\{-\cO(-3),\cO(0)\} & K_9(1,1) \\[1pt]
"  &  \cC(-\frac32,\frac12) &\{ -3\cO(-2), 3\cO(-1) \} \ & K_3(3,3) 
 \\[2mm]
\chi=1:& \cC(-\frac76,\frac76) & \{ \{ -2\cO(-2), \cO(-1)\}, \cO\} & 
K_3(1,2) K_9(1,1)
\end{array}
\end{equation*}
contributing $27=18+9$ as $y\to 1$  in the Gieseker chamber in either case. 

\begin{figure}[ht]
\begin{center}
\includegraphics[height=6cm]{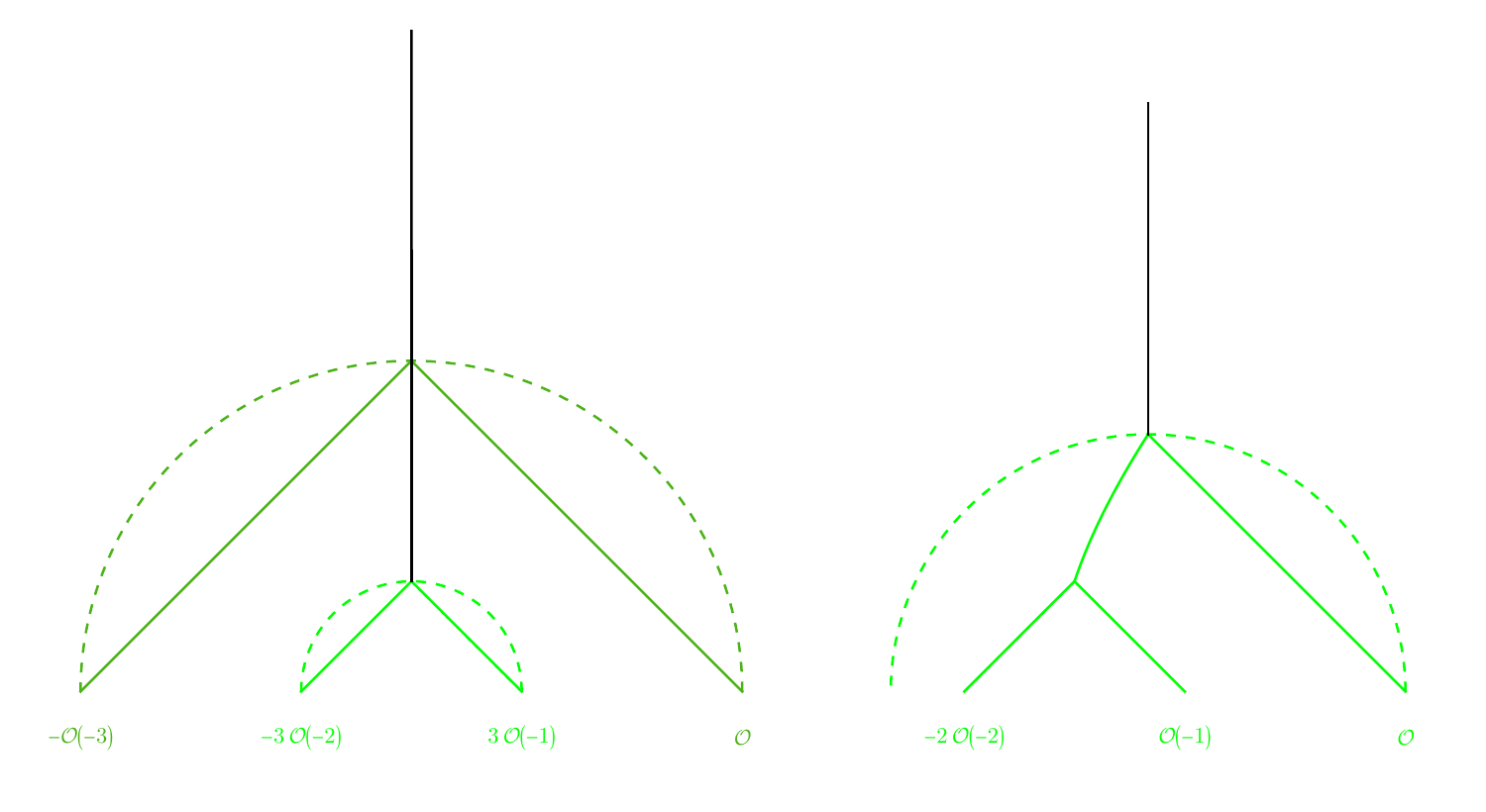}
\end{center}
\caption{Scattering sequences for $(d,\chi)=(3,0),(3,1)$.
\label{fig3D2}}
\end{figure}

\item  For $(d,\chi)=(4,0),(4,1),(4,2)$, such that $(k,\eta)=(0,2),(1,-1),(1,0)$, respectively,
there are two walls in the first case, and three in the third case,
\begin{equation*}
\begin{array}{c@{\hskip 10mm}l@{\hskip 5mm}l@{\hskip 5mm}l}
\chi=0: 
 &\cC(-\frac32,\frac32) & \{-\cO(-3),\{\{-\cO(-2),\cO(-1)\},\cO\}\}   &  K_3(1,1)^2 K_{12}(1,1)\\[1pt]
"  & \cC(-\frac32,\frac12) & \{-4\cO(-2),4\cO(-1)\}   &  K_3(4,4)  \\ [2mm]
\chi=1: 
& \cC(-\frac54,\frac74)&\{-\cO(-3),\{-\cO(-1),2\cO\}\}   &  K_3(1,2) K_{12}(1,1) \\[1pt]
"  &\cC(-\frac54,\frac54)&\{\{-3\cO(-2),2\cO(-1)\},\cO\}   &  K_3(2,3) K_{12}(1,1)  \\[2mm]
\chi=2: & \cC(-1,2)&\{-\cO(-3),\cO(1)\} &  K_{12}(1,1)  \\[1pt]
" & \cC(-1,\sqrt2) &\{\{-2\cO(-2),\cO(-1)\},\{-\cO(-1),2\cO\}\} & K_3(1,2)^2 K_{12}(1,1)  \\[1pt]
" & \cC(-1,1) &\{-2\cO(-2),2\cO\}  & K_6(2,2) 
\end{array}
\end{equation*}
 contributing $-108-84=-36-156=-108-72-12=-192$  in the Gieseker chamber in each case. 
 Note that for $\chi=0$, the first wall involves three charges $-\cO(-3), -\cO(-2)+\cO(-1)$
 and $\cO$ colliding at the same point, which can be resolved into two successive collisions by perturbing the incoming rays.
 For each value of $\chi$, the sequences match with the stratification in \cite{drezet2011geometry}.

\begin{figure}[ht]
\begin{center}
\includegraphics[height=5cm]{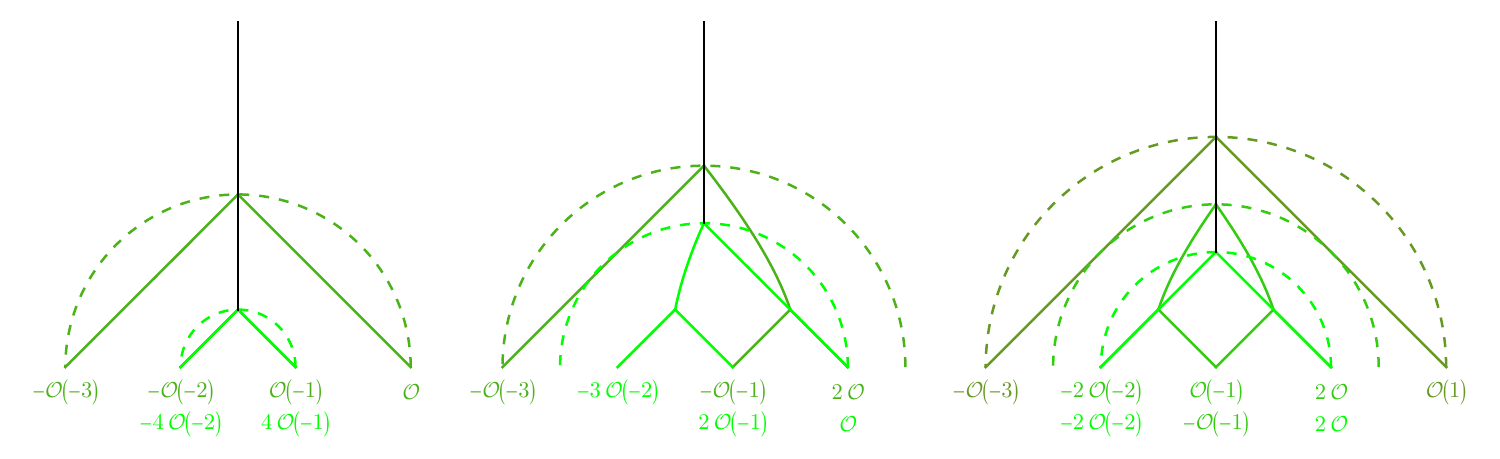}
\end{center}
\caption{Scattering sequences for $(d,\chi)=(4,0),(4,1),(4,2)$.
\label{fig4D2}}
\end{figure}

\item  For $(d,\chi)=(5,0),(5,1),(5,3)$, such that $(k,\eta)=(1,0),(1,1),(2,-2)$, we find
\begin{equation*}
\hspace*{-7mm}
\begin{array}{c@{\hskip 3mm}l@{\hskip 3mm}l@{\hskip 3mm}l}
\chi=0 & \cC(-\frac32,\frac52) & \{-\cO(-4), \cO(1)\}  &  K_{15}(1,1) \\[1pt]
" & \cC(-\frac32,\frac{\sqrt{17}}{2}) & \{\{-2 \cO(-3), \cO(-2)\}, \{-\cO(-1), 2 \cO\}\}   &  K_3(1,2)^2 K_{15}(1,1) \\[1pt]
" & \cC(-\frac32,\frac32) & \{-\cO(-3), \{\{-2 \cO(-2), 2 \cO(-1)\}, \cO\}\}   &  K_{15}(1,1) K_{-3,3,6}(2,2,1) \\[1pt]
  & \cC(-\frac32,\frac12) & \{-5 \cO(-2), 5 \cO(-1)\}   &  K_3(5,5) 
 \\[2mm]
\chi=1  & \cC(-\frac{13}{10},\frac{23}{10})  & \{\{-2 \cO(-3), \cO(-2)\}, \cO(1)\}  &  K_3(1,2)  K_{15}(1,1) \\[1pt]
" & \cC(-\frac{13}{10},\frac{\sqrt{329}}{10}) &
\begin{array}[t]{@{}r@{}} \{\{-\cO(-3), \{-\cO(-2), \cO(-1)\}\},\quad\qquad \\ \{-\cO(-1), 2 \cO\}\} \end{array}  &  
K_3(1,1)^2 K_3(1,2) K_{15}(1,1) \\[1pt]
" & \cC(-\frac{13}{10},\frac{17}{10})  & \{-\cO(-3), \{-\cO(-2), 2 \cO\}\}   &  K_6(1,2) K_{15}(1,1) \\[1pt]
& \cC(-\frac{13}{10},\frac{13}{10}) & \{\{-4 \cO(-2), 3 \cO(-1)\}, \cO\}   &  K_3(3,4) K_{15}(1,1) 
\\[2mm]
\chi=3 & \cC(-\frac{9}{10},\frac{21}{10})  & \{-\cO(-3), \{\{-\cO(-1), \cO\}, \cO(1)\}\}  &  K_3(1,1)^2 K_{15}(1,1)\\[1pt]
" & \cC(-\frac{9}{10},\frac{19}{10}) & \{\{-3 \cO(-2), 2 \cO(-1)\}, \cO(1)\}  &  K_3(2,3) K_{15}(1,1) \\[1pt]
" & \cC(-\frac{9}{10},\frac{\sqrt{241}}{10})  & \{ \{-2 \cO(-2), \cO(-1)\}, \{-2 \cO(-1), 3 \cO\}\}  &  K_3(1,2) K_3(2,3) K_{15}(1,1) \\[1pt]
" & \cC(-\frac{9}{10},\frac{\sqrt{161}}{10}) & \{ \{ -2\cO(-2),\cO\}, \{-\cO(-1),2\cO\}\}   &  K_3(1,2) K_6(1,2) K_{15}(1,1) \\[1pt]
" &  \cC(-\frac{9}{10},\frac{11}{10}) & \{-2 \cO(-2), \{-\cO(-1), 3 \cO\}\}  &  K_3(1,3) K_{15}(1,2)  
\end{array}
\end{equation*}
contributing $15+135+1080+465=
45+405+225+1020=
135+195+585+675+105=
1695$ in the Gieseker chamber. For $\chi=0$, the third wall involves three charges $-\cO(-3), -2\cO(-2)+2\cO(-1)$ and $\cO$ colliding at the same point; it can be treated by  
perturbing in the same way as for $(d,\chi)=(4,0)$ and using the same reasoning as explained
below \eqref{Om221}. 
Note that the fourth sequence for $\chi=3$  is non-planar, and corresponds to a codimension one stratum, which appears to be missing in \cite{maican2011moduli}.

\begin{figure}[ht]
\centerline{\includegraphics[width=18cm]{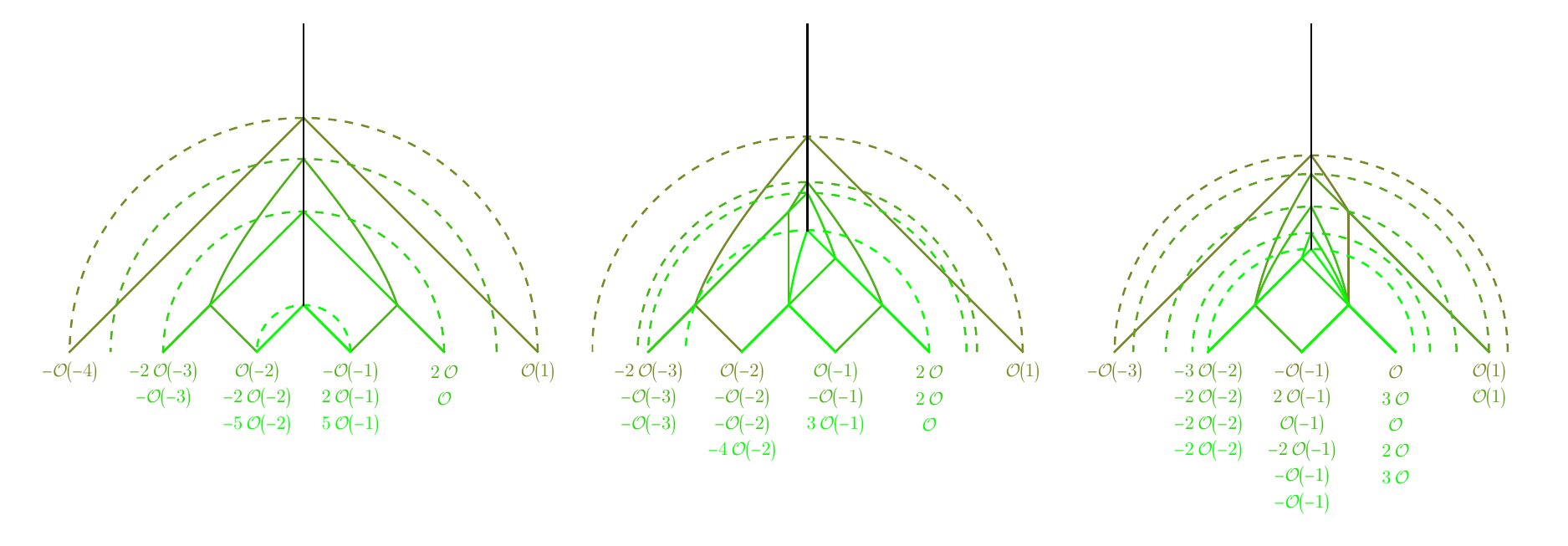}}
\vspace{-.8cm}
\caption{Scattering sequences for $(d,\chi)=(5,0),(5,1),(5,3)$\label{fig5D2}}
\end{figure}

\item For  $(d,\chi)=(6,1)$, corresponding to $(k,\eta)=(2,-2)$,
\begin{equation*}
\hspace*{-5mm}
\begin{array}{lll}
\cC(-\frac43,\frac83)& \{-\cO(-4), \{\{-\cO(-1), \cO\}, \cO(1)\}\}   &   K_3(1,1)^2 K_{18}(1,1)\\[1pt]
\cC(-\frac43,\frac73) & \{\{\{-2 \cO(-3), \cO(-2)\}, \{-\cO(-2), \cO(-1)\}\}, 
   \cO(1)\}  &  K_3(1,1)^2 K_3(1,2)  K_{18}(1,1) \\[1pt]
& \{\{-2 \cO(-3), \cO(-1)\}, \cO(1)\}   &  K_6(1,2)  K_{18}(1,1)
\\[1pt]
\cC(-\frac43\frac{\sqrt{46}}{3}) & \{\{-2 \cO(-3), \cO(-2)\}, \{-2 \cO(-1), 3 \cO\}\}  &   
 K_3(1,2) K_3(2,3)  K_{18}(1,1)  \\[1pt]
\cC(-\frac43,\frac{\sqrt{31}}{3}) & \{\{-\cO(-3), \{-2 \cO(-2), 2 \cO(-1)\}\}, \{-\cO(-1), 
    2 \cO\}\}  &  K_{-3,6,3}(2,2,1) K_3(1,2) K_{18}(1,1)  \\[1pt]
\cC(-\frac43,\frac{2\sqrt7}{3}) & \{\{-\cO(-3), \{-\cO(-2), \cO(-1)\}\}, \{-\cO(-2), 
    2 \cO\}\}  &  K_3(1,1)^2 K_6(1,2)  K_{18}(1,1) \\[1pt]
\cC(-\frac43,\frac53) & \{-\cO(-3), \{\{-2 \cO(-2), \cO(-1)\}, 2 \cO\}\}  &  K_3(1,2) K_9(1,2)  K_{18}(1,1) \\[1pt]
\cC(-\frac43,\frac43) & \{\{-5 \cO(-2), 4 \cO(-1)\}, \cO\}  &  K_{3}(4,5) K_{18}(1,1)  
\end{array}
\end{equation*}
contributing
$-162-270-486-702-2430-2430-1944-7182=-17064$ in the Gieseker chamber
(here the factor $K_{-3,6,3}(2,2,1)$ in the fourth sequence can be computed as
explained below \eqref{Om221}, and still evaluates to 72). 
Note that the sequence associated to $\cC(-\frac43,\frac{2\sqrt7}{3})$ is non-planar.
We refrain from considering other values of $\chi$ in this case.

\begin{figure}[ht]
\begin{center}
\includegraphics[width=9cm]{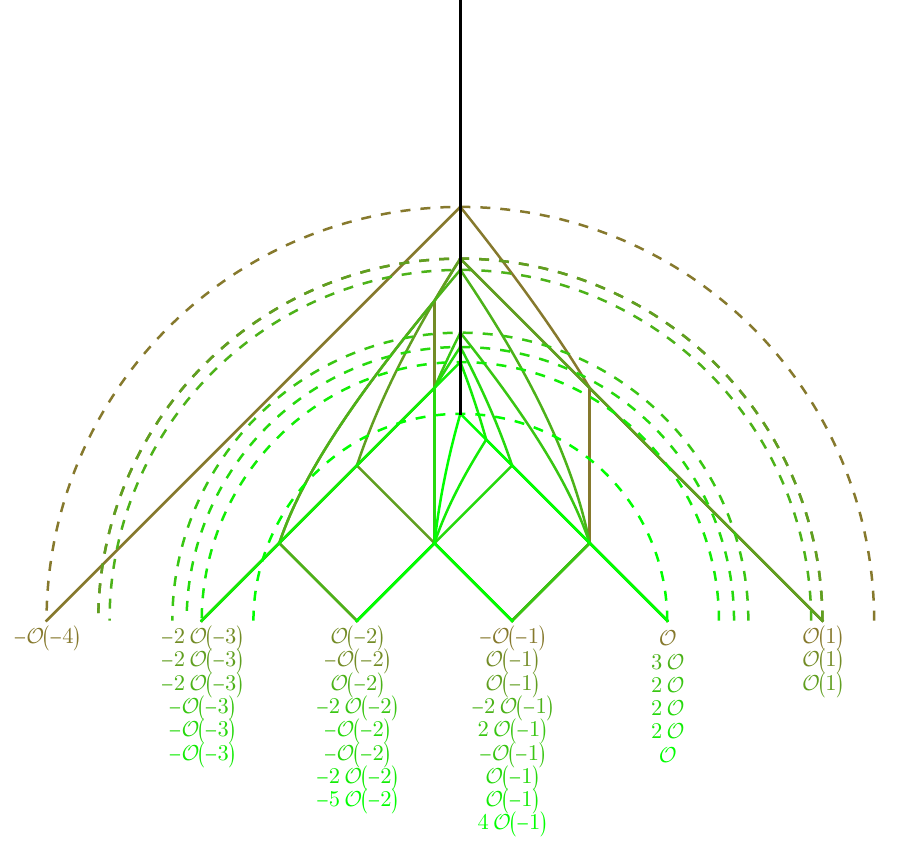} 
\end{center}
\caption{Scattering sequences for $(d,\chi)=(6,1)$\label{fig61}}
\end{figure}

\end{itemize}

\subsection{Generalization to  \texorpdfstring{$\psi\neq 0$}{non-zero psi}\label{sec_LVpsi}}

We now discuss how this picture generalizes to an arbitrary phase $\psi\in(-\frac{\pi}{2},\frac{\pi}{2})$. 
For $r\neq 0$, the geometric ray  $\Rgeo_\psi(\gamma)$
is now a branch of hyperbola (degenerating to a straight line for vanishing discriminant $\Delta=0$)
\be
s = \tfrac{d}{r} - t \tan\psi - \frac{\sign r}{\cos\psi} \sqrt{t^2+2\Delta \cos^2\psi}  
\ee
with $t>0$, 
while for $r=0$, it is the line $s=\frac{\ch_2}{d}-t \tan\psi$ when $d>0$ (or the empty set 
when $d\leq 0$). In particular, the electromagnetic analogy of \S\ref{sec_raywall} breaks down for $\psi\neq 0$,
since one of the two rays emanating from $s=m$ is no longer inside the light-cone (indeed, the angle
between the two rays remains equal to $\frac{\pi}{2}$ independently of $\psi$).

Moreover, under the same conditions \eqref{condint} 
two rays $\Rgeo_\psi(\gamma)$ and $\Rgeo_\psi(\gamma')$ still
intersect on the same wall $\cW(\gamma,\gamma')$, but at the point $(s_{\gamma,\gamma'} - R_{\gamma,\gamma'}\sin\psi, R_{\gamma,\gamma'}\cos\psi)$ at an angle $\psi$ from the top of the half-circle $\cC(s_{\gamma,\gamma'},R_{\gamma,\gamma'})$. In fact, observing that 
\be
\Re\left( e^{-\I\psi} Z^{\rm LV}_{(s,t)}(\gamma)\right) = \cos\psi\, \Re \left( 
Z^{\rm LV}_{(s_\psi,t_\psi) }(\gamma)\right)
\ee
where $(s_\psi,t_\psi)$ are related to $(s,t)$ through
\be
\label{eqstpsi}
s_\psi= s+ t \tan\psi, \quad t_\psi=\frac{t}{\cos\psi}
\ee
we see that the loci $\Re\left( e^{-\I\psi} Z(\gamma)\right)=0$ in the $(s,t)$ 
upper half-plane are mapped to the loci $\Re( Z(\gamma))=0$ in the $(s_\psi,t_\psi)$ upper 
half-plane. The relation for the imaginary parts is however more complicated,
\be
\label{ImZLVpsi}
\Im\bigl( e^{-\I\psi} Z^{\rm LV}_{(s,t)}(\gamma) \bigr)
= \Im \bigl( Z^{\rm LV}_{(s_\psi,t_\psi)}(\gamma) \bigr)
- \sin\psi \, \Re \bigl( Z^{\rm LV}_{(s_\psi,t_\psi)}(\gamma) \bigr)
+ r (\sin\psi) t_\psi^2
\ee
In particular, the transformation $e^{-\I\psi} Z^{\rm LV}_{(s,t)}(\gamma) \rightarrow 
Z^{\rm LV}_{(s_\psi,t_\psi)}(\gamma)$ is {\it not} a $\GLt$ transformation, due to 
the last term in \eqref{ImZLVpsi}. Rather, one should view $(s_\psi,t_\psi)$
as variants of the affine coordinates $(x,y)$ defined in \eqref{defxysgen0}, 
indeed $(x,y)=(s_\psi, \frac12(t_\psi^2-s_\psi^2))$.

Thus, we conclude that whenever $\psi\in(-\frac{\pi}{2},\frac{\pi}{2})$,
the scattering diagram $\cD^{\rm LV}_\psi$ in
the $(s,t)$ upper half-plane coincides with the scattering diagram $\cD^{\rm LV}_0$ in
the $(s_\psi,t_\psi)$ upper half-plane, 
and both have the same image in the $(x,y)$ plane.
Thus, the SAFC holds on the large volume slice, as long as $\cos\psi\neq 0$.
 For $\psi=\pm \frac{\pi}{2}$, 
the scattering diagram $\cD^{\rm LV}_\psi$ 
becomes degenerate, as all loci $\Im Z(\gamma)=0$ become
either vertical lines $s=d/r$ for $r\neq 0$ or the horizontal line $t=0$ on the boundary. 
As we shall see in \S\ref{sec_global}, the scattering diagram $\cD^{\Pi}_\psi$ along the slice of
$\Pi$-stability conditions is better behaved in this limit.


\section{The orbifold scattering diagram\label{sec_orbifold}}

In this section, we construct the scattering diagram $\cD_Q$ for the 
orbifold quiver shown in Figure~\ref{figQuiver}, following the general construction 
of \S\ref{sec_scatquiv}. In~\S\ref{sec_P2proof}, we identify the initial rays, 
giving a rigorous proof of the Attractor Conjecture for this case (Theorem~\ref{thm:attractor-conj} in the introduction). In \S\ref{sec_scattorb},
we restrict  $\cD_Q$ to a two-dimensional slice $\cD_o$, which we shall identify later in \S\ref{sec_orbregion}
with a subset of the exact scattering diagram $\cD^{\Pi}_{\psi}$. In \S\ref{sec_rank01}
we illustrate the use of the diagram $\cD_o$ for computing the index for small
dimension vectors.

\subsection{Quiver descriptions\label{sec_quiv}}

As explained for example in \cite[\S 4.2]{Mozgovoy:2020has}, any tilting sequence $\cS=(E_1,E_2,E_3)$ in the derived category $\cC=D^b\Coh(K_{\IP^2})$  (in the sense that $\cS$ generates $\cC$ and $\Ext^k(E_i,E_j)=0$ for $k\neq 0$ and all $i,j$)  provides an isomorphism $\cC \sim D^b(\Rep J(Q,W))$ 
with the  derived category of representations of the Jacobian algebra $J(Q,W)$ of a certain 
quiver with potential $(Q,W)$ associated to $\cS$. A tilting sequence can be obtained 
from any strong exceptional collection $(F_1,F_2,F_3)$ on $\IP^2$ by setting 
$\cS=(i_*(F_1[1]), i_*(F_2), i_*(F_3)[-1])$.
Starting from the strong exceptional collection $(\cO(-1),\Omega(1),\cO)$, one
arrives at the tilting sequence in \eqref{excepcoll}, associated to the 
quiver with potential depicted in Figure \ref{figQuiver}. All other strong exceptional collections are obtained by successive mutations of $(\cO(-1),\Omega(1),\Omega)$,
 and similarly lead to 3-node quivers $(Q_{a,b,c},W_{a,b,c})$ with  $a$ arrows 
 from node 1 
to node 2, $b$ arrows from 2 to 3 and $c$ arrows from 3 to 1, with 
$(a,b,c)$ any set of positive integers satisfying the Markov condition $a^2+b^2+c^2=abc$.

While the isomorphism $\cC \simeq D^b(\Rep J(Q,W))$ holds for any choice of tilting sequence $\cS=(E_1,E_2,E_3)$, the heart $\cA_\sigma$ only coincides (up to the action of $\GLt$) with the Abelian category $\Rep J(Q,W)$ 
in a region $\IH_\cS$ of $\Stab\cC$ where the
objects $E_i$ are stable and  the phases of their central charges $Z(E_i)$ lie in a common
half-space (see for example Lemma 3.16 in \cite{macri2007stability}). In that case,  
the moduli space of $\sigma$-semistable objects $E$ of Chern vector $\gamma$ in 
$\cC$ coincides with the moduli space of semi-stable representations of $J(Q,W)$, with
dimension vector $(n_1,n_2,n_3)$ and King parameters $(\theta_1,\theta_2,\theta_3)$
 determined by 
  \be
  \label{thetafromZ}
  \gamma=\sum_{i=1}^3 n_i \ch E_i\ ,\quad 
 \theta_i=\lambda \, \Re[e^{-\I\psi}Z(E_i)]
 \ee
with $\psi=\arg Z(E)-\frac{\pi}{2}$ such that $n_1\theta_1+n_2\theta_2+n_3\theta_3=0$, and $\lambda$ is an arbitrary non-negative factor. At the boundary of the region $\IH_\cS$, one of the objects $E_i$ exits from the common half-space and one should apply a left or right mutation so as to obtain a new quiver description.

Let us consider the exceptional collection in \eqref{excepcoll}, corresponding to the fractional
branes on the orbifold $\IC^3/\IZ_3$:
\be
\label{excepcollo}
E_1=i_*(\cO)[-1], \quad E_2=i_*(\Omega(1)), \quad E_3 = i_*(\cO(-1))[1]
\ee
with charges  $\gamma_i=\ch E_i$ given by
\be
\gamma_1=[-1,0,0], \quad \gamma_2=[2,-1,-\frac12], \quad \gamma_3=[-1,1,-\frac12]
\ee
The corresponding quiver, shown in  Figure \ref{figQuiver}, has $3$ arrows $E_i\to E_{i-1}$
for each $i$ modulo 3, consistently with $\langle \gamma_i,\gamma_{i-1}\rangle =-3$. 
Using  the Eichler integral representation \eqref{Eichler0}, it is easy to check that the central
charges of the simple objects  coincide at the orbifold point $\tau_o$, namely $Z_{\tau_o}(\gamma_i)=1/3$. Thus, this quiver description will be appropriate\footnote{
Note that
the objects $E_i$ are not stable at $\tau=\tau_o$ strictly, since their central charge is not in
the half-plane $\IH_B$; the objects $E_i[1]$ are stable, and lead to the same quiver
but  opposite dimension vector.}
 in a region $\IH^o$ around $\tau_o$
in the physical slice $\Pi$, which we identify in \S\ref{sec_orbregion}. 
The dimension vector  associated to an object of charge $\gamma=[r,d,\ch_2]$ is given by
\be
\label{n123}
(n_1,n_2,n_3) = \Bigl(-\frac32 d-\ch_2-r,  -\frac12 d-\ch_2, \frac12 d-\ch_2\Bigr) = (-\chi, r+d-\chi,r+2d-\chi)
\ee
or conversely
\be
r=2n_2-n_1-n_3, \quad d=n_3-n_2, \quad \ch_2 =-\frac12(n_2+n_3)
\ee
Abusing notation once again, we shall write $\gamma=(n_1,n_2,n_3)$, using round brackets on both sides to distinguish it from the other notations $\gamma=[r,d,\ch_2]=[r,d,\chi)$.  

A similar quiver description holds true around any image of $\tau_o$ under an element $g\in \Gamma_1(3)$, with the tilting sequence being replaced by its image $(g(E_1),g(E_2),g(E_3))$. In particular, around $\tau=\tau_o+k$, the  tilting sequence \eqref{excepcoll} is shifted to 
\be
\label{excepcollk}
E_1(k)=i_*(\cO(k))[-1], \quad E_2(k)=i_*(\Omega(k+1)), \quad E_3(k) = i_*(\cO(k-1))[1]
\ee
with Chern vectors $\gamma_1(k),\gamma_2(k),\gamma_3(k)$. 
The relevant dimension vector is 
then 
 \be
 \begin{cases}
n_1(k) = n_1 - \frac12 k (3n_1-4n_2+n_3) + \frac12 k^2 (n_1-2n_2+n_3) \\
 n_2(k) =n_2 - \frac12 k(n_1-n_3)+ \frac12 k^2 (n_1-2n_2+n_3) \\
 n_3(k) =n_3 - \frac12 k(4n_2-n_1-3n_3) +  \frac12 k^2 (n_1-2n_2+n_3) 
 \end{cases}
 \ee
where $(n_1,n_2,n_3)$ denotes the value for $k=0$, given in \eqref{n123}. We note that
for fixed $\gamma=[r,d,\ch_2]$ and large $k$, the entries $n_i(k)$ grow like $-\frac12 k^2 r$ for 
$r\neq 0$, or like $k d$ for $r=0$, in particular they all have the same sign for large $k$. 

\subsection{Initial rays for the orbifold scattering diagram\label{sec_P2proof}}

In this subsection, we prove Theorem~\ref{thm:attractor-conj}, which states that
for the quiver with potential $(Q,W)$ 
shown in Figure \ref{figQuiver}, 
the attractor invariant $\Omstar(\gamma)$ vanishes for all dimension vectors $\gamma=(n_1,n_2,n_3)$ except for
\be
\gamma\in\bigl\{ (1,0,0), (0,1,0), (0,0,1), (n,n,n) : n\geq 1\bigr\}
\ee
where it takes     the values stated in \eqref{attindexP2}. 
This result was conjectured in \cite{Beaujard:2020sgs}, but the proof outlined in that paper 
was not mathematically rigorous. Here we complete the proof, combining ideas from \cite[\S3.3]{Beaujard:2020sgs} and \cite[\S 3.2]{Descombes:2021egc}. We denote by $a_i\colon 2\to 1$, $b_i\colon 3\to 2$ and $c_i\colon 1\to 3$ the arrows of the quiver. 

\begin{proof}[Proof of Theorem~\ref{thm:attractor-conj}]
Consider a representation $\phi$ of $(Q,W)$ with dimension vector $(n_1,n_2,n_3)$, 
assumed to be stable for a King stability condition $\theta=(\theta_1,\theta_2,\theta_3)$. 
Consider a cycle $w$ of the quiver: because $(Q,W)$ gives a noncommutative crepant resolution of $\IC^3/\IZ_3$ \cite[Proposition 3.13]{Mozgovoy:2009fi}, the cycle $w$ is a central element of the Jacobian algebra of $(Q,W)$, hence it defines an endomorphism of $\phi$. Because $\phi$ is stable, its only automorphisms are rescalings, hence $w$ acts as a scalar on $\phi$. Suppose now that $\phi$ is not of dimension vector $(n,n,n)$ for $n\in\IN^\ast$: $w$ cannot be an automorphism (otherwise all its arrows would be isomorphisms, hence one would have $n_1=n_2=n_3$), hence $w$ vanishes on $\phi$.

We suppose now that the King stability parameters satisfy $\theta_1<0$, and $\theta_3>0$ 
(since $\theta_1+\theta_2+\theta_3=0$, other cases follow by circular permutation of the nodes). We now show by contradiction that  $\phi_{c_j}$ vanishes for every arrow $c_j:1\to 3$. Suppose that there exists $x\in V_1$ and $j\in\{1,2,3\}$ such that $\phi_{c_j}(x)\in V_3$ does not vanish. Because $\theta_1<0$ and $\phi$ is stable, one must have $x\in \bigoplus_{k=1}^3 {\rm Im}(\phi_{a_k})$. Because $\theta_3>0$ and $\phi$ is stable, $\langle\phi_{c_j}(x)\rangle$ cannot be a sub-representation of $\phi$, hence there must be an arrow $b_i\colon 3\to 2$ such that $\phi_{b_i}\phi_{c_j}(x)\in V_2$ does not vanish. Because all the cycles of $Q$ vanish on $\phi$, one must have $\phi_{b_i}\phi_{c_j}(\bigoplus_{k=1}^3 {\rm Im}(\phi_{a_k}))=0$, a contradiction. Hence the $\phi_{c_j}$ vanish on $\phi$ for all $j$. The set $I=\{c_1,c_2,c_3\}$ of vanishing arrows provides a cut of the potential $W$, in the sense that each cycle of $W$ contains exactly one arrow of $I$.

The stable representation $\phi$ is then a representation of the quiver with relation $(Q_I,\partial_I W)$,  where the arrows $c_j\in I$ have been traded for relations $\partial_{c_j} W=0$ for $c_j\in I$. From general arguments of geometric invariant theory, the moduli $\mathcal{M}^{\theta,s}_{Q_I}$ of stable representations of $Q_I$ is smooth of dimension $3n_3n_2+3n_2n_1-n_1^2-n_2^2-n_3^2+1$. The moduli space $\mathcal{M}^{\theta,s}_{Q_I,\partial_I W}$ of stable representations of $(Q_I,\partial_I W)$ is cut out  by $3n_3n_1$ bilinear relations $\partial_{c_j} W=0$
inside $\mathcal{M}^{\theta,s}_{Q_I}$. We denote the infinitesimal versions of these relations on the tangent space of $\phi$ by $\delta(\partial_{c_j}W)\colon V_3\to V_1$. A linear dependence between these infinitesimal relations would be given by maps $\tilde{\phi}_{c_j}\colon V_1\to V_3$ such that:
\begin{align}
    \Tr(\tilde{\phi}_{c_j}\delta(\partial_{c_j}W))=0
\end{align}
As shown in \cite[\S 3.2]{Descombes:2021egc}, this equation is equivalent to the fact that the representation $\bar{\phi}\coloneqq(\phi_a,\phi_b,\tilde{\phi}_c)$ of $Q$ (which, like~$\phi$, satisfies the relations $\delta_{c_j}W=0$) satisfies the relations $\delta_{a_i}W=0$ and $\delta_{b_i}W=0$. 
Hence $\tilde{\phi}$ is a representation of $(Q,W)$, which is $\theta$-stable (because it has fewer subobjects than 
the $\theta$-stable representation $\phi$) of dimension $\gamma\neq (n,n,n)$. By the above arguments $\tilde{\phi}_{c_j}=0$, therefore the relations $\partial_{c_j} W=0$ are transverse, and $\mathcal{M}^{\theta,s}_{Q_I,\partial_I W}$ is smooth of dimension $3n_3n_2+3n_2n_1-3n_3n_1-n_1^2-n_2^2-n_3^2+1$. In particular, because this dimension is positive, one has
\begin{align}
\label{dimineq1}
    n_1^2+n_2^2+n_3^2-3n_3n_2-3n_2n_1-3n_1n_3+6n_3n_1\leq 1
\end{align}
Now the proof proceeds exactly as in \cite[\S 3.3]{Beaujard:2020sgs}. Assume that $\theta$ is an attractor stability condition, hence a small deformation of the self-stability condition $\langle-,\gamma\rangle$
for some dimension vector $\gamma\neq (n,n,n)$. 
Then $\theta_1<0$ implies that $n_1\geq n_2$, and $\theta_3>0$ implies that $n_3\geq n_2$: hence $6n_3n_1\geq 2n_3n_2+2n_2n_1+2n_1n_3$, and then, using \eqref{dimineq1}:
\be\label{eq}
\begin{aligned}
    q(\gamma)\coloneqq {} &\frac{1}{2}((n_1-n_2)^2+(n_2-n_3)^2+(n_3-n_1)^2)= 
     n_1^2+n_2^2+n_3^2-n_1n_2-n_2n_3-n_3n_1\\
    \leq {} & n_1^2+n_2^2+n_3^2-3n_1n_2-3n_2n_3-3n_2n_3+6n_3n_1\leq 1
\end{aligned}
\ee
The kernel of the quadratic form $q(\gamma)$ is given by dimensions vectors $(n,n,n)$, but we have assumed that $\gamma$ was not of this form. One has $q(\gamma)=1$ only for $\gamma=(n+1,n,n)$ and circular permutations thereof, but the inequality is strict in the second line of \eqref{eq} unless $n=0$. Hence, $\gamma$ must be $(1,0,0)$, $(0,1,0)$, or $(0,0,1)$, with attractor invariants equal to~$1$. The DT invariants for $\gamma=(n,n,n)$ are independent of the stability conditions, and given by
the shifted Poincar\'e polynomial 
$(-y)^{-3} P(K_{\IP^2},y)=-y^3-y-1/y$ by \cite[Remark 5.2]{Mozgovoy:2020has}. 
\end{proof}

\begin{figure}[ht]
\begin{center}
\includegraphics[width=8cm]{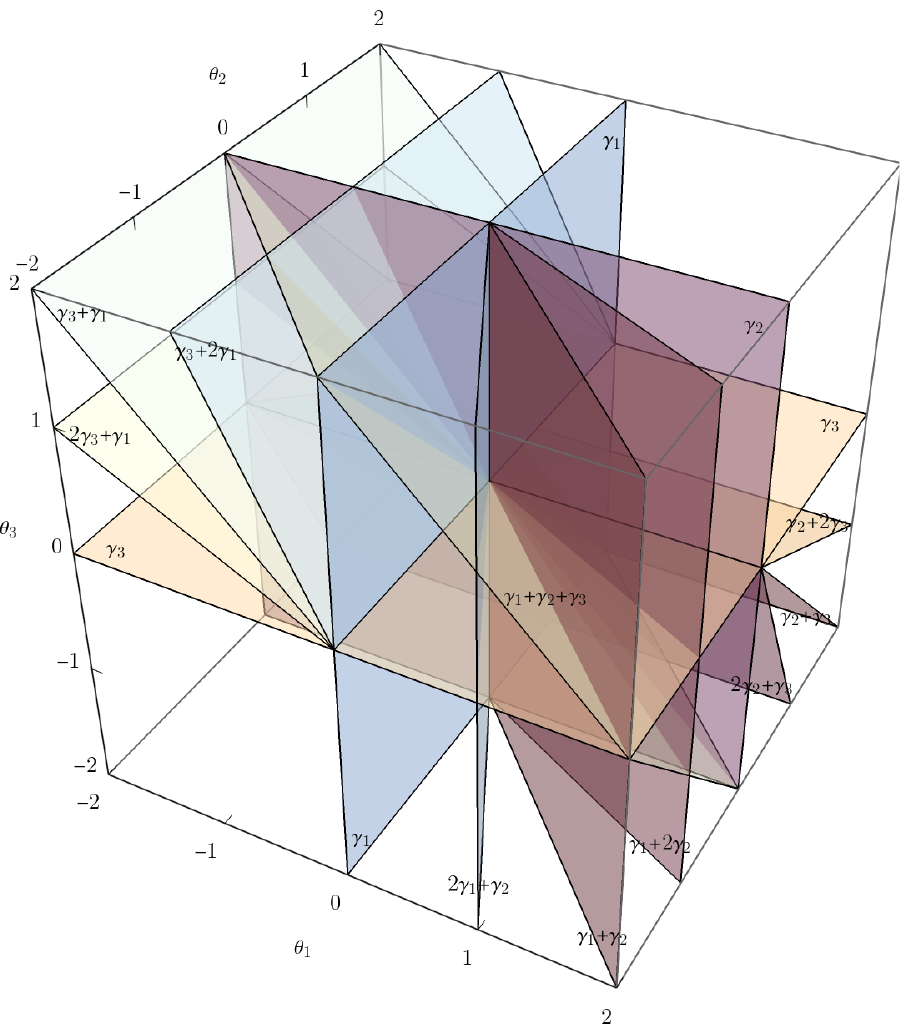}
\end{center}
\caption{Quiver scattering diagram $\cD_Q$, showing only  the initial rays 
 $\Ract_o(\gamma_i)$ and $\Ract_o(\delta)$ and secondary rays 
 $\Ract_o(\gamma_j+\gamma_k)$.
\label{figMcKayScat3D}}
\end{figure}

Having determined the initial rays for the orbifold quiver, it is now straightforward at least in principle to construct the stability scattering diagram $\cD_Q$. The result
is represented in Figure~\ref{figMcKayScat3D}, including only initial and some secondary rays. The full
diagram is dense except in a cone which includes the positive and negative octant, as will become clear shortly. 

\begin{figure}[ht]
\begin{center}
\includegraphics[width=10cm]{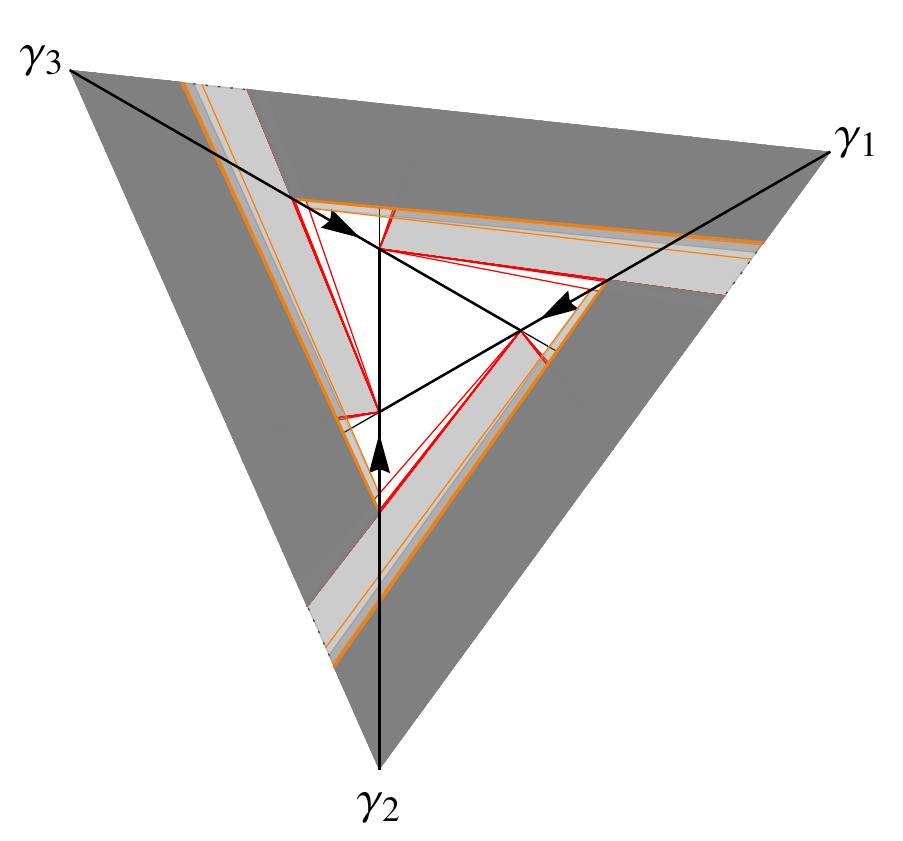}
\end{center}
\caption{Two-dimensional section $\cD_o$ of the orbifold scattering diagram $\cD_Q$ along the hyperplane 
$\theta_1+\theta_2+\theta_3=1$. We show only the restriction to the triangle $\Delta_\psi$ (see \S\ref{sec_tessel}) 
 bounded by the points $p_i(\cV_\psi)$ defined in \eqref{pilambda} 
(with $\psi=-1.4$ for illustration), but the scattering diagram $\cD_o$ in the space of King stability conditions is unbounded.
Regions with a dense set of rays are shown in gray. The rays in red correspond to the discrete set of real roots of the Kronecker quiver with 3 arrows, with dimension vector $(F_{2k},F_{2k+2},0)$ or permutations thereof, see \S\ref{sec_kron}.
\label{figMcKayScatdense}}
\end{figure}

\subsection{Restricted scattering diagram\label{sec_scattorb}}

Due to the rescaling symmetry in the space of King stability parameters, there is no significant loss 
of information\footnote{The price to pay is that the ray $\Reff_o(\delta)$ is no longer visible, but 
this causes no problem since the automorphism $\Ueff(\delta)$ is central; moreover, the 
self-stability condition $\langle -,\gamma\rangle$ is now pushed to infinity in the direction 
opposite to the vector $\nu(\gamma)$ introduced below. }
 in restricting the scattering diagram $\cD_Q$ to its intersection $\cD_o$ with 
  the hyperplane $\cH=\{\theta_1+\theta_2+\theta_3=1\}$. 
  It is furthermore convenient to parametrize this hyperplane by  $(u,v)\in\IR^2$ such that 
\be
\label{thetauv}
\theta_1 = -u+ v \sqrt3 +\tfrac13 ,\quad
\theta_2 =2 u+\tfrac13,\quad
\theta_3 = -u-v \sqrt3 +\tfrac13
\ee
In these coordinates, the $\IZ_3$ symmetry permuting the nodes of the quiver cyclically acts by  a rotation of angle $2\pi/3$ around the origin
(see  Figure \ref{figMcKayScat}). The geometric rays 
$\Rgeo_o(\gamma)=\{ (u,v) \in\IR^2 :  \sum_i n_i \theta_i(u,v)=0\}$
are given by straight lines 
\be
\label{mckayray}
 (n_1+n_3-2n_2) u +  (n_3-n_1)  v \sqrt3 - \frac{1}{3}(n_1+n_2+n_3) = 0
\ee
oriented along the vector  $ \nu(\gamma)=(\sqrt3 (n_3-n_1), 2n_2-n_1-n_3)$
pointing from the attractor stability condition $\theta_i=\lambda(n_{i-1}-n_{i+1})$ towards
the anti-attractor  stability condition $\theta_i=\lambda(n_{i+1}-n_{i-1})$, with indices $i=1,2,3$
taken modulo 3. In particular,
the initial rays intersect at $\Rgeo_o(\gamma_i)\cap \Rgeo_o(\gamma_{i+1})=\{p_i\}$ with 
\be
\label{p123}
p_1=\left(-\frac16,-\frac{1}{2\sqrt3}\right), \quad 
p_2=\left(-\frac16,\frac{1}{2\sqrt3}\right), \quad 
p_3=\left(\frac13,0\right)
\ee

Interpreting the ray as the worldline of a fiducial particle 
 traveling at point $ r=(u,v)$ with velocity $ \nu(\gamma)$, the condition
 \eqref{mckayray} implies that the particle has  
angular momentum $ r \wedge  \nu(\gamma)=-\frac13(n_1+n_2+n_3)$ with respect
to the origin $(0,0)$ in the $(u,v)$ plane, therefore rotates clockwise around the origin 
(assuming that  $n_i\geq 0$).  In fact, \eqref{mckayray} implies that  the particle has  
angular momentum 
\be
( r-p_i) \wedge \nu(\gamma)=-n_{i-1}
\ee
with respect to any of the points $p_i$, 
so the particle rotates clockwise around each $p_i$, and can only pass through it
if $n_{i-1}=0$ (this generalizes the previous statement since $p_1+p_2+p_3=0$).

Another consequence of \eqref{mckayray} is that
the linear function on $\Gamma$
\be
\label{defphip}
\varphi_o(\gamma) = (n_3-n_1) u \sqrt3 + (2n_2-n_1-n_3) v = ( r, \nu(\gamma))
\ee
increases monotonically along the ray
and is additive at each vertex, similar to the electric potential \eqref{defvarphi} 
in the large volume scattering diagram. Unlike the latter however,
the function \eqref{defphip} is not positive, 
in particular the first scattering between $\Ract_o(\gamma_i)$ and 
$\Ract_o(\gamma_{i+1})$ may take place at arbitrary negative values of $\varphi_o(\gamma_i),
\varphi_o(\gamma_{i+1})$. Fortunately, we do not need to rely on such a 
cost function to define the quiver scattering diagram and enumerate all possible
scattering sequences, instead we can use the fact that only positive dimension vectors support
non-zero DT invariants. 
In the next subsection, we  use the orbifold scattering diagram to compute DT invariants for some simple dimension vectors.

\subsection{Examples\label{sec_rank01}}

We first consider the case $\gamma=[1,0,1-n)$, corresponding to the Hilbert scheme of $n$ points on $\IP^2$. The corresponding dimension vector is $\gamma=(n-1,n,n)$. As in \S\ref{sec_Hilb} 
we find that the index in the anti-attractor chamber, which we denote by 
$\Omega_c(\gamma)\coloneqq\Omega_{\langle\gamma,-\rangle}(\gamma)$, 
agrees with the Gieseker index \eqref{genHilb}
for all cases considered, although the set of scattering sequences for the  large volume and orbifold 
scattering diagrams may differ. As explained in \S\ref{sec_scattB}, the absence of walls between
the anti-attractor chamber and the large volume chamber is a general property of objects
with slope $\mu\in [-1,0]$.

\begin{figure}[ht]
\begin{center}
\includegraphics[height=12cm]{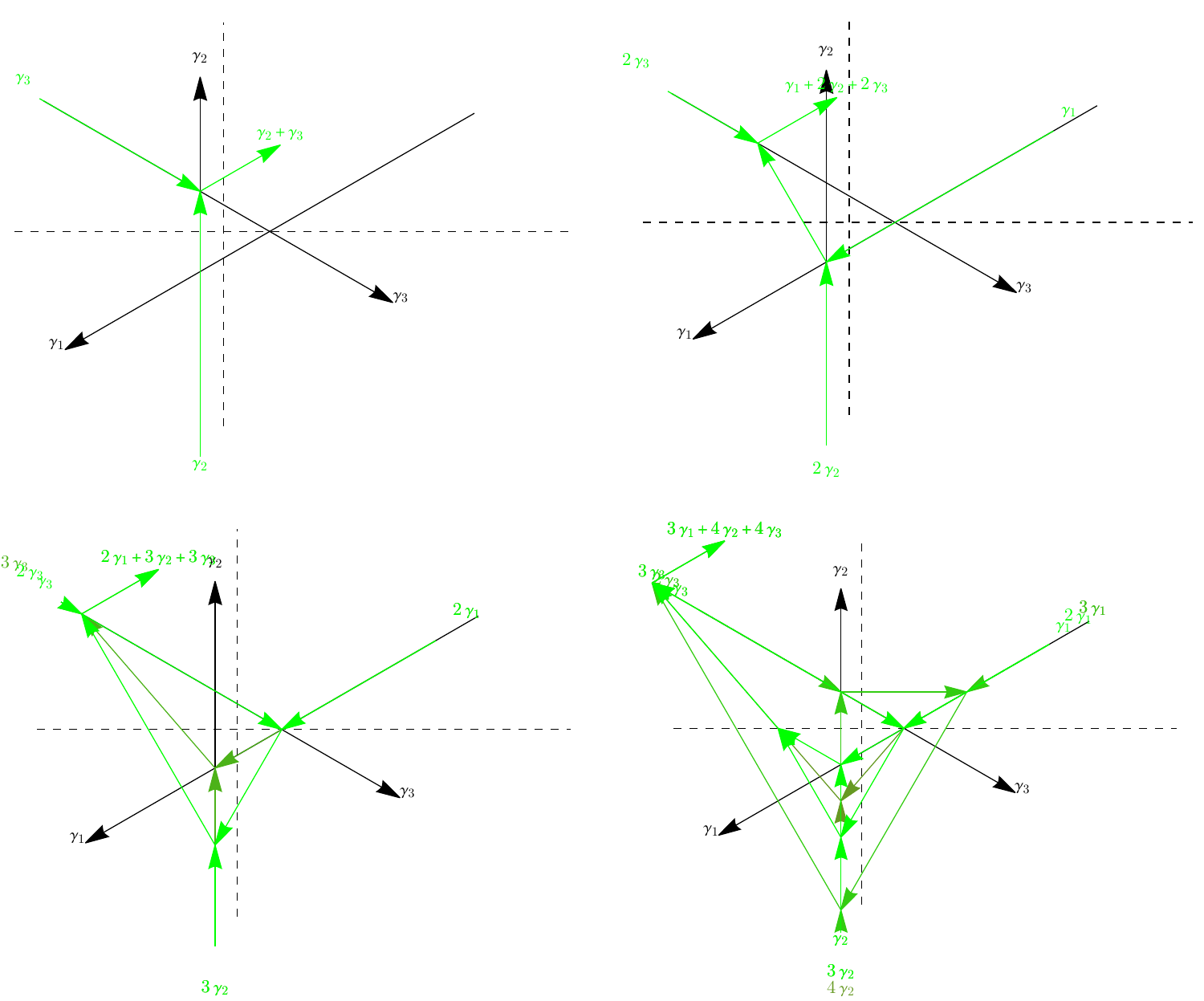}
\end{center}
\caption{Scattering sequences for dimension vector $(n-1,n,n)$ corresponding the Hilbert scheme of $n$ points on $\IP^2$,  with $n=1,2,3,4$\label{figH1234}}
\end{figure}

\begin{itemize}

\item For $n=1$ we find a single scattering sequence
$\{\gamma_2,\gamma_3\}$ contributing $K_3(1,1)=y^2+1+1/y^2\to 3$ in the anti-attractor chamber.

\item For $n=2$, we find a single scattering sequence
$\{\{\gamma_1,2\gamma_2\},2\gamma_3\}\}$ contributing $K_3(1,2)^2=(y^2+1+1/y^2)^2\to 9$.

\item For $n=3$ we find two scattering sequences,
\be
\begin{array}{l@{\hskip 5mm}l}
\left\{\left\{2 \gamma _1,3 \gamma _2\right\},3 \gamma _3\right\} &
 K_3(1,3) K_3(2,3) \to 13
\\
\left\{\left\{\left\{2 \gamma _1,\gamma _3\right\},3 \gamma _2\right\},2 \gamma _3\right\} & 
K_3(1,2){}^2 K_3(1,3) \to 9
\end{array}
\ee
for a total index of $y^6+2 y^4+5 y^2+6+\dots\to 22$.

\item For $n=4$ we find 3  scattering sequences,
\be
\begin{array}{l@{\hskip 5mm}l}
\left\{\left\{\left\{3 \gamma _1,\gamma _3\right\},4 \gamma _2\right\},3 \gamma _3\right\} & K_3(1,3){}^2 K_6(1,4) \to 15
\\
\left\{\left\{\left\{3 \gamma _1,\left\{\gamma _2,2 \gamma _3\right\}\right\},3 \gamma _2\right\},2 \gamma _3\right\} & K_3(1,2){}^2 K_3(1,3){}^2 \to 9
\\
\{\{\{ \{  2 \gamma _1,\gamma _3\}, 3 \gamma _2 \}, \{\gamma _1, \gamma _2  \}\} ,3 \gamma _3\} & 
 K_3(1,1){}^2 K_3(1,2) K_3(1,3){}^2 \to 27
 \end{array}
\ee
for a total index of $y^8+2 y^6+6 y^4+10 y^2+13+\dots\to 51$.

\end{itemize}

Next we consider $\gamma=[0,1,1-n)$,  corresponding to a D2-brane bound to $n$ anti-D0-branes.
The corresponding dimension vector is $\gamma =(n,n+1,n+2)$. Unlike the Gieseker index at large volume, the index $\Omega_c(\gamma)$ in the anti-attractor chamber depends on $n$, though it is
still invariant under $n\mapsto -n-2$. After a circular permutation, the dimension vector becomes
$\gamma'=(n+1,n+2,n)=[3,-2,-1-n)$, and $\Omega_c(\gamma')$ agrees
with the Gieseker index for rank 3 sheaves with $d=-2$ and $\chi=-1-n$. 
\begin{itemize}

\item  For $n=-1$, the dimension vector $(-1,0,1)$ has mixed signs so the index vanishes.

\item For $n=0$,  we find a single  scattering sequence
$\{\gamma_2,2\gamma_3\}$ contributing $K_3(1,2)=y^2+1+1/y^2\to 3$.
\item For $n=1$, we find $2$ scattering sequences,
\be
\begin{array}{ll}
\{\gamma_1, \{2\gamma_2,3\gamma_3\}\} & K_3(2,3)K_3(1,1)\to 39 \\
\{\{\gamma_1,2\gamma_2\},3\gamma_3\} & K_3(1,2) K_3(1,3)\to 3
\end{array} 
\ee
giving a total index of  $y^8+2y^6+5y^4+8y^2+10+\dots \to 42$. 

\item For $n=2$, we find  $3$ scattering sequences,
\be
\begin{array}{ll}
\{ 2 \gamma _1, \{3\gamma_2,4\gamma_3\}\} &  K_3(1,2) K_3(3,4) \to 204 \\
\{\{\{2\gamma_1,\gamma_3\},3\gamma_2\},3\gamma_3\} & K_3(1,2) K_3(1,3)^2 \to 3\\
\{\{\{\gamma_1,3\gamma_2\},4\gamma_3\},\gamma_1\} &
K_3(1,1) K_3(1,3) K_6(1,4) \to 45  \\
\{\{ \{\gamma_1,2\gamma_2\},3\gamma_3\},\{\gamma_2,\gamma_3\}\},\gamma_1\} & 
K_3(1,1)^3 K_3(1,2) K_3(1,3) \to 81
\end{array} 
\ee
giving a total index of $333$.

\item For $n=3$, we find $7$ scattering sequences,
\begin{equation*}
\begin{array}{ll}
\left\{3 \gamma _1,\left\{4 \gamma _2,5 \gamma _3\right\}\right\} & K_3(1,3) K_3(4,5) \to 399
\\
\left\{2 \gamma _1,\left\{\left\{\left\{\gamma _1,3 \gamma _2\right\},5 \gamma _3\right\},\gamma _2\right\}\right\} & K_3(1,2) K_3(1,3) K_6(1,5) K_{12}(1,1) \to 216
\\
\left\{2 \gamma _1,
\{\{ \left\{\gamma _1,3 \gamma _2\right\} , 4 \gamma _3\},\{ 
 \gamma _2 , \gamma _3 \}\}
\right\} & K_3(1,1){}^2 K_3(1,2) K_3(1,3) K_6(1,4) \to 405
\\
\left\{2 \gamma _1,\{\{
 \left\{\gamma _1,2 \gamma _2\right\} , 3 \gamma _3 \},\{
 2 \gamma _2 , 2 \gamma _3 \}\}
 \right\} & K_3(2,2,1) K_3(1,2){}^2 K_3(1,3)\to  648
\\
\left\{\left\{\left\{3 \gamma _1,\left\{\gamma _2,2 \gamma _3\right\}\right\},3 \gamma _2\right\},3 \gamma _3\right\} & K_3(1,2) K_3(1,3){}^3 \to  3
\\
\left\{\left\{\left\{2 \gamma _1,4 \gamma _2\right\},5 \gamma _3\right\},\gamma _1\right\} & K_{6,6,3}(2,2,5) K_3(1,1) \to  216
\\
\left\{\{\{
 \left\{\left\{2 \gamma _1,\gamma _3\right\},3 \gamma _2\right\} , 3 \gamma _3 \},\{
 \gamma _2 , \gamma _3 \}\}
,\gamma _1\right\} & K_3(1,1){}^3 K_3(1,2) K_3(1,3){}^2 \to 81
\end{array}
\end{equation*}
giving a total index of $1968$. Similar to the discussion below \eqref{Om221}, the factor $K_{6,6,3}(2,2,5)$
for the penultimate sequence is computed by applying the flow tree formula
to a local scattering diagram with two incoming rays of charge $\alpha=\gamma_1+2\gamma_2$
and $\beta=\gamma_3$ with $\bOm^-(\alpha)=K_3(1,2)=y^2+1+1/y^2, \bOm^-(2\alpha)
=K_3(2,4)=-y^5-y^3-y-1/y-1/y^3-1/y^5$ and $\bOm^-(k\beta)=\delta_{k,1}$,
selecting the outgoing ray of charge $2\alpha+5\beta$.

\end{itemize}

\section{The exact scattering diagram\label{sec_global}}

In this section, we determine the scattering diagram $\cD_{\psi}^{\Pi}$ along the slice of 
$\Pi$-stability conditions, by combining results on the large volume 
and orbifold scattering diagrams constructed in \S\ref{sec_LV} and \S\ref{sec_orbifold} with invariance under $\Gamma_1(3)$. We start by analyzing the attractor flow for the exact central
charge \eqref{defZPi}, first in the Poincar\'e upper half-plane (\S\ref{sec_piflow}) and then in affine coordinates (\S\ref{sec_affine}). 
In \S\ref{sec_orbregion} we then identify the orbifold scattering diagram $\cD_o$ as a particular subset of  $\cD_{\psi}^{\Pi}$  
in a region around the orbifold point. 
In \S\ref{sec_exactD} we describe the full diagram~$\cD_\psi^\Pi$ and prove  the SAFC (Theorem~\ref{thm:SAFC}).

\subsection{Exact attractor flow\label{sec_piflow}}
In this section, we study the attractor flow \eqref{attflow} for the central charge \eqref{defZPi}
along the slice of $\Pi$-stability conditions, equipped with an general hermitean metric 
$\de s^2=g_{\tau\bar\tau} \de\tau\de\bar\tau$. 
Using the Eichler integral representation 
\eqref{Eichler0} for the coefficients $T,T_D$ in the central charge, 
the attractor flow $\AF(\gamma)$ \eqref{attflow} reduces to 
\be
\label{attflowC}
\frac{\de\tau}{\de\mu}= - g^{\tau\bar\tau} \partial_{\bar\tau} | Z_\tau(\gamma)|^2 =
- g^{\tau\bar\tau} ( d- r \bar\tau)\, \overline{C(\tau)} Z_\tau(\gamma)
\ee
The flow is only meaningful in the region where the charge $\gamma$
is populated, $\bOm_\tau(\gamma)\neq 0$. 
As already noted in \S\ref{sec_flow}, the modulus $|Z_\tau(\gamma)|= \Im[e^{-\I\psi} Z_\tau(\gamma)]$ decreases along the flow while the argument of $Z_\tau(\gamma)$ is
preserved, so the trajectories of \eqref{attflowC} are included in 
the active ray $\Ract_\psi(\gamma)$ defined in \eqref{defelemray}
as the locus $\{\Re[e^{-\I\psi} Z_\tau(\gamma)]=0, 
\Im(e^{-\I\psi}Z_\tau(\gamma))>0, 
\bOm_\tau(\gamma)\neq 0\}$ 
inside $\IH$, where the phase $\psi$ is determined by the argument of $Z_{\tau_0}(\gamma)$
at the starting point $\tau_0$.  The attractor flow stops when
either a) the flow crosses a wall of marginal stability, after which
$\bOm_\tau(\gamma)$ jumps, b)
 $|Z_\tau(\gamma)|$ reaches a local minimum in the interior of the upper half-plane,
 or c)  $\tau$ reaches the boundary $\Im\tau=0$. 
 In \S\ref{sec_must_end}
we rule out the possibility of an infinitely long flow.
 
 Case a) arises at a point $\mu_1$ where  the quantity $\tilde W_\tau(\gamma,\gamma')\coloneqq \Im[ Z_\tau(\gamma') \overline{Z_\tau(\gamma)}]$ vanishes for some charge $\gamma'$. Since  $\tilde W_\tau(\gamma,\gamma')$ varies along the flow according to
 \be
\label{dWdmu}
\frac{\de \tilde W_\tau(\gamma,\gamma')}{\de\mu} = -g^{\tau\bar\tau}  |C|^2 
\left[ 2 \tau_2 (rd'-r'd) |Z_\tau(\gamma)|^2 +  \tilde W_\tau(\gamma,\gamma') |d-r\tau|^2 \right]
\ee
the sign of $(rd'-r'd) \tilde W_\tau(\gamma,\gamma')$ is positive for $\mu<\mu_1$ and negative for $\mu>\mu_1$. Thus, the flow crosses from the side of the wall where a two-particle bound state of charges $\gamma$ and $\gamma-\gamma'$ is stable, into the side where the bound state decays.
In particular, it follows from \eqref{dWdmu} that the flow cannot cross the same wall more than once,
and that BPS states do not decay as one follows the attractor flow in reverse \cite{Denef:2001xn}. 

As for case b), since $C(\tau)\neq 0$ for 
$\Im\tau>0$, $|Z_\tau(\gamma)|$ can only reach a local 
 minimum at a point $\tau_i$  in the interior of $\IH$ if $Z_{\tau_i}(\gamma)=0$, but this 
is ruled  out by the assumption that $\bOm_\tau(\gamma)\neq 0$ along the flow (and the support property). This leaves
case c) with $\Im\tau_i=0$. Consider the image of the attractor flow on the quotient $\IH/\Gamma_1(3)$. Since the quotient can be compactified by adding the large volume and conifold points, $\tau_i$ is either in the orbit of a large volume point or a conifold point, \ie $\tau_i=p/q$ with $(p,q)$ coprime, $q=0\mod 3$ in the first case and $q\neq 0\mod 3$ in the second case. Using suitable bounds on the central charge of semi-stable objects in the large volume region, we show in \S\ref{sec_growth} that $\tau_i$ can only be a conifold point, where some spherical object $E$ of charge $\gamma_C$ becomes massless, $Z_{\tau_i}(\gamma_C)=0$.

To determine $E$, it suffices to find an element $g\in\Gamma_1(3)$
which maps $\tau=0$ to $\tau_i=p/q$, and act  with the corresponding
auto-equivalence on the object $\cO$ which is massless at $\tau=0$. 
Since $(T,T_D)=(\I \cV,0)$ at $\tau=0$,  where $\cV$ is the 
quantum volume defined in \eqref{defV0}, 
the periods at $\tau_i$ are then 
\be
\label{TTDcon}
T(\tau_i) = m + \I q\cV, \quad T_D(\tau_i) = m_D + \I p \cV
\ee
where $(m,m_D)$ are the off-diagonal elements of the monodromy matrix $M(g)$ in 
\eqref{monodromyact}. The charge of the massless object $E$ at $\tau=\tau_i$
is then $\gamma_C=[q,p,p m-q m_D]$ up to overall sign, see Table \ref{Conifoldtab}
for examples with $0\leq p\leq q\leq 5$.

Returning to the active ray $\Ract_\psi(\gamma)$, it follows from \eqref{TTDcon} that it can only reach the conifold point $\tau_i= p/q$ if 
\be
\label{Znearcon}
0 = \Re[e^{-\I\psi} Z_{p/q}(\gamma)] =
(d m - r m_D -\ch_2) \cos\psi + (dq-p r)\cV \sin\psi 
\ee
For generic $\psi$, this is only possible if $\gamma$ is a multiple of $\gamma_C$,
in which case $\cR_\psi(\gamma)$ coincides with the ray $\cR_\psi(\gamma_C)$ originating
 from $\tau_i=p/q$. For special
values of $\psi$ such that $\cV_\psi\coloneqq\cV \tan\psi$ is rational, namely 
\be
\label{psicrit}
\cV_\psi= \frac{\ch_2 + r m_D - d m}{d q-pr}
\ee
it is possible that a ray  $\Ract_\psi(\gamma)$
with $\langle \gamma_C,\gamma\rangle  = d q-pr \neq 0$ originates from the point 
$\tau_i=\frac{p}{q}$,
even though $Z_{\tau_i}(\gamma)\neq 0$. Conversely, for the same critical values of $\psi$ there can also be rays which terminate at a conifold point, rather than escaping at a large 
volume point, as proven in Proposition~\ref{prop:critical} (this phenomenon occurs for instance in the $\psi\to\psicr{-1/2}$ limit of Figure \ref{figscattPicr}).
In the next subsection, we introduce adapted coordinates where these critical rays become
much easier to detect.

\subsection{Affine coordinates\label{sec_affine}}
While the attractor flow $\AF(\gamma)$ is complicated on the $\tau$-plane, it becomes a straight line
 in affine  coordinates $(x,y)$ defined globally by\footnote{The affine coordinate $y$ should not be confused with the refinement variable $y$ in the definition of motivic DT invariants.}
\be
\label{defxysgen}
x= \frac{\Re\left( e^{-\I \psi} T\right)}{\cos\psi}\ ,\quad 
y=  -\frac{\Re\left( e^{-\I \psi} T_D\right)}{\cos\psi}
\ee
It is useful to introduce the dual coordinates 
 \be
\label{defxydual}
\tilde x=\frac{\Im(e^{-\I\psi}T)}{\cos\psi} , \quad 
\tilde y=-\frac{\Im(e^{-\I\psi} T_D)}{\cos\psi} 
\ee
such that the geometric ray $\Rgeo_\psi(\gamma)$, oriented along the direction of
decreasing $\Im[ e^{-\I\psi} Z(\gamma)]=r\tilde y+d\tilde  x$, are given by oriented straight lines 
\be
\label{rayxy}
ry+dx-\ch_2=0,\quad \frac{\de}{\de\mu} ( r \tilde y + d \tilde x ) <0
\ee  
Notice that for the large volume central charge $Z^{\rm LV}_{(s,t)}$, the 
coordinates $(x,y)$ coincide with $(s_\psi,\frac12(t_\psi^2-s_\psi^2))$
defined in \eqref{eqstpsi} and generalize the coordinates
$(x,y)$ of \cite{Bousseau:2019ift} to any $\psi\in(-\frac{\pi}{2},\frac{\pi}{2})$.
Notably, the  image of the upper half-plane
$t>0$ lies above the parabola $y= -\frac12 x^2$.

In contrast, for the 
exact central charge \eqref{defZPi}, the image of 
 the fundamental domain 
$\cF_C$ and its translates in the $(x,y)$-plane
extend below the parabola $y= -\frac12 x^2$, 
while remaining above the parabola $y=-\frac12 x^2 - \frac{5}{24}$ passing through the images 
of the orbifold points at $\tau=\tau_o+n$  (see Figure \ref{figGrTTD})
\be
(x_{o(n)},y_{o(n)}) = \left(n -\tfrac12,  -\tfrac13 - \tfrac12 n(n-1)\right)
\ee
For $\psi=0$, the preimage of the region $y> -\frac12 x^2$ 
includes the domain $\IH^{\rm LV}$ defined by the condition $w > \frac12 s^2$ (or equivalently $t>0$).
In fact, the curves  $t=0$  and $y=-\frac12 x^2$ are tangent to the same point 
$\tau\simeq\frac12+0.559926 \I$ corresponding to $(s,w)=(\frac12,\frac18)$ or $(x,y)=(\frac12,-\frac18)$, and are nearly indistinguishable for any $\tau_1$,
with the former lying below the latter. 
For general $\psi$, using \eqref{shalfintegertau1} along the line $\tau_1=\frac12$
we get 
\be
y+\frac12x=w-\frac12 s^2 + \frac12 \Im T \tan\psi ( 1+ \Im T \tan\psi )
\ee
so the relative position of the two curves depends on the
sign of $\Im T \tan\psi$.

\begin{figure}[ht]
\begin{center}
\includegraphics[width=7.5cm]{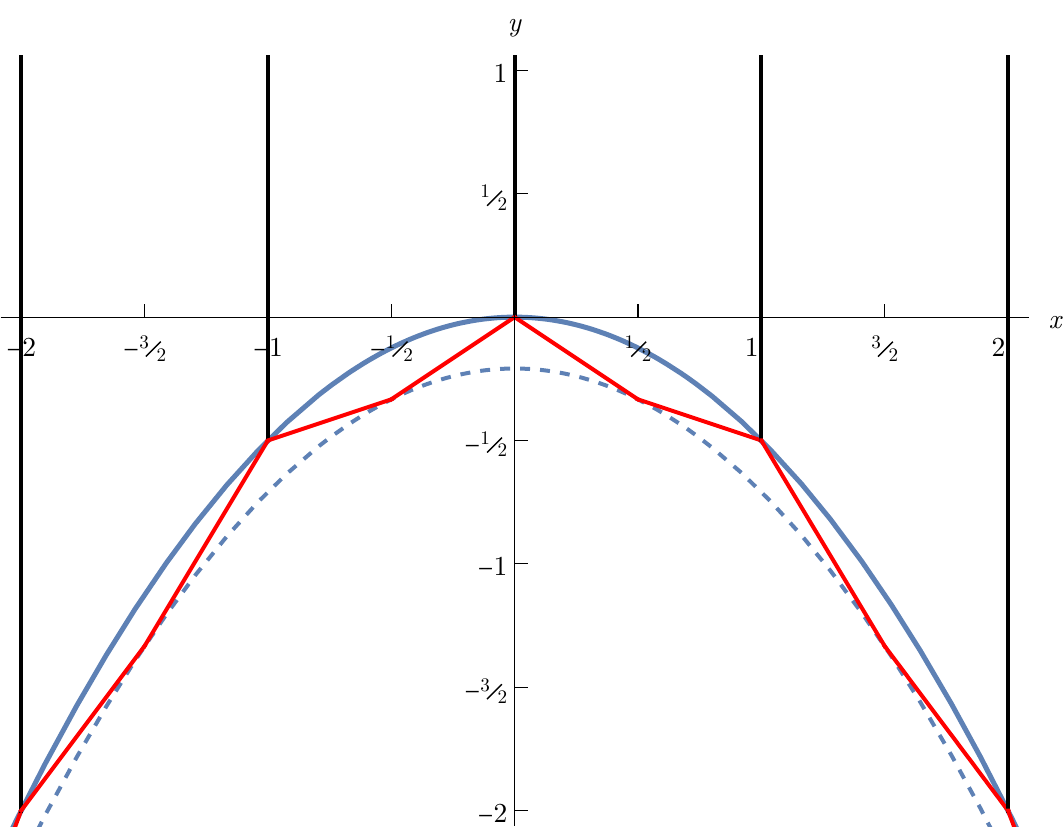}\hspace*{1cm}
\includegraphics[width=7.5cm]{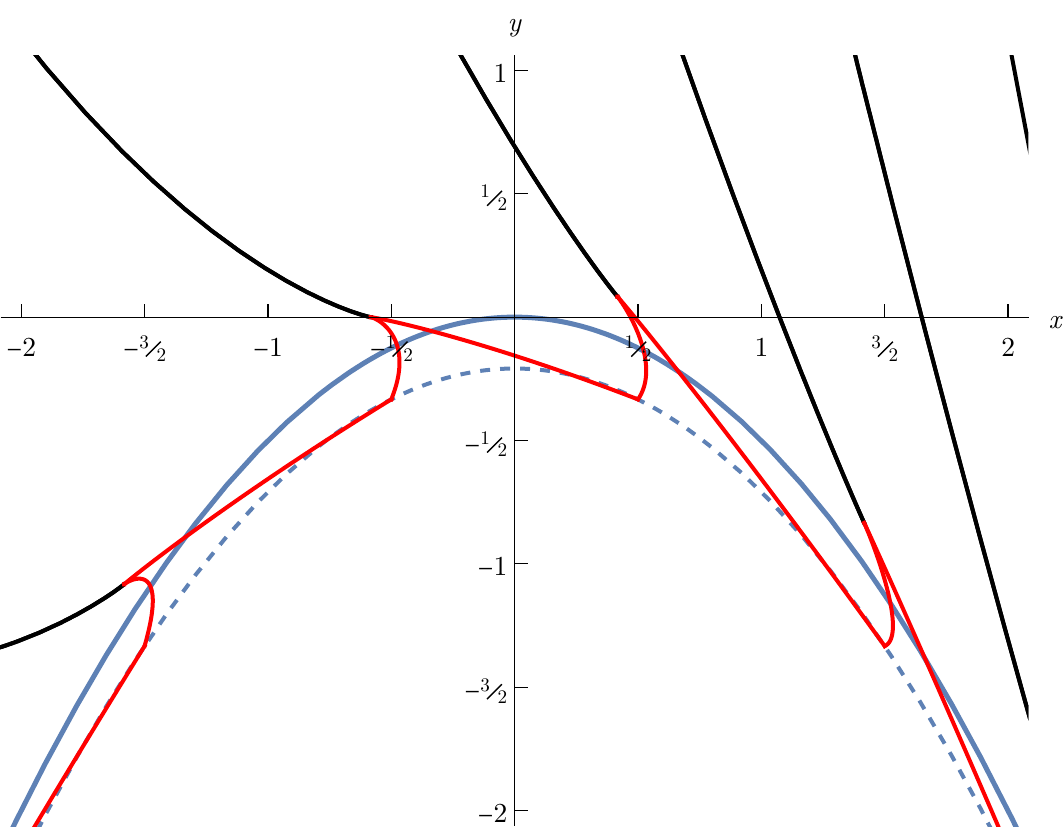}
\end{center}
\caption{Image of the fundamental domain $\cF_C$ and its translates $\cF_C(m)$
in the $(x,y)$-plane for $\psi=0$ (left) and $\psi=-0.9<-\psicr{1/2}$ (right). The parabolas $y=-\frac12 x^2$
and $y=-\frac12 x^2-\frac{5}{24}$ are shown for reference. For any $\psi\in(-\frac{\pi}{2},\frac{\pi}{2})$,
the map $\tau\mapsto(x,y)$ is  injective in $\bigcup_{m\in\IZ} \cF_C(m)$. 
\label{figGrTTD}}
\end{figure}

Geometrically, the coordinates $(x,y)$ and $(\tilde x,\tilde y)$ provide two systems of local Darboux
coordinates for the K\"ahler form
\be
\de x \wedge \de y = \de \tilde x \wedge \de \tilde y
=-\frac14 (\de T \wedge \de \bar T_D+\de \bar T \wedge \de T_D)
\ee
Hence, locally there exists a function $H_\psi(x,y)$ such that $\tilde x=\partial_y H_\psi, \tilde y= -\partial_x H_\psi$, satisfying  the Monge-Amp\`ere type equation
\be
\label{eqMA}
(\partial^2_x H_\psi) (\partial^2_y H_\psi) - (\partial_x \partial_y H_\psi)^2=1
\ee
One easily checks that the Legendre transform to the imaginary part of the tree-level 
prepotential \eqref{eqF0},
sometimes known as the Hesse potential \cite{Cardoso:2010gc}, provides such a function,
\be
\label{HLeg}
H_\psi(x,y) = \langle \Im\left[ -\frac{2\pi\I}{3} \frac{e^{-2\I\psi}}{\cos^2\psi} 
F_0\left( \cos\psi e^{\I\psi} (x+\I \tilde x) \right)\right] + \tilde x y \rangle_{\tilde x}
\ee
where $F_0$ is normalized such that $T_D=-\frac{3}{2\pi\I} \partial_T F_0$ (see \eqref{eqF0}). In
the large volume limit,  \eqref{HLeg} evaluates to
\be\begin{aligned}
 H^{\rm LV}_\psi(x,y) & = \frac{1}{3\cos\psi} (x^2+2y)^{\frac32} - \frac13 x(x^2+3y) \tan\psi \\
 & = \frac{t^3}{3\cos\psi} + \frac{s}{6}(s^2-3t^2) \tan\psi -\frac{t}{6} (t^2-3s^2) \tan^2\psi
\end{aligned}
\ee
An explicit computation shows that the Hessian matrix of $H_\psi$ with respect to $(x,y)$ coincides with that of $H^{\rm LV}_\psi$ (\ie the Hessian is insensitive to the factor $C(u)$ appearing in the Eichler integral \eqref{Eichler0}), namely
 \be
 \begin{pmatrix} 
 \partial_{xx} H_\psi & \partial_{xy} H_\psi \\ 
  \partial_{xy} H_\psi  &  \partial_{yy} H_\psi
   \end{pmatrix} 
   =   \frac{1}{\tau_2}  \begin{pmatrix}  |\tau|^2 & \tau_1 \\
 \tau_1 & 1 \end{pmatrix} 
 \ee
In particular it is positive definite. This observation allows to write the orientation condition 
 in \eqref{rayxy} purely in terms of the coordinates $(x,y)$: indeed, it becomes
\be
\label{phidec}
- \frac{1}{r}
 \left[ r^2 \partial_x^2 H_\psi -2 d r \partial_x\partial_y H_\psi + d^2 \partial_y^2 H_\psi  \right] 
 \frac{\de x}{\de\mu} <0
 \ee
 Thus we conclude that for $r\neq 0$ the ray is oriented such that $r \frac{\de x}{\de \mu}>0$. In particular,
 the electric potential $\varphi_x(\gamma)=2(d - x r)$ 
 (generalizing \eqref{defvarphi} for the exact central charge)
 decreases along the ray. When $r=0$, we get instead $\frac{\de (d \tilde x)}{\de\mu}=\frac{d}{\tau_2}  \frac{\de y}{\de\mu}<0$, so the vertical ray is oriented towards decreasing $y$ for $d>0$ (or increasing
 $y$ for $d<0$).

We now determine the images of the conifold points in the $(x,y)$ plane. Using \eqref{TTDcon},
we find that  the point $\tau=p/q$ where the object of charge $\gamma_C=[q,p,p m-q m_D]$ becomes massless maps to 
\be
(x_C,y_C) = ( m + q \cV_\psi, -m_D - p \cV_\psi)
\ee
where $\cV_\psi=\cV\tan\psi$.
In particular,  the conifold point $\tau=m$ where 
the object  $\cO(m)$ becomes massless is mapped to the point
\be\label{xyOm}
(x_{\cO(m)},y_{\cO(m)}) = \Bigl(m + \cV_\psi,  -\frac12{m^2} - m \cV_\psi\Bigr)
\ee
on the parabola $y = -\frac12 x^2 + \frac12 \cV_\psi^2$. 
Moreover, the associated rays $\Rgeo_\psi(\pm\cO(m))$
are contained in the tangent to the parabola $y=-\frac12 x^2$ at $x=m$, just as
for the large volume scattering diagram. 
Similarly, the point $\tau=n-\frac12$ where  $ \Omega(n+1)$ becomes massless is mapped to 
\be
(x_{\Omega(n+1)},y_{\Omega(n+1)}) = \Bigl( n -\frac12 -2 \cV_\psi,  -\frac12(n^2-n+1) +(2n-1) \cV_\psi\Bigr)
\quad 
\ee
on the parabola $y= -\frac12 x^2 + 2\cV_\psi^2 -\frac38$,
and the corresponding rays $\Rgeo_\psi(\pm\Omega(n+1))$ are contained in the tangent to the parabola $y=-\frac12x^2-\frac38$ at $x=n-\frac12$.
The corresponding points are marked in Figure \ref{figGrxy}. 

\begin{figure}[ht]
\begin{center}
\includegraphics[width=10cm]{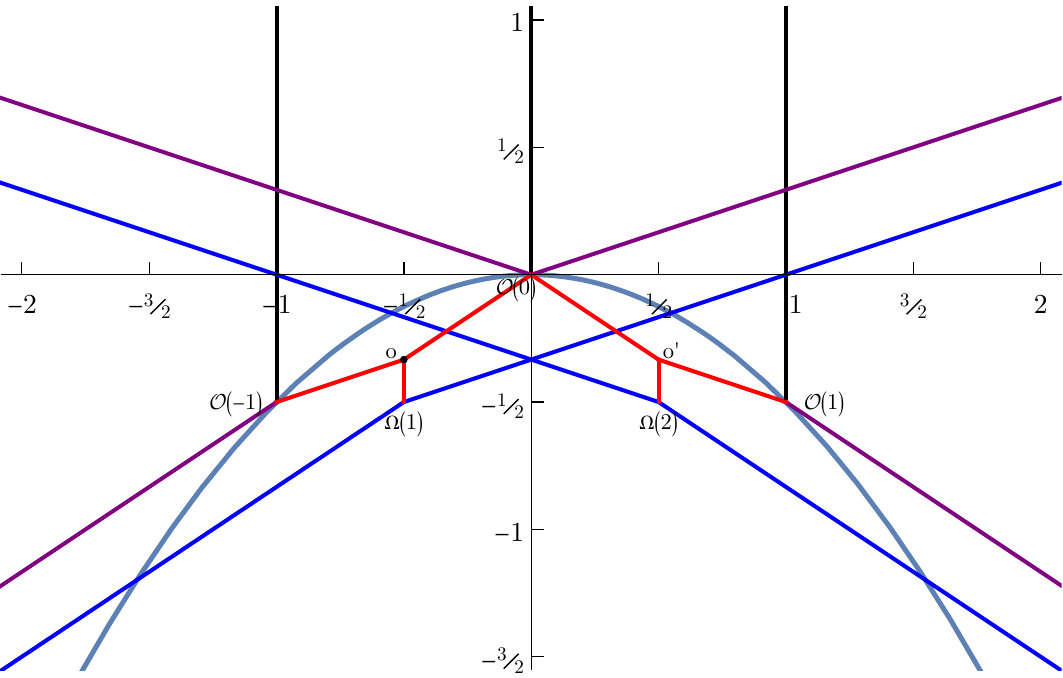}
\end{center}
\caption{Images of the  fundamental domains $\cF_o$, $\cF_{o'}=\cF_o(1)$ and their $\IZ_3$ images in the $(x,y)$ plane for $\psi=0$. The vertical black lines are the images of the semi-infinite lines $n+\frac{\I}{2\sqrt3} + \I \IR^+$ for $n=-1,0,1$, while the blue and purple lines correspond to their images under $\IZ_3$. 
\label{figGrxy}}
\end{figure}

\subsection{Orbifold region\label{sec_orbregion}}

We now discuss the domain of validity of the orbifold quiver description inside the
space of  $\Pi$-stability conditions, and how the two-dimensional scattering diagram constructed
in \S\ref{sec_scattorb} fits in the scattering diagram for the exact central charge. 

 As explained in \S\ref{sec_quiv}, 
the quiver description holds  in the region $\IH^o\subset \Pi$ where the central 
charges $Z_\tau(\gamma_i)$ of the simple objects belong to a common half-plane. 
In that region, the heart $\cA(\tau)$ is equivalent under $\GLt$ to the Abelian category
of representations of $J(Q,W)$, and the moduli space of semi-stable objects of phase 
$\arg Z_\tau(\gamma) = \psi+\frac{\pi}{2}$ in $\cA(\tau)$ coincides with the 
moduli space of semi-stable representations for  King stability parameters 
$\theta_i\propto\Re[e^{-\I\psi}Z_\tau(\gamma_i)]$.  Choosing the scale
such that $\theta_i(\gamma)=\Im [Z(\gamma_i) \bar Z(\gamma)]=\tilde W(\gamma,\gamma_i)$,
we can apply the relation \eqref{dWdmu} to obtain the variation of $\theta_i$ along
the attractor flow $\AF(\gamma)$, 
\be
\frac{\de \theta_i}{\de \mu} = -g^{\tau\bar\tau}  |C|^2 
\left[ 2 \tau_2 \langle \gamma,\gamma_i \rangle  |Z(\gamma)|^2 +   |d-r\tau|^2  \theta_i \right]
\ee
The second term simply amounts to a rescaling of $\theta_i$, while the first term shows that 
$\theta_i$ flows towards the self-stability condition $\langle \gamma_i,\gamma \rangle$.
On the other hand, 
since $Z_\tau(\gamma_1)+Z_\tau(\gamma_2)+Z_\tau(\gamma_3)=Z_\tau(\delta)=1$ 
for any $\tau\in\IH$,
 it makes more sense to fix the scale by requiring
 \be
 \label{thetaZ}
 \theta_i =\frac{ \Re[e^{-\I\psi}Z_\tau(\gamma_i)] }{\cos\psi}
 \ee
 such that $\theta_1+\theta_2+\theta_3=1$. The King stability parameters \eqref{thetaZ} are
 then linear combinations
 of the affine coordinates defined in \eqref{defxysgen}, such that the lines of gradient flow \eqref{rayxy}
 corresponds to straight lines $n_1\theta_1+n_2\theta_2+n_3\theta_1=0$. 
 
Since the central charges $Z_{\tau}(\gamma_i)$ are all equal at the orbifold point $\tau_o=\frac{1}{\sqrt3} e^{5\pi\I/6}$,  the region $\IH^o\subset \Pi$ includes an open set around $\tau=\tau_o$. 
Using the Eichler integral representation \eqref{Eichler0}, it is immediate to compute
the central charge 
to first order in $\tau-\tau_o$, 
\be
Z_\tau(\gamma) \simeq Z_{\tau}^o(\gamma)\coloneqq -\frac{r}{3}-\frac{d}{2}-\ch_2 +  C(\tau_0) (d-r\tau_o)\, (\tau-\tau_o)
\ee
leading to the central charges for the simple objects 
\be
\label{Ziexp}
\begin{cases}
Z_\tau(\gamma_1) \simeq
 \tfrac13 + \tau_o \, C(\tau_o) \, (\tau-\tau_o) \\ 
Z_\tau(\gamma_2) 
 \simeq   \tfrac13 -(2 \tau_o+1) \, C(\tau_o) \, (\tau-\tau_o) \\
Z_\tau(\gamma_3) 
\simeq   \tfrac13 + (\tau_o+1) \, C(\tau_o) \, (\tau-\tau_o) \\
\end{cases}
\ee
Computing the stability parameters via \eqref{thetaZ}, one finds that the coordinates $(u,v)$
on the two-dimensional section \eqref{thetauv} are related to the distance away from the
orbifold point via
\be
\label{tauouv}
\tau- \tau_o \simeq  \frac{2\I \sqrt3}{C(\tau_o)}  e^{\I\psi} (u+\I v) \cos\psi 
\ee 
In particular, as $\psi\to\pm \frac{\pi}{2}$, the scattering diagram $\cD_o$ is mapped to an infinitesimal
neighbourhood of $\tau_o$ (see Figure \ref{figslopecontour}). 

Going away from  the orbifold point, the region $\IH^o$ is bounded  in the 
fundamental domain~$\cF_o$
by the condition 
\be
\label{ImZ13}
\Im( \overline{Z_\tau(E_1)}Z_\tau(E_3))>0
\ee
 and similarly by conditions 
$ \Im( \overline{Z_\tau(E_2)}Z_\tau(E_1))>0$, $\Im( \overline{Z_\tau(E_3)}Z_\tau(E_2))>0$ in the images
$M_o \cF_o$ and $M_o^2 \cF_o$ under the $\IZ_3$ symmetry.  
In terms of the variables $(s,w)$ defined in \eqref{defsw}, where walls of (anti)marginal stability
are straight lines (see  \eqref{WallExact}), 
 the inequality \eqref{ImZ13} in $\cF_o$ amounts to 
 \be
 \label{qs20}
w+\frac12 s<0
 \ee
In that region, the coordinates $(u,v)$ on the two-dimensional section of the quiver scattering diagram
can be expressed in terms of the affine coordinates \eqref{defxysgen},
\be
\label{defuvxy}
u = - \frac12 x + y + \frac1{12}, \quad v = - \frac{1}{2\sqrt3} \left(x+\frac12 \right)
\ee
such that the rays in the exact scattering diagram map to segments contained in the geometric
rays \eqref{mckayray}.

However, while the initial rays $\cR_o(\gamma_i)$ in the quiver scattering diagram extend to infinity
inside the space of King stability conditions, they actually originate
from conifold points at $\tau=0,-\frac12,-1$ in the slice  of $\Pi$-stability conditions.
Using the values of the periods in Table \ref{tabvalues}, one finds that  
these conifold points are mapped  
to points at finite distance in the coordinates defined by \eqref{defuvxy}. 
To make this precise, let us parametrize the initial rays  $\cR_o(\gamma_i)$ in the quiver scattering diagram as
\begin{equation}\label{pilambda}
  p_1(\lambda) = \Bigl(\frac{1}{12}-\frac{\lambda}{2} , \frac{-2\lambda-1}{4\sqrt{3}} \Bigr) , \quad
  p_2(\lambda) = \Bigl(-\frac{1}{6} , \frac{\lambda}{\sqrt{3}} \Bigr) , \quad
  p_3(\lambda) = \Bigl(\frac{1}{12}+\frac{\lambda}{2} , \frac{-2\lambda+1}{4\sqrt{3}}\Bigr) 
\end{equation}
 with $\lambda\in\IR$.
In this parametrization, the ray $\cR_o(\gamma_i)$ starts from $\lambda=-\infty$ 
and intersects $\cR_o(\gamma_{i-1})$ and $\cR_o(\gamma_{i+1})$ at $p_i(-1/2)$
and $p_i(1/2)$, respectively, reproducing the intersection points in \eqref{p123}.  

In contrast to the rays $\cR_o(\gamma_i)$ in the orbifold scattering diagram, the rays 
$\cR_\psi(\gamma_i)$  in the slice  of $\Pi$-stability conditions start from the images of the conifold points at $\tau=0,-\frac12,-1$ in the $(u,v)$-plane, given by the points $p_i(\cV_\psi)$
with $\lambda=\cV_\psi$. 
For $\psi=-\frac{\pi}{2}+\epsilon$ with $\epsilon\to 0^+$, these initial points recess to infinity, and one recovers the
infinite two-dimensional scattering diagram constructed above. 
Similarly, for $\psi=\frac{\pi}{2}-\epsilon$ with $\epsilon\to 0^+$, the initial points of the rays $\cR_\psi(-\gamma_i)$
associated to the exceptional collection $E_i[1]$ recess to infinity. 
On the contrary, for $\psi$ small enough such that $|\cV_\psi|<\frac12$, the initial points are such that the initial rays $\Ract_\psi(\gamma_i)$
can no longer interact among themselves (see Figure~\ref{figMcKayTriangle}). 

\begin{figure}
\begin{center}
\includegraphics[height=5cm]{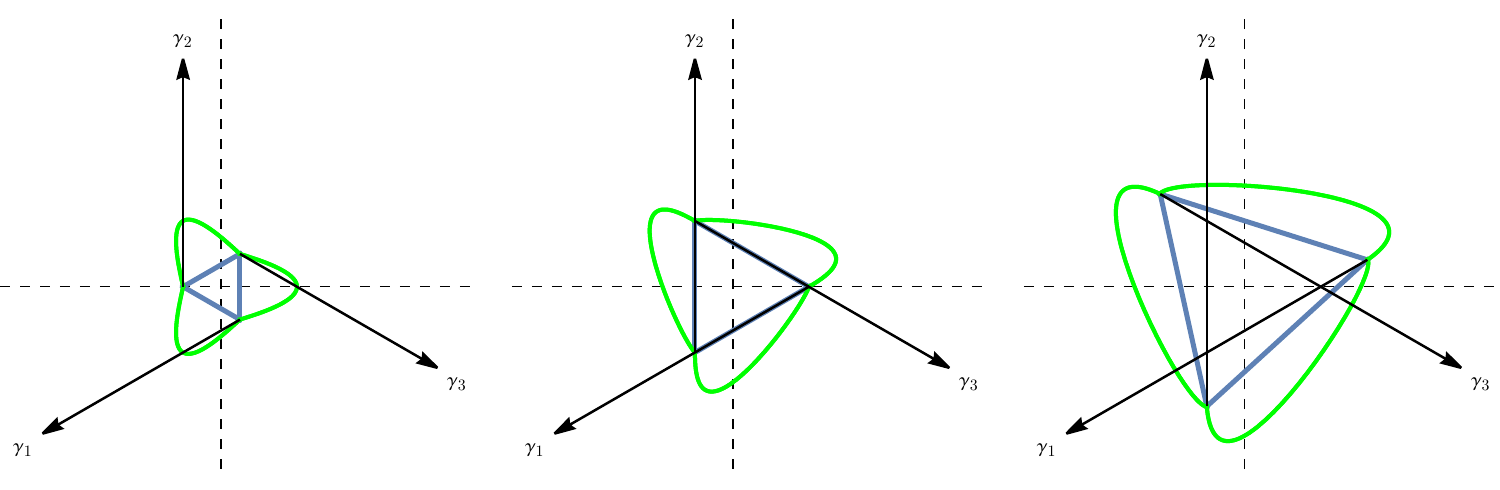}
\end{center}
\caption{Initial rays around the orbifold point in $(u,v)$ plane for $\psi=0$ (left), $\psi=-\psicr{1/2}$ (middle) and $\psi<-\psicr{1/2}$ (right). The dashed lines represent the $u=0$ and $v=0$ axis. 
The green curves correspond to the boundary of the region $\IH^o$ defined around \eqref{ImZ13}.
The blue triangle is the face $\Delta_\psi$ in a tesselation of~$\IH$ defined in \S\ref{sec_tessel}.
As $\psi\to-\frac{\pi}{2}$, the initial points recess to infinity,
and the scattering diagram reduces to Figure \ref{figMcKayScat}.\label{figMcKayTriangle}}
\end{figure}

In between these two regimes, there are critical phases $\psi$ such that some ray in the
scattering diagram $\cD_o$ passes through one of the initial points 
$p_i(\cV_\psi)$ (see Figure \ref{figMcKayScatdense}). 
The first of these critical values arise from the discrete series of rays
$\cR_\psi(F_{2k} \gamma_{i}+F_{2k+2} \gamma_{i+1})$
emitted in the scattering of $\gamma_{i}$ and $\gamma_{i+1}$, where $F_k$ 
are the Fibonacci numbers. As  mentioned above \eqref{defpsicr} in the introduction,
this leads to jumps in the topology of the scattering sequence contributing to the 
index $\Omega_\infty(\gamma)$ along the ray $\cR_\psi(\gamma)$ as a function of the phase $\psi$.

\subsection{Exact scattering diagram\label{sec_exactD}}

Having identified the image of the large volume and orbifold region in affine coordinates,  
we are ready to describe the exact scattering diagram $\cD_\psi^\Pi$ in the same coordinates.
We prove the Split Attractor Flow Conjecture on the slice of $\Pi$-stability conditions, and give an algorithm to list the finite set of trees that contribute to a given (rational) DT 
invariant.
Depending on the phase~$\psi$, we find that $\cD_\psi^\Pi$ has the following structure,
modulo the action of $\Gamma_1(3)$:
\begin{itemize}
\item For small phases $|\psi|<\psicr{1/2}$ (where $\psicr{\alpha}$ was defined in \eqref{defpsicr}), 
$\cD_\psi^\Pi$ coincides with the large volume diagram $\cD^{\rm LV}_0$
in the $(x,y)$ plane, except for a shift in the origin of the initial rays.
\item For large phases $\psicr{1/2}<|\psi|<\pi/2$ that are not critical (as defined at the end of~\S\ref{sec_orbregion}),  $\cD^{\rm \Pi}_\psi$ includes a triangular portion of the orbifold diagram, whose outbound rays seed scattering diagrams in the large volume region.
\item For $\psi=\pi/2$,  $\cD^{\rm \Pi}_\psi$ coincides with the $\theta_1+\theta_2+\theta_3=0$ slice
of the quiver scattering diagram $\cD_Q$  concentrated at the orbifold point
$\tau_o$ in the $\tau$-plane, supplemented with a vertical ray $\Ract(\cO[1])$.
\end{itemize}
We refrain from discussing critical phases, since additional rays emanating from the conifold points complicate the analysis.\footnote{Our proof of the SAFC should go through for critical~$\psi$, except that one must prove that the additional initial rays have positive values of the potential $\tilde\varphi_\tau(\gamma)$ defined in~\eqref{tildevarphixgamma}, a point which we have not investigated.}

\subsubsection{Initial rays and a tessellation\label{sec_tessel}}

A basic constraint on the scattering diagram $\cD_{\psi}^{\Pi}$ is that it is invariant under the group $\Gamma_1(3)$  of auto-equivalences of $D^b\Coh(K_{\IP^2})$, generated by tensor product by $\cO_X(1)$ (acting by $T\colon\tau\mapsto \tau+1$) and by the spherical twist $\ST_\cO$ (acting by $V\colon\tau\mapsto \frac{\tau}{1-3\tau}$).
The quotient $\Pi/\Gamma_1(3)$ has two boundary points from which initial rays could emanate: the large volume point, which is ruled out because central charges grow without bound in this limit, and the conifold point. Without loss of generality, we can thus restrict to initial rays emanating from the point $\tau=0$ in $\Pi$.  The analysis in~\eqref{Znearcon} (with $\tau=p/q=0/1$ hence $m=m_D=0$)
shows a ray with charge $\gamma=[r,d,\ch_2]$ can only emanate from $\tau=0$  if $d\cV_\psi=\ch_2$. 
If $\cV_\psi$ is irrational, this is only possible when $d=0=\ch_2$. In order to determine
the associated attractor invariant, we use the fact that the orbifold and large volume descriptions are valid in regions $\IH^o$ and~$\IH^{\rm LV}$ that cover a neighborhood of $\tau=0$. 
The outcome of this analysis in \S\ref{sec_init} is that the initial rays of the exact scattering diagram are\footnote{To lighten the notation, we omit the $\psi$-dependence of the rays.}
\begin{equation}\label{initial-rays-data}
  \begin{aligned}
    \Reff(\cO) \colon \quad && \Omega([k,0,0]) & = \delta_{k,1} , \quad && \text{and its } \Gamma_1(3) \text{ images,}
    \\
    \Reff(\cO[1]) \colon \quad && \Omega([-k,0,0]) & = \delta_{k,1} , \quad && \text{and its } \Gamma_1(3) \text{ images,}
  \end{aligned}
\end{equation}
provided $\psi$ is \emph{not critical} in a sense of Definition \ref{def:critical}.  
In particular this holds for all small phases $|\psi|<\psicr{1/2}$.
Notably, since $V\in\Gamma_1(3)$ maps $\cO(0)[n]$ to $\cO(0)[n+2]$, the initial rays in~$\cD^\Pi_\psi$ include infinitely many effective rays starting from $\tau=0$, which lie in images of $\cF_C$ that border this point.  By a mild abuse of notation we will denote by $\Reff(\cO(0)[n])$ these distinct rays, even though their charge vector only depends on the parity of $n$.
More generally, there are an infinite set of rays emanating from every
rational number $\tau=\frac{p}{q}$ with $q\neq 0 \mod 3$, corresponding to the images of 
$\cO(0)[n]$ under $\Gamma_1(3)$.  We denote by $\Reff(\cO(p/q)[n])$ the corresponding effective ray.
For non-critical~$\psi$, these images exhaust all initial rays. In particular, there are no rays emanating from the large volume points $\tau=\frac{p}{q}$ with $q=0\mod 3$, nor from points with $\tau\notin\IQ$.

For any $\psi\in(-\pi/2,\pi/2)$ we introduce a $\Gamma_1(3)$-invariant tessellation of~$\Pi$.
Edges of the tiles consist of all $\Gamma_1(3)$ images of the straight line joining the conifold points $(x_{\cO(-1)},y_{\cO(-1)})$ to $(x_{\cO(0)},y_{\cO(0)})$ in the $(x,y)$ plane.
There are two types of tiles: an orbifold region around each $\Gamma_1(3)$ image of~$\tau_o$ and a large volume region around each image of $\tau=\I\infty$.
The orbifold region around~$\tau_o$ is the aforementioned triangle $\Delta_\psi$, whose corners are images of $\tau=0,-1/2,-1$.
The large volume region~$\diamondsuit_\psi$ around $\tau=\I\infty$ maps in $(x,y)$ coordinates to the region above the piecewise linear function (or `jagged parabola', depicted in Figure~\ref{fig:Delta-diamondsuit} on page \pageref{fig:Delta-diamondsuit})
\begin{equation}\label{eqjagged}
y \geq \mathsf{p}(x) \coloneqq \Bigl(m + \cV_\psi + \frac{1}{2}\Bigr) (\cV_\psi - x) + \frac{m(m+1)}{2} \quad \text{with } 
m=\lfloor x-\cV_\psi\rfloor
\end{equation}
zigzaging between\footnote{To see this,
rewrite $\mathsf{p}(x)=\frac{(\cV_\psi)^2-x^2}{2} + \frac{(2 \{x-\cV_\psi\} - 1)^2 - 1}{8}$
where $\{x\}=x-\lfloor x \rfloor$ denotes the fractional part.}
the parabolas $y+ \frac12x^2 - \frac12(\cV_\psi)^2\in\{0,\frac18\}$.
The boundary of~$\diamondsuit_\psi$ joins consecutive conifold points $(x_{\cO(m)},y_{\cO(m)})$, $m\in\IZ$, so that $\diamondsuit_\psi$ borders all translates of~$\Delta_\psi$, while $\Delta_\psi$ borders three large volume regions.
The initial rays~\eqref{initial-rays-data} emanate from vertices of the tessellation by construction, and they enter different regions depending on~$\psi$.

\subsubsection{Small phases $|\psi|<\psicr{1/2}$}

Within $\Delta_\psi$, the exact diagram 
$\cD_\psi^\Pi$ coincides with a triangular region of the orbifold diagram~$\cD_o$.
For our first case of interest, $|\cV_\psi|<1/2$, this region lies inside the empty central triangle of~$\cD_o$, thus $\Delta_\psi$ does not contain any ray, as depicted in Figure~\ref{figMcKayTriangle}.
The large volume regions are thus disconnected from each other in~$\cD^\Pi$, so that we can focus on~$\diamondsuit_\psi$.

The initial rays in that region are $\Reff(\cO(m))$ and $\Reff(\cO(m)[1])$, emanating from integer conifold points $\tau=m\in\IZ$.
In $(x,y)$ coordinates, the geometric ray $\Rgeo(\pm[1,m,m^2/2])$  is contained in the line $y+mx=m^2/2$ tangent to the parabola $y+x^2/2=0$ at $x=m$.
The active rays $\Reff(\cO(m))$ and $\Reff(\cO(m)[1])$ emanate leftwards/rightwards from the point $(x_{\cO(m)},y_{\cO(m)})=(m+\cV_\psi,-m^2/2-m\cV_\psi)$ on this line: for $\psi=0$ this coincides with the point of tangency with the parabola, while for $\psi\lessgtr 0$ it is moved left/right along the tangent line.

For $\psi=0$, the initial rays $\Reff(\cO(m))$ and $\Reff(\cO(m)[1])$ for all $m\in\IZ$ coincide with the initial rays for~$\cD^{\rm LV}_0$.  Under conditions that are fulfilled here, consistent scattering diagrams are uniquely determined by the initial rays, so that $\cD^\Pi_{0}\cap\diamondsuit_\psi$ coincides with~$\cD^{\rm LV}_0$ up to the map $\tau\mapsto(x,y)$.
The rest of the exact diagram is completed by $\Gamma_1(3)$ invariance, namely $\cD^\Pi_{\psi=0}$ consists of all images of $\cD^{\rm LV}$ under $\Gamma_1(3)$, in which translations simply map $\cD^{\rm LV}$ to itself.

In the large volume diagram $\cD^{\rm LV}$, the first intersection along each initial ray is the intersection with the neighboring initial ray, with intersection points
\begin{equation}
  \Rgeo(\cO(m)) \cap \Rgeo(\cO(m-1))
  = \Bigl\{ (x,y) = \Bigl(m-\tfrac{1}{2},-\tfrac12m(m-1)\Bigr) \Bigr\} 
\end{equation}
In other words, along each geometric ray there is a neighborhood of $(x,y)=(m,-m^2/2)$ in which there is no intersection.
When $\psi$ is varied away from zero, the initial point  $(m+\cV_\psi,-m^2/2-m\cV_\psi)$ remains in this intersection-free portion as long as $|\cV_\psi|<1/2$.
We conclude that for small phases $|\psi|<\psicr{1/2}$, the exact diagram $\cD^\Pi_\psi$ consists of disjoint copies (images under $\Gamma_1(3)$) of $\cD^{\rm LV}$ with initial points shifted horizontally by~$\cV_\psi$  along the geometric rays.

To prove the (strong) SAFC for~$\cD^\Pi_\psi$ with $|\psi|<\psicr{1/2}$, we cannot rely on the absence of active rays in~$\Delta_\psi$ and its $\Gamma_1(3)$ images, as only the leaves of attractor flow trees are required to be active.
We must instead prove that edges of attractor flow trees cannot enter~$\Delta_\psi$.
Recall that tree edges are straight line segment in $(x,y)$ coordinates.
We call such an edge `outbound' if it lies in a large volume region $g\cdot\diamondsuit_\psi$ and if 
the line segment, prolonged in the attractor flow direction, intersects the boundary $\partial(g\cdot\diamondsuit_\psi)=g\cdot\mathfrak{p}$.
In other words, an edge is outbound if it moves away from that boundary when following the scattering diagram direction.
For instance, the initial rays $\Reff(\cO)$ and $\Reff(\cO[1])$, supported on the line $y=0$, are outbound: indeed, apart from the initial point $(x_{\cO(0)},y_{\cO(0)})=(\cV_\psi,0)$ of these rays, all other vertices $(x_{\cO(m)},y_{\cO(m)})$ of the polygonal curve~$\mathfrak{p}$ are below $y=0$ since
\begin{equation}
  y_{\cO(m)} = - \tfrac{1}{2} m (m+2\cV_\psi) \leq 0
\end{equation}
thanks to $m$ and $m+2\cV_\psi$ having the same sign for $m\in\IZ$ and $|\cV_\psi|<1/2$.
The leaves of an attractor flow tree are among the initial rays, which are $\Gamma_1(3)$ images of $\Reff(\cO)$ and $\Reff(\cO[1])$, hence are outbound.
Two outbound rays $\cR(\gamma')$, $\cR(\gamma'')$ can only intersect if they lie in the same large volume region, and any bound state $\cR(n'\gamma'+n''\gamma'')$ lies in the angular sector between $\cR(\gamma')$ and $\cR(\gamma'')$ hence is outbound.  We deduce that in an attractor flow tree all rays are outbound and must lie in the same large volume region.
The SAFC for~$\cD^\Pi_\psi$ is then equivalent to the SAFC for  $\cD^{\rm LV}$ since trees contributing to a given $\bOm_\tau(\gamma)$ are simply trees in one of the copies of~$\cD^{\rm LV}$.
Assuming without loss of generality (up to $\Gamma_1(3)$ transformation of $\tau$ and~$\gamma$) that $\tau\in\diamondsuit_\psi$, the electric potential $\varphi_\tau(\gamma)=2(d-rx)$ bounds the number of leaves of the tree as in \S\ref{sec_inidata}, while a causality argument~\eqref{pastLC} bounds the slope 
$m\in\IZ$ of the initial rays
which can contribute.
This establishes Theorem~\ref{thm:SAFC} for $|\psi|<\psicr{1/2}$.

\subsubsection{Large phases $\psicr{1/2}<|\psi|<\pi/2$}

More generally, the exact diagram combines aspects of the orbifold and large volume diagrams.
For definiteness, we consider $-\pi/2<\psi<-\psicr{1/2}$, namely $\cV_\psi<-1/2$.
Then, the $\Gamma_1(3)$ images of $\Reff(\cO)$ enter large volume regions while those of $\Reff(\cO[1])$ enter orbifold regions.  Specifically, $\Reff(\cO)$ enters~$\diamondsuit_\psi$ while its  homological shift $\Reff(\cO[-1])$ enters~$\Delta_\psi$.\footnote{For $\cV_\psi>1/2$, the situation is analogous: $\Reff(\cO[1])$ enters~$\diamondsuit_\psi$ while $\Reff(\cO)$ enters~$\Delta_\psi$.}
In the region $\Delta_\psi$ the exact scattering diagram coincides with the corresponding portion of~$\cD_o$, as argued in \S\ref{sec_orbregion} on general grounds.\footnote{An alternative proof of the matching is that the three $\IZ_3$ images of~$\Reff(\cO[-1])$ carry the same initial data as in the orbifold diagram and consistency allows for a unique diagram with that initial data.}
The initial ray $\Reff(\cO)$, together with rays exiting from $\Delta_\psi$ to~$\diamondsuit_\psi$, and all of their translates, then serve as incoming rays for the scattering diagram in the large volume region $\diamondsuit_\psi$.

By the same argument as for small~$\psi$, all active rays in~$\diamondsuit_\psi$ are outbound (point away from $\mathfrak{p}$ in the $(x,y)$ plane), and more generally, the $\diamondsuit_\psi$ portion of any attractor flow tree consists only of outbound rays.
In particular, none of these rays can ever re-enter~$\Delta_\psi$.
The resulting picture for~$\cD^\Pi$ is that three initial rays scatter within~$\Delta_\psi$, producing some rays that leave~$\Delta_\psi$ towards~$\diamondsuit_\psi$;~these rays  further scatter with~$\Reff(\cO)$ and with translates of all of these rays so as to produce an ever denser set of outbound rays, similarly to the $|\psi|<\psicr{1/2}$ case but with additional incoming rays from the orbifold regions.
Throughout this process, only a limited set of initial rays participate in the construction of the scattering diagram in~$\diamondsuit_\psi$:
\begin{equation}\label{rays-halfint}
  \Reff(\cO(m)[-1]) , \ \Reff(\Omega(m+1)) , \  \Reff(\cO(m-1)[1])
  \ \ \text{and} \ \ \Reff(\cO(m))
\end{equation}
for each $m\in\IZ$, which amount to one initial ray from each conifold point $\tau=m-1/2$ and three from each integer conifold point.

\medskip

Let us now prove the Split Attractor Flow Conjecture.
As we have shown, edges of an attractor flow tree cannot exit the orbifold region~$\Delta_\psi$ (in the attractor flow direction).
Thus, for $\tau\in\Delta_\psi$ the problem reduces to the orbifold diagram, where the dimension vector 
$\gamma=(n_1,n_2,n_3)$ has a finite number of decompositions into elementary charges of the form $k\gamma_i$, and each decomposition leads to a finite number of trees with those leaves.
This proves the conjecture for $\tau\in\Delta_\psi$.

For $\tau\in\diamondsuit_\psi$, infinitely many leaf rays~\eqref{rays-halfint} could in principle contribute.
Since edges cannot exit orbifold regions, attractor flow trees ending at~$\tau$ can be sawed into  a {\it trunk} within~$\diamondsuit_\psi$  rooted at $\tau$, with  leaves that are initial rays $\Ract(\cO(m))$, 
and various {\it shrubs} in 
orbifold regions~$\Delta_\psi(m)$.
This is exemplified in Figure~\ref{fig:composite-tree} on page~\pageref{fig:composite-tree}.
The trunk must lie in $H\cap\diamondsuit_\psi$, where $H$~is the convex hull of $\mathfrak{p}=\partial\diamondsuit_\psi$ and of~$\tau$: this is easily seen by induction starting from the root, using the fact that all rays of the tree within~$\diamondsuit_\psi$ are outbound.
Since $H$ is contained in a vertical strip of finite width in the $(x,y)$ plane, we learn that the set of orbifold regions that can be reached by the split attractor flow starting from~$\tau$ is finite, therefore attractor flow trees have finitely many possible constituents.

This does not suffice for establishing finiteness however: as in the large volume diagram, some charges of possible constituents are opposite to each other, hence there are still infinitely many decompositions of~$\gamma$.
To deal with this, we introduce a variant of the electric potential of the large volume diagram, defined piecewise by
\begin{equation}\label{tildevarphixgamma}
  \tilde\varphi_\tau(\gamma) = \begin{cases}
    2 (d - r \lfloor x - \cV_\psi \rfloor) & \text{in } \diamondsuit_\psi \\
    c_\psi (n_1 + n_2 + n_3) & \text{in } \Delta_\psi 
  \end{cases}
  \qquad
  \text{with } c_\psi \coloneqq \frac{4(-2\cV_\psi-1)}{12\cV_\psi^2+1} 
\end{equation}
extended to all of~$\IH$ by the action of~$\Gamma_1(3)$.\footnote{The definition in~$\diamondsuit_\psi$ is consistent with $\Gamma_1(3)$ covariance: $\tau\to\tau+1$ maps $x\to x+1$ and $d\to d+r$.}
We shall now prove that the potential is monotonically decreasing from the root to the (sum of) leaves of any attractor flow tree, as in the large volume diagram.
It is manifestly additive in the charge~$\gamma$, so that (the sum over all branches of) $\tilde\varphi_\tau(\gamma)$ does not jump at nodes of the tree.
For a given edge in~$\Delta_\psi$, the potential $\tilde\varphi_\tau(\gamma)$ is constant.
Next, we note that $\tilde\varphi_\tau(\gamma)$ is non-increasing along edges in~$\diamondsuit_\psi$: if $r=0$ this is immediate, while if $r\gtrless 0$ 
it follows from the monotonicity of $x$ hence of $\lfloor x-\cV_\psi\rfloor$ from~\eqref{phidec}.

It remains to check that $\tilde\varphi_\tau(\gamma)$ decreases when an edge $\{ry+dx=\ch_2\}$ of the tree crosses from $\diamondsuit_\psi$ to~$\Delta_\psi$ in the attractor flow direction.
The boundary between these regions is the line segment between conifold points $(\cV_\psi-1,\cV_\psi-1/2)$ and $(\cV_\psi,0)$, namely $\{x-\cV_\psi = y/(\frac{1}{2}-\cV_\psi) \in (-1,0)\}$.
The intersection point of the two lines is easy to find, and must satisfy $x-\cV_\psi\in(-1,0)$.
In addition, the tangents $(r,-d)$ and $(1,1/2-\cV_\psi)$ must have the correct relative orientation for the attractor flow to enter~$\Delta_\psi$:
\be\label{xmcVpsi}
x - \cV_\psi = \frac{\ch_2 - \cV_\psi d}{d + r(\frac{1}{2}-\cV_\psi)} \in (-1,0) ,
\qquad
\left|\begin{matrix} r & -d \\ 1 & \frac{1}{2} - \cV_\psi \end{matrix}\right|
= d + r\Bigl(\frac{1}{2}-\cV_\psi\Bigr) > 0 
\ee
Combining these inequalities yields $\cV_\psi d - \ch_2 < d + r(\frac{1}{2}-\cV_\psi)$, or
equivalently  $- \frac{d}{2} - \ch_2 < (d+r)(\frac{1}{2}-\cV_\psi)$.
We deduce
\be
\begin{aligned}
  n_1 + n_2 + n_3
  = - r - \frac{3}{2} d - 3 \ch_2
  & < - r + 3 (d+r) \Bigl(\frac{1}{2}-\cV_\psi\Bigr) \\
  & < \Bigl[\frac{1}{-\cV_\psi-1/2} + 3 \Bigl(\frac{1}{2}-\cV_\psi\Bigr)\Bigr] (d+r) 
\end{aligned}
\ee
In terms of the constant~$c_\psi$ spelled out in~\eqref{tildevarphixgamma}, this inequality reads $c_\psi(n_1+n_2+n_3)<2(d+r)$, namely the potential $\tilde\varphi_\tau(\gamma)$ evaluated on the $\Delta_\psi$ side of the boundary is less than the potential evaluated on the $\diamondsuit_\psi$ side (to evaluate the latter we have used~\eqref{xmcVpsi} to see that $\lfloor x-\cV_\psi\rfloor=-1$).
This establishes that the total electric potential is monotonically decreasing in any tree
in the  attractor flow direction.

The final step is to prove that the potential $\tilde\varphi_\tau(\gamma)$ is positive for all initial rays~\eqref{rays-halfint} at their starting point.
Up to translations we only need to consider $\cR(\cO)$ and the three initial rays of the orbifold diagram.
The ray $\cR(\cO)$ starts at $(\cV_\psi-\epsilon,0)$ with $\epsilon\to 0^+$, thus has a positive potential
\be
\tilde\varphi(\cR(\cO)) = 2 (0 - \lfloor -\epsilon\rfloor) = 2 > 0 
\ee
The three initial rays of the quiver diagram have $(n_1,n_2,n_3)=(1,0,0)$ and permutations thereof, hence $\tilde\varphi=c_\psi>0$.
We conclude that the number of constituents participating in an attractor flow tree rooted at $\tau$ with total charge~$\gamma$ is at most $\tilde\varphi_\tau(\gamma) / \min(2,c_\psi)$.

The SAFC now follows by noting that trees contributing to $\bOm_\tau(\gamma)$ are made of a bounded number of constituents taken among a finite set, hence there are finitely many possible lists of constituents.
For each such list there are finitely many ways that these constituents can be arranged into a topological tree.
For each topology there is at most one attractor flow tree in~$\IH$, constructed by following the attractor flow starting from the root.
There is no ambiguity about where the flow should split: if some edge with charge $\gamma'$ splits (in the topological tree) into $\gamma'_1+\gamma'_2$, then the attractor flow must split when hitting the wall of marginal stability of $\gamma'$ and~$\gamma'_1$, which can only be crossed once according to~\eqref{dWdmu}.
The flow may fail to hit that wall, in which case the topological tree does not correspond to any attractor flow tree.
This establishes Theorem~\ref{thm:SAFC} for $\psicr{1/2}<|\psi|<\pi/2$.

\subsubsection{Behavior in the limits $\psi\to \pm \pi/2$\label{sec_scattB}}

For $\psi=\pi/2$, the scattering diagram $\cD^\Pi_{\psi}$ drastically simplifies.
 In that case, geometric rays $\Rgeo(\gamma)_\psi$ satisfy 
\begin{equation}
  \Im Z_\tau(\gamma)
  = - r \Im T_D + d \Im T 
\end{equation}
thus are included
in contours of constant slope $s=\frac{\Im T_D}{\Im T}=\frac{d}{r}$, independent of
the value of $\ch_2$ (see Figure~\ref{figslopecontour}). The contours only
intersect at points where $s$ is ill-defined,  \ie  when $\Im T=\Im T_D=0$. 
These two curves intersect at  the orbifold point~$\tau_o$ and  $\Gamma_1(3)$ images thereof. 
At $\tau_o$, there are three incoming rays
\begin{equation}
\label{raysorbc}
  \Reff(\cO) , \ \Reff(\Omega(1)[1]) , \  \Reff(\cO(-1)[2])
\end{equation}
The first and last emanate from $\tau=0$ and $\tau=-1$ and lie along the boundary of the fundamental domain~$\cF_o$, where $s=0$ and $s=-1$, respectively while $\Reff(\Omega(1)[1])$ is a vertical line emanating from $\tau=-1/2$, with slope $s=-1/2$. 
In addition, there are vertical rays $\Reff(\cO(m)[1])$
for any $m\in \IZ$, which do not interact with the rays \eqref{raysorbc}. 
Consistency at~$\tau_o$ fixes uniquely the outgoing rays, which then continue on towards a large volume limit, either at $\tau\to -\frac12 + \I\infty$, $\tau\to -\frac23+\I 0$ or $\tau\to -\frac13+\I 0$.
The rays \eqref{raysorbc} are associated to the homological shifts $E_i[1]$ of the objects $E_i$ in the exceptional collection \eqref{excepcoll}. Their scattering diagram around $\tau_o$, which we denote by 
$\cD^\vee_o$, is simply obtained from $\cD_o$ 
by sending $(u,v)\mapsto (-u,-v)$ and reversing the direction of the arrows. Equivalently, we can send 
$(u,v)\mapsto (-u,v)$ and exchange $\gamma_1$ and $\gamma_3$, which amounts to applying the
derived duality \eqref{eqreZ} and a suitable translation. 
Thus, the scattering diagram $\cD^\Pi_{\pi/2}$ consists of the scattering diagram $\cD^\vee_o$
concentrated at the point $\tau_o$, the vertical ray $\Reff(\cO[1])$ and all $\Gamma_1(3)$ images 
thereof. 

Similarly, for $\psi=-\pi/2+\epsilon$ with $\epsilon\to 0^+$, the rays associated to the 
objects $E_i$ in the exceptional collection \eqref{excepcoll}.
\begin{equation}
\label{raysorb}
  \Reff(\cO[-1]) , \ \Reff(\Omega(1)) , \  \Reff(\cO(-1)[1])
\end{equation}
intersect in a region of size $\epsilon$ around  $\tau_o$. As $\epsilon\to 0$, the scattering diagram $\cD^\Pi_{-\pi/2+\epsilon}$  reduces  to the orbifold scattering diagram ${\cD}_o$ concentrated in a region of size $\epsilon$ at the point $\tau_o$, the vertical ray $\Reff(\cO)$  and all their $\Gamma_1(3)$ images. 

In either case,  the SAFC on $\cD^{\Pi}_{\pi/2}$ reduces to the flow tree formula for quivers, which is proven in~\cite{Arguz:2021zpx}. Moreover, the fact that rays do not intersect away from orbifold points
makes it clear that the  index $\Omega_c(\gamma)$ computed using the quiver
associated to the exceptional collection 
$E_i(k)$ will coincide with the Gieseker index 
$\Omega_\infty(\gamma)$ along the ray $\cR_{\pi/2}(\gamma)$ provided 
$k-1\leq \frac{d}{r}\leq k$ for $r\neq 0$, as observed in \cite{Douglas:2000qw,Beaujard:2020sgs}.
This concludes the proof of Theorem~\ref{thm:SAFC} for all non-critical phases $\psi\in(-\pi/2,\pi/2]$.

\subsection{Case studies}

In this final section, we study the scattering sequences which contribute to the Gieseker index 
$\Omega_\infty(\gamma)$ for two simple choices of charges, as function of the phase $\psi$.
The initial rays in these sequences turn out to jump at a subset of the critical 
phases of Definition~\ref{def:critical}, while the index remains unaffected. This is reminiscent of the `fake walls' encountered in earlier studies \cite{Andriyash:2010yf,Chowdhury:2012jq,Alexandrov:2018iao},
where the structure of the bound state changes while the index remains unchanged.

\medskip

We first consider $\gamma=[0,1,1)$, corresponding to the Chern vector of 
 the structure sheaf $\cO_C$ of a rational curve in the  hyperplane class. As noted below \eqref{dimMGieseker0}, the moduli space $\cM_\infty(\gamma)$ is  $\IP^2$ itself, 
 so $\Omega_\tau(\gamma)$ should equal 
 $y^2+1+1/y^2$ for  $\tau_2\gg 1$ and arbitrary values of $\tau_1$. 
Thus, for any $\psi\in(-\pi/2,\pi/2)$ the scattering diagram should contain a ray $\Reff_\psi(\gamma)$ reaching
$\tau_2=\I\infty$ with the above index. The scattering sequences which contribute to this index however depend sensitively on the value of $\psi$  (which we can assume to be negative, 
using the reflection symmetry of the scattering diagram):
 \begin{itemize}
\item $-\psicr{\frac12}< \psi < \psicr{\frac12}  \simeq  0.82406$:  
$T_0=\{\cO(-1)[1], \cO(0)\}$ 
contributing $K_3(1,1)$;
\item $-\psicr{\frac32} \simeq -1.27155 < \psi < -\psicr{\frac12}$:
$T_{1}=\{ 2\cO(0)[1],\Omega(2)\}$
contributing $K_3(1,2)$;
\item $-\psicr{\frac52}\simeq -1.38766 < \psi < -\psicr{\frac32}$: 
$T_{2}=\{3 \cO(1)[1], \{2 \Omega(3), \cO(2)[-1]\}\}$
contributing $K_3(1,2) K_3(1,3)$;
\item $-\psicr{\frac72}\simeq -1.43934 < \psi < -\psicr{\frac52}$: 
$T_{3}=\{3 \cO(2)[1], \{ 3\Omega(4), \{2 \cO(3)[-1], \cO(2)[1]\}\}\}$
contributing $K_3(1,2) K_3(1,3)$;
\end{itemize}
In all cases, the tree produces the desired index $y^2+1+1/y^2$. More generally, when 
\be
-\psicr{n+1/2} < \psi < - \psicr{n-1/2}\ ,
\ee
we find a single tree $T_n$ with constituents  in the positive cone spanned by the  exceptional
collection $\langle \cO(n)[-1], \Omega(n+1), \cO(n-1)[1]\rangle$ with 
Chern vectors $\gamma_1(n), \gamma_2(n),  \gamma_3(n)$, obtained 
from  \eqref{excepcoll} by $n$ units
of spectral flow  (and homology shift $[-1]$).
The tree $T_n$ has $n+1$ constituents and is of the iterated form
\be
T_n = \{ k_1 \gamma_3(n), \{  k_2 \gamma_2(n),  \{ k_3 \gamma_1(n),  \{ k_4 \gamma_3(n), \{ k_5 \gamma_2 (n),  \dots \}\}\}\}, 
\ee
where $k_i=3$ for $i=1,\dots n-1$ and $k_n=2, k_{n+1}=1$. Equivalently, $T_n$ is defined inductively by 
\be
T_{n} = \{ 3 \gamma_3(n), \sigma_n^{-1} \cdot T_{n-1}(1) \} \ ,\quad T_1=\{ 2\gamma_3(1), \gamma_2(1) \}
\ee
where $T(1)$ is the tree $T$ shifted by one unit of spectral flow, and $\sigma_n$ acts by cyclic permutation
\be
\sigma_n\colon \gamma_1(n)\mapsto \gamma_2(n)\ ,\quad 
\gamma_2(n)\mapsto \gamma_3(n)\ ,\quad  \gamma_3(n)\mapsto \gamma_1(n)
\ee
The charges indeed
add up to $(n-1) \gamma_1(n) + n \gamma_2(n) + (n+1)\gamma_3(n) = [0,1,1)$, and the index evaluates to $K_3(1,2) K_3(1,3)^{n-1}=y^2+1+1/y^2$. In Figure \ref{fig011psi} we plot the scattering sequences for the 
 first few values of $n$.
 
\begin{figure}[ht]
\begin{center}
\includegraphics[width=17cm]{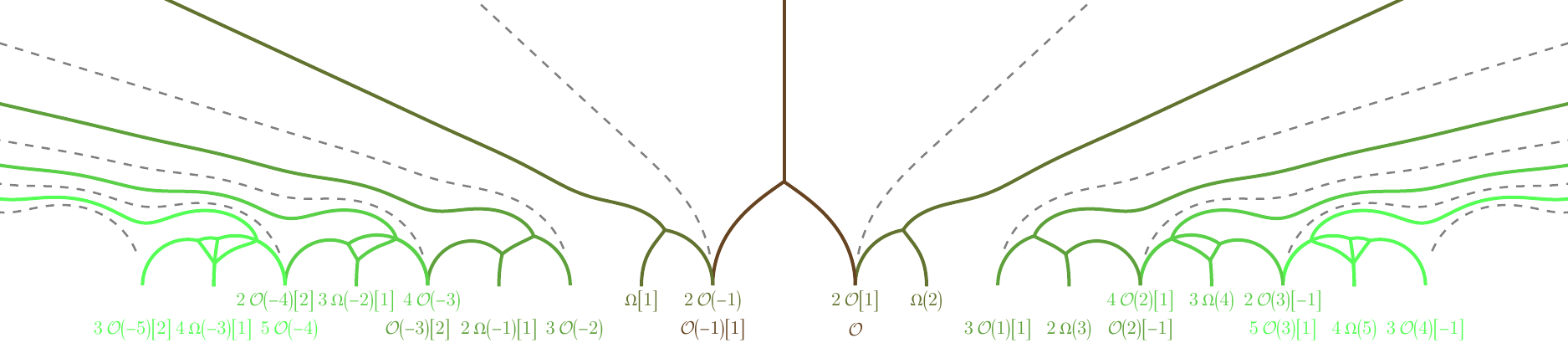}
\end{center}
\caption{Attractor flow trees contributing to $\gamma=[0,1,-\frac12]=[0,1,1)$ for $\psi$ varying from $-\frac{\pi}{2}$  (right) to $\frac{\pi}{2}$ (left). The dashed lines corresponding to the incoming rays for $\psi=\psicr{\alpha}$ with $\alpha\in\{\pm\frac12,\pm\frac32,\pm\frac52,\dots\}$.
\label{fig011psi}}
\end{figure}

\medskip


We now consider $\gamma=[1,0,1)$,  the Chern vector of the structure sheaf
$\cO$. This object is spherical and stable throughout the large volume region.
In particular, for large $\tau_2$ and any $\tau_1<0$, the index
$\Omega_\tau(\gamma)$ should equal $1$. 
\begin{itemize}
\item In the range $-\frac{\pi}{2} < \psi < \psicr{1/2}$,
there is a single ray originating from the conifold point $\cO(0) = \gamma_1(0)[1]$;
\item 
In the range $\psicr{\frac12}<\psi<\psicr{1}$, we find a single sequence
$\{ 3\cO(-1), \Omega(0)[1]\}$
contributing $K_3(1,3)=1$;
\item 
In the range $\psicr{1}<\psi<\psicr{\frac32}$, we find a single sequence 
$\{6 \cO(-2), \{ 3\Omega(-1)[1],\cO(-3)[2]\}\}$ 
contributing $K_3(1,3)K_6(1,6)=1$;
\end{itemize}
More generally, for $\psicr{\frac{n}{2}}<\psi<\psicr{\frac{n+1}{2}}$ 
 there is a sequence $T_n$  obtained inductively through
\be
T_{n} = \{ (3n) \gamma_1(-n)[1], \sigma_{-n} \cdot T_{n-1}(-1) \} \ ,\quad T_0=\{ \gamma_1(0)[1] \}
\ee
with total charge $[1,0,1)$ contributing 1 to the index. 
In Figure \ref{fig101psi} we plot the scattering sequences for the first few values of $n$.

\begin{figure}[ht]
\begin{center}
\includegraphics[width=17cm]{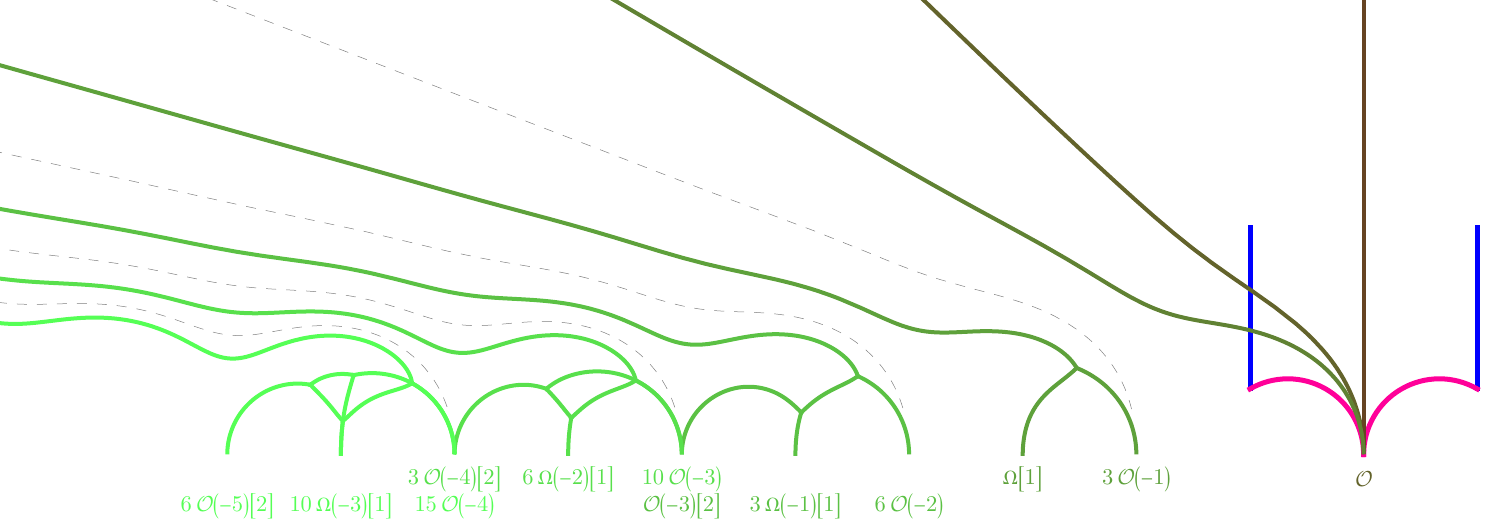}
\end{center}
\caption{Attractor flow trees  contributing to $\gamma=[1,0,0]=[1,0,1)$ for $\psi$ varying from $-\frac{\pi}{2}$ (right) to $\frac{\pi}{2}$ (left).  The dashed lines corresponding to the incoming rays for $\psi=\psicr{\alpha}$ 
with $\alpha\in\{\frac12,1,\frac32,\dots\}$.
\label{fig101psi}}
\end{figure}

\appendix

\section{Periods as Eichler integrals\label{sec_Eichler}}

In this section, we review the modular description of the K\"ahler moduli space of $K_{\IP^2}$, 
derive the Eichler integral representation \eqref{Eichler0} of the periods $(T,T_D)$, and use it to obtain asymptotic expansions around the large volume, conifold and orbifold points and the  behavior under monodromies. We refer to \cite{Chiang:1999tz,Diaconescu:1999dt,Aspinwall:2004jr,Aganagic:2006wq,Haghighat:2008gw,Bayer:2009brq,Alim:2013eja} for earlier studies in the literature.\footnote{We are grateful to Thorsten Schimannek for his help 
about the material in this section.}

\subsection{K\"ahler moduli space as the modular curve \texorpdfstring{$X_1(3)$}{X1(3)}}

Recall that the mirror of $K_{\IP^2}$ is  a family of genus-one curves $\Sigma(z)=x_1^3+x_2^3+x_3^3-z^{-1/3}x_1x_2x_3=0$ in $\IP^2$  parametrized by the complex structure modulus $z\in 
\cM_K=\IC\backslash \{0,-\frac{1}{27}\}$, which is identified as the complexified K\"ahler moduli space of $K_{\IP^2}$. The points $z=0,-\frac{1}{27}, \infty$ then correspond 
to the large radius, conifold and  orbifold points, respectively.
 Alternatively, we can consider the family of genus-one curves $\Sigma'(z')\colon x+y+1-z' x^3/y=0$
 in $\IC^\times_x \times \IC^\times_y$, 
 related to $\Sigma(z)$ with $z'=-z-\frac{1}{27}$ by a 3-isogeny.
 The points $z'=0,-\frac{1}{27}, \infty$ then correspond to 
the conifold, large radius and orbifold points, respectively.

\begin{figure}[ht]
\begin{center}
\includegraphics[width=8cm]{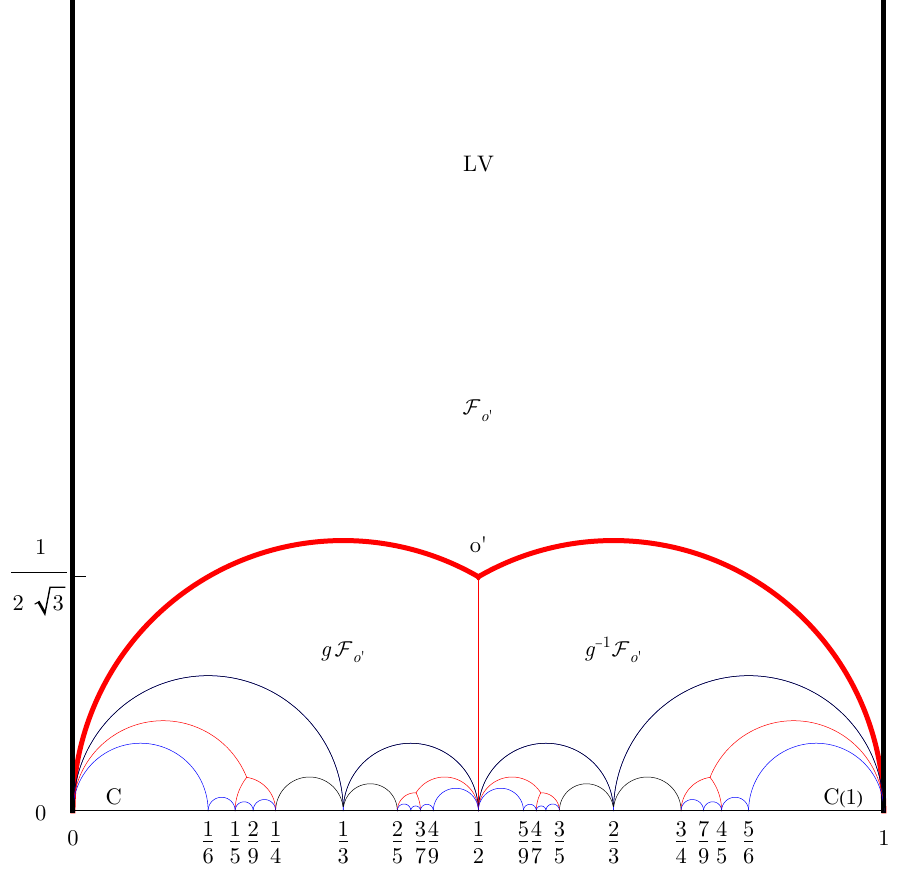}
\end{center}
\caption{Fundamental domain $\cF_{o'}$ 
centered around the orbifold point $\tau_{o'}=e^{\I\pi/6}/\sqrt3$, and some of its images. Here $g=ST^{-2}STS\colon \tau\mapsto \frac{\tau-1}{3\tau-2}$
and $g^{-1}=TST^3S\colon \tau\mapsto \frac{2\tau-1}{3\tau-1}$. The fundamental domain $\cF_{o}$
is the same domain translated to the interval $[-1,0]$.
\label{figfundOp}}
\end{figure}

The space $\cM_K$ is isomorphic to the modular curve $X_1(3)=\IH/\Gamma_1(3)$, where
$\Gamma_1(3)$ is the group of integer matrices $(\begin{smallmatrix} a & b \\ c & d \end{smallmatrix})$ 
such that  $a,d=1 \mod 3$ and $c=0\mod 3$. This is an index 4 subgroup of $PSL(2,\IZ)$, with two cusps of width 1 and 3, corresponding to the large volume and conifold points, respectively, and one elliptic point of order 3 corresponding to the orbifold point.
A convenient choice of fundamental domain is
the domain centered
around the orbifold point at $\tau_o=\frac{1}{\sqrt3} e^{5\pi\I/6}=(-\frac12,\frac{1}{2\sqrt3})$,
shifted horizontally by $-1/2$ compared to \eqref{defFC}, 
\be
\label{defFo}
\cF_o \coloneqq \Bigl\{\tau: \, -1\leq\tau_1<0, (\tau_1+\frac23)^2+\tau_2^2\geq \frac19, 
 (\tau_1+\frac13)^2+\tau_2^2> \frac19\Bigr\}
\ee
This domain (or rather its translate $\cF_{o'}\coloneqq \cF_o(1)$ under $\tau\mapsto \tau+1$) is depicted in Figure \ref{figfundOp},
along with several of its images under $\Gamma_1(3)$. 
Alternatively, one may choose a fundamental  domain centered around $\tau=0$,
\be
\label{defFC}
\cF_C \coloneqq \Bigl\{\tau: \, -\frac12\leq\tau_1<\frac12, (\tau_1+\frac13)^2+\tau_2^2\geq \frac19, 
 (\tau_1-\frac13)^2+\tau_2^2> \frac19\Bigr\}
\ee
shown in Figure \ref{fighomshifted}. It has the nice property of being invariant under the 
Fricke involution $\tau\to -1/(3\tau)$.

The isomorphism $\cM_K \simeq X_1(3)$ is given explicitly by
\be
\label{J3z}
J_3(\tau) =  -15 - \frac{1}{z}  = \frac{12-405 z'}{1+27 z'},
\ee
where $J_3(\tau)$ is the normalized Hauptmodul of $\Gamma_1(3)$, 
\be
\label{J3}
J_3 \coloneqq\left( \frac{\eta(\tau)}{\eta(3\tau)}\right)^{12} + 12 = \frac{1}{q}+54 q-76 q^2 -243 q^3 + 1188 q^4 - 1384 q^5 - 2916 q^6+ \dots
\ee
where $q=e^{2\pi\I\tau}$. 
The latter maps the points $\tau=\I\infty, 0$ and $\tau_o$ to $J_3=\infty,12$ and $-15$, corresponding to the large volume, conifold and orbifold points, respectively. One easily checks that the Fricke involution maps  $J_3\mapsto \frac{729}{J_3-12}+12$ and $z\mapsto z'$. Plugging the $q$-expansion 
\eqref{J3} into \eqref{J3z}, we get
\be\label{ztoq}
\begin{aligned}
q & =-z + 15 z^2 - 279 z^3 + 5729 z^4 - 124554 z^5+2810718 z^6 +  \dots\\
z & = -q + 15 q^2 - 171 q^3 + 1679 q^4 - 15054 q^5 + 126981 q^6 +\dots
\end{aligned}
\ee
The Klein invariant is expressed in terms
of $J_3$ via 
\be
J\coloneqq \frac{E_4^3(\tau)}{\eta^{24}(\tau)}  
= J_3 + 744+  \frac{196830}{J_3-12} + \frac{19131876}{(J_3-12)^2} + \frac{387420489}{(J_3-12)^3}
\ee
leading to 
\be
J = \frac{(216 z-1)^3}{z(1+27 z)^3} = -\frac{(1+24 z')^3}{z'^3(1+27 z')} 
\ee
Using the isomorphism $\cM_K\simeq X_1(3)$, the universal cover of $\cM_K$ is therefore identified with the Poincar\'e upper half plane, tesselated by an infinite number of copies of the fundamental domain $\cF_C$.

\subsection{Periods as Eichler integrals}
Under mirror symmetry, the central charges $(T(\tau),T_D(\tau))$ of the D2-brane and D4-brane are identified with periods $(\varpi,\varpi_D)=(\int_{\ell} \lambda,\int_{\ell_D} \lambda)$ of a suitable meromorphic differential $\lambda$ over a basis of one cycles $(\ell,\ell_D)$ on the mirror curve. Each of these 
periods satisfies the degree~$3$ 
Picard-Fuchs equation 
\be
 \bigl[ \Theta^2+3z (3\Theta+2)(3\Theta+1) \bigr] \Theta \cdot  
 \begin{pmatrix} \varpi \\ \varpi_D \end{pmatrix}=0
\ee
where $\Theta=z\partial_z$. On the other hands, the periods $(\varpi',\varpi'_D)=(\int_{\ell} \omega,\int_{\ell_D} \omega)$ of the unique (up to scale) holomorphic differential $\omega$ on the mirror curve
satisfy the degree~$2$
Picard-Fuchs equation 
\be
\left[ \Theta^2+3z (3\Theta+2)(3\Theta+1) \right] \cdot  \begin{pmatrix} \varpi' \\ 
\varpi'_D \end{pmatrix}=0
\ee
It follows that 
\be\label{Thvarpi}
 \Theta \cdot  \begin{pmatrix} \varpi \\ \varpi_D \end{pmatrix}= \begin{pmatrix} \varpi' \\ 
\varpi'_D \end{pmatrix}
 \ee
Identifying the modular parameter as the ratio $\tau=\frac{\varpi'_D}{\varpi'}$, we find that $\varpi'_D$ transforms as  a modular form of 
weight 1 under $\Gamma_1(3)$, hence it is given (up to normalization) by the 
weight 1 Eisenstein series (denoted by $A$ in \cite{Bousseau:2020ckw}),
\be
\varpi'(\tau) = 1+6\sum_{m\geq 1} \biggl( \sum_{n|m}  \chi(n)  \biggr) q^m   = 1 + 6 q+ 6 q^3+ 6 q^4 +12 q^7+ \dots
\ee
where $\chi(n)=\binom{n}{-3}$ is the Dirichlet character equal to $+1$ for $n=1\mod 3$, $-1$ for $n=2\mod 3$ and 0 otherwise.  Let $C$ be the weight~$3$ Eisenstein series with the same character,
\be
C(\tau) =1 -9 \sum_{m\geq 1} \biggl( \sum_{n|m} n^2 \chi(n)  \biggr) q^m = 1-9 \sum_{n\geq 1}
\frac{n^2 \chi(n) q^n }{1-q^n}
\ee
This modular form can also be written as an eta product
\be
C (\tau) \coloneqq \frac{\eta(\tau)^9}{\eta(3\tau)^3}=  
\sum_{n=1}^{\infty} c_n q^n = 1 - 9 q + 27 q^2 -9 q^3 - 117 q^4+ \dots
\ee
which makes it clear that it does not vanish anywhere in $\IH$. The ratio $\frac{C-\varpi'^3}{27 
\varpi'^3}$ is a meromorphic function on $X_1(3)$, which can be shown coincides with $z$. Using the differential identities in  $\frac{C-\varpi'^3}{27 \varpi'^3}$, we obtain immediately that $z\partial_z=\frac{\varpi'}{C} \partial_\tau$. Substituting into \eqref{Thvarpi}, we get 
\be  
\frac{d}{d\tau}  \begin{pmatrix} T \\ T_D  \end{pmatrix} = 
C  \begin{pmatrix} 1 \\ \tau  \end{pmatrix}
\ee
Using the values $(T,T_D)=(-\frac12,\frac13)$ at the orbifold point $\tau_o$ 
\cite{Diaconescu:1999dt} we 
can write the periods as a holomorphic Eichler integral
\be
\label{Eichler2}
 \begin{pmatrix} T \\ T_D  \end{pmatrix}
= \begin{pmatrix} -\frac12 \\ \frac13  \end{pmatrix} 
+  \int_{\tau_o}^{\tau} \begin{pmatrix} 1 \\u \end{pmatrix} 
\, C(u) \de u
\ee
Equivalently, using the identity
\be
\label{eq224}
 \int_{\tau_o}^{\tau_{o'}} \begin{pmatrix} 1 \\u \end{pmatrix} \, 
\, C(u) \de u =  \begin{pmatrix} 1 \\ 0 \end{pmatrix} 
\ee
proven at the end of \S\ref{sec_mono}, 
one can also take $\tau_{o'}=\tau_o+1$ as a base point and write
\be
 \begin{pmatrix} T \\ T_D  \end{pmatrix}
= \begin{pmatrix} \frac12 \\ \frac13  \end{pmatrix} 
+  \int_{\tau_{o'}}^{\tau} \begin{pmatrix} 1 \\u \end{pmatrix} \, 
\, C(u) \de u
\ee
This representation (or equivalently \eqref{Eichler2}) provides a global formula for the analytic continuation of $T$ and $T_D$ throughout
the upper half-plane, which  gives immediate access to the asymptotic expansions near all singular points and monodromies around them. It also proves to be very efficient for numerical evaluations. 
In Table \ref{tabvalues} we record the values of the periods at some special points. 
Using
$\overline{C(\tau)}=C(-\bar\tau)$ and \eqref{eq224}, it is easy to establish the reality properties
\be
\label{ReTTD}
\overline{T(\tau)} = - T(-\bar\tau)\ ,\quad \overline{T_D(\tau)} = T_D(-\bar\tau)\ ,
\ee
At the end of \S\ref{sec_mono}, we use \eqref{ReTTD} to conclude that 
$\Im T_D=0$ on the semi-circle $\cC(-\frac13,\frac13)$ passing through the orbifold point $\tau_o$.
As a result, the slope $s=\Im T_D/\Im T$ vanishes on that semi-circle.

\begin{table}[ht]
\caption{Periods at some special points.
\label{tabvalues}}
\vspace{-.5\baselineskip}
$
\renewcommand{\arraystretch}{1.2}
\begin{array}{|c|c|c|c|c|c|}
\hline
\tau & z & j & J_3 & T & T_D \\ 
\hline
n \in \IZ &  -\frac{1}{27}  & \infty & 12 & n+\I\,\cV &\frac{n^2}{2} +  n\I\,\cV \\
\hline
\frac12  &  -\frac{1}{27}  & \infty & 12 & \frac12 -2\I\,\cV & \frac12-\I\,\cV \\
\hline
-\frac12  &  -\frac{1}{27}  & \infty & 12 & -\frac12 -2\I\,\cV & \frac12+\I\,\cV \\
\hline
\frac14 &  -\frac{1}{27}  & \infty & 12 &  \frac52+4\I\,\cV & \frac12+\I\,\cV \\
\hline
\frac34 &  -\frac{1}{27}  & \infty & 12 & - \frac32+4\I\,\cV & -\frac32+3\I\,\cV \\
\hline
\frac15 &  -\frac{1}{27}  & \infty & 12 &  2-5\I\,\cV & \frac12-\I\,\cV \\
\hline
\frac25 &  -\frac{1}{27}  & \infty & 12 &  4-5\I\,\cV & 2-2\I\,\cV \\
\hline
\frac35 &  -\frac{1}{27}  & \infty & 12 &  -3-5\I\,\cV & -\frac32-3\I\,\cV \\
\hline
\frac37&  -\frac{1}{27}  & \infty & 12 &  3+7\I\,\cV & \frac32+3\I\,\cV \\
\hline
\tau_o+n= \frac{e^{\I\pi/6}}{\sqrt3} & \infty & 0 & -15 &n- \frac12 & \frac13 +\frac{n(n-1)}{2} \\
\hline
\frac{9+\I\sqrt3}{42} &\infty &0  & -15 &  \frac32 & \frac13  \\
\hline 
\frac{33+\I\sqrt3}{42} &\infty &0  & -15 &  -\frac12 & -\frac23  \\
\hline 
\frac{33\I\sqrt3}{78} &\infty &0  & -15 &  \frac52 & \frac43  \\
\hline 
\frac{45+\I\sqrt3}{78} &\infty &0  & -15 &  -\frac32 & -\frac23  \\
\hline
e^{2\pi\I/3} & \frac{1}{216} & 0 & -231  &  -\frac12 + 0.859778\, \I & -0.118598 - 0.429889\,\I\\ 
\hline
e^{\pi\I/3} & \frac{1}{216} & 0 & -231  &  \frac12 + 0.859778 \,\I & -0.118598 + 0.429889\,\I \\ 
\hline
\I &  \frac{5-3\sqrt{3}}{108} &   1728 & 255+162\sqrt3 & 
1.00267 \I & -0.378093 
\\
\hline
\end{array}
$
\end{table}

\subsection{Expansion around large radius\label{sec_LVexp}}

In the large radius limit, integrating term-by-term the $q$-expansion of $C$, we get 
\be
 \int_{\tau_o}^{\tau}  \begin{pmatrix} 1 \\u \end{pmatrix} \, C(u) \de u = 
 \begin{pmatrix} \tau-\tau_o + \frac{1}{2\pi\I} \left( \bar f_1(\tau) -  \bar f_1(\tau_o) \right) \\
\frac12 \tau^2-\frac12 \tau_o^2 + \frac{1}{2\pi\I} \left( \tau\, \bar f_1(\tau) -  \tau_o \bar f_1(\tau_o)\right)
+ \frac{1}{(2\pi\I)^2} \left( \bar f_2(\tau) -  \bar f_2(\tau_o) \right) 
\end{pmatrix}
\ee
where
\be
 \bar f_1 (\tau) \coloneqq \sum_{n=1}^{\infty} \frac{c_n}{n} q^n\ ,\quad 
 \bar f_2(\tau) \coloneqq - \sum_{n=1}^{\infty} \frac{c_n}{n^2} q^n
\ee
We observe numerically that for   $\tau=\tau_o$,
\be
\tau_o +  \frac{1}{2\pi\I} \bar f_1(\tau_o) = - \frac12\ ,\quad
- \frac12 \tau_o^2  -  \frac{\tau_o}{2\pi\I} \bar f_1(\tau_o)  + \frac{1}{(2\pi\I)^2} \bar f_2(\tau_o) 
= -\frac{1}{8} 
\ee
Thus we find that 
\be
\label{defTTD}
T=\frac{f_1}{2\pi\I}-\frac12\ ,\quad T_D = \frac{f_2}{(2\pi\I)^2} -\frac{f_1}{4\pi\I} +\frac14
\ee
 with 
\be
f_1 \coloneqq 
\log(- q)  + \bar f_1 \ ,\quad  
f_2 
\coloneqq \frac12[\log(-q) ]^2 + \bar f_1  \log(-q)  + \bar f_2\ ,\quad 
\ee
Consequently, we have
\be
\label{TTDbarf}
T = \tau + \frac{\bar f_1}{2\pi\I}\ ,\quad T_D = \frac12\tau^2+\frac18 + \frac{\tau\,  \bar f_1}
{2\pi\I} + \frac{\bar f_2}{(2\pi\I)^2}
\ee
In particular, on any vertical line with $2\tau_1\in\IZ$, we have
\be
\label{shalfintegertau1}
\Re T = \frac{\Im T_D}{\Im T}= \tau_1
\ee
since $\bar f_1, \bar f_2$ are real.
In the limit $\tau\to\I\infty$, $\bar f_1(\tau), \bar f_2(\tau)$ are exponentially suppressed 
hence $(T,T_D)\sim (\tau,\frac12\tau^2)$. Substituting $q$ in terms of $z$ via \eqref{ztoq},
we find agreement with the usual representations in terms of Meijer G-functions \cite{Diaconescu:1999dt}
\begin{align}
f_1(z)&= -\frac{1}{\Gamma(\frac13)\Gamma(\frac23)} G_{3,3}^{2,2} \left( 
\begin{smallmatrix} \frac13 & \frac23 & 1 \\ 0& 0 & 0 \end{smallmatrix}
\middle| 27 z \right)  \\
f_2(z)&=  \frac12 \left[  G_{3,3}^{3,1}  
  \left( \begin{smallmatrix} \frac13 & \frac23 & 1 \\ 0& 0 & 0 \end{smallmatrix} \middle| 27 z \right) + 
G_{3,3}^{3,1} 
\left( \begin{smallmatrix} \frac23 & \frac13 & 1 \\ 0& 0 & 0 \end{smallmatrix} \middle| 27 z \right) \right] 
-  \frac{\pi^2}{3}
\end{align}
Expressing $\tau$ and $T_D$ in terms of the flat coordinate $T$ by inverting the $q$-expansions, we 
recover the usual expansion in terms of Gromov-Witten invariants, 
\be
\label{TDexp}
T_D =\frac{T^2}{2} + \frac18 +
\frac{1}{(2\pi\I)^2} \left(  -9 Q + \frac{135}{4} Q^2 
+244 Q^3 + \frac{36999}{16} Q^4 + \frac{635634}{25} Q^5 + 
 307095 Q^6 + \dots\right)
\ee
where
\be
Q \coloneqq e^{2\pi\I T} =  q + 9 q^2 + 54 q^3 + 246 q^4 + 909 q^5 + 2808 q^6 +\dots 
\ee
The expansion \eqref{TDexp} can be integrated term-by-term to obtain the tree-level 
prepotential
\be
\label{eqF0}
F_0= -\frac{T^3}{18}  - \frac{T}{24} -\frac{1}{36} 
+ \frac{1}{(2\pi\I)^3} \left( 3 Q - \frac{45}{8} Q^2 + \frac{244}{9} Q^3  
-\frac{12333
   Q^4}{64}+\frac{211878 Q^5}{125}-\frac{102365 Q^6}{6}+ \dots  \right)
\ee
such that $T_D=-3 \partial_T F_0$.

\subsection{Expansion around conifold point\label{sec_conexp}}
The expansion near the conifold can be obtained by applying the Fricke involution $\tau\mapsto \tau'=-1/(3\tau)$ which maps $\tau_o$ to $\tau_{o'}=\tau_o+1$ and $C$ to 
\be
C'(\tau') \coloneqq
\frac{\eta(3\tau')^9}{\eta(\tau')^3} = \sum_{n=1}^{\infty} c'_n q'^n = 
q' + 3 q'^2 + 9 q'^3 + 13 q'^4 + 24 q'^5 + 27 q'^6 + \dots
\ee
This is again recognized as an Eisenstein series,
\be
C'(\tau') = \sum_{m\geq 1} \big( \sum_{n|m} n^2 \chi(m/n)  \big) q'^m = \sum_{n\geq 1} \chi(n)
\frac{ q'^n(1+q'^n) }{(1-q'^n)^2}
\ee
Changing variables $u\to u'=-1/(3u)$ in \eqref{Eichler2} and using $\eta(-1/\tau)=\sqrt{-\I\tau}\eta(\tau)$,  we get 
\be
\begin{pmatrix} T \\ T_D  \end{pmatrix}
= \begin{pmatrix} -\frac12 \\ \frac13  \end{pmatrix}  + \I\, 3^{\frac52} \int_{\tau_{o'}}^{\tau'} C'(u')
\begin{pmatrix} 3 u' \\  -1 \end{pmatrix}   \de u '
\ee
Integrating terms by terms  we get 
\be
 \int_{\tau_{o'}}^{\tau} C'(u)
\begin{pmatrix} 3 u' \\  -1 \end{pmatrix}   \de u 
= \begin{pmatrix}
 \frac{3}{2\pi\I} \left( \tau' f^c_1(\tau') - \tau_{o'} f^c_1(\tau_{o'})\right)
+ \frac{3}{(2\pi\I)^2} \left( \bar f^c_2(\tau')-  \bar f^c_2(\tau_{o'})  \right) 
\\  - \frac{1}{2\pi\I} \left(  f^c_1(\tau') -  f^c_1(\tau_{o'})  \right)
\end{pmatrix} 
\ee
where we defined
\be
\begin{aligned}
f^c_1(\tau') &\coloneqq  \sum_{n=1}^{\infty} \frac{c'_n}{n} q'^n = 
q' + \frac32 q'^2 + 3 q'^3 + \frac{13}{4} q'^4 + \dots
\\
f^c_2(\tau') &\coloneqq f_1^c  \log q'  + \bar f_2^c(\tau') 
\\
\bar f_2^c(\tau') &\coloneqq - \sum_{n=1}^{\infty} \frac{c'_n}{n^2} q'^n = 
-q'- \frac34 q'^2 - q'^3 - \frac{13}{16} q'^4 - \dots
\end{aligned}
\ee
We observe numerically that for   $\tau=\tau_o$,
\be
\kappa\, f_1^c(\tau_o) = -1, \quad 
\frac{\kappa}{2\pi\I}  f_2^c(\tau_o) = \tfrac12 -\I\,\cV
\ee
where
\be
\label{defV}
\kappa\coloneqq\frac{27\sqrt3}{2\pi}\,, \quad 
\cV \coloneqq \frac{27}{4\pi^2} \Im \Li_2\left(e^{2\pi\I/3}\right)
\simeq 
0.462758
\ee
Thus we find that $T, T_D$ can be expressed as 
\be\label{TTD-near-coni}
T=  \frac{\kappa}{2\pi\I} f_2^c +\I\,\cV\ ,\quad T_D = -\frac{\kappa}{3} f_1^c 
\ee
In particular, $(T,T_D)=(\I \cV,0)$ at the conifold point. Substituting $q'$ in terms of 
$z'$ using \eqref{ztoq} (with $q\to q', z\to z')$, we recover the usual expression in 
terms of Meijer G-functions, and arrive at an alternative representation for 
the `quantum volume' \cite{Chiang:1999tz},
\be
\cV =  \frac{
G_{3,3}^{2,2} \left( 
\begin{smallmatrix} \frac13 & \frac23 & 1 \\ 0& 0 & 0 \end{smallmatrix} 
\middle| -1 \right)}
{2\pi\, \Gamma(\frac13)\Gamma(\frac23)} +\frac{ \I}{2}
\ee
The value for $\cV$ observed numerically in \eqref{defV}
can be determined exactly by evaluating  \eqref{TTDbarf} at $\tau=0$ 
using zeta function regularization.
 Indeed, the $L$-series of $\bar f_1, \bar f_2$
easily evaluate to 
\be
L_{\bar f_1}(s) \coloneqq \sum_{m\geq 1} \frac{c_m}{m^{1+s}} = -9 \zeta(s+1)\, L(s-1)\ ,\quad 
L_{\bar f_2}(s) \coloneqq -\sum_{m\geq 1} \frac{c_m}{m^{2+s}} = 9 \zeta(s+2)\, L(s)\
\ee
where $L(s)\coloneqq\sum_{m\geq 1} \chi(m) m^{-s}=3^{-s}\left( \zeta(s,\frac13)-\zeta(s,\frac23) \right)
$ is the Dirichlet $L$-series. Its completion 
\be
L^\star(s) = \left( \tfrac{3}{\pi} \right)^{\frac{s+1}{2}} \Gamma\left( \tfrac{s+1}{2}\right) L_s(s)
\ee
is analytic for $\Re(s)>1$ and invariant under $s\mapsto 1-s$, and so is the completed Riemann zeta function $\zeta^\star(s)=\pi^{-s/2}\Gamma(s/2) \zeta(s)$. Using this, one can evaluate the limit as $s\to 0$,
\begin{align}
\label{f12L0}
\bar f_1(0) &= -9 \lim_{s\to 0}  \zeta(s+1)\, L(s-1) = -\frac{27 \sqrt3 L(2)}{8\pi^2} \\
\bar f_2(0) &= 9 \lim_{s\to 0} \zeta(s+2)\, L(s) = \frac{\pi^2}{2}
\end{align}
where we used $L(0)=1/3$. This reproduces the expected value $(T,T_D)=(\I\,\cV,0)$ at $\tau=0$.

\subsection{Orbifold point\label{sec_oexp}}
Near the orbifold point, integrating the Taylor expansion of $C$ term by term we get 
\begin{align}
\label{TTDorb}
T &= -\frac12 +C(\tau_o) (\tau-\tau_o)  + \frac12 C'(\tau_o)(\tau-\tau_o)^2 + \dots \\
T_D&= \frac13 + \frac12 C(\tau_o) (\tau^2-\tau_o^2)+\frac16 C'(\tau_o) (\tau-\tau_o)^2(2\tau+\tau_o)  + \dots
\end{align}
We observe numerically that
\begin{align}
C(\tau_o) &= \frac{\Gamma(\frac13)^3}{\Gamma(\frac23)^6}\simeq 3.1185\ ,\quad 
C'(\tau_o) = -\I \frac{729}{4\pi^{\frac92}} \Gamma(\tfrac13)^3 \Gamma(\tfrac76)^3\simeq 16.2043\, \I \ ,\quad \\
J_3'''(\tau_o) &= \frac{18\sqrt3 \Gamma(\frac13)^9}{\I \Gamma(\frac23)^9}\simeq -14474\, \I 
\end{align}
with the same values for $\tau=\tau_{o'}$.
The flat coordinate $w=1/z$ is obtained by expanding \eqref{J3z},
\be
w= - J_3 - 15 = \frac{3\sqrt3 \Gamma(\frac13)^9}{\I\, \Gamma(\frac23)^9} (\tau-\tau_o)^3 +\cO( (\tau-\tau_o)^6)  
\ee

\subsection{Monodromies\label{sec_mono}}

The monodromies around the three singular points can be computed by using the transformation property of the Eichler integral,
\be
\label{EichlerTrans}
 \begin{pmatrix} T+\frac12 \\ T_D-\frac13  \end{pmatrix}\left( \frac{a\tau+b}{c\tau+d} \right)
 =
 \begin{pmatrix} d & c \\ b & a \end{pmatrix}  \begin{pmatrix} T+\frac12 \\ T_D-\frac13  \end{pmatrix}\left( \tau \right)
+ \int_{\frac{d\tau_o-b}{a-c\tau_o}}^{\tau_o} 
 \begin{pmatrix} c u+d \\a u+b \end{pmatrix} \, 
\, C(u) \de u
 \ee
 whenever $ad-bc=1$, $c=0 \mod 3$. The last term is independent of $\tau$, and is a degree 1 period polynomial for the weight 3 modular form $C$. 
It follows from \eqref{Eichler2} and \eqref{EichlerTrans}  that  the period vector $\Pi=(1,T,T_D)$
transforms as  
\be
\label{monodromyact}
\Pi^t \mapsto M \Pi^t \, \qquad M=\begin{pmatrix} 1 & 0 & 0 \\
m & d & c \\
m_D & b & a
\end{pmatrix}\ ,\quad
\ee
where 
\be
\label{mmDperiod}
\begin{pmatrix}
m \\
m_D
\end{pmatrix}
= \begin{pmatrix} \frac12(d-1)-\frac{c}{3} \\ \frac13(1-a)+\frac{b}{2} 
\end{pmatrix}
+ \int_{\frac{d\tau_o-b}{a-c\tau_o}}^{\tau_o} 
 \begin{pmatrix} c u+d \\a u+b \end{pmatrix} \, 
\, C(u) \de u
\ee
Consequently, the coordinates $(s,w)$ defined in \eqref{defsw} transform
as
\be
\label{sqtrans}
s\mapsto \frac{as+b}{cs+d}, \quad w\mapsto \frac{w}{cs+d} + \frac{as+b}{cs+d} m - m_D
\ee
Under the $\Gamma_1(3)$ transformations 
\be
 \tau\mapsto\tau+1\ ,\quad \tau\mapsto -\frac{\tau}{3\tau-1}\ ,\quad \tau\mapsto -\frac{\tau+1}{3\tau+2}
\ee
corresponding to monodromies around $\I\infty$, $0$ and $\tau_o$, we 
 find, in agreement with \cite{Douglas:2000qw}\footnote{This also agrees with
\cite{Diaconescu:1999dt} upon  conjugating the matrices in  
by $\Bigl(\begin{smallmatrix} 1 &0 &0 \\
\!\! 1/2 & 1 &0  \\ 0 & 0 & 1 \end{smallmatrix}\Bigr)$, due to a shift $T\to T+\frac12$.} 
\be
\label{monDG}
 M_{\rm LV}=\begin{pmatrix}
1 & 0 & 0 \\
1 & 1 & 0 \\
\frac12 & 1 & 1 
\end{pmatrix}
 \, \quad
M_{C}=\begin{pmatrix}
1 & 0 & 0 \\
0 & 1 & -3 \\
0 & 0 & 1 
\end{pmatrix}
 \, \quad
 M_{o} =\begin{pmatrix}
1 & 0 & 0 \\
-\frac12 & -2 & -3 \\
\frac12 & 1 & 1 
\end{pmatrix}
\ee
satisfying $M_o=M_C M_{\rm LV}, M_o^3=\mathds{1}$.

\medskip

In the remainder of this section, we prove the identity \eqref{eq224} and the statement
below \eqref{eqreZ} by studying the action of the monodromy $M_C$. First, we observe that 
$M_C$ maps $\tau_o$ to $\tau_{o'}=\tau_o+1$, while preserving $T_D$. Thus, \eqref{Eichler2}
implies the second equation in \eqref{eq224}, namely
\be
 \int_{\tau_{o}}^{\tau_{o'}} u \, C(u) \de u=0
 \ee
As a result, $T_D(\tau_{o'})=T_D(\tau_{o})=1/3$.
Similarly, $M_C$ maps $T$ to $T-3T_D$, therefore $T(\tau_{o})=T(\tau_{o'})-3T_D(\tau_{o'})$,
which implies the first equation in \eqref{eq224},
\be
 \int_{\tau_{o}}^{\tau_{o'}} C(u) \de u= T(\tau_{o'}) - T(\tau_{o}) = 3T_D(\tau_{o'}) = 1
 \ee
Secondly, since $M_C$ sends $\tau=\tau_1+\I\tau_2$ to 
\be - \frac{\tau}{3 \tau-1} 
=-\frac{3\tau_1^2+3\tau_2^2-\tau_1}{3(3\tau_1^2+3\tau_2^2-2\tau_1)+1}
+\I \frac{\tau_2}{3(3\tau_1^2+3\tau_2^2-2\tau_1)+1} 
\ee
we see that on the half circle $\cC(-1/3,1/3)$ defined by $3\tau_1^2+3\tau_2^2-2\tau_1=0$,
the action of $M_C$ restricts to $\tau\mapsto -\bar\tau$. 
Since $T_D$ is invariant under $M_C$, it follows that 
for any $\tau\in \cC(-1/3,1/3)$
\be
T_D(\tau)=T_D(-\bar{\tau})
\ee
Since $\overline{T_D(\tau)} = T_D(-\bar\tau)$ by  \eqref{eqreZ}, it follows that
$\Im T_D=0$ on the half circle $\cC(-1/3,1/3)$. $\square$

\section{Massless objects at conifold points \label{sec_massless}}

The structure sheaf $\mathcal{O}$ of $\IP^2$ is a spherical object in the derived category  $D^b(\Coh_c K_{\IP^2})$, whose central charge vanishes at the conifold point
$\tau=0$. The action of the group 
$\Gamma_1(3)$ on the $\tau$ upper half-plane lifts to an action by auto-equivalences on the derived category of $K_{\IP^2}$. Thus, for every 
$g \in \Gamma_1(3)$, $E=g(\mathcal{O})$ is a spherical object whose central charge vanishes at the conifold point $\tau=g(0)=p/q$ with $q\neq 0\mod 3$. In this section we compute the object $E$
for low values of $p,q$. The results are summarized in Table \ref{Conifoldtab} on page \pageref{Conifoldtab}.

For example, the element 
$V=(\begin{smallmatrix} 1 & 0 \\ -3 & 1 \end{smallmatrix})$
(equal to the monodromy around the conifold point $\tau=0$) acts on 
$\tau$ via $V \colon \tau \mapsto -\frac{\tau}{3\tau-1}$, and 
 on the derived category via the spherical twist $\ST_\mathcal{O}$ around the spherical object 
 $\mathcal{O}$. The latter is given by the exact triangle (see \eqref{STS} and \cite[\S 9.1]{Bayer:2009brq})
\be
\mathrm{Hom}^\bullet_{K_{\IP^2}}(\mathcal{O},E)\otimes \mathcal{O} \xrightarrow{\rm ev} E\to \ST_\mathcal{O}( E )  \xrightarrow{+1}
\ee
Similarly for $U=V^{-1}$ the action on the derived category is the inverse of the spherical twist $\ST_\mathcal{O}$ around the spherical object $\mathcal{O}$, which is given, following
 Proposition 2.10 of \cite{seidel2001braid}, by the exact triangle
\be
\ST^{-1}_\mathcal{O}(E) \to E\xrightarrow{\rm coev} \Hom(\Hom^\bullet_{K_{\IP^2}}(E,\mathcal{O}),\mathcal{O})  \xrightarrow{+1}
\ee

At the point $\tau=-1/2$, obtained by acting with $VT$ on $\tau=0$, the spherical object becoming massless is $\ST_{\mathcal{O}}(\mathcal{O}(1))$. 
We have $\Hom^0_{\IP^2}(\mathcal{O},\mathcal{O}(1))=\IC^3$, 
$\Hom^k_{\IP^2}(\mathcal{O},\mathcal{O}(1))=0$ for $k>0$, and 
$\Hom^k_{\IP^2}(\mathcal{O}(1), \mathcal{O})=0$ for $k \geq 0$. 
Hence, using \eqref{lemma46}, 
\be \Hom^0_{K_{\IP^2}}(\mathcal{O},\mathcal{O}(1))=\Hom^0_{\IP^2}(\mathcal{O}, \mathcal{O}(1))=\IC^3\ ,\quad 
\Hom^{k\neq 0}_{K_{\IP^2}}(\mathcal{O},\mathcal{O}(1))=0
\ee
Finally, from the Euler exact sequence 
\be
0 \rightarrow \mathcal{O} \rightarrow \mathcal{O}(1)^{\oplus 3}
\rightarrow T \rightarrow 0
\ee
where $T$ is the tangent bundle of $\IP^2$,
we obtain the exact sequence 
\begin{equation}\label{eq_seq}
0 \rightarrow \Omega(1) \rightarrow \mathcal{O}^{\oplus 3} \rightarrow \mathcal{O}(1) \rightarrow 0
\end{equation}
and so an exact triangle 
$\mathcal{O}^{\oplus 3} \rightarrow \mathcal{O}(1) \rightarrow \Omega(1)[1] \xrightarrow{+1}$. 
Hence the massless object at $\tau=-\frac12$ is 
\be\label{STO-O1}
\ST_{\mathcal{O}}(\mathcal{O}(1))=\Omega(1)[1]
\ee

For the point $\tau=4/5$, obtained by acting  by $TV$ on $\tau=-1/2$, the spherical object becoming massless is $\ST_{\mathcal{O}}(\Omega(1)[1])(1)$. 
Using \eqref{eq_seq} and the Bott vanishing theorem, we obtain
$\Hom^0_{\IP^2}(\Omega(1), \mathcal{O})=\IC^3$. 
On the other hand, using the Bott vanishing theorem and Riemann--Roch formula, we find that 
$\Hom^k_{\IP^2}(\Omega(1),\mathcal{O})=0$ for all 
$k \neq 0$, and 
$\Hom^k_{\IP^2}(\mathcal{O}, \Omega(1))=0$ for all $k$. 
Hence, by \eqref{lemma46}, 
\begin{gather}
\Hom^{\bullet}_{K_{\IP^2}}(\mathcal{O},\Omega(1)) 
=\IC^3[-3]
\\
\Hom^{3}_{K_{\IP^2}}(\mathcal{O},\Omega(1))=\IC^3\ , \quad
 \Hom^{k\neq 3 }_{K_{\IP^2}}(\mathcal{O},\Omega(1)) 
\end{gather}
It follows that the exact triangle defining 
$E=\ST_\mathcal{O}(\Omega(1)[1])(1)$ is of the form 
\be
\label{massless45}
 \mathcal{O}^{\oplus 3}(1)[-2] \rightarrow \Omega(2)[1] 
\rightarrow E  \xrightarrow{+1}
\ee
Note in particular that $E$ has rank $-5$ and degree $-4$, as expected. 

It is  important to remark that  the map $\mathcal{O}^{\oplus 3}(1)[-2] \rightarrow \Omega(2)[1]$ in the derived
category of $K_{\IP^2}$ does not come from maps in the derived category of $\IP^2$: indeed all extension groups $\Hom^3_{\IP^2}$ between sheaves are zero since 
$\IP^2$ is of dimension $2$. As mentioned at the beginning of \S\ref{sec_dkp2}, 
in general, an object $E$ in the derived category of coherent sheaves 
on $K_{\IP^2}$ supported set-theoretically on the zero section  can be viewed as a pair $(F,\phi)$ with $F$ an object in the  derived category of $\IP^2$ and  $\phi\colon F \rightarrow F \otimes K_{\IP^2}$ 
a nilpotent Higgs field. For the object $E$ defined by \eqref{massless45},  
since every map $\mathcal{O}^{\oplus 3}(1)[-2] \rightarrow \Omega(2)[1]$ in the derived category 
of $\IP^2$ is zero, the underlying object $F$ 
is simply the direct sum 
\be
F=\Omega(2)[1] \oplus \mathcal{O}^{\oplus 3}(1)[-1]
 \ee
but there is a non-trivial nilpotent Higgs field on $F$ coming from a map 
$ \mathcal{O}^{\oplus 3}(1)[-1] \rightarrow \Omega(-1)[1]$
in $D^b(\Coh \IP^2)$.
Note that by Serre duality, we have 
\be
 \Hom^2_{\IP^2}(\mathcal{O}(1), \Omega(-1))
 =\Hom_{\IP^2}(\Omega(2),\mathcal{O}(1))^\vee
 \ee
which is indeed non-zero.

The same type of analysis provides the exact triangles giving the following
massless objects at other conifold points.
\begin{itemize}
    \item $\tau=1/5$, $g=U^2T^{-1}$. 
    Because of the relations $(VT)^3=1$ and $U=V^{-1}$,
    \be\label{UTinvO}
    UT^{-1}(\mathcal{O})=TVTV(\mathcal{O})=TVT(\mathcal{O}[-2])=\Omega(2)[-1]
    \ee
    In the quiver associated to the exceptional collection $(\mathcal{O},\Omega(2)[-1],\mathcal{O}(1)[-2])$ there are 3 relations and no arrows from the second to the first node, so we deduce successively
    \begin{align}
      \Hom^\bullet_{K_{\IP^2}}(\Omega(2)[-1],\mathcal{O}) & = \IC^3[-2] \\
      \Hom( \Hom^\bullet_{K_{\IP^2}}(\Omega(2)[-1],\mathcal{O}),\mathcal{O}) & = \mathcal{O}^{\oplus 3}[2]
    \end{align}
    The object $E=U(\Omega(2)[-1])$ is then given by the exact triangle
    \begin{align}
        E\to\Omega(2)[-1]\to\mathcal{O}^{\oplus 3}[2]  \xrightarrow{+1}
    \end{align}

    \item $\tau=1/4$, $g=UT$. One has $\Hom(\Hom^\bullet_{K_{\IP^2}}(\mathcal{O}(1),\mathcal{O}),\mathcal{O}) = \Hom(\IC^3[-3],\mathcal{O}) = \cO^{\oplus 3}[3]$,
      thus the object $E=U(\mathcal{O}(1))$ is given by the exact triangle
      \be
      E\to \mathcal{O}(1)\to\mathcal{O}^{\oplus 3}[3] \xrightarrow{+1}
      \ee

    \item $\tau=2/5$, $g=UT^{-2}$. One has $\Hom(\Hom^\bullet_{K_{\IP^2}}(\mathcal{O}(-2),\mathcal{O}),\mathcal{O}) = \Hom(\IC^6,\mathcal{O}) = \mathcal{O}^{\oplus 6}$,
      thus $E=U(\mathcal{O}(-2))$ is given by the exact triangle
      \be
      E\to\mathcal{O}(-2)\to\mathcal{O}^{\oplus 6} \xrightarrow{+1}
      \ee


    \item $\tau=1/2$, $g=TVT$. This is a translate of~\eqref{STO-O1}, namely $E=\Omega(2)[1]$.


    \item $\tau=3/5$, $g=TVT^2$. One has $\Hom^\bullet_{K_{\IP^2}}(\mathcal{O},\mathcal{O}(2))=\IC^6$, hence $E=TV(\mathcal{O}(2))$ is given by the exact triangle
    \be
    \mathcal{O}(1)^{\oplus 6}\to\mathcal{O}(3)\to E \xrightarrow{+1}
    \ee

    \item $\tau=3/4$, $g=TVT^{-1}$. One has $\Hom^\bullet_{K_{\IP^2}}(\mathcal{O},\mathcal{O}(-1))=\IC^3[-3]$, hence $E=TV(\mathcal{O}(-1))$ is given by the exact triangle
    \begin{align}
        \mathcal{O}(1)^{\oplus 3}[-3]\to\mathcal{O}\to E \xrightarrow{+1}
    \end{align}


    \item $\tau=1$, $g=T$.  Trivially, $E=\mathcal{O}(1)$.
    
\end{itemize}

These results are summarized in Table~\ref{Conifoldtab} on page~\pageref{Conifoldtab}.

\section{Endpoints of attractor flows for local \protect\texorpdfstring{$\IP^2$}{P2}\label{app:bounds}}

In this section we derive  several bounds on the behaviour of the attractor flow on the slice of $\Pi$-stability conditions in the case $\CY=K_{\IP^2}$, which we used in \S\ref{sec_piflow}.
As explained there, the central charge $Z_\tau(\gamma)$ is a holomorphic function of $\tau\in\IH$ with no critical point, so that the attractor flow can either end at a marginal stability wall, end at a conifold point, end at a large volume point, or continue indefinitely.
In \S\ref{sec_growth} and \S\ref{sec_must_end} we rule out the last two cases in turn.
In  \S\ref{sec_init}, \S\ref{sec_crit} and \S\ref{sec_start}  we determine conditions for an attractor flow to end or start at a conifold point, which leads to the equivalent definitions of the  critical phase~$\psi$  in Definition~\ref{def:critical}.

\subsection{Attractor flows avoid large volume points\label{sec_growth}}

Let us assume that an attractor flow associated to some $\gamma\in\Gamma$ ends at a large volume point, which we take without loss of generality to be $\tau=\I\infty$.
Such a flow $\mu\mapsto \tau(\mu)\in\IH$ would at late times lie in the fundamental domain $\cF_C$ centered around the conifold point $\tau=0$, or its translates centered around $\tau=n$.
By applying a transformation $g(\mu)\in\Gamma_1(3)$ (which depends discretely on the flow parameter~$\mu$), we can map $\tau(\mu)$ into $\tilde\tau = g\cdot\tau\in\cF_C$, at the cost of also mapping $\gamma$ to $\tilde\gamma=g\cdot\gamma$.
The central charge is unchanged, and in particular the mass $|Z_{\tilde\tau(\mu)}(\tilde\gamma(\mu))| = |Z_{\tau(\mu)}(\gamma)|$ is monotonically decreasing, hence lower than its initial value.
We now exclude such attractor flows by proving that this upper bound on $|Z_{\tilde\tau}(\tilde\gamma)|$ would translate into an upper bound on $\Im\tau$, regardless of the charge vector $\tilde\gamma$ (provided the corresponding DT invariant is nonzero).
Passing to the contrapositive statement and dropping tildes,\footnote{Importantly, images of $\gamma\in\Gamma\setminus\IZ\delta$ under $\Gamma_1(3)$ are in $\Gamma\setminus\IZ\delta$ because $\delta$ is invariant.} we shall prove the following statement (recall that 
$\delta$ denotes the D0-brane charge vector).

\begin{proposition}\label{prop_bound_2}
For every $M>0$, there exists $D>0$ such that for every $\tau \in \mathcal{F}_C$ with $\Im(\tau)>D$, and every $\gamma \in \Gamma \setminus \IZ \delta$ with $\Omega_\tau(\gamma) \neq 0$, one has $|Z_\tau(\gamma)|>M$.
\end{proposition}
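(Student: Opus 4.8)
The plan is to work in the large-volume region where the central charge is well approximated by the quadratic form $Z^{\rm LV}_{(s,t)}(\gamma) = -\frac{r}{2}(s+\I t)^2 + d(s+\I t) - \ch_2$, and to combine this with the support property of Bridgeland stability conditions together with the Drézet--Le Potier bound that controls which charges can support semistable objects. Concretely, I would first record that for $\tau\in\cF_C$ with $\Im\tau$ large, the exact central charge \eqref{defZ} agrees with $Z^{\rm LV}_{(s,t)}$ up to a $\GLt$ transformation and exponentially small corrections (this is the content of \eqref{ZLVq} and the discussion below it), so up to a bounded distortion of the metric it suffices to bound $|Z^{\rm LV}_{(s,t)}(\gamma)|$ from below when $t$ is large. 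Here $(s,t)$ range over the image of $\{\tau\in\cF_C:\Im\tau>D\}$, which is a region where $t$ is bounded below by a large constant and $s$ lies in a bounded interval (roughly $|s|\le 1$, by the geometry of $\cF_C$).

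\textbf{Key steps.}
First I would treat the case $r=0$. Then $Z^{\rm LV}_{(s,t)}(\gamma) = d(s+\I t) - \ch_2$, so $|Z^{\rm LV}_{(s,t)}(\gamma)| \ge |d|\, t$; since $\Omega_\tau(\gamma)\neq 0$ forces $d\neq 0$ (pure-dimension-one torsion sheaves have $d>0$, up to sign), we get $|Z^{\rm LV}_{(s,t)}(\gamma)| \ge t$, which is large. Second, the case $r\neq 0$: write $Z^{\rm LV}_{(s,t)}(\gamma) = -\frac{r}{2}\bigl((s-\mu)+\I t\bigr)^2 - r\bigl(\ch_2/r - \mu^2/2\bigr) = -\frac{r}{2}\bigl((s-\mu+\I t)^2 + 2\Delta\bigr)$ where $\mu = d/r$ and $\Delta = \Delta(\gamma)$. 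Taking imaginary parts, $\Im Z^{\rm LV}_{(s,t)}(\gamma) = -r t (s-\mu)$, and $|Z^{\rm LV}_{(s,t)}(\gamma)|^2 = \frac{r^2}{4}\bigl((s-\mu)^2 - t^2 + 2\Delta\bigr)^2 + r^2 t^2 (s-\mu)^2$. Since $\Omega_\tau(\gamma)\neq 0$ requires the support property, hence (via the Drézet--Le Potier bound \eqref{MGiesekerNotEmpty}, which also holds for all semistable objects) $\Delta \ge \deltaLP(\mu) \ge -\tfrac14$ with a universal lower bound, and in particular $\Delta$ is bounded below by some fixed $\Delta_0$. I would now distinguish whether $|s-\mu|$ is small or not: if $|s-\mu|\ge 1$, say, then $|Z^{\rm LV}_{(s,t)}(\gamma)| \ge |r|\,t\,|s-\mu| \ge t$, large, and since $r$ is an integer $|r|\ge 1$. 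If $|s-\mu|<1$, then since $s$ itself is bounded, $\mu$ is bounded, hence $r$ and $d$ are both bounded in terms of $s$-range only if $|r|$ is bounded — but $|r|$ could a priori be large. In that sub-case, I use $|Z^{\rm LV}_{(s,t)}(\gamma)|^2 \ge \frac{r^2}{4}\bigl(t^2 - (s-\mu)^2 - 2\Delta\bigr)^2$; once $t^2 > 2 + 2\max(\Delta,0) + \dots$, but $\Delta$ can also be large. So I would instead argue: if $\Delta$ is large then $|Z^{\rm LV}_{(s,t)}(\gamma)|^2 \ge \frac{r^2}{4}(2\Delta - t^2 - 1)^2$ is large once $2\Delta > t^2 + 1 + 2M/|r|$; and if $\Delta \le \Delta_1$ for a fixed $\Delta_1$ (chosen so that $2\Delta_1 < t^2$ for $t$ large), then $t^2 - (s-\mu)^2 - 2\Delta \ge t^2 - 1 - 2\Delta_1 \ge \tfrac12 t^2$, giving $|Z^{\rm LV}_{(s,t)}(\gamma)| \ge \tfrac14 |r| t^2 \ge \tfrac14 t^2$, large. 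In all cases, for $t$ (hence $\Im\tau$, hence $D$) chosen large enough depending only on $M$, we obtain $|Z^{\rm LV}_{(s,t)}(\gamma)| > M$; undoing the $\GLt$ and the exponentially small corrections costs only a bounded multiplicative factor, absorbed by enlarging $D$.

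\textbf{Main obstacle.}
The delicate point is the uniformity in $\gamma$: the charge can be arbitrarily large, so the argument must not use any a priori bound on $r$, $d$, or $\ch_2$ separately. The case analysis above is designed precisely to handle this — the only input about the charge that is genuinely used is (i) integrality, so $|r|\ge 1$ when $r\neq 0$, (ii) the universal lower bound $\Delta(\gamma) \ge \deltaLP(\mu(\gamma))\ge -\tfrac14$ coming from the support property, valid for every $\tau$-semistable object, and (iii) the fact that the flow stays in the fundamental domain, so $s$ stays bounded. The one subtlety I would check carefully is that the support property / Drézet--Le Potier bound indeed applies at the exact central charge and not merely at the strict large-volume limit, but this is exactly what is asserted in \S\ref{sec_Pistab} (the region \eqref{geostabsq} and the statement that the $\Pi$-stability heart coincides with the standard tilt in $\cF_C$ and its translates). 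A secondary bookkeeping point is controlling the $\GLt$ matrix $(\begin{smallmatrix}1 & (s-\Re T)/\Im T \\ 0 & t/\Im T\end{smallmatrix})$ from \eqref{ZLVq} uniformly in $\tau\in\cF_C$ with $\Im\tau>D$: its entries are bounded and bounded away from degeneracy as $\Im\tau\to\infty$ since $\Im T\to\infty$ linearly in $\Im\tau$ and $t^2 = 2w - s^2$ with $w\sim \Im\tau^2$, so the distortion factor is bounded; I would state this and move on rather than compute it explicitly.
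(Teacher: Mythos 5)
Your overall strategy (pass to the large volume central charge for $\tau\in\cF_C$ with $\Im\tau$ large, then use a positivity constraint on the discriminant of classes supporting semistable objects) is the same as the paper's, and your idea of using the exact $\GLt$ conjugation on $\IH^{\rm LV}$ to compare $|Z_\tau(\gamma)|$ with $|Z^{\rm LV}_{(s,t)}(\gamma)|$ up to a uniformly bounded multiplicative factor is legitimate and in fact slightly cleaner than the paper's additive estimate $Z_\tau(\gamma)=Z^{\rm LV}_{(s,t)}(\gamma)+|r|\,O(1)+|d|\,O(1)$, which forces the paper to prove a lower bound of the specific form $|Z^{\rm LV}_{(s,t)}(\gamma)|^2\geq r^2Y^2+\tfrac{d^2}{2}Y$ (its Lemma) so as to dominate errors growing linearly in $|r|$ and $|d|$. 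However, there is a genuine gap at the crucial step: the uniform lower bound on $\Delta(\gamma)$ for every class with $\Omega_\tau(\gamma)\neq 0$. Your justification via ``the support property'' plus the remark around \eqref{geostabsq} does not deliver it. The Dr\'ezet--Le Potier inequality invoked in \S\ref{sec_Pistab} concerns (slope-)semistable \emph{sheaves} and is used there only to check $\Re Z>0$ on the tilted heart; Bridgeland-semistable objects at $\sigma_\tau$ are in general two-term complexes (possibly of negative rank), and nothing in that discussion constrains their classes. The abstract support property only asserts the existence of \emph{some} quadratic form depending on $\sigma$, with no uniformity as $\Im\tau\to\infty$, so it cannot by itself rule out classes with $\mu(\gamma)$ close to $s$ and $\Delta(\gamma)\approx -t^2/2$, for which $|Z^{\rm LV}_{(s,t)}(\gamma)|$ really is small and the proposition would fail. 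This is exactly the hole the paper closes by citing \cite{li2019birational} (Corollary 1.33/1.30): at any geometric stability condition, $\Omega_\tau(\gamma)\neq0$ implies $\gamma=n\gamma'$ with $\gamma'$ the class of a Gieseker-semistable sheaf, hence (after $\gamma\mapsto-\gamma$ if needed) $r\geq0$ and $d^2-2r\ch_2\geq 0$, i.e.\ $\Delta\geq0$ when $r\neq0$. You need this result, or an equivalent Bogomolov-type inequality for tilt/Bridgeland-semistable objects, stated and cited explicitly; it is the one substantive external input of the whole proof.

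There are also two technical slips in your estimate for $r\neq0$. The completed square has the wrong sign: since $\Delta=\tfrac12\mu^2-\ch_2/r$, one has $Z^{\rm LV}_{(s,t)}(\gamma)=-\tfrac{r}{2}\bigl[\bigl((s-\mu)+\I t\bigr)^2-2\Delta\bigr]$, so $|Z^{\rm LV}_{(s,t)}(\gamma)|^2=\tfrac{r^2}{4}\bigl(t^2+2\Delta-(s-\mu)^2\bigr)^2+r^2t^2(s-\mu)^2$, with $+2\Delta$ and not $-2\Delta$ in the first bracket. With the correct sign and $\Delta\geq 0$, your entire case analysis on the size of $\Delta$ is unnecessary: if $|s-\mu|\geq1$ the imaginary part gives $|Z^{\rm LV}|\geq t$, and if $|s-\mu|<1$ the real part gives $|Z^{\rm LV}|\geq\tfrac{|r|}{2}(t^2-1)$. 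As written, your two cases ($2\Delta>t^2+1+2M/|r|$ versus $\Delta\leq\Delta_1$ with $2\Delta_1<t^2$) do not cover the intermediate range of $\Delta$, so even granting the discriminant bound the argument is incomplete in the form you gave; both defects disappear once the sign is corrected. Finally, the bound you quote, $\deltaLP(\mu)\geq-\tfrac14$, is not the relevant one: what one actually gets from Gieseker semistability (including exceptional sheaves) is simply $\Delta\geq0$.
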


The proof relies on approximating the exact central charge $Z$ by the large volume central charge~\eqref{defZLV} $Z_{(s,t)}^{\rm LV}(\gamma) = - \frac{r}2  (s+\I t)^2+ d (s+\I t) -\ch_2$, and we begin by proving bounds on it.  We denote
\be
Y = \frac12(t^2-s^2) 
\ee

\begin{lemma}\label{lem_bound_1}
For every $(s,t)$ with $Y \geq 0$, and every $\gamma \in \Gamma$ such that $r\geq 0$ and $d^2-2r \ch_2\geq 0$, one has
\begin{equation}\label{eq_bound_1}
  |Z_{(s,t)}^{\rm LV}(\gamma)|^2 \geq r^2 Y^2+\frac{d^2}{2}Y 
\end{equation}
\end{lemma}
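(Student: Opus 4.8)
The plan is to write $Z^{\rm LV}_{(s,t)}(\gamma) = -\frac r2 (s+\I t)^2 + d(s+\I t) - \ch_2$ in terms of the variables $s$ and $Y = \frac12(t^2-s^2)$, so that the real and imaginary parts become polynomial in $s$ with coefficients depending on $Y$ and the charge data. Concretely I would compute
\begin{equation}
\Re Z^{\rm LV}_{(s,t)}(\gamma) = -rY + ds - \ch_2, \qquad
\Im Z^{\rm LV}_{(s,t)}(\gamma) = t(d - rs),
\end{equation}
and note $t^2 = 2Y + s^2$. The modulus squared is then $|Z^{\rm LV}|^2 = (-rY+ds-\ch_2)^2 + (2Y+s^2)(d-rs)^2$. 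I would treat this as a function of the real variable $s$ (with $Y\ge 0$ and the charge data fixed, satisfying $r\ge 0$, $d^2-2r\ch_2\ge 0$) and seek its minimum over all $s$, aiming to show the minimum is at least $r^2Y^2 + \frac{d^2}{2}Y$.

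First I would handle the degenerate case $r = 0$ separately: there $|Z^{\rm LV}|^2 = (ds-\ch_2)^2 + d^2(2Y+s^2) \ge d^2(2Y+s^2) \ge 2d^2 Y \ge \frac{d^2}{2}Y = r^2Y^2 + \frac{d^2}{2}Y$, so the bound holds (in fact with room to spare). For $r>0$ (hence $r\ge 1$ since $r\in\IZ$, though I only need $r>0$), I would complete the square or differentiate in $s$. The cleanest route is probably to substitute $s = \mu + \sigma$ where $\mu = d/r$ (the slope), since then $d - rs = -r\sigma$ and $\ch_2 = r(\frac12\mu^2 - \Delta)$ with $\Delta = \Delta_\gamma \ge 0$ by the hypothesis $d^2 - 2r\ch_2 \ge 0$. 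After this shift, $\Re Z^{\rm LV} = -rY + r\mu\sigma + \frac r2\mu^2\cdot(\text{...})$; carefully, $-rY + ds - \ch_2 = -rY + d(\mu+\sigma) - r(\frac12\mu^2-\Delta) = -rY + r\mu^2 + d\sigma - \frac r2\mu^2 + r\Delta = -rY + \frac r2\mu^2 + d\sigma + r\Delta$. Hmm, this still has a $\mu$-dependence, so the variable $2Y+s^2 = 2Y + (\mu+\sigma)^2$ is the cleaner object. Either way, $|Z^{\rm LV}|^2 = (rY - r\Delta - d\sigma - \tfrac r2\mu^2)^2 + (2Y + (\mu+\sigma)^2) r^2\sigma^2$ wait — I should recheck signs, but the structure is: a perfect square plus a manifestly nonnegative term $r^2\sigma^2(2Y + s^2) \ge 2r^2 Y \sigma^2$. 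The main obstacle is the algebra of minimizing over $\sigma$ while keeping track of the cross terms; I expect the minimum of the first square (as a function of $\sigma$) to be balanced against the growth $2r^2Y\sigma^2$ of the second term, and that the worst case occurs at $\sigma$ of order $\sqrt Y$ or at $\sigma=0$, in either of which the claimed bound $r^2Y^2 + \frac{d^2}{2}Y$ drops out after using $\Delta\ge 0$.

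\textbf{Main difficulty.} The genuinely delicate part is the $r>0$ minimization: one must show the quartic-in-$\sigma$ expression $|Z^{\rm LV}_{(s,t)}(\gamma)|^2$ is bounded below by $r^2Y^2 + \frac{d^2}{2}Y$ uniformly, and the $\frac{d^2}{2}Y$ term (which is the subleading one, important only when $Y$ is not large) requires using the hypothesis $\Delta_\gamma\ge 0$ in an essential way rather than just discarding the real part. I would likely split into the regime where $\sigma^2 \le Y$ (bound by the real-part square after setting it to its worst value, plus $2r^2Y\sigma^2$, and optimize) and $\sigma^2 \ge Y$ (where the imaginary-part term $2r^2Y\sigma^2 \ge 2r^2Y^2 \ge r^2Y^2 + \tfrac{d^2}{2}Y$ once $Y$ is large enough, and one checks the small-$Y$ case by hand using $d^2 \le 2r\ch_2 + d^2$... ). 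Expressing everything via $(\mu,\Delta,Y,\sigma)$ and using $\ch_2/r = \tfrac12\mu^2-\Delta$ so that $\frac{d^2}{2}Y = \frac{r^2\mu^2}{2}Y$ is the natural bookkeeping; the inequality should reduce to something like $(Y-\Delta)^2 + (\text{imaginary part stuff}) \ge Y^2 + \mu^2 Y / 2$ after factoring out $r^2$, which is then elementary given $\Delta\ge 0$. Once \eqref{eq_bound_1} is established, Proposition~\ref{prop_bound_2} follows by comparing $Z_\tau(\gamma)$ with $Z^{\rm LV}_{(s,t)}(\gamma)$ using the Eichler-integral asymptotics $T\sim\tau$, $T_D\sim\frac12\tau^2$ from \S\ref{sec_LVexp} (controlling the error by $O(|\tau|q)$), applying \eqref{eq_bound_1} to $\pm\gamma$ (whichever has $r\ge 0$) after checking that a nonzero DT invariant with $\gamma\notin\IZ\delta$ forces $\Delta_\gamma\ge 0$ via the Bogomolov/Drézet--Le Potier bound, and observing $Y\to\infty$ as $\Im\tau\to\infty$ within $\cF_C$.
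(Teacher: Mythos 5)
Your overall strategy---rewrite $|Z^{\rm LV}|^2$ in terms of $\mu$, $\Delta$, $Y$ and exploit $\Delta\ge 0$---is in the same spirit as the paper's, and your $r=0$ case is fine. But there is a genuine problem in the $r>0$ case, which you only sketch. First, the sign of the real part is wrong: since $-\frac r2(s^2-t^2)=+rY$, one has $\Re Z^{\rm LV}_{(s,t)}(\gamma)=rY+ds-\ch_2$, not $-rY+ds-\ch_2$. This is not cosmetic: after your substitution $s=\mu+\sigma$ the correct expression is $\Re Z^{\rm LV}=r\bigl(Y+\Delta+\tfrac12\mu^2+\mu\sigma\bigr)$, with a \emph{plus} sign in front of $Y$, whereas your version gives $r\bigl(\Delta+\tfrac12\mu^2+\mu\sigma-Y\bigr)$, and the reduction you anticipate, ``$(Y-\Delta)^2+(\text{imaginary part stuff})\ge Y^2+\mu^2Y/2$, elementary given $\Delta\ge0$'', is false: take $r=1$, $d=0$, $\ch_2=-Y$ (so $\Delta=Y\ge0$) and $s=0$; then your transcribed $|Z|^2$ vanishes while the claimed lower bound is $Y^2$. (The true $|Z|^2$ at that point is $4Y^2$, so the lemma itself is safe---the failure is in the mis-copied formula.) Second, even granting the sign fix, you never actually carry out the $r>0$ estimate; you only predict that a split into $\sigma^2\le Y$ and $\sigma^2\ge Y$ ``should'' work, so the main case is left open.

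With the correct sign no case split or minimization is needed, and the paper's proof is three lines: drop the manifestly nonnegative terms $(ds-\ch_2)^2$ and $s^2(d-rs)^2$ to get
\begin{equation*}
|Z^{\rm LV}_{(s,t)}(\gamma)|^2=(rY+ds-\ch_2)^2+(2Y+s^2)(d-rs)^2\ \ge\ r^2Y^2+2rY(ds-\ch_2)+2Y(d-rs)^2 ,
\end{equation*}
then use $\ch_2\le \frac{d^2}{2r}$ (i.e.\ $\Delta\ge0$) together with $Y\ge0$, $r>0$ to bound the right-hand side below by $r^2\bigl(Y^2+Y(2\mu s-\mu^2)+2Y(\mu-s)^2\bigr)$, and finally observe that the coefficient of $Y$ is $\mu^2-2\mu s+2s^2=\tfrac12\mu^2+\tfrac12(\mu-2s)^2\ge \tfrac12\mu^2$, which is \eqref{eq_bound_1}. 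In your shifted variable this last step is exactly $\mu^2+2\mu\sigma+2\sigma^2\ge\mu^2/2$; so once the sign of the $rY$ term is corrected, your approach collapses to the paper's one-line completion of the square, and the delicate balancing you were bracing for never arises.
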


\begin{proof}
We have
\be
\begin{aligned}
|Z_{(s,t)}^{\rm LV}(\gamma)|^2
& = (r Y + d s -\ch_2)^2 + (2Y+s^2) (d - rs)^2 \\
& \geq r^2 Y^2+2rY(ds-\ch_2) +2Y(d-rs)^2 
\end{aligned}
\ee
If $r=0$, this gives $|Z_{(s,t)}^{\rm LV}(\gamma)|^2 \geq 2d^2Y$ and so in particular \eqref{eq_bound_1}. If $r>0$, we rephrase the inequality in terms of the slope $\mu=d/r$ and use the assumption $-\ch_2/r \geq -\mu^2/2$ to obtain
\be
|Z_{(s,t)}^{\rm LV}(\gamma)|^2 \geq r^2 \bigl( Y^2 + Y (2\mu s - \mu^2) + 2Y(\mu-s)^2\bigr) 
\ee
The coefficient of~$Y$ is $\mu^2-2s\mu+2s^2 \geq \mu^2/2$, which yields \eqref{eq_bound_1}.
\end{proof}

\begin{proof}[Proof of Proposition~\ref{prop_bound_2}]
We write $\tau=\tau_1+i\tau_2$ with $\tau_1=\Re \tau$ and $\tau_2=\Im \tau$. 
We denote by $\cO(1)$ any function on $\mathcal{F}_C$ which is bounded for $\tau_2$ large enough, uniformly in $r$ and $d$.
For instance, $1/\tau_2=\cO(1)$ and $\tau_1=\cO(1)$, as $-\frac{1}{2} \leq \tau_1\leq \frac{1}{2}$ on $\cF_C$.
According to the large volume expansion of the periods in \S\ref{sec_LVexp}, 
$T=\tau+\cO(1)$ and $T_D=\frac{\tau^2}{2}+\cO(1)$, hence
\be
\begin{aligned}
\Re T & = \cO(1) , & \quad
\Re T_D & = - \tau_2^2/2 + \cO(1)  \\
\Im T & = \tau_2 + \cO(1) , & \quad
\Im T_D & = \tau_1 \tau_2 + \cO(1) = \cO(1) \tau_2 
\end{aligned}
\ee
As a result
\be
\begin{aligned}
s & = \frac{\Im T_D}{\Im T} = \tau_1 + \cO(1)/\tau_2 = \cO(1)  \\
w & = - \Re T_D + s \Re T = \tau_2^2 / 2 + \cO(1)  \\
t & = \sqrt{2w - s^2} = \sqrt{\tau_2^2 + \cO(1)} = \tau_2 + \cO(1)/\tau_2 \\
Y & = w - s^2 = \tau_2^2 / 2 + \cO(1) 
\end{aligned}
\ee
In particular, large enough $\tau_2$, $t$ or~$Y$ are synonymous within $\cF_C$.
Another consequence is
\be
\tau = s + \I t + \cO(1)/\tau_2 , \qquad \tau^2 = (s+\I t)^2 + \cO(1) 
\ee
which implies that the large volume central charge is a good approximation of the central charge in the sense that
\be
\begin{aligned}
Z_\tau(\gamma)
& = -rT_D + dT - \ch_2 \\
& = -\frac{r}{2} (s+\I t)^2 + d (s+\I t) - \ch_2 + |r| \cO(1) + |d| \cO(1) \\
& = Z_{(s,t)}^{\rm LV}(\gamma) + |r| \cO(1) + |d| \cO(1) 
\end{aligned}
\ee

Next, we seek to apply Lemma~\ref{lem_bound_1}.
For large enough~$\tau_2$, the point $\tau \in \mathcal{F}_C$ does not belong to
the lower boundary of $\mathcal{F}_C$, thus $\tau$ defines a geometric stability condition.
Hence, by \cite{li2019birational} (Corollary 1.33 in published version, or Corollary 1.30 in arXiv version), if 
$\Omega_\tau(\gamma) \neq 0$, then 
$\gamma=n\gamma'$ is a multiple $n\in\IZ\setminus\{0\}$ of the class $\gamma'$ of a Gieseker semistable sheaf.
The latter obeys $r'\geq 0$ and $d'^2-2r' \ch'_2 \geq 0$ (see for example \cite[Lemma 3.4]{drezet1985fibres}).
Up to replacing $\gamma$ by $-\gamma$, which does not change the mass $|Z_\tau(\gamma)|$, one can assume that $n>0$, so that $r\geq 0$ and $d^2-2r \ch_2 \geq 0$.
Lemma~\ref{lem_bound_1} thus applies (for large enough $\tau_2$ to ensure $Y\geq 0$): for every
$\gamma \in \Gamma \setminus \IZ \delta$ such that 
$\Omega_\tau(\gamma)\neq 0$, we have
\be
|Z_{(s,t)}^{\rm LV}(\gamma)| \geq \sqrt{r^2 Y^2 + \frac12 d^2 Y} \geq \frac12  |r| Y +\frac12  |d| \sqrt{Y}
\ee
where we did not try to optimize the constants.
As a result,
\be
|Z_\tau(\gamma)|
\geq |Z_{(s,t)}^{\rm LV}(\gamma)| - |r \cO(1)| - |d \cO(1)| \geq\frac13  |r| Y/ +\frac13  |d| \sqrt{Y}
\ee
for large enough~$Y$.
Since we restrict to $\gamma\notin\IZ\delta$, one has $(r,d)\neq (0,0)$, so that we have proven $|Z_\tau(\gamma)|\geq\sqrt{Y}/3$ for large enough~$Y$, or equivalently for large enough~$\tau_2$.
This ends the proof of Proposition~\ref{prop_bound_2}, which confines any attractor flow away from all large volume points.
\end{proof}

\subsection{Attractor flows end at walls or conifold points\label{sec_must_end}}

We are now ready to prove that an attractor flow cannot continue indefinitely.  We denote by $\overline{\IH}=\IH\cup\IR$ the closed upper half plane.

\begin{proposition}\label{prop:init-conifold}
  For a charge vector $\gamma \in \Gamma \setminus \IZ \delta$ and a starting point $\tau(\mu_0)\in\IH$, consider the attractor flow $[\mu_0,\mu_\infty)\ni\mu\mapsto\tau(\mu)$ that is maximally extended subject to the condition $\Omega_{\tau(\mu)}(\gamma)\neq 0$.
  Then the limit $\tau(\mu_\infty) = \lim_{\mu\to\mu_\infty}\tau(\mu)\in\overline{\IH}$ exists and lies either at a conifold point or on a wall of marginal stability of~$\gamma$.
\end{proposition}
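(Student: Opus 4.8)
The plan is to show that the attractor flow $\mu\mapsto\tau(\mu)$ has a well-defined limit in $\overline{\IH}$ as $\mu\to\mu_\infty$, and then to identify where that limit can lie. First I would recall the basic dynamical facts already established: along $\AF(\gamma)$ the mass $|Z_\tau(\gamma)|$ is non-increasing by \eqref{attflow-mass}, the argument $\arg Z_\tau(\gamma)$ is constant (so the whole flow lies inside the active ray $\Ract_\psi(\gamma)$ for the value of $\psi$ fixed by the starting point), and $C(\tau)\neq 0$ on $\IH$ so \eqref{attflowC} has no interior critical point. Since we extend the flow only as long as $\bOm_{\tau}(\gamma)\neq 0$, either the flow hits a wall of marginal stability in finite ``time'' (in which case $\tau(\mu_\infty)$ is that wall-crossing point and we are done), or $\bOm_\tau(\gamma)\neq 0$ persists for all $\mu$ in the maximal interval of existence and we must analyze the asymptotics.

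So assume $\bOm_{\tau(\mu)}(\gamma)\neq 0$ throughout. The key input is Proposition~\ref{prop_bound_2}: there is a compact region of $\IH/\Gamma_1(3)$ — more precisely, a uniform bound $\Im\tau\leq D$ after mapping into $\cF_C$ and its translates — outside of which $|Z_\tau(\gamma')|$ exceeds any prescribed $M$ for every $\gamma'\in\Gamma\setminus\IZ\delta$ with nonzero DT invariant. Because $|Z_{\tau(\mu)}(\gamma)|$ is bounded above by its value at $\mu_0$, the flow, viewed in the quotient (or, after the discrete $\Gamma_1(3)$-unfolding, in the appropriate translates of $\cF_C$), stays in the region $\Im\tilde\tau\leq D$; this rules out convergence to any large volume point. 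Hence any accumulation point of $\tau(\mu)$ lies either in the interior of $\IH$ or at a conifold point $p/q$ with $q\neq 0\bmod 3$. An interior accumulation point $\tau_*$ is excluded: near such a point the vector field \eqref{attflowC} is smooth and, since $|Z_{\tau_*}(\gamma)|$ would be a positive local minimum of $|Z(\gamma)|$ (the flow being gradient-like), we would need $\partial_{\bar\tau}|Z(\gamma)|^2=0$ there, forcing $(d-r\bar\tau_*)C(\tau_*)Z_{\tau_*}(\gamma)=0$; as $C(\tau_*)\neq 0$ and $|Z_{\tau_*}(\gamma)|>0$, this gives $d-r\bar\tau_*=0$, possible only if $r=d=0$, i.e.\ $\gamma\in\IZ\delta$, contrary to hypothesis. (If instead $\tau_*$ were merely an accumulation point with the flow not actually stopping, one uses that the flow decreases $|Z(\gamma)|^2$ strictly whenever $\nabla|Z(\gamma)|^2\neq0$, together with the Łojasiewicz-type fact that a bounded gradient trajectory of a real-analytic function in a compact region converges; here $|Z_\tau(\gamma)|^2$ is real-analytic on $\IH$.)

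It remains to upgrade ``every accumulation point is a conifold point'' to ``$\tau(\mu)$ converges to a single conifold point.'' Here I would argue as follows. The set of conifold points is discrete in $\overline{\IH}$, and near each conifold point $\tau_i=p/q$ the object $E$ of charge $\gamma_C$ (as in Table~\ref{Conifoldtab}) becomes massless, $Z_{\tau_i}(\gamma_C)=0$; by the analysis around \eqref{Znearcon}, the active ray $\Ract_\psi(\gamma)$ can only reach $\tau_i$ when the numerical condition \eqref{Znearcon} holds, and moreover the ray enters $\tau_i$ along a definite direction. Since $\tau(\mu)$ lies on the single connected ray $\Ract_\psi(\gamma)$ — a real one-dimensional submanifold — and its closure meets the discrete conifold set, the $\omega$-limit set is a single point; continuity of the flow and monotonicity of $|Z|$ (which, in the end variable, forces the trajectory not to oscillate) then give genuine convergence $\tau(\mu_\infty)=\tau_i$. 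I expect the main obstacle to be precisely this last step: ruling out the pathological scenario in which the trajectory accumulates on a set of conifold points, or spirals without converging, near the real axis where the ray $\Ract_\psi(\gamma)$ and the heart $\cA_\tau$ are least under control. The cleanest fix is to pass to the affine coordinates $(x,y)$ of \S\ref{sec_affine}, in which $\AF(\gamma)$ becomes a straight line $\{ry+dx=\ch_2\}$ oriented by \eqref{phidec}; a straight half-line in the $(x,y)$-plane, constrained to the image of $\bigcup_m\cF_C(m)$ (which lies between two parabolas, Figure~\ref{figGrTTD}) and not escaping to large volume, must terminate at a boundary point of that image, and the boundary points that are limits of $\Pi$-stability conditions along such a ray are exactly the images of conifold points. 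This makes convergence manifest and pins down $\tau(\mu_\infty)$.
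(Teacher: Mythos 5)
Your overall dynamical setup (monotone mass, Proposition~\ref{prop_bound_2} to exclude the large volume limit, exclusion of interior accumulation points with $Z\neq 0$ via a gradient-flow/critical-point argument plus the support property) is broadly in line with the paper, but the final and hardest step — upgrading ``the flow accumulates near the boundary'' to ``the flow converges to a single conifold point'' — rests on a false premise. You assert that ``the set of conifold points is discrete in $\overline{\IH}$''; it is not: conifold points are the rationals $p/q$ with $q\neq 0\bmod 3$, which are dense in $\IR$. Consequently the argument ``the closure meets a discrete set, hence the $\omega$-limit set is a single point'' collapses, and nothing in your proposal rules out accumulation at several conifold points, at irrational boundary points, or at large-volume rationals $p/q$ with $q\equiv 0\bmod 3$ (approaching such a point does \emph{not} force $\Im\tau\to\infty$; excluding it requires applying Proposition~\ref{prop_bound_2} after mapping by general elements of $\Gamma_1(3)$, not just translations, which is exactly what the paper does). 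Your proposed ``cleanest fix'' in affine coordinates does not repair this: the trajectory is \emph{not} confined to $\bigcup_m\cF_C(m)$, so the region between the two parabolas of Figure~\ref{figGrTTD} is not a constraint; outside that union the map $\tau\mapsto(x,y)$ fails to be injective, so even a limit of $(x(\mu),y(\mu))$ would not yield a limit of $\tau(\mu)$; and the claim that the only attainable boundary points of the image are images of conifold points is unsubstantiated.

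The paper's mechanism, which is what is missing, is to push the flow to the compactified modular curve $\overline{X_1(3)}$: compactness guarantees limit points; Proposition~\ref{prop_bound_2} excludes the large-volume cusp; an interior limit point is \emph{not} excluded but is shown (via the support property, which forces the transformed charges $g_n\cdot\gamma$ to stabilize, and a local flow-box argument in coordinates $(\ell,m)$) to be an actual endpoint of the flow lying on a wall of marginal stability; and if the conifold cusp is the only limit point, one takes the neighborhood $V_D\cup\{z_C\}$ of the cusp, observes that its preimage $U_D\subset\IH$ is open with each connected component associated to a single conifold point, and uses connectedness of the trajectory to conclude convergence to one specific rational conifold point. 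A second, more minor issue: your opening dichotomy treats ``hits a wall in finite time'' as immediate, whereas part of the content of Proposition~\ref{prop:init-conifold} is precisely that in the interior case the limit exists and is a wall point; this needs (and in the paper receives) the local analysis just described rather than being assumed.
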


\begin{proof}
The modular curve $X_1(3)=\Gamma_1(3)\backslash \IH$ has a natural compactification
$\overline{X_1(3)} \simeq \IP^1$ obtained by adding the large volume point~$z_{\rm LV}$ and the conifold point~$z_C$.
Let $\pi \colon \IH \rightarrow X_1(3)$ be the quotient map.
Then $\pi\circ\tau\colon[\mu_0,\mu_\infty)\to X_1(3)$ takes values in the compact space $\overline{X_1(3)}$ hence admits at least one limit point $z_\infty\in\overline{X_1(3)}$. In other words there exists a sequence $(\mu_n)_{n=1,2,\dots}$ that tends to~$\mu_\infty$ and such that $\pi(\tau(\mu_n))\to z_\infty$.
For later purposes, it is useful to recall that the mass $|Z_{\tau(\mu)}(\gamma)|$ is monotonically decreasing hence has a limit as $\mu\to\mu_\infty$, which necessarily coincides with the limit of its subsequence $|Z_{\tau(\mu_n)}(\gamma)|$.

Consider the unique element $g_n\in\Gamma_1(3)$ that maps $\tau(\mu_n)$ to a point $\tau'_n=g_n\cdot\tau(\mu_n)$ in the fundamental domain~$\cF_C$ centered on the conifold point, and consider the corresponding charge $\gamma'_n=g_n\cdot\gamma$.
By $\Gamma_1(3)$-equivariance,
\be
\Omega_{\tau'_n}(\gamma'_n) = \Omega_{\tau(\mu_n)}(\gamma)\neq 0 , \qquad
|Z_{\tau'_n}(\gamma'_n)| = |Z_{\tau(\mu)}(\gamma)| \leq |Z_{\tau(\mu_0)}(\gamma)| 
\ee
Proposition~\ref{prop_bound_2} applied with $M=|Z_{\tau(\mu_0)}(\gamma)|$ implies
that the imaginary parts $\Im\tau'_n\leq D$ must be bounded above by some constant~$D$.
This bound excludes $\pi(\tau'_n)=\pi(\tau(\mu_n))$ from a neighborhood of the large volume point in $\overline{X_1(3)}$.
Therefore, the large volume point cannot be a limit point~$z_\infty$ of $\pi\circ\tau$.

Next, assume that the limit point~$z_\infty$ lies in $X_1(3)$.
Consider its lift $\tau_\infty\in\cF_C$, and note that\footnote{Strictly speaking, if $\tau_\infty$ lies on the boundary of~$\cF_C$, it can have multiple $\Gamma_1(3)$ images in the closure~$\overline{\cF_C}$.  Then $\tau'_n$ has a subsequence that converges to either of these images, which we then denote $\tau_\infty$.} $\tau'_n\to\tau_\infty$.
By the support property, $\tau_\infty$ admits an open neighbourhood~$U$ on which there are finitely many classes $\gamma' \in \Gamma \setminus \IZ \delta$ with non-zero DT invariants and with central charge less than the upper bound~$|Z_{\tau(\mu_0)}(\gamma)|$.
The point $\tau'_n$~lies in~$U$ for large enough~$n$, hence $\gamma'_n$ takes finitely many values. Up to passing to a subsequence we can assume that all $\gamma'_n=\gamma'$ are equal to the same charge vector.
Fix an arbitrary (Euclidean) norm $\|\ \|\colon\Gamma\to[0,+\infty)$.
The support property ensures that $\|\gamma'\|\lesssim |Z_\tau(\gamma')|$ whenever $\tau\in U$ and $\Omega_\tau(\gamma')\neq 0$, with an implied constant that is uniform in $\tau\in U$.  This gives a positive lower bound on $|Z_{\tau'_n}(\gamma')|$, hence on its limit
\be\label{minfty-Z}
m_\infty
\coloneqq \lim_{\mu\to\mu_\infty} |Z_{\tau(\mu)}(\gamma)|
= \lim_{n\to+\infty} |Z_{\tau(\mu_n)}(\gamma)|
= \lim_{n\to+\infty} |Z_{\tau'_n}(\gamma')|
= |Z_{\tau_\infty}(\gamma')|
> 0 
\ee
We learn that the point $\tau_\infty\in\IH$ is not a critical point of $Z_\tau(\gamma')$ and the gradient flow is smooth near~$\tau_\infty$.
Therefore, near~$\tau_\infty$ there exists local coordinates $m=|Z_\tau(\gamma')|$ along attractor flow lines and $\ell$ parametrizing the different flow lines: the attractor flow keeps $\ell$ constant and decreases~$m$.
Consider a neighborhood that is rectangular in these coordinates,
\be
V = (\ell_-,\ell_+)\times(m_-,m_+) \ni (\ell_\infty,m_\infty) = \tau_\infty 
\ee
For large enough~$n$ we have $\tau'_n=(\ell_n,m_n)\in V$.  The gradient flow of $|Z_\tau(\gamma')|$ starting from this point is $(\ell_n,m)$ with $m$ decreasing from $m_n$ all the way to $m_-$ at the boundary of~$V$.
Since $m_-<m_\infty$, the attractor flow must stop before, and specifically~\eqref{minfty-Z} requires the attractor flow to stop precisely at $m=m_\infty$.
Recall now that $\tau'_n=g_n\cdot\tau(\mu_n)$.
The image $\{g_n^{-1}\cdot(\ell_n,m), m_\infty<m\leq m_n\}$ of the gradient flow of $|Z_\tau(\gamma')|$ is the gradient flow of $|Z_\tau(\gamma)|$ starting from $\tau(\mu_n)$, which is precisely the attractor flow.
We have thus fully determined the end segment of the attractor flow: the end point $\tau(\mu_\infty)=g_n^{-1}\cdot(\ell_n,m_\infty)$ of the attractor flow exists.
The gradient flow of $|Z_\tau(\gamma)|$ could continue unimpeded beyond $m=m_\infty$, hence what stops the attractor flow must be that $\Omega_\tau(\gamma)=0$ for $\tau=g_n^{-1}\cdot(\ell_n,m)$ with $m<m_\infty$ (at least, close to~$m_\infty$).
This means that $\tau(\mu_\infty)$ is on a wall of marginal stability.

It remains to treat the case where none of the limit points of $\pi\circ\tau$ are of the above type, in which case the only remaining possibility for the limit point is the conifold point $z_\infty=z_C\in\overline{X_1(3)}$.
Since this is a unique limit point, we have $\lim_{\mu\to\mu_\infty}\pi(\tau(\mu))=z_C$.
For a constant $0<D<\Im\tau_o$, consider the (connected) set $W_D=\{\tau\in\cF_C,\Im\tau<D\}$, its projection $V_D=\pi(W_D)\subset X_1(3)$, and the union $U_D=\pi^{-1}(V_D)\subset\IH$ of all of its $\Gamma_1(3)$ images.
The sets $V_D$ and $U_D$ are manifestly open.
In fact, $V_D\cup\{z_C\}$ is a neighborhood of~$z_C$ in $\overline{X_1(3)}$, hence for large enough~$\mu$, one has $\pi(\tau(\mu))\in V_D$ thus $\tau(\mu)\in U_D$.
We learn that $\tau(\mu)$ remains in a fixed connected component of~$U_D$ for large enough~$\mu$.
For instance, the connected component of $U_D$ containing~$W_D$ consists of the union of images $g\cdot W_D$ for all elements $g\in\Gamma_1(3)$ that leave $\tau=0$ invariant, so if $\tau(\mu)$ lies in this connected component, it can only tend to the conifold point $\tau=0$.
Other connected components are $\Gamma_1(3)$ images of this one, which implies that $\tau(\mu)$ tends to a conifold point $\tau_C\in\IQ\subset\partial\overline{\IH}$.
\end{proof}

\subsection{Initial data of the exact diagram from the large volume diagram}\label{sec_init}

A key ingredient when building scattering diagrams is the initial data: DT invariants along the initial rays, which are rays that do not arise from the scattering of other rays.  Initial rays correspond to attractor flows that do not end on a wall of marginal stability, hence that end at a conifold point by Proposition~\ref{prop:init-conifold}.
In this subsection we identify the DT invariants of such a flow (near a conifold point) to some DT invariants in the large volume diagram.  This establishes the equivalence of the characterizations (\ref{def-critical-start}) and (\ref{def-critical-LV}) of critical phases in Definition~\ref{def:critical}.
Using $\Gamma_1(3)$ invariance we take the conifold point to be $\tau=0$.
We assume $\psi\in(-\pi/2,\pi/2)$ in this section as the scattering diagram for $\psi=\pi/2$ is highly degenerate and best treated separately in~\S\ref{sec_scattB}.
We recall $\cV_\psi=\cV\tan\psi$.

\begin{proposition}\label{prop:attr-LV}
  Consider an attractor flow that ends at the conifold point $\tau=0$ and has $Z_\tau(\gamma)\in\I e^{\I\psi}[0,+\infty)$ with $\psi\in(-\pi/2,\pi/2)$.
  Then the point $(s,t) = (\cV_\psi,|\cV_\psi|)$ lies on (the closure of) an active ray~$\Ract^{\rm LV}_0(\gamma)$ of charge~$\gamma$ in the large volume scattering diagram~$\cD^{\rm LV}_0$.
  Furthermore, each DT invariant $\Omega_\tau(k\gamma)$, $k\geq 1$, is eventually constant along the flow close to the conifold point, and coincides with the limit of $\Omega^{\rm LV}_{(s,t)}(k\gamma)$ as $(s,t)\to(\cV_\psi,|\cV_\psi|)$ along~$\Ract^{\rm LV}_0(\gamma)$ in~$\cD^{\rm LV}_0$.
  In particular, either $\gamma=[r,0,0]$ with DT invariants $\Omega_\tau([k,0,0])=\delta_{k,\sign r}$, or $\psi$~is a critical phase in the sense that $(\cV_\psi,|\cV_\psi|)$ is an intersection of active rays in~$\cD^{\rm LV}_0$.
\end{proposition}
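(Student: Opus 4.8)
\textbf{Proof strategy for Proposition~\ref{prop:attr-LV}.}
The plan is to transport the attractor flow problem near $\tau=0$ into the large volume scattering diagram via an explicit change of coordinates, using the conifold expansion of the periods from \S\ref{sec_conexp}. First I would recall from \eqref{TTD-near-coni} that near $\tau=0$ one has $T = \I\cV + \frac{\kappa}{2\pi\I} f_2^c(\tau') + \ldots$ and $T_D = -\frac{\kappa}{3} f_1^c(\tau')$ with $\tau' = -1/(3\tau)$, so that $(T,T_D)\to(\I\cV,0)$ as $\tau\to 0$. The key geometric fact is that the affine coordinates $(x,y)$ of \eqref{defxysgen} send the conifold point $\tau=0$ (at which the spherical object $\cO$ of charge $[1,0,0]$ becomes massless) to the point $(x,y) = (\cV_\psi, 0)$, which lies on the parabola $y = -\tfrac12 x^2 + \tfrac12\cV_\psi^2$ that passes through all the points $(x_{\cO(m)}, y_{\cO(m)})$; equivalently, in $(s,t)$-type coordinates this is $(s,t)=(\cV_\psi,|\cV_\psi|)$ using $y = \tfrac12(t^2-s^2)$. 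Since the geometric ray $\Rgeo_\psi(\gamma) = \{ry+dx=\ch_2\}$ is a straight line in the $(x,y)$ plane (by \eqref{rayxy}) and the large volume diagram $\cD^{\rm LV}_0$ in $(x,y)$ coordinates is likewise supported on such straight lines (as recalled in \S\ref{sec_raywall} and \S\ref{sec_affine}), the active ray $\Ract_\psi(\gamma)$ emanating from $\tau=0$ matches, near that point, a portion of the large volume ray $\Ract^{\rm LV}_0(\gamma)$ passing through $(\cV_\psi,|\cV_\psi|)$. The first step is therefore to verify that the ray actually passes through this point: since $\gamma$ labels an initial ray ending at $\tau=0$, we have by \eqref{Znearcon} (with $m=m_D=0$) that $\ch_2 = d\,\cV_\psi$, i.e. $r y_{\cO(0)} + d x_{\cO(0)} = \ch_2$ is satisfied at $(x,y)=(\cV_\psi,0)$, which is exactly the condition that $(\cV_\psi,|\cV_\psi|)$ lies on $\Rgeo_\psi(\gamma)$.

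Next I would establish that the DT invariants $\Omega_\tau(k\gamma)$ are eventually constant along the flow and equal to the corresponding large volume invariants. The point is that near $\tau=0$ the orbifold description (valid in $\IH^o$, a neighborhood of $\tau=0$ since $\tau=0$ is a vertex of $\Delta_\psi$) and the large volume description (valid in $\IH^{\rm LV}$, whose closure contains $\tau=0$ on the boundary) both apply in suitable sub-regions, and between crossings of walls of marginal stability $\Omega_\tau(k\gamma)$ is locally constant. By Proposition~\ref{prop:init-conifold}, the flow for $\gamma$ that ends at $\tau=0$ does not cross any wall close to the conifold point. Hence close enough to $\tau=0$ the invariant $\Omega_\tau(k\gamma)$ is constant, and one computes it by pushing the flow back to the large volume region: since the coordinates $(x,y)$ give a common chart in which both diagrams coincide (the map $\tau\mapsto(x,y)$ being injective on $\bigcup_m \cF_C(m)$, and the two diagrams being supported on the same straight lines with the same consistency conditions), this constant value must be the value $\Omega^{\rm LV}_{(s,t)}(k\gamma)$ read off along $\Ract^{\rm LV}_0(\gamma)$ at points $(s,t)$ approaching $(\cV_\psi,|\cV_\psi|)$. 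Here I would invoke the uniqueness of consistent scattering diagrams from the given initial data, together with the identification of the local automorphisms $\cU_\sigma(\gamma)$ along the matching rays, to conclude the DT invariants agree.

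Finally, for the dichotomy: the point $(\cV_\psi,|\cV_\psi|)$ lies on the parabola $y = -\tfrac12 s^2$ precisely when $\cV_\psi\in\IZ$ — actually $(\cV_\psi,|\cV_\psi|)$ with $y=\tfrac12(t^2-s^2) = \tfrac12\cV_\psi^2 - \tfrac12\cV_\psi^2 = 0$ lies on $y=0$, which is the boundary line $t=0$ of the large volume diagram. Along the $t=0$ boundary of $\cD^{\rm LV}_0$, the only active rays that start are $\Ract_0(\cO(m))$ and $\Ract_0(\cO(m)[1])$ at integer points $s=m$ (as shown in \cite{Bousseau:2019ift}), so if $\cV_\psi$ is an integer $m$ then $\gamma$ must be $\pm[1,m,\tfrac12 m^2]$, which by the displacement $\ch_2 = d\,\cV_\psi$ forces $m=0$ (the ray from $\tau=0$ tangent to the parabola at $x=\cV_\psi=0$), giving $\gamma=[r,0,0]$ with $\Omega_\tau([k,0,0])=\delta_{k,\sign r}$ — this is the non-critical generic case where $\gamma$ is a multiple of the D0-brane-less charge $[1,0,0]$ emanating cleanly from the conifold point. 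Otherwise $(\cV_\psi,|\cV_\psi|) = (\cV_\psi, 0)$ lands at a point on the $y=0$ line that is neither an emission point nor intersection-free, i.e. it sits at an intersection of two or more active rays of $\cD^{\rm LV}_0$ (this is read off along the $y=0$ line of Figure~\ref{figDLVxy}, the intersections being at the dense set and Fibonacci-type discrete set of \eqref{discseries}), which is exactly characterization~(\ref{def-critical-LV}) of a critical phase. The main obstacle I anticipate is making the ``matching of diagrams in the common $(x,y)$ chart'' fully rigorous near the boundary $t=0$ / $\tau=0$: one must check that the regions $\IH^o$ and $\IH^{\rm LV}$ (or their images in the $(x,y)$ plane) genuinely overlap in a neighborhood of the ray's start, and that the heart $\cA_\tau$ indeed restricts correctly there, so that the large volume invariants and the exact invariants are compared along literally the same ray rather than merely homologous ones; this requires care with the subtleties of the heart $\cA_\tau$ below the arcs discussed in \S\ref{sec_Pistab}, but since $\tau=0$ lies on the boundary of $\cF_C$ where the standard tilt description still holds, this should go through.
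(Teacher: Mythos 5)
Your overall route---transporting the flow near $\tau=0$ into the large volume diagram via the conifold expansion \eqref{TTD-near-coni} and the affine coordinates, and using $\ch_2=d\,\cV_\psi$ to place the geometric ray through $(x,y)=(\cV_\psi,0)$---is the same as the paper's. But the step you flag as an ``anticipated obstacle'' is the technical core of the argument, not an afterthought, and it is missing. One must actually prove that, close enough to $\tau=0$, the flow lies in the region where the exact invariants coincide with large-volume invariants: the paper does this by extracting the phase condition $2\pi\tau'_1=\psi+o(1)$ (mod $2\pi$) from the requirement that $Z_\tau(\gamma)$ stays on the half-line, and then computing $s$, $w$ and $t^2=2w-s^2$ to leading order in $e^{-2\pi\tau'_2}$, see \eqref{appC-dcospsi}--\eqref{appC-swt2}; the positivity $t^2\propto\cos\psi\,e^{-2\pi\tau'_2}>0$ is what licenses $\Omega_\tau(\gamma)=\Omega^{\rm LV}_{(s,t)}(\gamma)$. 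Even granting this, a second step you omit is needed: along the flow $(s,t)\to(0,0)$, \emph{not} to $(\cV_\psi,|\cV_\psi|)$; the limit point only appears after converting to the phase-zero diagram through the phase $\psi^{\rm LV}=\arg(-\I Z^{\rm LV}_{(s,t)}(\gamma))$ and the change of variables \eqref{eqstpsi}, for which one computes $t\tan\psi^{\rm LV}\to\cV_\psi$, hence $(\tilde s,\tilde t)\to(\cV_\psi,|\cV_\psi|)$. Moreover this route genuinely fails when $d=0$: there the constraint forces $\ch_2=0$, i.e.\ $\gamma\in[1,0,0]\IZ$, and the sign of $2w-s^2$ along the flow is no longer controlled, so one cannot just ``read off'' $\Omega_\tau([k,0,0])=\delta_{k,\sign r}$ from the LV diagram. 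The paper treats this case separately, locating the flow in $\cF_C$ with $\sign(\Re\tau)=-\sign r$ and then invoking the no-wall-crossing result of \cite{li2019birational} for classes proportional to $\gamma(\cO)$ at geometric stability conditions together with the Dr\'ezet--Le Potier uniqueness of the exceptional sheaf of slope zero \cite{drezet1985fibres}; you assert these invariants without proof.

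Your concluding dichotomy also contains a concrete error. The point $(s,t)=(\cV_\psi,|\cV_\psi|)$, i.e.\ $(x,y)=(\cV_\psi,0)$, is \emph{not} on the boundary $t=0$ of the large volume diagram (that boundary is the parabola $y=-\tfrac12x^2$); it lies on the light-cone $t=|s|$, i.e.\ on the line $y=0$ supporting $\Ract^{\rm LV}_0(\cO)$ and $\Ract^{\rm LV}_0(\cO[1])$. Consequently the alternative $\gamma=[r,0,0]$ has nothing to do with $\cV_\psi\in\IZ$ (rays of charge $\pm[1,0,0]$ emanate from $\tau=0$ for \emph{every} phase), and your ``forces $m=0$'' step is a non sequitur. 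The correct dichotomy is simply $d=0$ versus $d\neq0$: for $d\neq 0$ the active $\gamma$-ray passes through a point that already lies on the $\cO$ (or $\cO[1]$) ray, so its being active there means $(\cV_\psi,|\cV_\psi|)$ is an intersection of distinct active rays, which is exactly criterion (3) of Definition~\ref{def:critical}; for $d=0$ one lands in the $[r,0,0]$ alternative. Finally, the ``eventually constant'' claim should not be attributed to Proposition~\ref{prop:init-conifold}: that proposition only says the flow does not \emph{terminate} on a wall, and does not exclude crossings near $\tau=0$ across which $\Omega_\tau(k\gamma)$ jumps while remaining nonzero (cf.\ \eqref{dWdmu}); constancy follows instead from the identification with LV invariants along $\Rgeo^{\rm LV}_0(\gamma)$ near the fixed point $(\cV_\psi,|\cV_\psi|)$, where only finitely many walls occur.
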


\begin{proof}
While the statement is expressed in a uniform way, we shall distinguish $d=0$ from $d\neq 0$ momentarily as they require very different approaches.

There is a $\IZ$-worth of fundamental domains meeting at this conifold point, acted upon by the monodromy $V\colon\tau\mapsto\frac{\tau}{1-3\tau}$ around $\tau=0$.
We use the coordinate $\tau'=-1/(3\tau)$, in which the conifold point lies at $\tau'\to+\I\infty$. and the monodromy acts as $V\colon\tau'\mapsto \tau'+1$.
In terms of $q'=e^{2\pi\I\tau'}$ (which vanishes at the conifold point), the expansion~\eqref{TTD-near-coni} reads
\be
T = \I\cV + \kappa \tau' q' (1+o(1)) , \qquad T_D = -\frac{\kappa}{3} q' (1 + o(1)) 
\ee
where 
$o(1)$ denotes any function of~$\tau'$ (such as $1/\tau'$ or $q'$) that vanishes as $q'\to 0$.
Within an attractor flow, the phase of $Z_\tau(\gamma)$ is fixed and its modulus is monotonically decreasing, hence $Z_\tau(\gamma)=-rT_D+dT-\ch_2$ has a limit as $q'\to 0$.

If $d=0$ then $Z_\tau(\gamma)\to -\ch_2$, which does not belong to the half-line $\I e^{\I\psi}[0,+\infty)$ unless $\ch_2=0$, which corresponds to a charge $\gamma=[r,0,0]$, with $r\neq 0$ since $\gamma\neq 0$.
Such a charge is a multiple of the class of sheaves $\cO(0)[k]$ becoming massless at the conifold point.
It is easy to see that $Z_\tau(\gamma) = -rT_D = \frac{\kappa}{3} rq'(1+o(1))\in\I e^{\I\psi}(0,+\infty)$ requires
\be\label{appC-taup1-d0}
\tau'_1 = \Re\tau' = -n + \frac14 \sign r + \frac{\psi}{2\pi} + o(1)
\ee
for some integer $n\in\IZ$.  Close enough to the conifold point, the $o(1)$ term is less than $1/2$, so along a given attractor flow $n$~is eventually constant.
Applying a $\Gamma_1(3)$ translation $V^n\colon\tau'\mapsto\tau'+n$ reduces the problem to the case $n=0$, for which $|\tau'_1|<1/2$ close enough to the conifold point.
As the Fricke involution $\tau'=-1/(3\tau)$ maps the fundamental domain~$\cF_C$ to itself, we deduce that $\tau\in\cF_C$ at late enough times along the flow.
In addition,
\be\label{signRetau-signr}
\sign(\Re\tau) = \sign\Bigl(\frac{-\tau'_1}{3|\tau'|^2}\Bigr)
= - \sign(\tau'_1) = - \sign r 
\ee
Let us determine the DT invariants along this attractor flow that eventually lies in~$\cF_C$ and ends at $\tau=0$.
It follows from \cite[Corollary 1.24]{li2019birational} (Corollary 1.21 of the arXiv version) that $\Omega_\tau(\gamma)$ does not jump for $\gamma$ proportional to $\gamma(\cO)$ and $\tau$ geometric (apart from $\gamma \mapsto -\gamma$ across the vertical axis).
So it is enough to work at large volume, namely with Gieseker stability.
Let $E$ be a Gieseker-stable sheaf of class $\gamma = k \gamma(\mathcal{O})$ with $k\geq 1$. It is of slope $\mu=0$ and of discriminant $\Delta=0$. Hence, $E$ is exceptional in the sense of \cite[Section 4.2]{drezet1985fibres}. By \cite[Lemma 4.3]{drezet1985fibres}, there is a unique exceptional sheaf of given slope. As $\mathcal{O}$ is exceptional of slope $0$, we obtain $E=\mathcal{O}$. The moduli space of stable objects is a point for $k=1$ and empty for $k>1$.
So
\be\label{appC-standardDT}
\Omega_\tau(k \gamma(\mathcal{O})) = \delta_{k,1}
\ee
for $k\in\IZ$ and for every $\tau\in\cF_C$ with $\Re \tau <0$, and likewise $\Omega_\tau(k \gamma(\mathcal{O})) = \delta_{k,-1}$ for $\Re\tau>0$.
Thanks to~\eqref{signRetau-signr} this translates to the sign condition in the statement of the Proposition.

We henceforth assume that $d\neq 0$.

After proving that the phase must obey $\cV_\psi=\ch_2/d$, our strategy is to show that the attractor flow (near $\tau=0$) lies in the large volume region $\tau\in\IH^{\rm LV}$, map this point to large volume coordinates $(s,t)$, and finally use that the large volume scattering diagram for non-zero phase $\psi^{\rm LV}=\arg(-\I Z^{\rm LV}_{(s,t)}(\gamma))$  coincides with the diagram $\cD^{\rm LV}_0$ with zero phase up to a further change of coordinates~\eqref{eqstpsi},
\be\label{appC-tildest}
\Omega_\tau(\gamma) = \Omega^{\rm LV}_{(s,t)}(\gamma) = \Omega^{\rm LV}_{(\tilde{s},\tilde{t})}(\gamma) , \qquad
\tilde{s} = s + t \tan \psi^{\rm LV} , \qquad
\tilde{t}^2 = t^2 + \bigl(t \tan \psi^{\rm LV}\bigr)^2 
\ee
Our calculations show that $(\tilde{s},\tilde{t})\to(\cV_\psi,|\cV_\psi|)$ as $\tau\to 0$ along the flow, which suffices to conclude.

We start by determining how the conifold point is approached.
The central charge has a limit, hence $T$ has a limit, and equivalently $\tau' q'\to c$ for some $c\in\IC$.
Decomposing $\tau'=\tau'_1+\I\tau'_2$, we see that $\tau'_2 q' = \cO(\tau'_2 e^{-2\pi\tau'_2})$ vanishes at the conifold point so $\tau'_1q'\to c$.
If $c\neq 0$ then $|\tau'_1|\sim |c|/|q'| = |c|e^{2\pi\tau'_2}$, which means that the phase of $\tau'_1q'$ diverges, contradicting $\tau'_1q'\to c$.
Thus $c=0$ and altogether $T\to\I\cV$.
We learn that
\be\label{appC-Ztaugamma}
Z_\tau(\gamma) = \I\cV d - \ch_2 + \kappa \tau' q' (d + o(1))
\ee
with $\tau'q'\to 0$.
Along the flow, the central charge is fixed to lie in $Z_\tau(\gamma)\in\I e^{\I\psi}[0,+\infty)$ and to move towards~$0$ along this half-line, hence $Z_\tau(\gamma)-(\I\cV d - \ch_2)$ lies in the same half-line.
This implies (using $-\pi/2<\psi<\pi/2$ hence $\cos\psi>0$)
\be\label{appC-dcospsi}
d > 0 , \qquad
\frac{\ch_2}{d} = \cV_\psi , \qquad
2\pi\tau'_1 = - 2\pi n + \psi + o(1) 
\ee
where $\cV_\psi = \cV\tan\psi$ and $n\in\IZ$ is eventually constant (as in \eqref{appC-taup1-d0}, $\tau\in\cF_C$ if and only if $n=0$).
We then evaluate the large-volume coordinates $s,t$,
\be\label{appC-swt2}
\begin{aligned}
s & = \frac{\Im T_D}{\Im T} = -\frac{\kappa}{3\cV} e^{-2\pi\tau'_2} \bigl( \sin\psi + o(1)\bigr) 
\\
w & = - \Re T_D + s \Re T = \frac{\kappa}{3} e^{-2\pi\tau'_2} \bigl( \cos\psi + o(1) \bigr) 
\\
t^2 & = 2w - s^2 = \frac{2\kappa}{3} e^{-2\pi\tau'_2} \bigl( \cos\psi + o(1)\bigr) > 0 
\end{aligned}
\ee
Therefore, close enough to the conifold point, the ray lies in the large volume region $2w>s^2$, and one has $\Omega_\tau(\gamma) = \Omega^{\rm LV}_{(s,t)}(\gamma)$.

Next we consider the phase $\psi^{\rm LV}=\arg(-\I Z^{\rm LV}_{(s,t)}(\gamma))$ of the large volume central charge at $(s,t)$, and evaluate its tangent since this is what appears in the change of coordinates~\eqref{appC-tildest}:
\be
t \tan \psi^{\rm LV}
= - t \frac{\Re Z^{\rm LV}_{(s,t)}(\gamma)}{\Im Z^{\rm LV}_{(s,t)}(\gamma)}
= \frac{\ch_2 - d s + \frac{r}{2} (s^2-t^2)}{d-rs}
= \cV_\psi + \cO(e^{-\pi\tau'_2}) 
\ee
Thus
\be
\begin{aligned}
\tilde{s} & = s + t \tan \psi^{\rm LV} = \cV_\psi + \cO(e^{-\pi\tau'_2}) , \\
\tilde{t} & = \sqrt{t^2 + \bigl(t \tan \psi^{\rm LV}\bigr)^2} = |\cV_\psi| + \cO(e^{-\pi\tau'_2}) 
\end{aligned}
\ee
As announced, we learn that $(\tilde{s},\tilde{t})$ tends to the point $(\cV_\psi,|\cV_\psi|)$.
By construction, $(\tilde{s},\tilde{t})$ moves along the geometric ray $\cR^{\rm geo,LV}_0(\gamma)$ of the zero-phase large volume scattering diagram~$\cD^{\rm LV}_0$.
DT~invariants $\Omega_\tau(\gamma)$ near the end of the attractor flow are thus given by DT invariants along the ray $\cR^{\rm geo,LV}_0(\gamma)$ near its intersection $(\cV_\psi,|\cV_\psi|)$ with the initial ray $\Ract_0^{\rm LV}([\sign\cV_\psi,0,0])$.
The ray $\Rgeo_\psi(\gamma)$ is thus active close to $\tau=0$ precisely when the ray $\cR^{\rm geo,LV}_0(\gamma)$ is active close to $(\cV_\psi,|\cV_\psi|)$.
We conclude that there are non-trivial initial rays at $\tau=0$ if and only if $(\cV_\psi,|\cV_\psi|)$ is an intersection of active rays.
\end{proof}

\subsection{Initial data of the exact diagram from the orbifold diagram}\label{sec_crit}

In the previous section we have mapped DT invariants along an attractor flow ending at $\tau=0$ to DT invariants in the large volume scattering diagram.
We now map the DT invariants to the orbifold diagram, thus proving the equivalence of the criteria (\ref{def-critical-start}) and (\ref{def-critical-orbi}) in Definition~\ref{def:critical}, in terms of attractor flows and of the orbifold diagram.
The latter criterion involves the point $\theta=(0,\frac{1}{2}+|\cV_\psi|,\frac{1}{2}-|\cV_\psi|)$ whose $(u,v)$ coordinates are determined from~\eqref{thetauv} to be
\be\label{uv-of-psi}
u = \frac{1}{12} + \frac{1}{2} |\cV_\psi| , \qquad
v = \frac{1}{4\sqrt{3}} (- 1 + 2|\cV_\psi|)
\ee
Recall the functions $p_j\colon\IR\to\IR^2$ given in~\eqref{pilambda} for $j=1,2,3$ parametrizing the three initial rays of the orbifold diagram.
The point~\eqref{uv-of-psi} involved in Definition~\ref{def:critical} is $p_1(-|\cV_\psi|)$.  By $\IZ_3$-invariance (cyclic permutations of the~$\theta_j$) one can replace this point by $p_3(-|\cV_\psi|)$, which will appear more naturally in this section.
Rather than repeating what can already be learned about $\gamma=[k,0,0)$ from Proposition~\ref{prop:attr-LV}, we restrict our attention immediately to attractor flows with $\gamma\notin[1,0,0)$.

\begin{proposition}\label{prop:critical}
  Consider an attractor flow that ends at the conifold point $\tau=0$ and has $Z_\tau(\gamma)\in\I e^{\I\psi}[0,+\infty)$ with $-\pi/2<\psi<\pi/2$ and with $\gamma\notin[1,0,0)\IZ$.  Then the point $p_3(-|\cV_\psi|)$ is a ray intersection in~$\cD_o$.
\end{proposition}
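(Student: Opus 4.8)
The plan is to run the argument of Proposition~\ref{prop:attr-LV} in parallel, but now expressing the end segment of the attractor flow in the orbifold region rather than the large volume region, and reading off the image point in the $(u,v)$ coordinates of~$\cD_o$. First I would note that since $\gamma\notin[1,0,0)\IZ$, the analysis of~\eqref{appC-dcospsi} applies verbatim: along the flow the central charge $Z_\tau(\gamma)$ has a limit, $T\to\I\cV$, $d>0$, and the phase condition forces $\ch_2/d=\cV_\psi$, i.e.\ $\cV_\psi$ is rational. The key point is that near $\tau=0$ the flow also lies in the orbifold region $\IH^o$, since a neighborhood of the conifold point $\tau=0$ is covered by (translates of) $\IH^o$ as well as by $\IH^{\rm LV}$ --- this is exactly the statement used in~\S\ref{sec_init} to establish the equivalence of the four characterizations in Definition~\ref{def:critical}. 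So the DT invariants $\Omega_\tau(k\gamma)$ along the end of the flow agree with quiver DT invariants $\Omega_\theta(k\gamma)$ for King parameters $\theta_i=\Re[e^{-\I\psi}Z_\tau(\gamma_i)]/\cos\psi$, and the ray $\Ract_\psi(\gamma)$ is active near $\tau=0$ iff the corresponding ray in~$\cD_o$ is active near the image point.

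The main computation I would carry out is to evaluate the $(u,v)$ coordinates~\eqref{defuvxy} (equivalently the $\theta_i$ via~\eqref{thetauv} and~\eqref{thetaZ}) in the limit $\tau\to 0$ along the flow, using the conifold expansion~\eqref{TTD-near-coni} of the periods together with the values of $T,T_D$ at $\tau=0$ and the relation~\eqref{TTDcon}. Since the rays $\cR_\psi(\gamma_i)$ in the $\Pi$-slice start at the images of the conifold points $\tau=0,-\tfrac12,-1$ in the $(u,v)$-plane, namely at $p_i(\cV_\psi)$ (as established in~\S\ref{sec_orbregion}), and since the conifold point $\tau=0$ corresponds to $p_3(\cV_\psi)$ (this is where $\cR_\psi(\gamma_1)$ and $\cR_\psi(\gamma_2)$ would meet, at $\lambda=\cV_\psi$, cf.\ the discussion after~\eqref{pilambda}), I expect the flow to land precisely at $p_3(\cV_\psi)$ when $\cV_\psi>0$. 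For $\cV_\psi<0$ one uses that the relevant object becoming massless is the homological shift (as in the $\cV_\psi<-1/2$ discussion of~\S\ref{sec_exactD}), so that the end point is instead $p_3(-|\cV_\psi|)=p_3(\cV_\psi)$ with the sign flip absorbed, matching the statement. In either case $p_3(-|\cV_\psi|)$ is the starting point of the ray $\cR_\psi(\gamma)$ inside the orbifold diagram, so it must be a point where $\cR_o(\gamma)$ meets another active ray of~$\cD_o$ --- that ``other ray'' being one of the initial rays $\cR_o(\gamma_1),\cR_o(\gamma_2)$ or, more precisely, whatever ray in~$\cD_o$ has to pass through $p_3(-|\cV_\psi|)$ in order for $\cR_o(\gamma)$ with $\langle\gamma_C,\gamma\rangle\neq 0$ to emanate from there.

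The one subtlety I would be careful about --- and which I expect to be the main obstacle --- is correctly identifying the embedding of $\cD_o$ in $\cD_\psi^\Pi$ near $\tau=0$ rather than near $\tau_o$: the relation~\eqref{defuvxy} between $(u,v)$ and the affine coordinates $(x,y)$ is derived in the region $\IH^o$ around the orbifold point, and I need the analytic continuation of the quiver description to the corner of $\Delta_\psi$ sitting at $\tau=0$, where the three rays $\cR_o(\gamma_i)$ do \emph{not} all meet. Concretely, I would track which of the three simple charges $\gamma_i$ remains "visible" as $\tau\to 0$ and verify, using the expansions~\eqref{Ziexp} extended towards $\tau=0$ (or more robustly, using Table~\ref{tabvalues} for the periods at $\tau=0,-\tfrac12,-1$), that the flow's end point in $(u,v)$ coordinates is exactly $p_3(\cV_\psi)$. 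Once that matching is pinned down, the conclusion is immediate: $p_3(-|\cV_\psi|)$ lies on $\cR_o(\gamma)$ by construction, and it lies on the initial ray $\cR_o(\gamma_3)$ (for instance) by~\eqref{p123}-type reasoning, so it is a genuine ray intersection in~$\cD_o$, establishing the equivalence of criteria~(\ref{def-critical-start}) and~(\ref{def-critical-orbi}).
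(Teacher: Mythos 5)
Your overall strategy---push the terminal segment of the flow into the orbifold chart and identify its limit point with $p_3(-|\cV_\psi|)$---is the same as the paper's, but the two steps you defer are exactly where the content of the proof lies, and as written there is a genuine gap. The claim that ``near $\tau=0$ the flow also lies in the orbifold region $\IH^o$, since a neighborhood of the conifold point is covered by (translates of) $\IH^o$'' is false: at $\tau=0$ one has $(s,w)=(0,0)$, which sits on the boundary of both the large volume region $w>\tfrac12 s^2$ and the orbifold region $w<-\tfrac12 s$ of \eqref{qs20}, so whether the terminal segment lies in the quiver region depends on the direction of approach, i.e.\ on $\psi$ (and what \S\ref{sec_init} establishes is containment in the \emph{large volume} region, not the orbifold one). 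The paper must check this containment explicitly: from the asymptotics \eqref{appC-swt2} one gets $s-2w\propto e^{-2\pi\tau'_2}\bigl(-\cV_\psi-2\cV^2+o(1)\bigr)$, which is positive only because $|\cV_\psi|\geq\tfrac12>2\cV^2\simeq 0.4283$; and the input $|\cV_\psi|\geq\tfrac12$ is not free---it comes from Proposition~\ref{prop:attr-LV}, which rules out such flows when $|\cV_\psi|<\tfrac12$ (making the statement vacuous there). You never invoke either ingredient, so your identification of the exact DT invariants along the end of the flow with orbifold-quiver invariants is unjustified.

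In addition your chart bookkeeping is off: in the $\cF_o$ chart where \eqref{defuvxy} is valid, the conifold point $\tau=0$ maps to $p_1(\cV_\psi)$ (it is where $E_1$ becomes massless), not to $p_3(\cV_\psi)$, and for $\psi<0$ the flow approaches $\tau=0$ from inside $\cF_{o'}$, i.e.\ outside that chart. The correct route (the paper's) is to reduce to $\psi<0$ by the $\psi\mapsto-\psi$ symmetry, apply a $\Gamma_1(3)$ translation so that the flow lies in $\cF_o$ and ends at the conifold point $\tau=-1$, whose affine coordinates are $(x_{\cO(-1)},y_{\cO(-1)})=(\cV_\psi-1,\cV_\psi-\tfrac12)$; then \eqref{defuvxy} gives $(u,v)\to p_3(\cV_\psi)=p_3(-|\cV_\psi|)$. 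Your sign discussion (``$p_3(\cV_\psi)$ when $\cV_\psi>0$, homological shift when $\cV_\psi<0$'') does not resolve this, and note that $p_3(-|\cV_\psi|)=p_3(\cV_\psi)$ only for $\cV_\psi\leq 0$. Once these two points are repaired, the conclusion follows as you indicate: the active ray of charge $\gamma$ in $\cD_o$ reaches $p_3(-|\cV_\psi|)$, which lies on the everywhere-active initial ray $\Ract_o(\gamma_3)$, so the point is indeed a ray intersection.
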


\begin{proof}
For $|\cV_\psi|<1/2$ there are no such ray intersections in $\cD_o$, and we have shown as part of Proposition~\ref{prop:attr-LV} that there is no such attractor flow.
We thus concentrate on $\cV_\psi\leq -1/2$, fixing the sign to be negative by using the $\psi\mapsto-\psi$ symmetry.
In other words, $-\pi/2<\psi\leq -\psicr{1/2}$.

A translation $V^n\colon\tau'\mapsto\tau'+n$ maps the attractor flow~\eqref{appC-dcospsi} to that with $n=0$,
\be
\tau'_1 = \psi/(2\pi) + o(1) \in (-1/2,0) 
\ee
As discussed below~\eqref{appC-taup1-d0}, this condition on~$\tau'$ implies that $\tau\in\cF_C$.  Furthermore, the sign $\sign(\Re\tau)=-\sign(\tau'_1)=1$ implies that $\tau\in\cF_{o'}=\cF_o(1)$.
Within the orbifold fundamental domain~$\cF_o$, the region of validity~$\IH^o$ of the quiver description is described by the inequality \eqref{qs20} $2w+s<0$, namely $t^2<-s(1+s)$ for the point $\tau-1\in\cF_o$.
Since $s(\tau-1)=s(\tau)-1$ and $t^2$ is invariant under translations, this condition reduces to $t^2<(1-s)s$, namely $2w<s$, for the point~$\tau\in \cF_{o'}$.
Then, thanks to the asymptotics~\eqref{appC-swt2}, we evaluate
\be
s - 2w = \frac{\kappa\cos\psi}{3\cV^2} e^{-2\pi\tau'_2} \bigl( - \cV_\psi - 2 \cV^2 + o(1)\bigr) 
\ee
Since $-\cV_\psi \geq 1/2 > 2\cV^2 \simeq 0.4283$, this is positive, which ensures that the attractor flow lies in $\IH^{o'}=\IH^{o}(1)$ and its DT invariants are correctly given by the quiver scattering diagram.

The translated attractor flow $\mu\mapsto\tau(\mu)-1\in\cF_o$ tends to the $\tau=-1$ conifold point, hence
\be
(x,y) \longrightarrow (x_{\cO(-1)}, y_{\cO(-1)}) = (\cV_\psi - 1 , \cV_\psi - 1/2)
\ee
where coordinates of the conifold point were calculated in~\eqref{xyOm}.  The corresponding $(u,v)$ coordinates are
\be\label{appC-uv-long}
(u,v) = \biggl( \frac{1}{12} - \frac{1}{2} x + y , - \frac{2x + 1}{4\sqrt{3}}\biggr)
\longrightarrow
\biggl( \frac{1}{12} + \frac{1}{2} \cV_\psi , \frac{1 - 2\cV_\psi}{4\sqrt{3}}\biggr) = p_3(\cV_\psi) = p_3(-|\cV_\psi|)
\ee
where $p_3$ was defined in~\eqref{pilambda} and we used $\psi<0$.
Along the attractor flow, the quiver description is valid in a neighborhood of this point, so DT invariants of the exact ray $\cR_\psi(\gamma)$ starting at $\tau=0$ coincide with those of the ray $\cR_o(\gamma)$ starting at $p_3(-|\cV_\psi|)$ in the orbifold diagram.
In particular, there exists an active ray (with $\gamma\notin[1,0,0]\IZ$) ending at $\tau=0$ if and only if $p_3(-|\cV_\psi|)$ is a ray intersection.
\end{proof}

\subsection{Attractor flows starting at conifold points}
\label{sec_start}

To complete our description of the neighborhood of the conifold point, we now also study the attractor flows that emanate from $\tau=0$, namely $\mu\mapsto\tau(\mu)$ with $\mu\in(\mu_0,\mu_\infty)$ such that $\lim_{\mu\to\mu_0}\tau(\mu)=0$.
We establish the equivalence of the characterizations (\ref{def-critical-end}) and (\ref{def-critical-orbi}) in Definition~\ref{def:critical} by mapping such flows to rays passing through $p_1(-|\cV_\psi|)$ in the quiver.

As before, the central charge must have a limit as $\mu\to\mu_0$, but now this limit must be non-zero (as its modulus should decrease along the flow).
This rules out the case $d=0$ because $Z_{\tau=0}(\gamma)=-\ch_2$ cannot be in the open half-line $\I e^{\I\psi}(0,+\infty)$.
Thus, $d\neq 0$.  By symmetry under $\psi\mapsto-\psi$, we focus on $\psi\in(-\pi/2,0]$.

Returning to the expansion~\eqref{appC-Ztaugamma} of the central charge, we again find that $\tau' q'$ has a limit and that this limit must vanish to avoid a divergent phase.
Thus, $Z_\tau(\gamma)\to\I\cV d-\ch_2$ as $\mu\to\mu_0$.
Along the flow, the central charge is fixed to lie in $Z_\tau(\gamma)\in\I e^{\I\psi}[0,+\infty)$ and to move towards~$0$ along this half-line, hence $(\I\cV d - \ch_2) - Z_\tau(\gamma)$ lies in the same half-line.
This implies \eqref{appC-dcospsi} with a constant shift of $\tau'_1$,
\be
d > 0 , \qquad
\frac{\ch_2}{d} = \cV_\psi , \qquad
2\pi\tau'_1 = -2\pi n + \psi + \pi + o(1) 
\ee
The integer $n\in\IZ$ can be eliminated by a $\Gamma_1(3)$ transformation $V^n\colon\tau'\to\tau'+n$.
Then $\tau'_1\in(1/4,1/2]$, namely $\tau'$ is in the closure $\overline{\cF}_C$, which is stable under the Fricke involution, so $\tau\in\overline{\cF}_C$.
In addition, the sign of $\tau'_1$ yields $\Re\tau<0$, so that $\tau$ lies in the closure $\overline{\cF}_o$ of the orbifold fundamental domain.

The expansions of $s$ and~$w$ are then the opposites of~\eqref{appC-swt2}, so that for $\psi\in(-\pi/2,0]$ we have $w,s\leq 0$ close to the conifold point, hence the inequality $2w\leq -s$ defining the orbifold region~$\IH^o$ within~$\cF_o$ is satisfied.
DT invariants along the flow are thus correctly given by those of the orbifold diagram~$\cD_o$ at a suitable point~$(u,v)$.
The affine coordinates~\eqref{defxysgen} are
\be
x = \frac{\Re( e^{-\I \psi} T)}{\cos\psi} = \cV_\psi + o(1) = -|\cV_\psi| + o(1) ,\quad
y = -\frac{\Re( e^{-\I \psi} T_D)}{\cos\psi} = o(1) .
\ee
Thus, \eqref{appC-uv-long}~holds as well, and the attractor flow starts (in the conifold limit $\mu\to\mu_0$) at the point $p_1(-|\cV_\psi|)$ lying on the initial ray~$\Ract_o(\gamma_1)$ of the quiver scattering diagram.

As in the previous section, we find that flows with $\gamma\notin[1,0,0]\IZ$ that start (rather than end) at $\tau=0$ are given by rays of the quiver scattering diagram that end\footnote{We recall the opposite orientation of rays and of the attractor flow.} (rather than start) at $(u,v)=p_1(-|\cV_\psi|)$.
By consistency of the orbifold scattering diagram, the points $p_1(-|\cV_\psi|)$ of $\Ract_o$ with incoming rays or with outgoing rays are the same, and correspond by Proposition~\ref{prop:critical} to critical phases.
This situation, in which both incoming and outgoing rays at $\tau=0$ occur for a critical phase, is illustrated in Figure~\ref{figscattPicr} for $\cV_\psi\simeq -1/2$.

Since both incoming and outgoing rays at $\tau=0$ are seen in the orbifold diagram, it should be interesting to translate the consistency of the orbifold scattering diagram at $p_1(-|\cV_\psi|)$ into a notion of consistency of the exact diagram~$\cD^\Pi_\psi$ at the conifold point, which is a singular point in the moduli space.

\section{On the mathematical definition of DT invariants\label{sec_defDT}}

Here we provide mathematical details on the definition of the DT 
invariants $\Omega_\sigma(\gamma)$. These invariants are a direct generalisation of the integer BPS invariants of \cite{DavMein}. From \cite{BravDyck}, the objects of a smooth CY3 dg category $\cC$ (like the dg category of perfect complexes with compact support on a smooth CY3-fold) form a 
$-1$-shifted symplectic derived stack $\mathcal{M}$ in the sense of \cite{shifsymp}. We suppose that $\mathcal{M}$ admits an orientation, \ie a square root of the line bundle $\det(\mathbb{L}_\mathcal{M})$ given by the determinant of the cotangent complex of $\mathcal{M}$ (a canonical orientation was constructed in the case of sheaves with compact support on noncompact CY3-fold in \cite[Theorem 4.9]{JOYCEorient}). For $\sigma$ a stability condition on $\cC$,  $\sigma$-semistability is a Zariski open condition, hence from \cite[Proposition 2.1]{STV11} there is an open $-1$-shifted symplectic substack $\mathcal{M}_\sigma\hookrightarrow\mathcal{M}$ of $\sigma$-semistable objects.  From the definition of semistability, $\Ext^i(E,E)=0$ for any $E\in \mathcal{M}_\sigma$ and $i<0$, hence by \cite[Proposition 3.3]{BravDyck} $\mathbb{T}_{\mathcal{M}_\sigma}|_E=\Ext(E,E)[1]$, $\mathcal{M}$ is a $-1$-shifted symplectic Artin--1 stack. We then  define $\mathcal{M}_\sigma(\gamma)$ as the component of $\mathcal{M}_\sigma$ of objects of class $\gamma$ in the Grothendieck group of $\cC$. Suppose now that $\sigma$ is generic, \ie that two $\sigma$-semistable objects  $E,E'$ of the same phase have collinear charges. For $\gamma$ primitive, we define the DT invariants $\Omega_\sigma(k\gamma),k\geq 1$ by
\begin{align}
    \Exp\biggl(\sum_{k=1}^\infty\frac{\Omega_\sigma(k\gamma)}{y^{-1}-y}x^k\biggr)\coloneqq
    \sum_{k=0}^\infty H_c(\mathcal{M}_\sigma(k\gamma),P_{\mathcal{M}_\sigma(k\gamma)})x^k
\end{align}
where $\Exp$ denotes the plethystic exponential, $P_{\mathcal{M}_\sigma(k\gamma)}$ the monodromic mixed Hodge module on $\mathcal{M}_\sigma(k\gamma)$ constructed in \cite[Theorem 4.4]{darbstack} using the orientation data, and $H_c(M,P)$ the Hodge polynomial of the cohomology with compact support on $M$ with values in $P$. In the case of quiver with potentials and King stability conditions, the invariants $\Omega_\sigma(k\gamma)$ are integer  for any $\gamma$ and $k\geq 1$ by \cite{DavMein}, and we conjecture that this remains true in this more general framework. In particular, we conjecture that, as in \cite{DavMein},
\begin{align}
    \Omega_\sigma(k\gamma)=H_c\bigl(M_\sigma(k\gamma),\mathcal{H}^1(\JH_!P_{\mathcal{M}_\sigma(k\gamma)})\bigr)
\end{align}
where $\JH\colon\mathcal{M}_\sigma(k\gamma)\to M_\sigma(k\gamma)$ denotes the Jordan-H\"older map to the coarse moduli space, $\JH_!$ denotes the proper pushforward for the derived categories of monodromic mixed Hodge modules, and $\mathcal{H}^1$ denotes the first cohomology of a complex of monodromic mixed Hodge modules.

\section{Gieseker indices for higher rank sheaves \label{sec_higherk}}
In this section, we extend the list of examples in \S\ref{sec_LV} and determine the trees
contributing to the Gieseker index for some examples with higher rank. As explained in 
\cite{coskun2014ample,coskun2014birational}, for 
$(r,d)$ coprime and discriminant $\Delta(\gamma)\geq  \Delta_1(r,d)$ 
large enough, the Gieseker wall is 
$\cW(\gamma,\gamma')$ due to a subobject with Chern vector $\gamma'=[r',d',\chi')$ 
uniquely determined by the following conditions:
\begin{itemize}
\item $0 < r' \leq r$\ ,\quad $\mu(\gamma')<\mu(\gamma)$
\item Every rational number in the interval $(\mu(\gamma'),\mu(\gamma))$ has denominator greater than $r$
\item The discriminant of any stable bundle of slope $\mu(\gamma')$ and rank $\leq r$ is $\geq \Delta(\gamma')$
\item the rank of any stable bundle of slope $\mu(\gamma')$ and discriminant $\Delta(\gamma')$ 
is $\geq r'$
\end{itemize}
The minimal value $\Delta_1(r,d)$ for which conditions are applicable and the rightmost point $x_+=s_{\gamma,\gamma'}+R_{\gamma,\gamma'}$ of the Gieseker wall 
for the lowest discriminant $\Delta_0\geq \Delta_1$ are tabulated in \cite[Table 3]{coskun2014ample} for $r\leq 6$ and $0<\mu(\gamma)\leq 1$.

\subsection{Rank 2}

We consider rank 2 sheaves with $\gamma=[2,-1,1-n)$, discriminant $\Delta=\frac{n}{2}-\frac18$.
The condition \eqref{MGiesekerNotEmpty} gives $\Delta\geq \deltaLP(-\frac12)=\frac58$ for non-exceptional sheaves. 
The generating function of Gieseker indices is given by \cite{Yoshioka:1994}
 \cite[(A.38)]{Beaujard:2020sgs}
\be\begin{aligned}
\label{h2m1series}
h_{2,-1} &= q + (y^2+1+1/y^2)^2 q^2 + (y^8+2y^6+6y^4+9y^2+12+\dots) q^3 \\
& \quad + \left( y^{12}+2 y^{10}+6 y^8+13 y^6+24 y^4+35
   y^2+41+\dots\right) q^4+
\dots
\\
&\rightarrow q + 9 q^2 + 48 q^3 +203 q^4 + 729 q^5 + 2346 q^6 + \dots
\end{aligned}
\ee

\begin{itemize}

\item For $n=1$, corresponding to the exceptional sheaf $\Omega(1)$, there is a single wall $\cC(-\frac32,\frac12)$ associated to the scattering sequence  $\{-\cO(-2), 3\cO(-1)\}$ contributing $K_3(1,3)=1$. 

\item For $n=2$  there is a single wall  $\cC(-\frac52,\frac32)$ associated to $\{\{-\cO(-3), \cO(-2)\}, 2\cO(-1)\}$ contributing 
$K_3(1,1)K_3(1,2)=(y^2+1+1/y^2)^2$. The rightmost point on the wall
is at $s=-1$, consistent with the entries $\Delta_0=\frac78, x_+=0$ in \cite[Table 3]{coskun2014ample}.

\item For $n=3$, there is a single wall  associated to  two scattering sequences
\begin{equation*}
\begin{array}{lll}
\cC(-\frac72,\frac52) &\{\{-\cO(-4),\cO(-3)\},2\cO(-1)\} & K_3(1,1) K_3(1,2) \\
& \{\{ -2\cO(-3),3\cO(-2)\} ,\cO(-1)\} &  K_3(1,1) K_3(2,3)
\end{array}
\end{equation*}
 contributing $9+39=48$ in the unrefined limit.
 
 \item For $n=4$, there are 2 walls associated to four scattering sequences
\begin{equation*}
\begin{array}{lll}
\cC(-\frac92,\frac72)& \{\{-\cO(-5), \cO(-4)\}, 2 \cO(-1)\} &  K_3(1,1) K_3(1,2)  \\
& \{\{\{-\cO(-4), \cO(-3)\}, \{-\cO(-3), 2 \cO(-2)\}\}, \cO(-1)\} &  K_3(1,1)^2 K_3(1,2) \\
& \{\{-\cO(-4), 2 \cO(-2)\}, \cO(-1)\} &  K_3(1,1) K_6(1,2) \\
\cC(-\frac72,\frac12) & \{-3 \cO(-3), 5 \cO(-2)\} &  K_3(3,5)
\end{array}
\end{equation*}
 contributing $9+81+45+ 68=203$ in the unrefined limit.
\end{itemize}

For $\gamma=[2,0,2-n)$, with discriminant $\Delta=n/2$, the expected generating function is
\be\label{h20series}
\begin{aligned}
h_{2,0} & = -(y^5+y^3+y+\dots) q^2 - (y^9+2y^7+4y^5+6y^3+6y+\dots) q^3 \\
& \quad - \left( y^{13} + 2 y^{11} + 6 y^9 + 11 y^7 + 19 y^5 + 24 y^3 + 27 y+\dots\right) q^4 - \dots\\
& \rightarrow -6 q^2 -38 q^3 - 180 q^4 - 678 q^5- 2260q^6 - \dots
\end{aligned}
\ee
The  condition \eqref{MGiesekerNotEmpty}
gives $\Delta\geq \deltaLP(0)=1$ for non-exceptional sheaves. 
\begin{itemize}
\item For $n=2$ there is a single wall $\cC(-\frac32,\frac12)$ associated to $\{-2\cO(-2), 4\cO(-1)\}$ contributing $K_3(2,4)\rightarrow -6$.

\item For $n=3$  there are two walls 
\begin{equation*}
\begin{array}{lll}
\cC(-\frac52,\frac{\sqrt{13}}{2}) &\{-\cO(-3),3\cO(-1)\} &  K_6(1,3)
\\
\cC(-2,1) &\{\{ -\cO(-3),\cO(-2)\} ,\{-\cO(-2),3\cO(-1)\}\} &  K_3(1,1) K_3(1,3) K_6(1,1)
\end{array}
\end{equation*}
 contributing $-20-18=-38$ in the unrefined limit. The Gieseker wall 
 has  rightmost point  at $s=-\frac52+\frac{\sqrt{13}}{2}$,
 consistent with the values $\Delta_0=\frac32, x_+\simeq 0,30$ 
 in  \cite[Table 3]{coskun2014ample}. 
 
 \item For $n=4$  there are two walls
\begin{equation*}
\begin{array}{lll}
\cC(-\frac72,\frac{\sqrt{33}}{2}) & \{\{-\cO(-4), \cO(-3)\}, \{-\cO(-2), 3 \cO(-1)\}\}  & K_3(1,1) K_3(1,3) K_6(1,1) \\
\cC(-\frac52,\frac32) & \{\{-2 \cO(-3), 2 \cO(-2)\}, 2 \cO(-1)\} &  K_{-3,3,6}(2,2,2)
\end{array}
\end{equation*}
The index for the second scattering sequence is obtained 
in the same way as in \eqref{Om221}, \ie by 
applying the flow tree formula for a local scattering diagram with
 two incoming rays of charge $\alpha=\gamma_1+\gamma_2$
and $\beta=\gamma_3$ with $\Omega^-(\alpha)=K_3(1,2)=y^2+1+1/y^2, \Omega^-(2\alpha)
=K_3(2,2)=-y^5-y^3-y-1/y-1/y^3-1/y^5$ and $\Omega^-(\beta)=1$, and selecting the outgoing ray of charge $2\alpha+2\beta$. This leads to 
\be
K_{-3,3,6}(2,2,2)=-y^{13} - 2 y^{11} - 6 y^9 - 10 y^7 - 17 y^5 - 21 y^3 - 24 y- \dots \rightarrow -162
\ee
Adding up the contributions of the two scattering sequences, we get $-18-162=-180$
in the limit $y\to 1$, in agreement with \eqref{h20series}.

\end{itemize}

\subsection{Rank 3}
 
 We now turn to rank 3 sheaves, with $\gamma=[3,-1,2-n)$, discriminant $\Delta=\frac{n}{3}-\frac19$. 
 The  condition \eqref{MGiesekerNotEmpty}
 gives $\Delta\geq \deltaLP(-\frac13)=\frac59$ for non-exceptional sheaves. 
 The generating function is given by  \cite[Table 1]{Manschot:2010nc} \cite[(A.40)]{Beaujard:2020sgs} 
\be\label{h3m1series}
\begin{aligned}
h_{3,-1} &= (y+1+1/y^2) q^2 + (y^8+2y^6+5y^4+8y^2+10+\dots) q^3  + \dots
\\
&\rightarrow 3q^2 + 42 q^3 + 333 q^4 + 1968 q^5 + 9609 q^6+ \dots
\end{aligned}
\ee

\begin{itemize}

\item For $n=2$ there is a single wall $\cC(-\frac32,\frac12)$ associated to $\{-2\cO(-2), 5\cO(-1)\}$ contributing $K_3(2,5)=y^2+1+1/y^2$.

\item For $n=3$  there are three walls:
\begin{equation*}
\begin{array}{lll}
\cC(-\frac{7}{2},\frac{33}{2})&\{ \{\{-\cO(-3), \cO(-2)\}, \cO(-1)\},\{-\cO(-2), 3 \cO(-1)\}\} & K_3(1,1)^3 K_3(1,3)
\\
\cC(-\frac{5}{2},\frac{\sqrt{35}}{2\sqrt3})&\{\{ -\cO(-3),\cO(-2)\} ,\{-\cO(-2),4\cO(-1)\}\} & K_3(1,1) K_3(1,4) K_9(1,1)
\\
\cC(-2,1)&\{\cO(-3),4\cO(-1)\} & K_6(1,4)
\end{array}
\end{equation*}
 contributing $27+0+15=42$ in the unrefined limit. The Gieseker wall $\cC(-\frac{7}{2},\frac{33}{2})$
has rightmost point at  $-\frac52+\frac12 \sqrt{33}$, 
consistent with the values  
$\Delta_0=\frac89,  x_+\simeq 0,37$
 quoted in \cite[Table 3]{coskun2014ample}.
 \end{itemize}

For $\gamma=[3,1,5-n)$  with $\Delta=\frac{n}{3}-\frac19$, we find instead the following 
scattering sequences (not related to the previous ones by 
reflection)
\begin{itemize}
\item For $n=1$, there is a scattering sequence  $\{-\cO(-1),4\cO\}$ but its index $K_3(1,4)$ vanishes.
\item For $n=2$ there is a single wall
\begin{equation*}
\cC(-\tfrac32,\tfrac32)  \quad \{\{-\cO(-2), \cO(-1)\},3\cO(0)\} \quad K_3(1,3) K_3(1,1)
\end{equation*}
contributing $y^2+1+1/y^2$.

\item For $n=3$  there is a single wall but two scattering sequences,
\begin{equation*}
\begin{array}{lll}
\cC(-\frac52,\frac52) &\{ \{ -2\cO(-2), 3\cO(-1)\}, 2\cO \} & K_3(1,2) K_3(2,3)
\\
 &\{\{ -\cO(-3),\cO(-2)\} ,3\cO\} &  K_3(1,1)  K_3(1,3)
\end{array}
\end{equation*}
The wall $\cC(-\frac52,\frac52)$ has rightmost point at $x_+=0$,
consistent with the values  $\Delta_0=\frac89,  x_+=0$
quoted in \cite[Table 3]{coskun2014ample}.

\end{itemize}

For $\gamma=[3,0,3-n)$ with discriminant $\Delta=n/3$, the generating function is 
\cite[(A.40)]{Beaujard:2020sgs}
\be
\begin{aligned}
h_{3,0} &= (y^{10}+y^8+2y^6+2y^4+2y^2+2+\dots) q^3 \\
& \quad +
\left( y^{16}+2
   y^{14}+5 y^{12}+9 y^{10}+15 y^8+19 y^6+22 y^4+23
   y^2+24+\dots\right) q^4 + \dots
\\
&\rightarrow 18 q^3 +216 q^4+1512 q^5 + 8109 q^6 \dots
\end{aligned}
\ee

\begin{itemize}
\item For $n=3$, there is a single wall $\cC(-\frac32,\frac12)$ associated to $\{-3\cO(-2),6\cO(-1)\}$ contributing 
$K_3(3,6) \to 18$.

\item For $n=4$, there are two walls
\begin{equation*}
\begin{array}{lll}
\cC(-\frac52,\frac{\sqrt{43}}{2\sqrt3}) & \{\{-\cO(-3), \cO(-2)\}, \{-2 \cO(-2), 5 \cO(-1)\}\}  & 
 K_3(1,1) K_3(2,5) K_9(1,1) \\
\cC(-\frac{13}{6},\frac{\sqrt{73}}{6}) & \{\{-\cO(-3),  2 \cO(-1)\}, \{-\cO(-2), 3 \cO(-1)\}\}  & 
K_3(1,3) K_6(1,2)  K_9(1,1)
\end{array}
\end{equation*}
contributing $81+135=216$ in the unrefined limit. The Gieseker wall  has  rightmost point $-\frac52+\frac12\sqrt{\frac{43}{3}}= x_+-1$, consistent with the values $\Delta_0=\frac43, x_+\simeq 0,39$
quoted in  \cite[Table 3]{coskun2014ample}.

\end{itemize}

\section{Mathematica package  {\tt P2Scattering.m}\label{sec_mathematica}}

The  {\sc Mathematica} package {\tt P2Scattering.m}, available from
\begin{center}
  \url{https://github.com/bpioline/P2Scattering}
\end{center}
provides a suite of routines for analyzing the scattering diagrams considered in this work,  
both at large volume, around the orbifold and along the $\Pi$-stability slice. It was used
extensively in order to generate the figures and arrive at the global picture presented in this article. 
A list of routines is provided in the documentation {\tt P2Scattering.pdf} available in the 
GitHub repository, along with several demonstration worksheets. Here we simply
give a taste of the package capabilities.

After copying file {\tt P2Scattering.m} in the current directory, load the package via

\mathematica{1}{0.965}{SetDirectory[NotebookDirectory[]]; <\!\!\! < P2Scattering\`{}}{P2Scattering 1.4 - A package for evaluating DT invariants on $K_{\IP^2}$}

For a given charge $\gamma=[r,d,\chi)$ and point $(s,t)$ on the large volume slice, the scattering sequences contributing to the index $\Omega_{(s,t)}(\gamma)$ can be found by using the routine $\fun{ScanAllTrees}$, 
for example for $\gamma=[3,0,0)$ through the point $(s,t)=(-\frac32,2)$,

\mathematica{2}{0.965}{LiTrees = ScanAllTrees[\{0, 3, 0\}, \{-3/2, 2\}]
}{
\{\{-Ch[-3], Ch[0]\}, \{-3 Ch[-2], 3 Ch[-1]\}\}
}

\mathematica{3}{0.965}{ScattDiagLV[LiTrees, 0]
}{
\raisebox{-\totalheight/2}{\includegraphics[width=6cm]{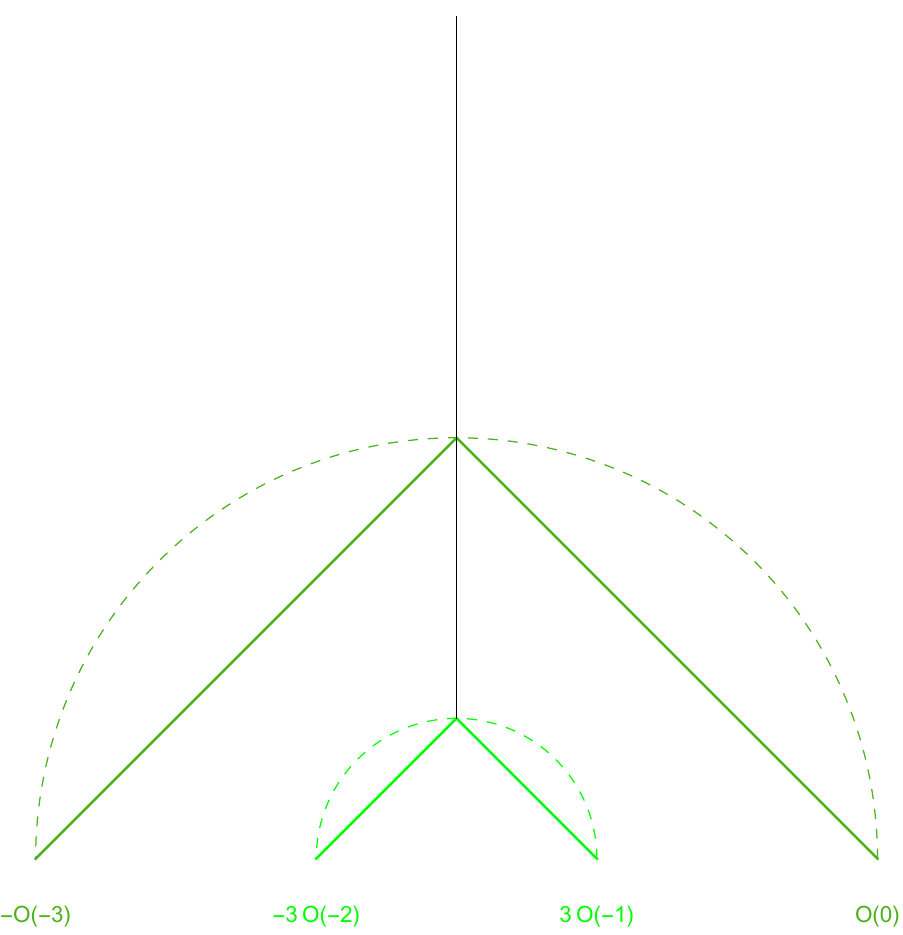}}
}

\mathematica{4}{0.965}{Limit[EvaluateKronecker[ScattIndex[LiTrees]], y -> 1]
}{
\{9,18\}
}
reproducing the GV invariant $N_3^{(0)}=27$ (compare with \S\ref{sec_D2D0}). 
Note that the current implementation of the routine \fun{ScattIndex}
assumes that the index associated to each scattering sequence
 is a product of Kronecker indices associated to each
vertex, and may give the wrong result if some of the edges carry non-primitive charges
(see \eqref{Om221} for an example).
 In the case above, it does produce the correct results for both scattering sequences, $\Omega_\infty(\gamma)=K_9(1,1)+K_3(3,3)$. More generally, the routine \fun{IndexFromSequences$[\{\text{\it trees}\},\{s,t\}]$}
 computes the total rational index $\bOm_{s,t}(\gamma)$
 by decomposing each scattering
 sequence into attractor flow trees as explained at the end of \S\ref{sec_inidata}, and perturbing
 the charges of the constituents $\gamma_i\to\gamma_i+\epsilon_i \delta$ such that only binary
 splittings remain:
 
 \mathematica{5}{0.965}{Limit[Plus@@Flatten[IndexFromSequences[LiTrees, \{-3/2, 2\}]], 
 y -> 1]
}{
82/3
}
consistent with $\bOm_{s,t}(3\gamma)=
\Omega_{s,t}(3\gamma)+\frac19 \Omega_{s,t}(\gamma)=27+\frac13$.

Similarly, one can find the scattering sequences contributing near
the orbifold point using \fun{McKayScanAllTrees}: for the same charge, corresponding to dimension vector $(0,3,6)$, a single scattering sequence contributes in the anti-attractor chamber, with index 18,

\mathematica{6}{0.965}{
LiTrees = McKayScanAllTrees[chiton[\{0, 3, 0\}]]; 
LiTrees /. McKayrep
}{
$\{\{ 3\gamma_2,6 \gamma_3\}\}$
}

\mathematica{7}{0.965}{
Limit[EvaluateKronecker[McKayScattIndex[LiTrees]], y ->1]
}{
\{18\}
}

\mathematica{8}{0.965}{
Show[McKayInitialRays[2], McKayScattDiag[LiTrees]]
}{
\raisebox{-\totalheight/2}{\includegraphics[width=6cm]{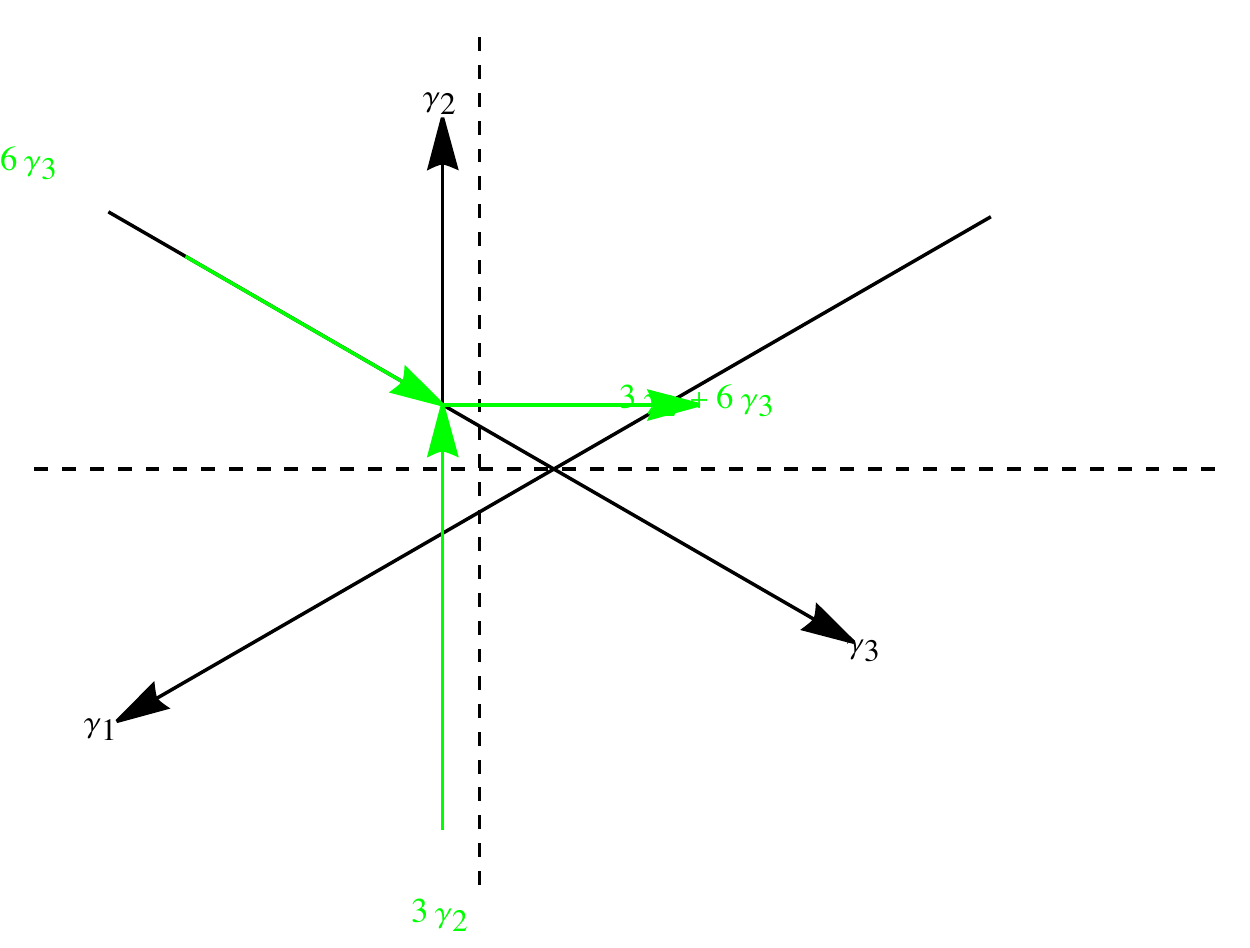}}
}
In this example, the index in the anti-attractor chamber differs from the one at large volume,
due to wall-crossing along the circle $\cC(-\frac32,\frac12)$.


\providecommand{\href}[2]{#2}\begingroup\raggedright\endgroup

\end{document}